\documentclass[12pt]{article}
\usepackage{amsmath,amsthm}
\usepackage{graphicx}
\usepackage{enumerate}
\usepackage{natbib}
\usepackage{url} 
\RequirePackage[colorlinks,citecolor=blue,urlcolor=blue]{hyperref}
\usepackage{graphicx}
\usepackage{enumerate}
\usepackage{latexsym}
\usepackage{bigstrut}
\usepackage{amssymb}
\usepackage{multirow}
\usepackage{float}
\usepackage{tikz}
\usetikzlibrary{decorations.pathreplacing,calligraphy}
\usepackage{bm}
\usepackage{paralist}
\usepackage{mathabx}
\usepackage{natbib}
\usepackage[ruled,vlined]{algorithm2e}
\usepackage{commath}
\usepackage{subcaption}
\usepackage{soul,xcolor}
\usepackage{subcaption}
\usepackage[inline]{enumitem}
\usepackage[figuresright]{rotating}
\usepackage[labelfont=bf]{caption}

\usepackage{thmtools}
\renewcommand\thmcontinues[1]{Continued}

\usepackage{pifont}
\usepackage{threeparttable}
{\end{quotation}}

\usepackage[normalem]{ulem}

\DeclareMathOperator*{\argmin}{argmin}
\renewcommand{\baselinestretch}{1.23}
\numberwithin{equation}{section}

\def \mcF{\mathcal{F}}

\def \mcC{\mathcal{C}}

\def \mcD{\mathcal{D}}

\def \mcB{\mathcal{B}}

\def\sign{\textrm{sign}}

\setstcolor{red}

\def \mcA{\mathcal{A}}

\def \bfZ{\mathbf{Z}}

\def \ev{\mathbb{E}}
\def \sign{\textrm{sign}}
\def \pr{\mathbb{P}}

\def \cid{\xrightarrow[\text{}]{\text{$\mathbb{L}$}}}
\def \cip{\xrightarrow[\text{}]{\text{$\mathbb{P}$}}}
\def \cae{\xrightarrow[\text{}]{\text{$a.s.$}}}
\newcommand{\vo}{\vec{o}\@ifnextchar{^}{\,}{}}

\newcommand{\floor}[1]{\left\lfloor #1 \right\rfloor}


\theoremstyle{plain}

\newtheorem*{remark}{Remark}
\newtheorem{theorem}{\indent Theorem}
\newtheorem*{theorem*}{\indent Theorem}

\newtheorem{Assumption}{Assumption}

\newtheorem{AssumptionB}{Assumption}

\newtheorem{Definition}{Definition}

\newtheorem{lemma}{\indent Lemma}
\newtheorem{proposition}{\indent Proposition}
\newtheorem{Remark}{Remark}
\theoremstyle{definition}
\newtheorem{Example}{Example}
\DeclareSymbolFont{largesymbolsA}{U}{txexa}{m}{n}
\DeclareMathSymbol{\varprod}{\mathop}{largesymbolsA}{16}
\usepackage{xparse}

\graphicspath{{../}{./}}
\newcommand{\blind}{1}

\addtolength{\oddsidemargin}{-.5in}%
\addtolength{\evensidemargin}{-1in}%
\addtolength{\textwidth}{1in}%
\addtolength{\textheight}{1.7in}%
\addtolength{\topmargin}{-1in}%

\usepackage{multibib}
\newcites{App}{References}

\graphicspath{{./Figure/}}

\begin{document}

\def\spacingset#1{\renewcommand{\baselinestretch}%
{#1}\small\normalsize} \spacingset{1}

\if1\blind
{
  \title{\bf Statistical Inference with Stochastic Gradient Methods under $\phi$-mixing Data}
  \author{Ruiqi Liu\hspace{.2cm}\\
    Department of Mathematics and Statistics, Texas Tech University\\
    and\\
    Xi Chen\hspace{.2cm}\\
    Leonard N. Stern School of Business, New York University\\
    and\\
    Zuofeng Shang\hspace{.2cm}\\
    Department of Mathematical Sciences, New Jersey Institute of Technology}
  \maketitle
} \fi

\if0\blind
{
  \bigskip
  \bigskip
  \bigskip
  \begin{center}
    {\LARGE\bf Statistical Inference with Stochastic Gradient Methods under $\phi$-mixing Data}
\end{center}
  \medskip
} \fi

\bigskip
\begin{abstract}
Stochastic gradient descent (SGD) is a scalable and memory-efficient optimization algorithm for large datasets and stream data, which has drawn a great deal of attention and popularity. The applications of SGD-based estimators to statistical inference such as interval estimation have also achieved great success. However, most of the related works are based on i.i.d. observations or Markov chains. When the observations come from a mixing time series, how to conduct valid statistical inference remains unexplored. As a matter of fact, the general correlation among observations imposes a challenge on interval estimation.  Most existing methods may ignore this correlation and lead to invalid confidence intervals. In this paper, we propose a mini-batch SGD estimator for statistical inference when the data is $\phi$-mixing. The confidence intervals are constructed using an associated mini-batch bootstrap SGD procedure. Using ``independent block'' trick from \cite{yu1994rates}, we show that the proposed estimator is asymptotically normal, and its limiting distribution can be effectively approximated by the bootstrap procedure. The proposed method is memory-efficient and easy to implement in practice. Simulation studies on synthetic data and an application to a real-world dataset confirm our theory.
\end{abstract}
\noindent%
{\it Keywords:}  Min-batch Stochastic Gradient Descent, Bootstrap Confidence Interval, Big Data, Time Series
\vfill

\newpage
\spacingset{1.9} 

\section{Introduction}
We consider the following minimization problem:
\begin{eqnarray}
\theta^*=\argmin_{\theta\in \Theta}\left\{L(\theta):=\ev\left\{l(Z,\theta)\right\}\right\},\label{eq:model}
\end{eqnarray}
where $\theta^*\in \Theta \subset \mathbb{R}^d$ represents the parameter of interest, $Z$ is a copy of  some random vector with an unknown distribution, and $l(z, \theta)$ is the loss function. Furthermore, assume that we can access a sequence of potentially dependent observations $\{Z_t\}_{t=1}^\infty$, where each $Z_t$ follows the same distribution as $Z$. Because the objective function in \eqref{eq:model} is unavailable, classical methods approximate $\theta^*$ by minimizing the empirical version of $L(\theta)$ based on $n$ samples:
\begin{eqnarray}
\widehat{\theta}=\argmin_{\theta \in \Theta}\frac{1}{n}\sum_{t=1}^n l(Z_t, \theta).\label{eq:mle}
\end{eqnarray}
The minimization problem in \eqref{eq:mle} is commonly solved using gradient-based iterative algorithms. However, its (sub)gradient\footnote{If $\theta \mapsto l(z, \theta)$ is nonsmooth, its subgradient can be used instead. For simplicity, we also denote the subgradient as $\nabla l(Z_t, \theta)$ when the context is clear.} $\sum_{t=1}^n \nabla l(Z_t, \theta)/n$ involves a summation over $n$ items, which becomes computationally inefficient for large sample sizes ($n$). Additionally, many real-world scenarios involve streaming data, where observations are collected sequentially. Due to storage limitations, systems delete outdated data and retain only the most recent records. In such cases, calculating the (sub)gradient becomes infeasible.

\subsection{Related Works}
A computationally attractive algorithm for solving problem \eqref{eq:model} is stochastic gradient descent (SGD), which processes only one observation during each iteration. Specifically, given an initial value $\widehat{\theta}_0$, SGD updates the value recursively as follows:
\begin{eqnarray}
\widehat{\theta}_t=\Pi \left\{\widehat{\theta}_{t-1}-\gamma_t \nabla l(Z_t, \widehat{\theta}_{t-1})\right\}, \label{eq:vanilla:sgd}
\end{eqnarray}
where  $\gamma_t > 0$ is a predetermined learning rate (or step size), and $\Pi$ denotes the projection operator onto $\Theta$.

SGD was originally proposed in the seminal work \cite{robbins1951stochastic}. It has achieved considerable success across various areas, including image processing (\citealp{cole2003multiresolution, klein2009adaptive}), recommendation systems (\citealp{khan2019fractional, shi2020large}), and inventory control (\citealp{ding2021feature, yuan2021marrying}). In the era of big data, SGD's scalability and ease of implementation have garnered significant attention. Theoretical analysis of SGD can be classified into two main directions based on different application goals. 

The first line of research revolves around quantifying the regret of SGD, defined as $L(\widehat{\theta}_T) - L(\theta^*)$, where $T$ represents the number of iterations. Existing literature indicates that, with an appropriately chosen learning rate $\gamma_t$, SGD's regret can achieve a convergence rate of $1/T$ for strongly convex objective functions (e.g., see \citealp{bottou2018optimization,gower2019sgd}), and a rate of $1/\sqrt{T}$ for general convex cases \cite{nemirovski2009robust}.

The second research direction focuses on applying SGD to statistical inference. Under appropriate conditions, it has been demonstrated that the SGD estimator is asymptotically normal (\citealp{pelletier2000asymptotic}). Notably, the SGD estimator may not exhibit root-$T$ consistency unless the learning rate satisfies $\gamma_t \asymp t^{-1}$, which contrasts with classical parametric estimators. To accelerate convergence and achieve a root-$T$ estimator, the renowned Polyak-Ruppert averaging procedure was independently proposed by \cite{polyak1990new} and \cite{ruppert1988efficient}. This procedure constructs an estimator by averaging the trajectory $\widehat{\theta}_1, \ldots, \widehat{\theta}_T$. The proposed estimator's root-$T$ consistency was established, and its asymptotic normality was proved by  \cite{polyak1992acceleration}. A series of interesting works have been developed based on the Polyak-Ruppert averaging procedure. For instance, \cite{chen2021statistical} and \cite{chen2022online} devised SGD-based algorithms for online decision-making problems involving decision rules.  \cite{lee2022fast, lee2022fastb} extended the distributional results from \cite{polyak1992acceleration} to a functional central limit theorem. Building upon this novel theoretical result, the authors proposed an online inference procedure based on an asymptotically pivotal statistic with a nontrivial mixed normal limiting distribution. More recently, \cite{su2018uncertainty} introduced a hierarchical incremental gradient descent (HIGrad) procedure for inferring unknown parameters. In comparison to SGD, the flexible structure of HiGrad facilitates easier parallelization.

A key assumption in the aforementioned studies is the independence among $Z_t$'s. When considering correlations between observations, the SGD literature often assumes two types of dependence. The first type is Markovian dependence, and a series of works have established convergence rates and limiting distributions for SGD and its variants. This includes applications in reinforcement learning (\citealp{melo2008analysis, Roy2016tamingnoise, dalal2018finite, xu2019two, qu2020finite, shi2021statistical, ramprasad2022online, shi2022off, chen2022reinforcement}), Bayesian learning (\citealp{liang2007stochastic, NEURIPS2020_liang}), and federated learning (\citealp{li2022statistical}). The second type of dependence is $\phi$-mixing, which is considered as a more general dependence assumption compared to Markovian dependence (\citealp{davydov1973mixing}). However, results concerning SGD with $\phi$-mixing data are limited, with two related works being \cite{agarwal2012generalization} and \cite{ma2022data}. Under an additional condition $\|\widehat{\theta}_t - \widehat{\theta}_{t-1}\| \to 0$, \cite{agarwal2012generalization} established convergence guarantees for general stochastic algorithms. On the other hand, \cite{ma2022data} demonstrated that the convergence rate of SGD can be enhanced through subsampling and mini-batch techniques. Although the additional condition is well-established for i.i.d. or Markovian data, its applicability remains unexplored for $\phi$-mixing time series.

\textbf{Notation: } Let $\cid$, $\cip$, and $\cae$ denote the convergence in distribution, convergence in probability, and convergence almost surely. We say $X_t| Y_t\cid X$ in probability for random vectors $X_t, X\in \mathbb{R}^d$ and $Y_t\in \mathbb{R}^{k_t}$ if $\sup_{f\in \textrm{BL}_1}|\ev\{f(X_t)|Y_t\}-\ev\{f(X)\}|\cip 0$ as $t\to \infty$, where $\textrm{BL}_1=\{f:\mathbb{R}^d\to \mathbb{R}: \sup_{x\in \mathbb{R}^d}|f(x)|\leq 1, |f(x)-f(y)|\leq |x-y| \textrm{ for all } x,y\in \mathbb{R}^d\}$, and the dimension $k_t$ is possibly diverging with $t$. For two sequence $a_t$ and $b_t$, we say $a_t\lesssim b_t$ if $a_t\leq Cb_t$ for some constant $C>0$ and all $t\geq 1$. We define $a_t\asymp b_t$ if  $a_t\lesssim b_t$ and $b_t\lesssim a_t$.  We use $\|\cdot\|$ to denote the Euclidean norm of a vector and Frobenius norm of a matrix.



\subsection{Challenges and Our Contribution}
Despite the fruitful results, conducting valid statistical inference using SGD based on mixing observations remains unexplored. The major difficulties in extending the existing distributional results to mixing time series can be summarized as twofold. 

Theoretically, the general dependence structure of mixing data significantly complicates the convergence analysis of SGD when compared to scenarios involving i.i.d. or Markov observations. Existing convergence analysis (\citealp{agarwal2012generalization,ma2022data}) hinges on the condition $\|\widehat{\theta}_t-\widehat{\theta}_{t-1}\|\to 0$, and establishing this condition without Markov assumptions on the observations remains unexplored. Moreover, the proof of limiting distribution of an estimator requires handling higher-order error terms, and the convergence rates of the higher-order error terms may not be fast enough to establish convergence in distribution due to dependence.

Methodologically, statistical inference such as building confidence intervals involves estimating covariance matrices or quantiles of the limiting distributions.  In the context of streaming data settings, several common approaches for building confidence intervals are employed, including  plug-in estimators (e.g., see \citealp{chen2020statistical,liu2022105017}), the random scaling method in \cite{lee2022fast,lee2022fastb}, self-normalized estimators (e.g., see \citealp{pelletier2000asymptotic,gahbiche2000estimation}), and the bootstrap SGD proposed in \cite{fang2018online}. It is worth noting that covariance matrices often encompass autocorrelation coefficients when observations are correlated. Effectively estimating these coefficients through plug-in methods presents a challenging problem. Additionally, in certain statistical models such as least absolute deviation regression or quantile regression (as exemplified in Examples \ref{example:LAD} and \ref{example:quantile}), the covariance matrices can become intricate and might involve nonparametric components, posing another challenge for plug-in methods. The random scaling method introduced by \cite{lee2022fast,lee2022fastb}  was designed to handle complicated asymptotic covariance matrices. This approach employs a normalization matrix to eliminate the dependence on the unknown covariance matrix, resulting in an asymptotically pivotal statistic. This method is efficient for handling large datasets and robust against variations in tuning parameters (e.g., learning rate) for SGD. Nonetheless, its limiting distribution is relying on a novel functional central limit theorem, and extending this framework to incorporate mixing observations would be challenging. Self-normalized estimators for covariance matrices were studied in \cite{pelletier2000asymptotic,gahbiche2000estimation}. Recently, \cite{chen2020statistical} and \cite{zhu2023online} developed variants known as batch-means estimators, which involve dividing the SGD trajectory into multiple batches. These estimators exhibit appealing theoretical properties like strong consistency and moment convergence in the context of i.i.d. data. However, their validity with correlated observations remains unexplored. Another practically convenient algorithm is the bootstrap SGD in  \cite{fang2018online},  which is rooted on the following estimator
\begin{eqnarray}
\widehat{\theta}_t^*=\Pi\left\{\widehat{\theta}_{t-1}^*-\gamma_t V_t\nabla l(Z_t, \widehat{\theta}_{t-1}^*)\right\}. \label{eq:vanilla:sgd:boostrap}
\end{eqnarray}
Here $V_t$'s are i.i.d. random weights with mean one and unit variance that are independent from the data. With i.i.d. data,  it has been demonstrated that:
\begin{eqnarray*}
\sqrt{T}\left(\frac{1}{T}\sum_{t=1}^T \widehat{\theta}_t-\theta^*\right)&\cid& \xi,\\
 \sqrt{T}\left(\frac{1}{T}\sum_{t=1}^T \widehat{\theta}_t^*-\frac{1}{T}\sum_{t=1}^T \widehat{\theta}_t\right)\bigg| Z_1,\ldots, Z_T &\cid& \widetilde{\xi} \quad \textrm{ in probability},
\end{eqnarray*}
where $\xi$ and $\widetilde{\xi} $ are two centered normal distributions sharing a common covariance matrix. Recently, \cite{ramprasad2022online} extended this method to Markov chains. As a matter of fact,  the above bootstrap algorithm typically ignores the  autocorrelation coefficients and may fail to produce reliable confidence intervals when the observations are not independent or Markovian, which is illustrated by the following concrete example.
\begin{proposition}\label{prop:failure:bootstrap:sgd}
Let $Y_1,\ldots, Y_T\in \mathbb{R}$ be a sequence such that $Y_t$ is independent from $\{Y_s: |t-s|\geq 2\}$. Moreover,  assume that $\ev(Y_t)=\theta^*$, $\ev[(Y_t-\theta^*)^2]=r(0)>0$, and $\ev[(Y_t-\theta^*)(Y_{t+1}-\theta^*)]=r(1)\neq 0$ for all $t$. Consider the following two iteration procedures:
\begin{eqnarray*}
\widehat{\theta}_t=\widehat{\theta}_{t-1}+\gamma_t  (Y_t-\widehat{\theta}_{t-1}),\;\;\widehat{\theta}_t^*=\widehat{\theta}_{t-1}^*+\gamma_t  V_t(Y_t-\widehat{\theta}_{t-1}^*),
\end{eqnarray*}
for $t=1,2,\ldots, T$. Here $\gamma_t=t^{-\rho}$ for some $\gamma\geq 0$, $\rho\in (1/2, 1)$, and $V_t$'s are i.i.d. random variables with mean one and unit variance, which are independent from $Y_t$'s. Then it follows that
\begin{eqnarray*}
\sqrt{T}\left(\frac{1}{T}\sum_{t=1}^T \widehat{\theta}_t-\theta^*\right)&\cid& N\left(0, r(0)+2r(1)\right),\\
 \sqrt{T}\left(\frac{1}{T}\sum_{t=1}^T \widehat{\theta}_t^*-\frac{1}{T}\sum_{t=1}^T \widehat{\theta}_t\right)\bigg| Z_1,\ldots, Z_T &\cid& N\left(0, r(0)\right) \textrm{ in probability}.
\end{eqnarray*}
\end{proposition}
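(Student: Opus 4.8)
The plan is to treat both recursions as affine stochastic-approximation schemes and exploit the Polyak--Ruppert averaging identity, so that each averaged iterate becomes, up to negligible remainders, a simple average of a noise sequence whose long-run variance we can read off directly. Write $\epsilon_t=Y_t-\theta^*$ and $\delta_t=\widehat{\theta}_t-\theta^*$. The first recursion is the linear scheme $\delta_t=(1-\gamma_t)\delta_{t-1}+\gamma_t\epsilon_t$, so that $\delta_{t-1}=\epsilon_t-\gamma_t^{-1}(\delta_t-\delta_{t-1})$. Summing over $t=1,\dots,T$ and using $\sum_{t=1}^T\delta_{t-1}=\sum_{t=1}^T\delta_t-\delta_T+\delta_0$ gives the averaging identity
\begin{equation*}
\frac{1}{\sqrt{T}}\sum_{t=1}^T\delta_t=\frac{1}{\sqrt{T}}\sum_{t=1}^T\epsilon_t-\frac{1}{\sqrt{T}}\sum_{t=1}^T\frac{\delta_t-\delta_{t-1}}{\gamma_t}+\frac{\delta_T-\delta_0}{\sqrt{T}}.
\end{equation*}
The leading term $T^{-1/2}\sum_t\epsilon_t$ converges to $N(0,r(0)+2r(1))$ by the central limit theorem for stationary $1$-dependent sequences, since its variance is $r(0)+2(1-1/T)r(1)\to r(0)+2r(1)$ and all higher-lag covariances vanish by $1$-dependence. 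Thus Part~1 reduces to showing that the two remaining terms are $o_P(1)$.

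To control the remainders I would first establish the moment bound $\ev[\delta_t^2]=O(\gamma_t)=O(t^{-\rho})$ by unrolling the linear recursion as $\delta_t=\prod_{k=1}^t(1-\gamma_k)\,\delta_0+\sum_{j=1}^t\gamma_j\big(\prod_{k=j+1}^t(1-\gamma_k)\big)\epsilon_j$, bounding $\prod_{k=j+1}^t(1-\gamma_k)\le\exp(-\sum_{k=j+1}^t\gamma_k)$, and using $1$-dependence to collapse the resulting double sum to a single sum; this is a standard stochastic-approximation variance estimate and yields $\delta_t=O_P(t^{-\rho/2})$, so $(\delta_T-\delta_0)/\sqrt{T}=o_P(1)$. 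For the summation-by-parts term I would rewrite $\sum_{t=1}^T(\delta_t-\delta_{t-1})/\gamma_t=\delta_T/\gamma_T-\delta_0/\gamma_1+\sum_{t=1}^{T-1}\delta_t(\gamma_t^{-1}-\gamma_{t+1}^{-1})$, note $\gamma_t^{-1}-\gamma_{t+1}^{-1}=O(t^{\rho-1})$, and combine with $\delta_t=O_P(t^{-\rho/2})$ to get $\delta_T/\gamma_T=O_P(T^{\rho/2})$ and $\sum_t\delta_t\,O(t^{\rho-1})=O_P(T^{\rho/2})$; dividing by $\sqrt{T}$ gives $O_P(T^{(\rho-1)/2})=o_P(1)$ because $\rho<1$. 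This completes Part~1.

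For the bootstrap, set $\eta_t=V_t-1$ (i.i.d., mean zero, unit variance, independent of the data) and $D_t=\widehat{\theta}_t^*-\widehat{\theta}_t=\delta_t^*-\delta_t$, where $\delta_t^*=(1-\gamma_tV_t)\delta_{t-1}^*+\gamma_tV_t\epsilon_t$. Subtracting the two recursions and substituting $\delta_{t-1}^*=\delta_{t-1}+D_{t-1}$ yields, after simplification, the affine scheme
\begin{equation*}
D_t=(1-\gamma_t)D_{t-1}+\gamma_t\,\eta_t\,(g_t-D_{t-1}),\qquad g_t:=Y_t-\widehat{\theta}_{t-1}=\epsilon_t-\delta_{t-1},
\end{equation*}
with $D_0=0$. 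This is exactly the previous linear template with $\zeta_t:=\eta_t(g_t-D_{t-1})$ in place of $\epsilon_t$, so the same averaging identity gives $T^{-1/2}\sum_tD_t=T^{-1/2}\sum_t\zeta_t+o_P(1)$ conditionally on the data, the remainders being controlled by the conditional bound $\ev[D_t^2\mid\mathrm{data}]=O(\gamma_t)$ (derived from the recursion $\ev[D_t^2\mid\mathrm{data}]\le(1-\gamma_t)\ev[D_{t-1}^2\mid\mathrm{data}]+\gamma_t^2 g_t^2$ together with $\ev[g_t^2]=O(1)$). The crucial point is that, conditionally on $Y_1,\dots,Y_T$, the array $\{\zeta_t\}$ is a martingale difference with respect to $\mathcal{F}_{t-1}^V:=\sigma(V_1,\dots,V_{t-1})$, because $\eta_t$ is independent with mean zero while $g_t$ is data-measurable and $D_{t-1}$ is $\mathcal{F}_{t-1}^V$-measurable. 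Its conditional variance is $\ev[\zeta_t^2\mid\mathcal{F}_{t-1}^V,\mathrm{data}]=(g_t-D_{t-1})^2$, and
\begin{equation*}
\frac{1}{T}\sum_{t=1}^T(g_t-D_{t-1})^2=\frac{1}{T}\sum_{t=1}^Tg_t^2-\frac{2}{T}\sum_{t=1}^Tg_tD_{t-1}+\frac{1}{T}\sum_{t=1}^TD_{t-1}^2\;\cip\;r(0),
\end{equation*}
since $T^{-1}\sum_t g_t^2\to r(0)$ (because $T^{-1}\sum_t\epsilon_t^2\to r(0)$ by the ergodic theorem while $\ev[\epsilon_t\delta_{t-1}]=\gamma_{t-1}r(1)\to0$, only the lag-one covariance surviving under $1$-dependence, and $\ev[\delta_{t-1}^2]=O(\gamma_{t-1})$), and the two $D_{t-1}$ terms vanish by $\ev[D_{t-1}^2\mid\mathrm{data}]=O(\gamma_{t-1})$. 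A conditional triangular-array martingale central limit theorem then gives $T^{-1/2}\sum_t\zeta_t\mid Y_1,\dots,Y_T\cid N(0,r(0))$ in probability, yielding Part~2.

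The heart of the result, and the main obstacle, is precisely this variance collapse: the bootstrap injects the fresh i.i.d. multipliers $\eta_t$, so $\{\zeta_t\}$ is conditionally \emph{uncorrelated} across $t$ and its long-run variance equals the time-average of the instantaneous conditional variances, namely $r(0)$; the autocovariance contribution $2r(1)$ present in Part~1 stems only from the genuine serial correlation of $\{\epsilon_t\}$ and has no counterpart in the bootstrap noise. Making this rigorous requires (i) the uniform $L^2$ bounds on $\delta_t$ and on $D_t$ conditionally on the data, including control of the multiplicative perturbation $\gamma_t\eta_tD_{t-1}$ arising from the random coefficient $1-\gamma_tV_t$, and (ii) verifying the (conditional) Lindeberg condition together with the in-probability convergence of the conditional variance, so that the conditional martingale central limit theorem applies along almost every data sequence. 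These two points, where the serial dependence of the data and the conditioning on it interact, are where the argument must be handled with the most care.
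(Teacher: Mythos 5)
Your proposal is correct. Part 1 is essentially the paper's own argument: the paper likewise divides the recursion by $\gamma_t$, sums to obtain the same averaging identity, proves $\ev[(\widehat{\theta}_t-\theta^*)^2]=O(\gamma_t)$ by unrolling the product of contraction factors, disposes of the summation-by-parts term via Abel summation, and invokes the CLT for $1$-dependent stationary sequences. Part 2, however, takes a genuinely different route. The paper works with a single generalized recursion $\theta_t=\theta_{t-1}+\gamma_t U_t(Y_t-\theta_{t-1})$ (with $U_t\equiv 1$ or $U_t=V_t$, and $\theta^*=0$ without loss of generality), proves the linearization $\frac{1}{T}\sum_{t}\theta_t=\frac{1}{T}\sum_t U_tY_t+o_P(T^{-1/2})$ once, and then subtracts the two linearizations, so the bootstrap limit is read off the explicit weighted sum $T^{-1/2}\sum_t(V_t-1)Y_t$, whose terms are conditionally independent given the data with conditional variance $\frac{1}{T}\sum_t Y_t^2\to r(0)$. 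You instead subtract first, derive the autonomous recursion for $D_t=\widehat{\theta}_t^*-\widehat{\theta}_t$ driven by the conditional martingale-difference noise $\eta_t(g_t-D_{t-1})$, and then linearize that scheme and apply a conditional martingale CLT. The paper's order of operations (linearize, then subtract) is more economical—one lemma serves both limits and no martingale structure is required—while yours makes the variance-collapse mechanism transparent at the price of controlling the extra coupling terms $\delta_{t-1}$ and $D_{t-1}$ inside the noise. Two spots in your sketch need tightening in a full write-up: the stated bound $\ev[D_t^2\mid \textrm{data}]=O(\gamma_t)$ cannot hold pathwise since the forcing term $g_t^2$ is random and unbounded, so you should close the recursion with full expectations, $\ev[D_t^2]\leq(1-\mu\gamma_t)\ev[D_{t-1}^2]+C\gamma_t^2\ev[g_t^2]$, and transfer to conditional statements by Markov's inequality; and because only two moments of $V_t$ are assumed, the conditional Lindeberg condition should be verified by truncation using $\max_{t\leq T}|g_t-D_{t-1}|=o_P(\sqrt{T})$ rather than through fourth-moment bounds.
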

In Proposition \ref{prop:failure:bootstrap:sgd}, the observations $Y_t$'s are neither independent nor necessarily Markovian. Consequently, the conditions in \cite{fang2018online} and \cite{ramprasad2022online} are not satisfied. As a result, this disparity gives rise to distinct limiting distributions between the SGD and bootstrap SGD. For instance, in cases with positive correlation (i.e., $r(1) > 0$), the confidence interval constructed by the bootstrap SGD procedure will generally exhibit a shorter length and consequently yield a lower confidence level. This observation is reinforced by our numerical findings in Section \ref{sec:simulation}. Hence, when dealing with correlated observations,  the bootstrap SGD may not be a reliable choice for interval estimation.

In this paper, we introduce a mini-batch SGD procedure involving block sampling, tailored for situations where observations are $\phi$-mixing. Concretely, we partition the entire time series into several blocks of increasing sizes. Subsequently, two mini-batch SGD trajectories are constructed, each estimated from alternate blocks. At the end of each iteration, we employ a weighted Polyak-Ruppert averaging procedure on both trajectories. Our contributions can be summarized as follows.
\begin{enumerate}[label=(\roman*)]
\item The proposed procedure makes use of the  ``independent block" trick from \cite{yu1994rates}. However, this trick cannot be directly applied in the context of stream data. Therefore, we construct  two mini-batch SGD trajectories based on alternate blocks, which is methodologically novel.
\item  We show that the proposed estimator is asymptotic normal and efficient under mild conditions.  Notably, the efficiency is achieved by a novel weighted Polyak-Ruppert averaging procedure. This result  signifies a noteworthy extension from conventional SGD to mini-batch SGD, spanning from i.i.d. observations to $\phi$-mixing time series.
\item We additionally  develop a mini-batch bootstrap  SGD  procedure for interval estimation, which is practically convenient. By employing the alternate block technique, we effectively account for correlations among observations, thus enabling the construction of reliable confidence intervals. Consequently, this approach effectively addresses the limitations of the bootstrap SGD illustrated in Proposition \ref{prop:failure:bootstrap:sgd}.
\end{enumerate}

The rest of the paper is organized as follows. Section \ref{sec:algorithm}
describes the proposed mini-batch SGD estimator. Section \ref{sec:asymptotic} provides the asymptotic results of the proposed estimator and develops a bootstrap procedure for interval estimation. In Section \ref{sec:simulation}, simulation studies are conducted to examine the finite-sample performances of the proposed procedure. We apply the proposed method to a real-world dataset in Section  \ref{sec:application}. The proofs of main results and additional numerical studies are deferred to the Supplement.

\section{Mini-batch Stochastic Gradient Descent via Block Sampling}\label{sec:algorithm}
Given a time series $\{Z_t\}_{t=1}^\infty$, we consider dividing the observations into nonoverlappling blocks whose indexes are
\begin{align*}
I_1&=\{1,\ldots, B_1\},& J_1&=\{B_1+1,\ldots, 2B_1\}\\
I_2&=\{2B_1+1,\ldots, 2B_1+B_2\},& J_2&=\{2B_1+B_2+1,\ldots, 2B_1+2B_2\},\\
&\vdots &\vdots&\\
I_t&=\left\{2\sum_{i=1}^{t-1}B_i+1,\ldots,  2\sum_{i=1}^{t-1}B_i+B_t\right\}, &J_t&=\left\{2\sum_{i=1}^{t-1}B_i+B_t+1,\ldots,  2\sum_{i=1}^{t}B_i\right\}.
\end{align*}
Here $B_t$'s are some predetermined integers. Using the above blocks, let us define random vectors $W_t^a=\{Z_i, i\in I_t\}$  and $W_t^b=\{Z_i, i\in J_t\}$. 
\begin{figure}[h!]
\centering
\begin{tikzpicture}[scale=0.45]
    \draw (-12,0)-- (2,0); 
    \draw (-12,0.2) -- (-12,-0.2) node[below] {};
    \draw (-9,0.2) -- (-9,-0.2) node[below] {};
    \draw (-6,0.2) -- (-6,-0.2) node[below] {};
    \draw (-2,0.2) -- (-2,-0.2) node[below] {};
    \draw (2,0.2) -- (2,-0.2) node[below] {};

    \path (2,1.5)-- (5,1.5) node[below,xshift=-0.8cm] {$\ldots$}; 

    \draw (5,0)-- (17,0); 
    \draw (5,0.2) -- (5,-0.2) node[below] {};
    \draw (11,0.2) -- (11,-0.2) node[below] {};
    \draw (17,0.2) -- (17,-0.2) node[below] {};

    \path (17,1.5)-- (18,1.5) node[below] {$\ldots$}; 
    
    \draw [decorate, decoration={brace,amplitude=10pt, mirror}] (-12,0) -- (-9,0) node [black,midway,yshift=-0.7cm] {$B_1$};
    \draw [decorate, decoration={brace,amplitude=10pt, mirror}] (-9,0) -- (-6,0) node [black,midway,yshift=-0.7cm] {$B_1$};
    \draw [decorate, decoration={brace,amplitude=10pt, mirror}] (-6,0) -- (-2,0) node [black,midway,yshift=-0.7cm] {$B_2$};
    \draw [decorate, decoration={brace,amplitude=10pt, mirror}] (-2,0) -- (2,0) node [black,midway,yshift=-0.7cm] {$B_2$};
    \draw [decorate, decoration={brace,amplitude=10pt, mirror}] (5,0) -- (11,0) node [black,midway,yshift=-0.7cm] {$B_t$};
    \draw [decorate, decoration={brace,amplitude=10pt, mirror}] (11,0) -- (17,0) node [black,midway,yshift=-0.7cm] {$B_t$};
    
	\filldraw [fill=blue!30!]  (-12,0) rectangle ++(3,3) node [black,midway] {$W_1^a$};
	\filldraw [fill=green!30!]  (-9,0) rectangle ++(3,3) node [black,midway] {$W_1^b$};
	\filldraw [fill=blue!30!]  (-6,0) rectangle ++(4,3) node [black,midway] {$W_2^a$};
	\filldraw [fill=green!30!]  (-2,0) rectangle ++(4,3) node [black,midway] {$W_2^b$};
	\filldraw [fill=blue!30!]  (5,0) rectangle ++(6,3) node [black,midway] {$W_t^a$};
	\filldraw [fill=green!30!]  (11,0) rectangle ++(6,3) node [black,midway] {$W_t^b$};
\end{tikzpicture}
\caption{Visualization of the data blocks.}
\label{figure:block}
\end{figure}
To estimate $\theta^*$, the following mini-batch stochastic gradient descent algorithm is proposed:
\begin{eqnarray}
\theta_t^k=\Pi\left\{\theta_{t-1}^k-\gamma_t \widehat{H}_t(W_t^k, \theta_{t-1}^k)\right\}, \textrm{ for } k=a,b. \label{eq:iteration:estimator}
\end{eqnarray}
Here $\Pi$ is the projection operator on to $\Theta$, $\gamma_t$ is the learning rate, and 
\begin{eqnarray*}
\widehat{H}_t(W_t^a,  \theta)=\frac{1}{B_t}\sum_{i\in I_t}\nabla l(Z_i, \theta),\;\; \widehat{H}_t(W_t^b,  \theta)=\frac{1}{B_t}\sum_{i\in J_t}\nabla l(Z_i, \theta). 
\end{eqnarray*}
At the end of $T$-th iteration, 
the estimator is constructed based on the following weighted  Polyak-Ruppert averaging procedure:
\begin{eqnarray*}
\overline{\theta}_T=\frac{1}{2\sum_{t=1}^T B_t}\sum_{t=1}^TB_t(\theta_t^a+\theta_t^b).
\end{eqnarray*}
If we define $n_t=2\sum_{i=1}^t B_i$ to be the number of observations used in the first $t$-iterations, then the above weighted  Polyak-Ruppert estimator can be computed in an online fashion as follows:
\begin{eqnarray*}
n_t=n_{t-1}+2B_t,\quad \overline{\theta}_t=\frac{n_{t-1}}{n_t}\overline{\theta}_{t-1}+\frac{1}{n_t}B_t(\theta_t^a+\theta_t^b).
\end{eqnarray*}
Hence, the proposed algorithm is memory-efficient, requiring the storage of only the values of $n_t$ and $\overline{\theta}_t$ during the iteration.

The concept behind the proposed mini-batch SGD algorithm is inspired by the "independence block" trick presented in \cite{yu1994rates}. To illustrate this, let us define $\mcD_t^a = {W_s^a, s=1, \ldots, t}$ as the data utilized up to the $t$-th iteration of $\theta_t^a$. With $B_{t-1}$ observations being between $\mcD_{t-1}^a$ and $W_t^a$, the $\phi$-mixing condition (formally defined in Section \ref{sec:consistency}) implies that $\ev\{\widehat{H}_t(W_t^a, \theta_{t-1}^a)|\theta_{t-1}^a\}\approx \nabla L(\theta_{t-1}^a)$ under the condition that $B_{t-1}$ is sufficiently large. This approximation significantly contributes to our analytical framework.

\begin{figure}[h!]
\centering
\begin{tikzpicture}[scale=0.45]
    \draw (-12,0)-- (-2,0); 
    \draw (-12,0.2) -- (-12,-0.2) node[below] {};
    \draw (-9,0.2) -- (-9,-0.2) node[below] {};

    \draw (-6,0.2) -- (-6,-0.2) node[below] {};
    \draw (-2,0.2) -- (-2,-0.2) node[below] {};
    \filldraw [fill=yellow!30!]  (-12,0) rectangle ++(18,5) node [black,midway, yshift=0.76cm] {$\mcD_{t-1}^a$};
    
    \path (-2,1.5)-- (1,1.5) node[below,xshift=-0.8cm] {$\ldots$}; 
    
    \draw (1,0)-- (17,0); 
    \draw (1,0.2) -- (1,-0.2) node[below] {};
    \draw (6,0.2) -- (6,-0.2) node[below] {};
    \draw (11,0.2) -- (11,-0.2) node[below] {};
    \draw (17,0.2) -- (17,-0.2) node[below] {};
    \path (17,1.5)-- (18,1.5) node[below] {$\ldots$}; 

    \draw [decorate, decoration={brace,amplitude=15pt, mirror}] (6,0) -- (11,0) node [black,midway,yshift=-0.8cm] {$B_{t-1}$};

	\filldraw [fill=blue!30!]  (-12,0) rectangle ++(3,3) node [black,midway] {$W_1^a$};
	\filldraw [fill=blue!30!]  (-6,0) rectangle ++(4,3) node [black,midway] {$W_2^a$};
	\filldraw [fill=blue!30!]  (1,0) rectangle ++(5,3) node [black,midway] {$W_{t-1}^a$};
     \filldraw [fill=blue!30!]  (11,0) rectangle ++(6,3) node [black,midway] {$W_{t}^a$};

\end{tikzpicture}
\caption{Visualization of ``independence block" trick.}
\label{figure:block:trick}
\end{figure}

The proposed algorithm can be applied to many important statistical models, and we provide some examples below.

\begin{Example}[label=linear]\label{example:linear}(Ordinary Least Squares (OLS) Regression) Let the random vector be $Z=(Y, X^\top)^\top$ with $Y\in \mathbb{R}$ and $X\in \mathbb{R}^d$ satisfying $Y=X^\top\theta^*+\epsilon$. Here  $\epsilon\in \mathbb{R}$ is the random noise. The loss function can be chosen as $l(z, \theta)=(y-x^\top \theta)^2/2$, and the corresponding gradient is $\nabla l(z, \theta)=-(y-x^\top \theta)x$.
\end{Example}
\begin{Example}[label=lad]\label{example:LAD}(Least Absolute Deviation (LAD) Regression) Consider the same model in Example \ref{example:linear}, the LAD regression has a loss function  $l(z, \theta)=|y-x^\top \theta|$ and a subgradient $\nabla l(z, \theta)=-\sign(y-x^\top \theta)x$. Here $\sign(\cdot)$ is the sign function.
\end{Example}
\begin{Example}[label=logistic]\label{example:logistic} (Logistic Regression) Suppose that the observation $Z=(Y, X^\top)^\top$ with $Y\in \{-1, 1\}$ and $X\in \mathbb{R}^d$ satisfies $\pr(Y=y|X=x)=[1+\exp(-yx^\top\theta^*)]^{-1}$. The loss function is $l(z, \theta)=\log(1+\exp(-yx^\top\theta))$ with the gradient $\nabla l(z, \theta)=-yx [1+\exp(yx^\top\theta)]^{-1}$.
\end{Example}
\begin{Example}[label=quantile]\label{example:quantile} (Quantile Regression) Let the random vector be $Z=(Y, X^\top)^\top$ with $Y\in \mathbb{R}$ and $X\in \mathbb{R}^d$ satisfying $Y=X^\top\theta^*+\epsilon$ and $\pr(\epsilon\leq 0|X)=\tau$ for a fixed quantile $\tau\in (0, 1)$. The loss function becomes $l(z, \theta)=(y-x^\top\theta)(\tau-I\{y\leq x^\top\theta\})$ with a subgradient $\nabla l(z, \theta)=x(I\{y\leq x^\top\theta\}-\tau)$.
\end{Example}

\section{Asymptotic Theory}\label{sec:asymptotic}
In this section, we establish some asymptotic results for the proposed estimator, including consistency and the limiting distribution. Moreover, we also develop a valid bootstrap procedure for interval estimation.
\subsection{Consistency}\label{sec:consistency}
In the following, we will show that the proposed estimator $\overline{\theta}_T$ is consistent. Before moving forward, let us review the definition of $\phi$-mixing and state some technical assumptions.
\begin{Definition}
Let $(\Omega, \mcC, Q)$  be a probability space, and let $\mcA, \mcB$ be two sub-sigma algebras of $\mcC$. The $\phi$-mixing coefficient of $\mcA$ and $\mcB$ is defined as
\begin{eqnarray*}
\phi_Q(\mcB, \mcA)=\sup_{A\in \mcA, B\in \mcB, Q(B)>0}|Q(B|A)-Q(B)|.
\end{eqnarray*}
Let $\bfZ=\{Z_t\}_{t=1}^\infty$ be a stationary time series, and let $\mcF_{s}^{k}$ be the sigma algebra generated by $\{Z_t\}_{t=s}^k$ for $1\leq s \leq k\leq \infty$. Moreover, let $Q$ be the joint measure of $\{Z_t\}_{t=1}^\infty$ define on $\mcF_1^\infty$.  The $\phi$-mixing coefficient of the time series is defined as
\begin{eqnarray*}
\phi_{\bfZ}(t)= \sup_{s\geq 1}\phi_Q(\mcF_1^{s}, \mcF_{s+t}^\infty).
\end{eqnarray*}
\end{Definition}
Intuitively, the mixing coefficients characterize the dependence between the observations $\{Z_i\}_{i=1}^s$ and $\{Z_i\}_{i=s+t}^\infty$. Notably, $\phi$-mixing is considered as a more general dependence assumption than Markovian dependence (\citealp{davydov1973mixing}). In the literature, it is often assumed that the mixing coefficients converge to zero at different rates when the time gap $t\to \infty$, and we provide two commonly used settings below (e.g., see \citealp{Ronmixing1997}).
\begin{itemize}
\item  Algebraically $\phi$-mixing:  $\phi_\bfZ(t)\leq c t^{-\eta}$ for some $c, \eta>0$. 
\item  Geometrically  $\phi$-mixing:  $\phi_\bfZ(t)\leq c_1\exp(-c_2 t^\eta)$ for some $c_1, c_2, \eta>0$.
\end{itemize}
Examples of $\phi$-mixing data include autoregressive processes  (\citealp{athreya1986mixing}), moving average processes (\citealp{fanYao2003}), Metropolis-Hastings samplers (\citealp{Jarner2002AOS}),  queuing systems (\citealp{agarwal2012generalization}), some Markov chains (\citealp{davydov1973mixing,meyn2012markov}). In particular,  it can be verified that the moving average processes with finite orders are  geometrically  $\phi$-mixing but not Markovian.

\begin{Assumption}\label{Assumption:A0}
\begin{enumerate}[label=(\roman*)]
\item  \label{A1:mixing:rate} The sequence  $\bfZ=\{Z_t\}_{t=1}^\infty$ is stationary and $\phi$-mixing. Moreover, its mixing coefficients satisfy  $\phi_\bfZ(t)\leq \phi(t)$ for some  sequence $\phi(t)\downarrow 0$.
\item  \label{A1:positive:variance} Let $r(t)=\ev\{\nabla l(Z_{k+t}, \theta^*) \nabla l^\top(Z_{k}, \theta^*)\}\in \mathbb{R}^{d\times d}$ for $t\geq 0$, and we assume that the matrix $r(0)+2\sum_{t=1}^\infty r(t)$ is positive definite.
\end{enumerate}
\end{Assumption}

\begin{Assumption}\label{Assumption:A1}
There exists a constants $K>0$ such that the following conditions hold.
\begin{enumerate}[label=(\roman*)]
\item \label{A1:compact:Theta} The parameter space $\Theta$ is a compact subset of $\mathbb{R}^d$.
\item \label{A1:identification} The objective function $L(\theta)$ is strongly convex and twice continuously differentiable in $\Theta$. Moreover,  $\theta^*$ is the unique solution of $\nabla L(\theta)=0$, and $\theta^*$ is in the interior of $\Theta$.
\item  \label{A1:lipchiz:hession} The Hessian matrix $G=\nabla^2 L(\theta^*)\in \mathbb{R}^{d\times d}$ is positive definite. Furthermore,  the inequality $\|\nabla^2 L(\theta)-\nabla^2 L(\theta^*)\|\leq K\|\theta-\theta^*\|$ holds for all $\theta\in \Theta$.
\item \label{A1:moment:condition} For some even integer $p\geq 4$ and some function $M(\cdot)$, it holds that $\sup_{\theta \in \Theta}\|\nabla l(Z, \theta)\|\leq M(Z)$ and $\ev\{M^p(Z)\}<\infty$.
\item \label{A1:expectation:H:theta}  The gradient (or subgradient) satisfies that $\ev\{\nabla l(Z, \theta)\}=\nabla L(\theta)$ for all $\theta\in \Theta$.
\end{enumerate}
\end{Assumption}
\begin{Assumption}\label{Assumption:A2}
Let $p\geq 4$ be the constant introduced in Assumption \ref{Assumption:A1}\ref{A1:moment:condition}. The following conditions hold.
\begin{enumerate}[label=(\roman*)]
\item \label{A2:leanring:rate} The learning rate satisfies $\gamma_t\asymp t^{-\rho}$ for some $\rho\in (1/2, 1)$.
\item \label{A2:mixing:condition} The mixing coefficient satisfies $\sum_{t=1}^\infty \phi^{\frac{1}{2}}(t)<\infty$.
\item  \label{A2:batch:rate} The batch size satisfies $B_t\asymp t^{\beta}$ for some $\beta> 0$, and $\sum_{t=1}^\infty \gamma_t \phi^{\frac{1}{2}-\frac{1}{p}}(B_t)<\infty$. 
\end{enumerate}
\end{Assumption}
Assumption \ref{Assumption:A0} constitutes the distributional assumptions governing the time series. Specifically, Assumption \ref{Assumption:A0}\ref{A1:mixing:rate} requires the stationarity of the time series and diminishing $\phi$-mixing coefficients, which aligns with similar conditions found in \cite{yu1994rates}. While it is possible to relax the stationarity condition, an alternative approach is to allow $Z_t$ to have a time-varying marginal distribution $F_t$ that gradually converges to the marginal distribution $F$ of $Z$ in~\eqref{eq:model}. In that case, one must impose rate conditions on the convergence $F_t \to F$ (e.g., see \citealp{ramprasad2022online, li2023SGDMarkov}). Because this extension requires more delicate arguments, we assume for simplicity that $Z_t$ shares a common distribution with $Z$. Assumption \ref{Assumption:A0}\ref{A1:positive:variance} serves as a conventional assumption to ensure the positive definiteness of the asymptotic covariance matrix for dependent observations. This requirement is frequently encountered to establish the limiting distributions in the context of dependent data (e.g., see \citealp{fanYao2003}).

The conditions outlined in Assumption \ref{Assumption:A1} are closely aligned with existing works. For instance, the compactness condition in Assumption \ref{Assumption:A1}\ref{A1:compact:Theta} is a standard requirement in parametric estimation problems (e.g., see \citealp{ferguson2017course}). Assumptions \ref{Assumption:A1}\ref{A1:identification} through \ref{Assumption:A1}\ref{A1:moment:condition} introduce regularity conditions concerning the loss function $l(z, \theta)$ and its expected value $L(\theta)$, which were commonly imposed in the SGD literature (e.g., see \citealp{chen2020statistical,su2018uncertainty,liu2022105017}).  Assumption \ref{Assumption:A1}\ref{A1:expectation:H:theta} indicates that  $\nabla l(Z, \theta)$ should be an unbiased estimator of $\nabla L(\theta)$, and this condition plays a crucial role in the validity of SGD. All the conditions in Assumption \ref{Assumption:A1} can be verified for Examples \ref{example:linear}-\ref{example:logistic}.

Assumption \ref{Assumption:A2} introduces some rate conditions. More specifically, Assumption \ref{Assumption:A2}\ref{A2:leanring:rate} outlines the learning rate for the $t$-th iteration. It satisfies that $\sum_{t=1}^\infty \gamma_t = \infty$ and $\sum_{t=1}^\infty \gamma_t^2 < \infty$, which were commonly assumed in the literature  (e.g., \citealp{polyak1992acceleration, fang2018online, su2018uncertainty}). Assumption \ref{Assumption:A2}\ref{A2:mixing:condition} is the $\phi$-mixing condition, assuming weak correlation among $Z_t$'s. The purpose of the rate condition $\sum_{t=1}^\infty \phi^{1/2}(t)<\infty$  is to make use of the moment inequality developed in \cite{yokoyama1980moment}. Assumption \ref{Assumption:A2}\ref{A2:batch:rate} indicates that the batch size $B_t$ should diverge  with a parameter $\beta>0$ controlling for its speed. Hereafter, we briefly discuss the sufficient conditions of $B_t$ in the algebraically and geometrically $\phi$-mixing scenarios.

\begin{itemize}
\item  Algebraically $\phi$-mixing:  If $\phi(t)= c t^{-\eta}$ for some $c>0, \eta>0$, then \ref{Assumption:A2}\ref{A2:batch:rate} is satisfied with $\beta>2p(1-\rho)/(p\eta-2\eta)$. This indicates the existence of suitable batch sizes. 
\item  Geometrically  $\phi$-mixing: If  $\phi(t)=c_1\exp(-c_2 t^\eta)$ for some $c_1, c_2, \eta>0$, then \ref{Assumption:A2}\ref{A2:batch:rate} is satisfied with $\beta>0$. Due to the extremely weak dependence in the geometrically $\phi$-mixing scenarios, we allow the batch sizes grow at any polynomial rate. 
\end{itemize}

Based on the above assumptions, the following theorem justifies the validity of the proposed procedure.
\begin{theorem}\label{theorem:consistency}
Under Assumptions \ref{Assumption:A0}-\ref{Assumption:A2}, the following statements hold.
\begin{enumerate}[label=(\roman*)]
\item \label{item:theorem:consistency:1} For $k=a, b$, it holds that $\theta_T^k\cae \theta^*$ and $\overline{\theta}_T\cae \theta^*$ as $T\to \infty$.
\item \label{item:theorem:consistency:2} There is a constant $C>0$ such that $\ev(\|\theta_t^k-\theta^*\|^2)\leq C\gamma_t+C\phi^{\frac{1}{2}-\frac{1}{p}}(B_t)$ for all $t\geq 1$ and $k=a, b$.
\end{enumerate}
\end{theorem}

Statement \ref{item:theorem:consistency:1} in Theorem \ref{theorem:consistency} shows that the proposed SGD estimators are strongly consistency. Similar results were obtained by \cite{polyak1992acceleration} with i.i.d. observations. The proof relies the Robbins-Siegmund Theorem in \cite{robbins1971convergence}. Statement \ref{item:theorem:consistency:2} in Theorem \ref{theorem:consistency} is closely related but different from the existing works. In the i.i.d. setting with $B_t=1$, it well known (e.g., \citealp{pelletier2000asymptotic}) that the second moment of the SGD estimator is of a $\gamma_t$ order. In comparison, our result involves an additional order $\phi^{1/2-1/p}(B_t)$ due to $\phi$-mixing.




\subsection{Asymptotic Normality}
To establish the limiting distribution of the proposed estimator, additional conditions about the batch size should be imposed, which are used to handle some higher-order error terms.
\begin{Assumption}\label{Assumption:A3}
Let $p, \rho, \beta$ be the parameters introduced in Assumptions \ref{Assumption:A1}\ref{A1:moment:condition} and \ref{Assumption:A2}. We assume they satisfy the following conditions.
\begin{enumerate}[label=(\roman*)]
\item \label{A3:rate:phi}  $0<\beta<\min\{2\rho-1, 1-\rho\}$. 
\item \label{A3:rate:batch}  $\lim_{t\to \infty}t^{\frac{2\rho+\beta+1}{\beta}}\phi^{\frac{1}{2}-\frac{1}{p}}(t)=0$. 
\end{enumerate}
\end{Assumption}
Compared with the lower bound condition of $B_t$ in Assumption \ref{Assumption:A2}\ref{A2:batch:rate}, Assumption \ref{Assumption:A3} presents some sufficient conditions to control for the growing rate of $B_t\asymp t^{\beta}$ including both upper and lower bounds. Specifically, the upper bound in  Assumption \ref{Assumption:A3}\ref{A3:rate:phi} is to bound the bias terms introduced during the iteration of our algorithm, while the lower bound  in \ref{Assumption:A3}\ref{A3:rate:batch}  is imposed to handle the bias due to correlation. 
\begin{Remark}\label{remark:tuning:selection}
Notice that  the function $\min\{2\rho-1, 1-\rho\}$ is maximized at $\rho=2/3$. With this choice of $\rho$,  Assumption \ref{Assumption:A3} is simplified to $\beta\in (0, 1/3)$ and  $\lim_{t\to \infty}t^{7/(3\beta)+1}\phi^{1/2-1/p}(t)=0$. Hence, we obtain the following sufficient conditions to guarantee the existence of $\beta$ satisfying \ref{Assumption:A3} for algebraically and geometrically mixing sequences.
\begin{itemize}
\item  Algebraically $\phi$-mixing:  If $\phi(t)= c t^{-\eta}$ for some $c>0, \eta> 16p/(p-2)$, then the choices $\rho=2/3$ and $\beta\in (1/(3+\Delta), 1/3)$ satisfy  Assumption \ref{Assumption:A3}. Here $\Delta=3(p-2)\delta/(14p)$ with $\delta=\eta-16p/(p-2)>0$.
\item  Geometrically  $\phi$-mixing: If  $\phi(t)=c_1\exp(-c_2 t^\eta)$ for some $c_1, c_2, \eta>0$, then the choices $\rho=2/3$ and $\beta\in (0,1/3)$ satisfy   Assumption \ref{Assumption:A3}.
\end{itemize}
Based on the above discussion, we suggest using $\rho=2/3$ and $\beta\approx 1/3$. Our numerical results in Section \ref{sec:simulation} also support this suggestion.
\end{Remark}

\begin{theorem}\label{theorem:asymptotic:normality} Suppose that Assumptions \ref{Assumption:A0}-\ref{Assumption:A3} hold. Then it follows that
\begin{eqnarray*}
\sqrt{2\sum_{t=1}^T B_t}(\overline{\theta}_T-\theta^*)\cid N(0, \Sigma),
\end{eqnarray*}
where $\Sigma=G^{-1}\{r(0)+2\sum_{k=1}^\infty r(k)\}G^{-1}$ with $r(k)=\ev\{\nabla l(Z_{t+k}, \theta^*) \nabla l^\top(Z_{t}, \theta^*)\}$ being the autocorrelation coefficient.
\end{theorem}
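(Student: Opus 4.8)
The plan is to adapt the Polyak--Ruppert averaging argument of \cite{polyak1992acceleration} to the two interleaved trajectories, while accounting for the dependence-induced bias. By Theorem~\ref{theorem:consistency} and the remark after it, the projection $\Pi$ is active only finitely often almost surely, so I may analyze the unprojected recursion. Writing $\Delta_t^k:=\theta_t^k-\theta^*$ and $\mcF_{t-1}^k:=\sigma(\mcD_{t-1}^k)$ for $k=a,b$, I would split the block gradient as $\widehat{H}_t(W_t^k,\theta_{t-1}^k)=\nabla L(\theta_{t-1}^k)+b_t^k+\xi_t^k$, where $b_t^k:=\ev[\widehat{H}_t(W_t^k,\theta_{t-1}^k)\mid\mcF_{t-1}^k]-\nabla L(\theta_{t-1}^k)$ is the conditional bias and $\xi_t^k:=\widehat{H}_t(W_t^k,\theta_{t-1}^k)-\ev[\widehat{H}_t(W_t^k,\theta_{t-1}^k)\mid\mcF_{t-1}^k]$ is a martingale-difference noise. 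Replacing $\nabla L(\theta_{t-1}^k)$ by $G\Delta_{t-1}^k$ via Assumption~\ref{Assumption:A1}\ref{A1:lipchiz:hession} (with remainder $R_t^k$ of order $\|\Delta_{t-1}^k\|^2$) gives
\[
\Delta_t^k=\Delta_{t-1}^k-\gamma_t G\Delta_{t-1}^k-\gamma_t(b_t^k+\xi_t^k)+\gamma_t R_t^k .
\]
Solving for $G\Delta_{t-1}^k$, averaging over $t\le T$, and summing by parts as in the classical argument yields, with $\overline{\theta}_T^k:=T^{-1}\sum_{t=1}^T\theta_t^k$,
\[
\overline{\theta}_T^k-\theta^*=-G^{-1}\frac{1}{T}\sum_{t=1}^T\xi_t^k+\mathrm{(remainders)},
\]
where the remainders collect the Abel-summation boundary and increment terms in $\Delta_t^k/\gamma_t$, the Taylor remainder $R_t^k$, and the accumulated bias $T^{-1}\sum_t G^{-1}b_t^k$.

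Controlling the remainders is where Assumptions~\ref{Assumption:A2} and \ref{Assumption:A3} enter. I would first prove the second-moment rate $\ev\|\Delta_t^k\|^2\lesssim\gamma_t$ by a recursive (Robbins--Siegmund--type) inequality using strong convexity (Assumption~\ref{Assumption:A1}\ref{A1:identification}) and the per-step variance bound $\ev\|\xi_t^k\|^2\lesssim B_t^{-1}$, the latter following from Yokoyama's moment inequality \cite{yokoyama1980moment} under $\sum_t\phi^{1/2}(t)<\infty$ together with $\sum_k\|r(k)\|<\infty$ (Assumption~\ref{Assumption:A2}\ref{A2:mixing:condition}). Given $\ev\|\Delta_t^k\|^2\lesssim t^{-\rho}$, the slow variation of $\gamma_t$ makes the boundary term $\gamma_T^{-1}\Delta_T^k/T$ negligible after multiplication by $T/\sqrt{\sum_{t\le T}B_t^{-1}}$ precisely because $t^\rho=o(\sum_{j\le t}B_j^{-1})$, and the Taylor remainder is killed by $\sum_{j\le t}j^{-\rho}=o(\sqrt{\sum_{j\le t}B_j^{-1}})$. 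The genuinely new term is the bias $b_t^k$, and this is exactly where the ``independent block'' construction is used: since $\mcD_{t-1}^k$ and $W_t^k$ are separated by a full block of length $B_{t-1}$, the $\phi$-mixing covariance inequality applied uniformly over $\theta\in\Theta$ (Assumption~\ref{Assumption:A1}\ref{A1:compact:Theta},\ref{A1:moment:condition}) gives $\|b_t^k\|_{L^2}\lesssim\phi^{1/2-1/p}(B_{t-1})$, so the condition $\sum_{j\le t}\phi^{1/2-1/p}(B_j)=o(\sqrt{\sum_{j\le t}B_j^{-1}})$ forces the rescaled accumulated bias to vanish in probability.

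It then remains to prove a central limit theorem for $(\sum_{t\le T}B_t^{-1})^{-1/2}\sum_{t=1}^T\xi_t^k$. Since $\{\xi_t^k,\mcF_t^k\}$ is a martingale-difference array, I would apply a triangular-array (Lindeberg--Feller) martingale CLT. For the predictable quadratic variation I would show $\ev[\xi_t^k(\xi_t^k)^\top\mid\mcF_{t-1}^k]=B_t^{-1}\Gamma+o(B_t^{-1})$ with $\Gamma:=r(0)+2\sum_{k=1}^\infty r(k)$, using stationarity, $B_t\uparrow\infty$ (Assumption~\ref{Assumption:A2}\ref{A2:batch:rate}), and the fact that the covariance of the block average of $\nabla l(\cdot,\theta^*)$ converges to $\Gamma/B_t$; consistency and Hessian continuity justify evaluating at $\theta^*$. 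Normalizing gives $(\sum_{t\le T}B_t^{-1})^{-1}\sum_t\ev[\xi_t^k(\xi_t^k)^\top\mid\mcF_{t-1}^k]\cip\Gamma$, and the Lindeberg condition follows from $\ev\|\xi_t^k\|^p\lesssim B_t^{-p/2}$ (Yokoyama's inequality with $p>2$ from Assumption~\ref{Assumption:A1}\ref{A1:moment:condition}) and the divergence $\sum_{t\le T}B_t^{-1}\to\infty$. Hence $\frac{T}{\sqrt{\sum_{t\le T}B_t^{-1}}}(\overline{\theta}_T^k-\theta^*)\cid N(0,G^{-1}\Gamma G^{-1})$ for each $k$.

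Finally, I would establish joint convergence of the two trajectories and combine. The cross term $\ev[\xi_t^a(\xi_t^b)^\top\mid\cdot]$ is the covariance of block averages over the adjacent blocks $I_t,J_t$; because $\sum_k\|r(k)\|<\infty$ and $B_t\uparrow\infty$, this boundary contribution is $o(B_t^{-1})$, while cross terms at different times are negligible by mixing, so the two leading sums are asymptotically uncorrelated and, by a joint martingale CLT, asymptotically independent. Since $\overline{\theta}_T-\theta^*=\tfrac12[(\overline{\theta}_T^a-\theta^*)+(\overline{\theta}_T^b-\theta^*)]$ and each rescaled trajectory is asymptotically $N(0,G^{-1}\Gamma G^{-1})$, the limit of $\frac{T}{\sqrt{\sum_{t\le T}B_t^{-1}}}(\overline{\theta}_T-\theta^*)$ is $N(0,\tfrac12 G^{-1}\Gamma G^{-1})$, and $\tfrac12\Gamma=r(0)/2+\sum_{k\ge1}r(k)$ reproduces $\Sigma$. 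I expect the bias term $b_t^k$ to be the main obstacle: simultaneously quantifying the $\phi$-mixing approximation $\ev[\widehat{H}_t\mid\mcF_{t-1}^k]\approx\nabla L$ uniformly in the random argument $\theta_{t-1}^k$ and proving that its cumulative, rescaled effect is $o_{\mathbb{P}}(1)$ is precisely what motivates the interleaved two-trajectory design and the delicate rate conditions of Assumption~\ref{Assumption:A3}.
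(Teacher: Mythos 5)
Your overall architecture tracks the paper's own proof closely: discard the projection via strong consistency, split each block gradient into a mixing bias plus own-filtration martingale noise, apply the Polyak--Ruppert linearization, and kill the remainders with the rate conditions of Assumption~\ref{Assumption:A3} (your three rate checks are exactly how the paper disposes of the terms $S_1$, $S_2$, $S_4$ in the proof of Lemma~\ref{lemma:asymptotic:expansion}, using the second-moment bound of Lemma~\ref{lemma:rate:evDelta2}). Your variance bookkeeping is also correct: each trajectory's leading sum carries variance $\Gamma=r(0)+2\sum_{k\geq 1}r(k)$ per unit of $\sum_t B_t^{-1}$, the cross-covariance of the two leading sums vanishes (a Kronecker-lemma consequence of $\sum_k\|r(k)\|<\infty$ applied to the adjacent blocks), and $\tfrac12\Gamma$ reproduces $\Sigma$.

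The genuine gap is the final step, ``by a joint martingale CLT, asymptotically independent.'' A multivariate martingale CLT requires the pair $(\xi_t^a,\xi_t^b)$ to be a martingale-difference array with respect to a \emph{single common} filtration, and no usable one exists here. $\xi_t^a$ is centered by $\ev[\,\cdot\,|\mcF_{t-1}^a]$ with $\mcF_{t-1}^a$ generated by the $a$-blocks only; with respect to the joint past $\mcH_{t-1}=\sigma(W_s^a,W_s^b,\,s\leq t-1)$ it is \emph{not} a martingale difference, and the discrepancy $\ev[\widehat{H}_t(W_t^a,\theta_{t-1}^a)\,|\,\mcH_{t-1}]-\ev[\widehat{H}_t(W_t^a,\theta_{t-1}^a)\,|\,\mcF_{t-1}^a]$ cannot be bounded by $\phi(B_{t-1})$-type quantities, because $J_{t-1}$ is \emph{adjacent} to $I_t$: the relevant mixing gap is zero. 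This is precisely the structural point of the alternating two-trajectory design --- each trajectory sees a gap of length $B_{t-1}$ only relative to its \emph{own} past --- and it is why the paper does not attempt any martingale CLT for the joint step. Instead, Lemma~\ref{lemma:expansion:S2} pushes the reduction one step further than you do: it replaces $\zeta_t$ (your $\xi_t^k$, which still depends on the random $\theta_{t-1}^k$ and on a filtration) by the pure block average $\widehat{H}_t(W_t,\theta^*)$ evaluated at the fixed point $\theta^*$, with the two replacement errors controlled by Lemmas~\ref{lemma:asymptotic:normal:rate:1} and~\ref{lemma:asymptotic:normal:rate:2}. After this reduction, the joint behavior of the two trajectories is a statement about the raw mixing sequence $\{\nabla l(Z_i,\theta^*)\}$ alone: the paper combines the leading terms into the single sum $\sum_t[\widehat{H}_t(W_t^a,\theta^*)+\widehat{H}_t(W_t^b,\theta^*)]$ and proves one CLT for this $\phi$-mixing array (Lemma~\ref{lemma:pre:gaussian}), the limit variance $2r(0)+4\sum_{k}r(k)$ emerging from a direct covariance computation over the combined blocks $D_t=I_t\cup J_t$. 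Your asymptotic-independence claim is then an \emph{output} of that computation, not an input. To repair your argument, either adopt this reduction and conclude via Cram\'er--Wold with a CLT for $\phi$-mixing triangular arrays, or at minimum perform the replacement $\theta_{t-1}^k\to\theta^*$ (which you need anyway to identify the limiting variance) before treating the two trajectories jointly; the per-trajectory martingale route you sketch is fine for each marginal, but it cannot deliver the joint limit.
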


Theorem \ref{theorem:asymptotic:normality} reveals the asymptotic normality of the proposed estimator $\overline{\theta}_T$, and the asymptotic covariance matrix has a sandwich form.  In the special case with i.i.d. observations,  since $r(k)=0$ when $k\geq 1$, Theorem \ref{theorem:asymptotic:normality} implies that
\begin{eqnarray*}
\sqrt{n_T}(\overline{\theta}_T-\theta^*)\cid N(0, G^{-1}r(0)G^{-1}),
\end{eqnarray*}
where $n_T=2\sum_{t=1}^T B_t$ is the sample size.  The above convergence coincides with the existing results in \cite{polyak1992acceleration}, and the asymptotic covariance matrix attains the efficiency bound.  Hence, Theorem \ref{theorem:asymptotic:normality} essentially is a generalization from i.i.d. observations to $\phi$-mixing data and from SGD to mini-batch SGD.


We end this section by discussing the expressions of $G$ and $\nabla l(z, \theta^*)$ in Examples \ref{example:linear}-\ref{example:quantile}.
\begin{Example}[continues=linear]
Suppose the error term satisfies that $\ev(\epsilon X)=0$. It can be verified that $\nabla l(Z,\theta^*)=-(Y-X^\top \theta^*)X=-\epsilon X$ and $G=\ev(XX^\top)$ provided the corresponding expected value exists.
\end{Example}

\begin{Example}[continues=lad] 
Let  $f_\epsilon(\cdot |x)$ be the conditional  density  of $\epsilon$ given $X=x$, and suppose that the conditional median of $\epsilon$ given $X$ is zero almost surely. We can verify that  $\nabla l(Z,\theta^*)=-X\sign(Y-X^\top \theta^*)=-X\sign(\epsilon)$ and $G=2\ev\{f_\epsilon(0|X)XX^\top\}$ under some mild regularity conditions.
\end{Example}

\begin{Example}[continues=logistic]
Under some standard assumptions of logistic regression, it is not difficult to verify that
\begin{eqnarray*}
\nabla l(Z,\theta^*)=\frac{-YX}{1+\exp(YX^\top\theta^*)},\; G=\ev\bigg(\frac{\exp(X^\top \theta^*)}{[1+\exp(X^\top \theta^*)]^2}XX^\top\bigg).
\end{eqnarray*}
\end{Example}

\begin{Example}[continues=quantile]
Let  $f_\epsilon(\cdot |x)$ be the conditional  density  of $\epsilon$ given $X=x$, and suppose that $\pr(\epsilon\leq 0|X)=\tau$ almost surely. We can verify that  $\nabla l(Z,\theta^*)=X(I\{Y\leq X^\top\theta^*\}-\tau)=X(I\{\epsilon\leq 0\}-\tau)$ and $G=\ev\{f_\epsilon(0|X)XX^\top\}$ under some regularity conditions.
\end{Example}

\subsection{Bootstrap Inference}\label{sec:bootstrap}
The results developed  in previous sections provide theoretical justifications of the proposed estimator. However, the asymptotic covariance matrix $\Sigma$ is generally unknown in practice. Classical plug-in estimators of the covariance matrix may be inefficient or infeasible in stream data settings as it may require estimating nonparametric components (please see Examples \ref{example:LAD} and \ref{example:quantile}) and autocorrelation coefficients $r(k)$. As shown in Proposition \ref{prop:failure:bootstrap:sgd}, the bootstrap SGD in \cite{fang2018online, ramprasad2022online} may ignore the correlation among data. To address the limitations, we proposed a  mini-batch bootstrap SGD algorithm for interval estimation. Before proceeding, we first introduce the following  mini-batch bootstrap SGD estimators:
\begin{eqnarray}
\theta_t^{*k}=\Pi\left\{\theta_{t-1}^{*k}-\gamma_t V_t \widehat{H}_t(W_t^k, \theta_{t-1}^{*k})\right\}, \textrm{ for } k=a,b,  \label{eq:iteration:estimator:bootstrap}
\end{eqnarray}
where $V_t$'s are i.i.d random variables with mean one and unit variance that are independent from $Z_t$'s.  The bootstrap version of the averaged estimator  is given by
\begin{eqnarray*}
\overline{\theta}_T^*=\frac{1}{2\sum_{t=1}^T B_t}\sum_{t=1}^TB_t(\theta_t^{*a}+\theta_t^{*b}).
\end{eqnarray*}
The following theorem justifies the large sample properties of $\overline{\theta}_T^*$.

\begin{theorem}\label{theorem:bootstrap:asymptotic}
Suppose that Assumptions \ref{Assumption:A0}-\ref{Assumption:A3} are satisfied and $\ev(|V_t|^p)<\infty$ with $p$ being the moment parameter defined in Assumption \ref{Assumption:A1}\ref{A1:moment:condition}.  Then it holds that
\begin{eqnarray*}
\sqrt{\sum_{t=1}^T 2B_t}(\overline{\theta}_T^*-\overline{\theta}_T)\bigg|\mcD_T \cid N(0, \Sigma) \textrm{ in probability},
\end{eqnarray*}
where $\mcD_T=\{Z_i, i\in I_t\cup J_t, t=1,\ldots, T\}$ is the collection of observations used in the first $T$ iterations, and $\Sigma$ is the covariance matrix in Theorem \ref{theorem:asymptotic:normality}.
\end{theorem}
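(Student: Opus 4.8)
The plan is to mirror the proof of Theorem~\ref{theorem:asymptotic:normality}, carefully tracking the effect of the bootstrap weights $V_t$, and then to upgrade the resulting representation to a \emph{conditional} central limit theorem given $\mcD_T$. First I would establish a bootstrap analogue of Theorem~\ref{theorem:consistency}: because each $V_t$ has mean one, finite $p$-th moment, and is independent of the data and of the past iterates, the recursion (\ref{eq:iteration:estimator:bootstrap}) is again a mini-batch SGD recursion whose conditional drift equals $\widehat{H}_t(W_t^k,\theta_{t-1}^{*k})$, so the Robbins--Siegmund argument used for Theorem~\ref{theorem:consistency} carries over and yields $\theta_t^{*k}\cae\theta^*$; as in the discussion following Theorem~\ref{theorem:consistency}, the projection $\Pi$ is then inactive for all large $t$ almost surely, so I may work with the unprojected recursion. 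Writing $\xi_t^k:=\widehat{H}_t(W_t^k,\theta^*)$ (with unconditional mean $\nabla L(\theta^*)=0$) and $\delta_t^k:=\theta_t^{*k}-\theta^*$, and expanding $\nabla L$ about $\theta^*$ via the Lipschitz-Hessian condition (Assumption~\ref{Assumption:A1}\ref{A1:lipchiz:hession}) together with the independent-block approximation $\ev[\widehat{H}_t(W_t^k,\theta)\mid\text{past}]\approx\nabla L(\theta)$, the recursion becomes
\[
\delta_t^k=(I-\gamma_t G)\delta_{t-1}^k-\gamma_t\big[(V_t-1)G\delta_{t-1}^k+V_t\xi_t^k+V_t\eta_t^k\big],
\]
where $\eta_t^k$ gathers the linearization remainder and the mixing-induced conditional bias of the block gradient relative to $\theta^*$.

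This recursion is structurally identical to the one analyzed for Theorem~\ref{theorem:asymptotic:normality}, the only differences being the weight-induced term $(V_t-1)G\delta_{t-1}^k$ and the factor $V_t$ attached to $\xi_t^k$ and $\eta_t^k$. Applying the same Polyak--Ruppert summation-by-parts that produces the Bahadur representation in Theorem~\ref{theorem:asymptotic:normality}, and subtracting the corresponding representation of $\overline{\theta}_T$ so that the common non-bootstrap leading term $-G^{-1}T^{-1}\sum_t\xi_t^k$ (built from the \emph{same} data) cancels, I would obtain
\[
\overline{\theta}_T^*-\overline{\theta}_T=-\frac{1}{2T}G^{-1}\sum_{t=1}^T(V_t-1)\big(\xi_t^a+\xi_t^b\big)+R_T .
\]
The main obstacle is to show that $\tfrac{T}{\sqrt{\sum_{t=1}^T B_t^{-1}}}R_T\cip 0$ conditionally on $\mcD_T$. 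The genuinely new contributions to $R_T$, relative to the analysis of Theorem~\ref{theorem:asymptotic:normality}, are the martingale term $\sum_t\gamma_t(V_t-1)G\delta_{t-1}^k$ and the weighted remainder $\sum_t\gamma_t V_t\eta_t^k$; I would control the former with the consistency rate for $\|\delta_{t-1}^k\|$ (derived exactly as for Theorem~\ref{theorem:asymptotic:normality}), the bound $\ev(V_t^2)<\infty$, and a martingale maximal inequality, and the latter with $\ev(|V_t|^p)<\infty$ and the moment inequality of \cite{yokoyama1980moment}, precisely as in the unweighted case. Assumption~\ref{Assumption:A3} is exactly what forces all these terms to be of smaller order than $\sqrt{\sum_{t=1}^T B_t^{-1}}/T$.

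It remains to prove a conditional central limit theorem for the leading term. Conditionally on $\mcD_T$ the vectors $\xi_t^a,\xi_t^b$ are fixed and the $V_t$ are i.i.d.\ with mean one and unit variance, so, setting $a_t:=-\tfrac12(\sum_{s=1}^T B_s^{-1})^{-1/2}G^{-1}(\xi_t^a+\xi_t^b)$, the quantity $S_T:=\sum_{t=1}^T(V_t-1)a_t$ is a sum of conditionally independent, mean-zero vectors. I would verify convergence in the conditional weak-convergence-in-probability sense of the Notation by factorizing the conditional characteristic function, $\ev[\exp(\mathsf{i}\,s^\top S_T)\mid\mcD_T]=\prod_{t=1}^T\psi(s^\top a_t)$ with $\psi$ the characteristic function of $V_t-1$, using $\log\psi(u)=-\tfrac12 u^2+o(u^2)$, and showing the exponent converges in probability to $-\tfrac12 s^\top\Sigma s$. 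This reduces to (a) the conditional-covariance limit
\[
\frac{1}{4\sum_{t=1}^T B_t^{-1}}\,G^{-1}\Big(\sum_{t=1}^T(\xi_t^a+\xi_t^b)(\xi_t^a+\xi_t^b)^\top\Big)G^{-1}\cip\Sigma ,
\]
and (b) a conditional Lindeberg condition, which follows from $\ev(|V_t|^p)<\infty$ and Assumption~\ref{Assumption:A1}\ref{A1:moment:condition} after truncation. For (a), taking expectations and using the partial-sum variance formula for the stationary sequence $\{\nabla l(Z_i,\theta^*)\}$ gives $\ev[\xi_t^k(\xi_t^k)^\top]=B_t^{-1}\big(r(0)+\sum_{m=1}^{B_t-1}(1-m/B_t)[r(m)+r(m)^\top]\big)$, which tends to $B_t^{-1}[r(0)+\sum_{m\ge1}(r(m)+r(m)^\top)]$ since $\sum_m\|r(m)\|<\infty$ under Assumption~\ref{Assumption:A2}, whereas the cross term $\ev[\xi_t^a(\xi_t^b)^\top]$ is of smaller order $o(B_t^{-1})$, since after normalization only covariances straddling the boundary between $I_t$ and $J_t$ survive and these are negligible by Kronecker's lemma given $\sum_m\|r(m)\|<\infty$. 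Summing over $t$ and dividing by $\sum_t B_t^{-1}$ then reproduces exactly the middle matrix $r(0)/2+\sum_{k\ge1}r(k)$ of $\Sigma$, and concentration of the random sum around this mean is obtained by the same independent-block law-of-large-numbers argument used in Theorem~\ref{theorem:asymptotic:normality}. Combining (a) and (b) gives $S_T\mid\mcD_T\cid N(0,\Sigma)$ in probability, which is the claim.
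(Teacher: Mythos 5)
Your proposal follows essentially the same route as the paper's proof: the paper establishes its asymptotic expansion (Lemma \ref{lemma:asymptotic:expansion}) once for a general i.i.d.\ weight sequence $U_t$, applies it with $U_t=V_t$ and $U_t=1$ and subtracts, yielding exactly your leading term $\tfrac{1}{2T}G^{-1}\sum_{t=1}^T(V_t-1)\bigl(\widehat{H}_t(W_t^a,\theta^*)+\widehat{H}_t(W_t^b,\theta^*)\bigr)$ plus an $o_P\bigl(\sqrt{\sum_{t=1}^T B_t^{-1}}/T\bigr)$ remainder, and then proves the conditional CLT by combining convergence of the conditional covariance $\sum_t v_tv_t^\top/\sum_t B_t^{-1}$ (a dependent-variable strong law plus the mean computation from Lemma \ref{lemma:pre:gaussian}) with a conditional Lindeberg condition. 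Your characteristic-function factorization and your diagonal-plus-cross-block covariance calculation are presentational variants of the paper's appeal to Lindeberg's CLT and its union-block variance computation, so the two arguments coincide in substance.
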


Loosely speaking, Theorem \ref{theorem:bootstrap:asymptotic} shows that the conditional distribution of $\overline{\theta}_T^*-\overline{\theta}_T$ given $\mcD_T$ is asymptotically equivalent to the distribution of $\overline{\theta}_T-\theta^*$. A practical implication of this equivalence is building confidence intervals for $g(\theta^*)$ for some differentiable function $g:\Theta \to \mathbb{R}$. To be specific, we can simultaneously obtain $N$ random samples of $\overline{\theta}^*_T$, say $\overline{\theta}^{*(1)}_T, \ldots, \overline{\theta}^{*(N)}_T$. A $(1-\alpha)\times 100\%$ level confidence interval of $g(\theta^*)$ can be constructed using the $\alpha/2$ and $1-\alpha/2$ quantiles of the samples $g(\overline{\theta}^{*(1)}_T), \ldots, g(\overline{\theta}^{*(N)}_T)$. We formally summarize this procedure in Algorithm \ref{alg:bootstrap}. Compared with the bootstrap SGD procedure in \cite{fang2018online} and \cite{ramprasad2022online}, Proposition \ref{prop:failure:bootstrap:sgd} and Theorem \ref{theorem:bootstrap:asymptotic} together highlight the necessarity of alternating block trick for bootstrap in $\phi$-mixing data.
\begin{algorithm}[h!]
\caption{Bootstrap Mini-batch SGD Algorithm}\label{alg:bootstrap}
\SetKwInOut{Input}{Input}
\SetKwInOut{Output}{Output}
\KwIn{Data $\{Z_t\}_{t=1}^\infty$, learning rate $\gamma_t$, block size $B_t$, bootstrap sample size $N$, number of iterations $T$, initial values $\theta_0^s, \theta_0^{*s,(j)}, \overline{\theta}_0=0, \overline{\theta}_0^{*(j)}=0$ for $j=1,\ldots, k$ and $s=a,b$, significance level $\alpha\in (0, 1)$}
\For{$t=1$ \KwTo $T$}
	{   
		Construct indexes sets $I_t, J_t$ according to the block size $B_t$

		Update $\theta_t^s=\Pi\left\{\theta_{t-1}^s-\gamma_t \widehat{H}_t(W_t^s, \theta_{t-1}^s)\right\}$ for $s=a,b$
		
		Update $n_t=n_{t-1}+2B_t$		
		
		Update $\overline{\theta}_t=\frac{n_{t-1}}{n_t}\overline{\theta}_{t-1}+\frac{1}{n_t}B_t(\theta_t^a+\theta_t^b)$
		
		
		\For{$j=1$ \KwTo $N$}
		{
			Generate the random weight $V_t^{(j)}$
			
			Update $\theta_t^{*s,(j)}=\Pi\left\{\theta_{t-1}^{*s,(j)}-\gamma_t V_t^{(j)}\widehat{H}_t(W_t^s, \theta_{t-1}^{*s,(j)})\right\}$ for $s=a,b$
			
			Update $\overline{\theta}_t^{*(j)}=\frac{n_{t-1}}{n_t}\overline{\theta}_{t-1}^{*(j)}+\frac{1}{n_t}(\theta_t^{*a,(j)}+\theta_t^{*b,(j)})$
		}
	} 

Let $u_{\alpha}$ and $l_\alpha$ be the the $1-\alpha/2$ and $\alpha/2$ quantiles of $g(\overline{\theta}_T^{*(1)}), \ldots, g(\overline{\theta}_T^{*(N)})$

\KwOut{point estimator $g(\overline{\theta}_T)$, confidence interval $(l_\alpha, u_\alpha)$}
\end{algorithm}

\section{Monte Carlo Simulation}\label{sec:simulation}
This section presents a series of simulation studies to examine the finite-sample performance of the proposed procedure. We consider eight distinct models, with Models 1 and 2 serving to investigate  the assertions in Proposition \ref{prop:failure:bootstrap:sgd}.

\begin{enumerate}[wide, labelwidth=!, labelindent=0pt]
\item[\textbf{Model 1 (i.i.d. Mean Estimation):}]  $Z_t$'s are i.i.d. generated from $N(\theta^*, 0.5^2)$ with $\theta^*=0$. The loss function is $l(z, \theta)=(z-\theta)^2/2$.

\item[\textbf{Model 2 (Mixing Mean Estimation):}]  $Z_t=(Y_{t}+Y_{t+1})/2$, where $Y_t$'s are i.i.d. generated from $N(\theta^*, 1)$ with $\theta^*=0$. We chose the loss function to be $l(z, \theta)=(z-\theta)^2/2$.

\item[\textbf{Model 3 (i.i.d. Median Estimation):}]  Observations are generated by the same procedure in Model 1.  We use the loss function $l(z, \theta)=|z-\theta|$ to estimate the median.

\item[\textbf{Model 4 (Mixing Median Estimation):}] Observations are generated by the same procedure in Model 2.  We consider the loss function $l(z, \theta)=|z-\theta|$.

\item[\textbf{Model 5 (i.i.d. OLS):}] $Y_t=X_t^\top \theta^*+\epsilon_t$ where $\theta^*=(\theta_1^*, \theta_2^*, \theta_3^*)^\top=(-0.2,0.3,0.1)^\top$. Here $X_t=(X_{t1}, X_{t2}, X_{t3})^\top \in \mathbb{R}^3$ with $X_{t1}\sim N(1, 0.1)$, $X_{t2}\sim N(1, 0.5)$, and $X_{t3}\sim N(-2, 1)$ being independent normal random variables. The noise terms $\epsilon_t$'s are i.i.d. $N(0, 1)$. The linear regression in Example \ref{example:linear} is applied.

\item[\textbf{Model 6 (Mixing OLS):}] Data is generated by the same setting as that in Model 5, except that $\epsilon_t=\widetilde{\epsilon}_t/3$ with $\widetilde{\epsilon}_t=0.5\widetilde{\epsilon}_{t-1}+0.4\widetilde{\epsilon}_{t-2}+N(0,1)$ being an $AR(2)$ process. The linear regression in Example \ref{example:linear} is applied.

\item[\textbf{Model 7 (i.i.d. LAD):} ] The same setting in Model 5 is considered, except that the LAD regression in Example \ref{example:LAD} is applied.

\item[\textbf{Model 8 (Mixing LAD):} ] The same setting in Model 6 is considered, except that the LAD regression in Example \ref{example:LAD} is applied.
\end{enumerate}

We consider different sample sizes with $n=20000, 50000, 100000$. To see the impact of batch-size, we select $B_t=\floor{t^{\beta}}\vee 1$ with $\beta=0.2, 0.3, 0.33, 0.5, 0.7$. According to the discussion in Remark \ref{remark:tuning:selection}, we include the choice $\beta=0.33$ and use the learning rate $\gamma_t=(t+10)^{-2/3}$. For each setting, we make use of the bootstrap procedure in Section \ref{sec:bootstrap} with $N=500$ bootstrap samples and $V_t\sim Exp(1)$ to construct the  $95\%$ confidence intervals. The corresponding coverage probabilities (CP) and root mean square errors (RMSE) are recorded based on $500$ replications.  Finally,  we also consider the SGD and the bootstrap SGD (denoted as $\beta=0$) procedure in \cite{fang2018online, ramprasad2022online} as a competitor with the same setting. 

The numerical results are summarized in Figures \ref{figure:rmse:model1:4}-\ref{figure:cp:model8}, and several interesting findings can be observed. First, as shown in Figures \ref{figure:rmse:model1:4} and \ref{figure:RMSE:model5:8},  the RMSE decreases when the sample size is increasing for all settings and algorithms. Moreover, for a fixed sample size, employing a smaller batch size ($\beta$) generally leads to lower RMSE values. Second, considering Models 1, 3, 5, and 7, the bootstrap SGD ($\beta=0$) procedure exhibits consistency, maintaining a CP around $95\%$ regardless of the sample size. However, in cases where data is correlated, such as Models 2, 4, 6, and 8, the confidence intervals provided by the bootstrap SGD are not reliable. For instance, Figure \ref{figure:cp:model1:4} demonstrates that the CP of Model 2 hovers around $85\%$. This observation aligns with the implications of Proposition \ref{prop:failure:bootstrap:sgd}. Specifically, in Model 2, the observations are positively correlated with $r(1)=1/4>0$ and $r(k)=0$ for $k\geq 2$. Therefore, Proposition \ref{prop:failure:bootstrap:sgd} indicates that the bootstrap SGD procedure yields shorter confidence intervals due to ignoring the correlation. Third, Figure \ref{figure:cp:model7} underscores the significance of batch size in our proposed algorithm. Notably, CPs for $\theta_1-\theta_3$ in Model 7 deviate from the $95\%$ nominal level when $\beta=0.7$, across all sample sizes. Finally, for all settings, the choice  $\beta=0.33$  can produce confidence intervals with valid coverage probabilities, thereby supporting our parameter tuning recommendations in Remark \ref{remark:tuning:selection}.

%
%

\begin{figure}[htp!]
\centering
\includegraphics[width=2.5 in]{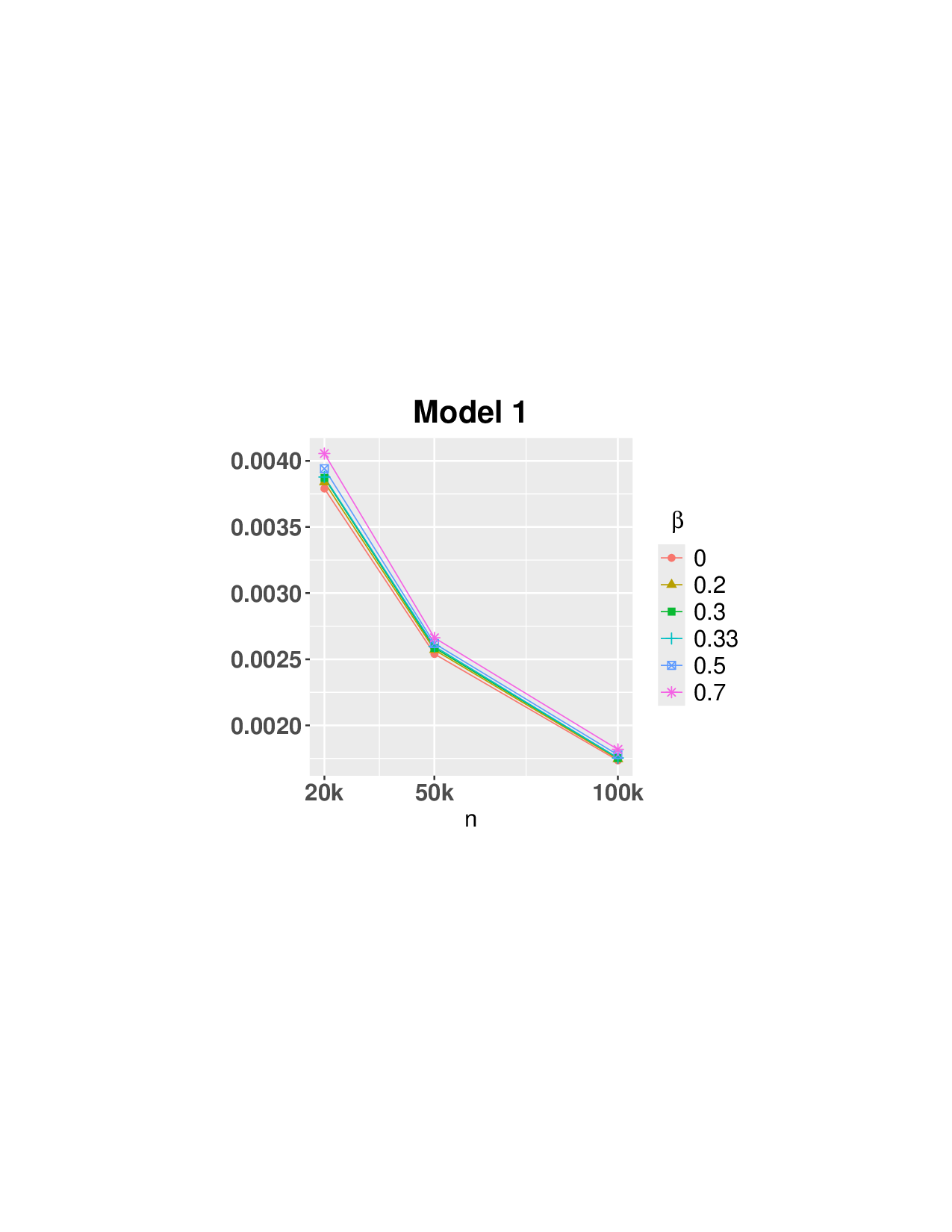}
\includegraphics[width=2.5 in]{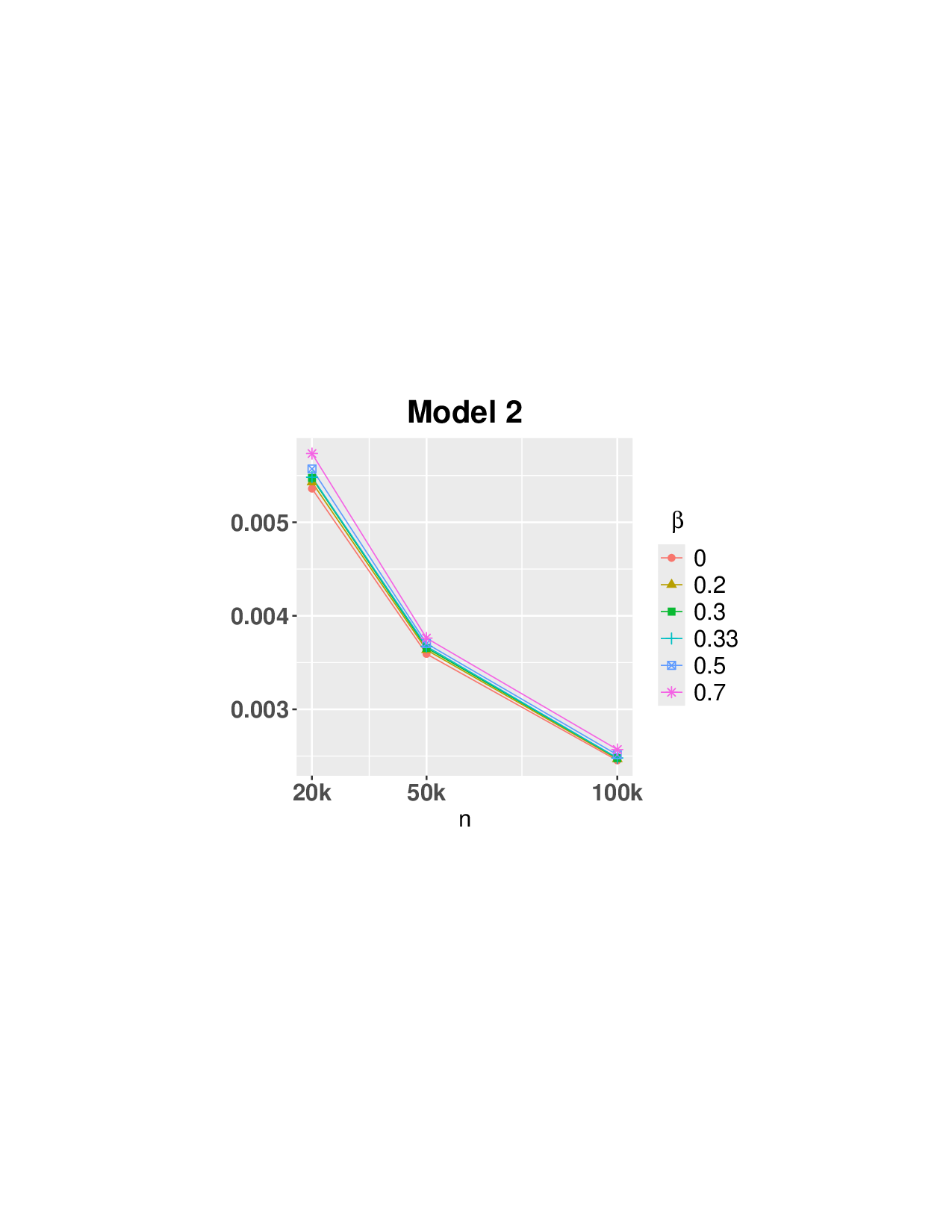}

\includegraphics[width=2.5 in]{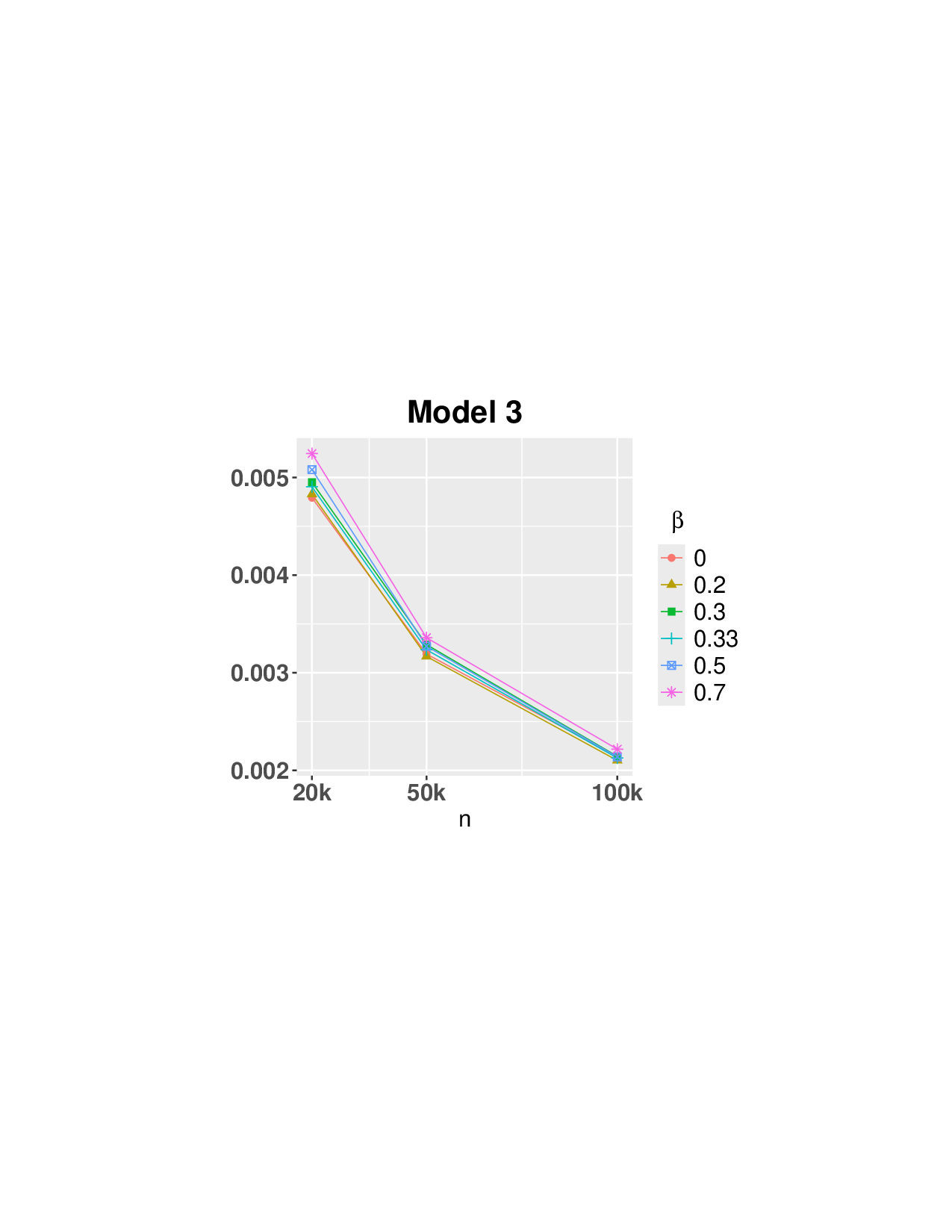}
\includegraphics[width=2.5 in]{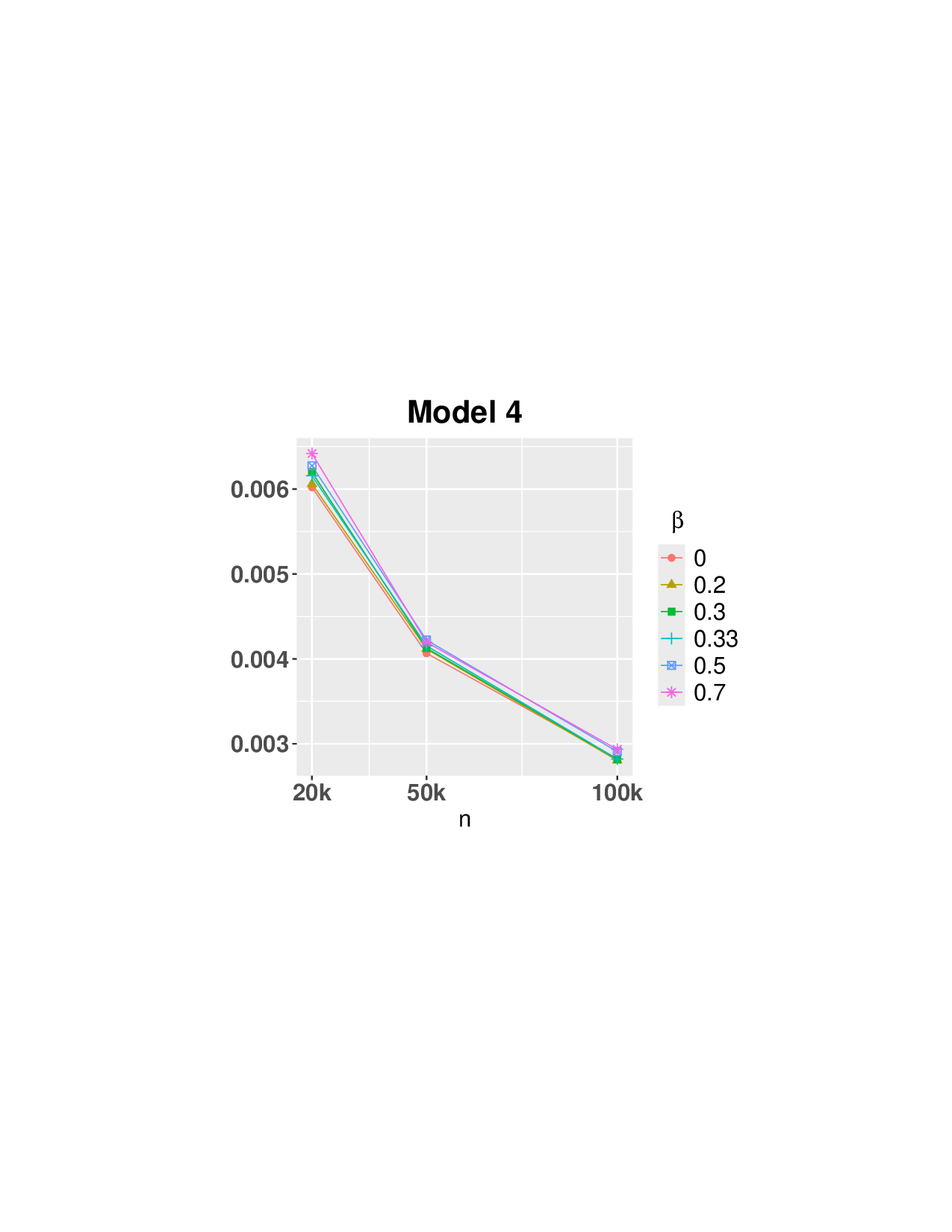}
\caption{RMSE for Models 1-4}
\label{figure:rmse:model1:4}
\end{figure}

\begin{figure}[htp!]
\centering
\includegraphics[width=2.5 in]{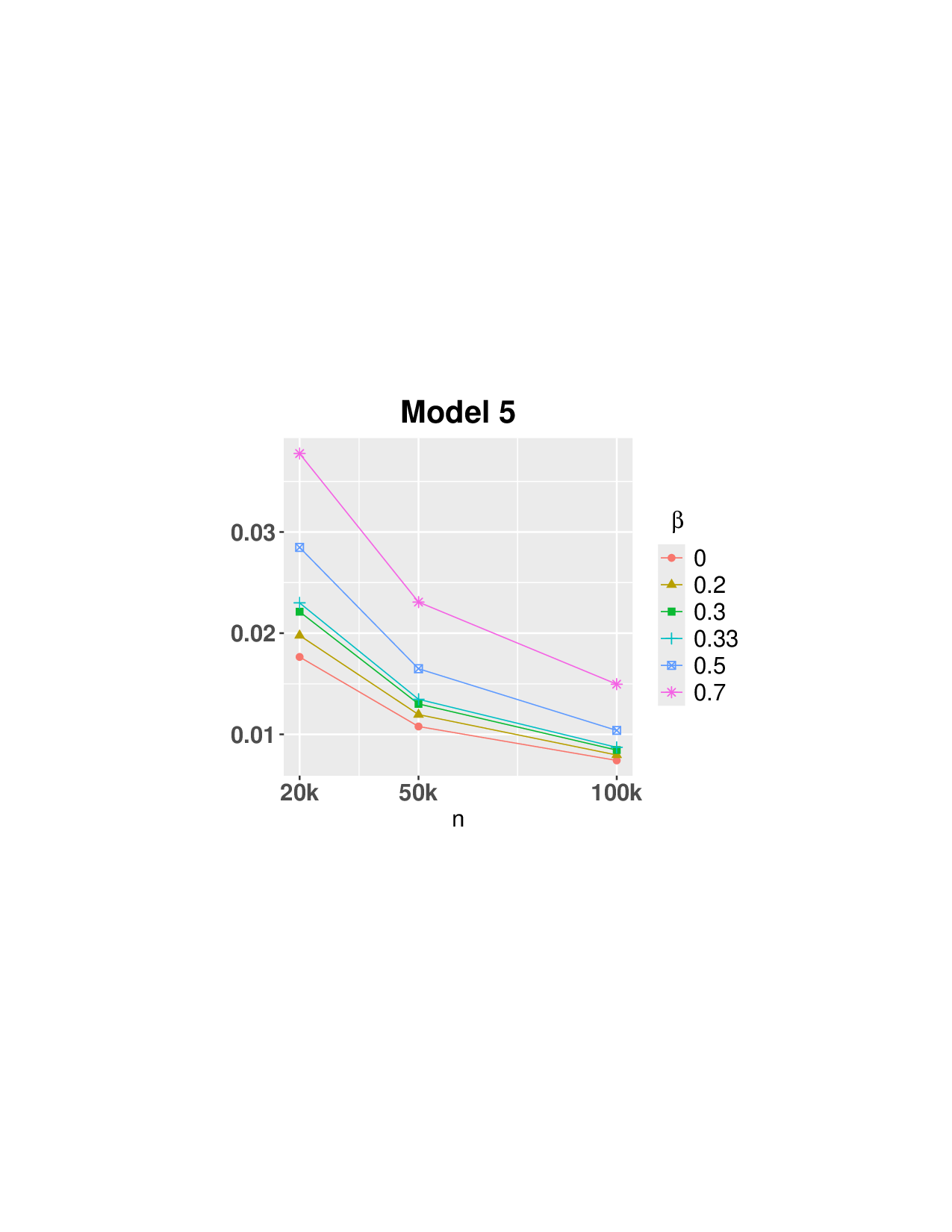}
\includegraphics[width=2.5 in]{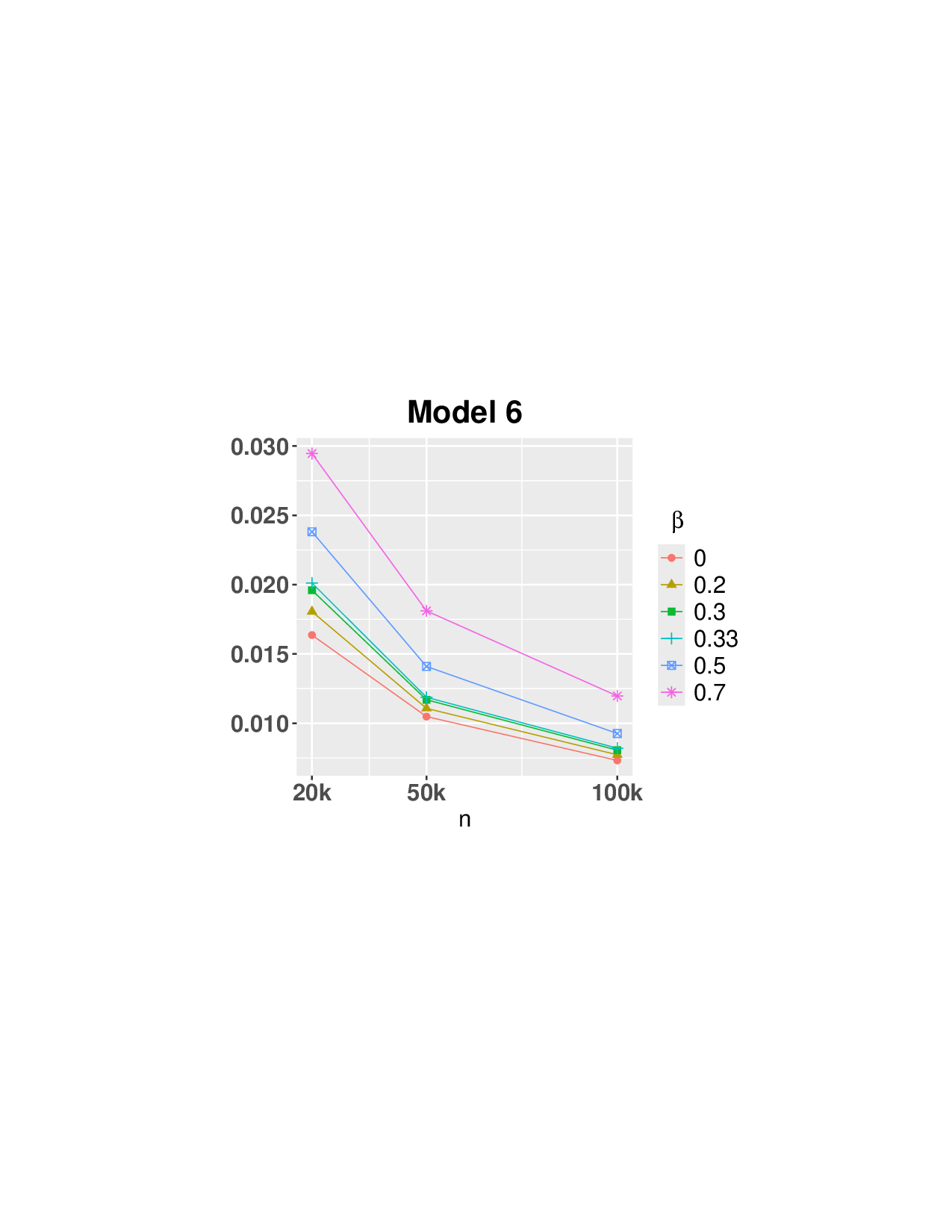}

\includegraphics[width=2.5 in]{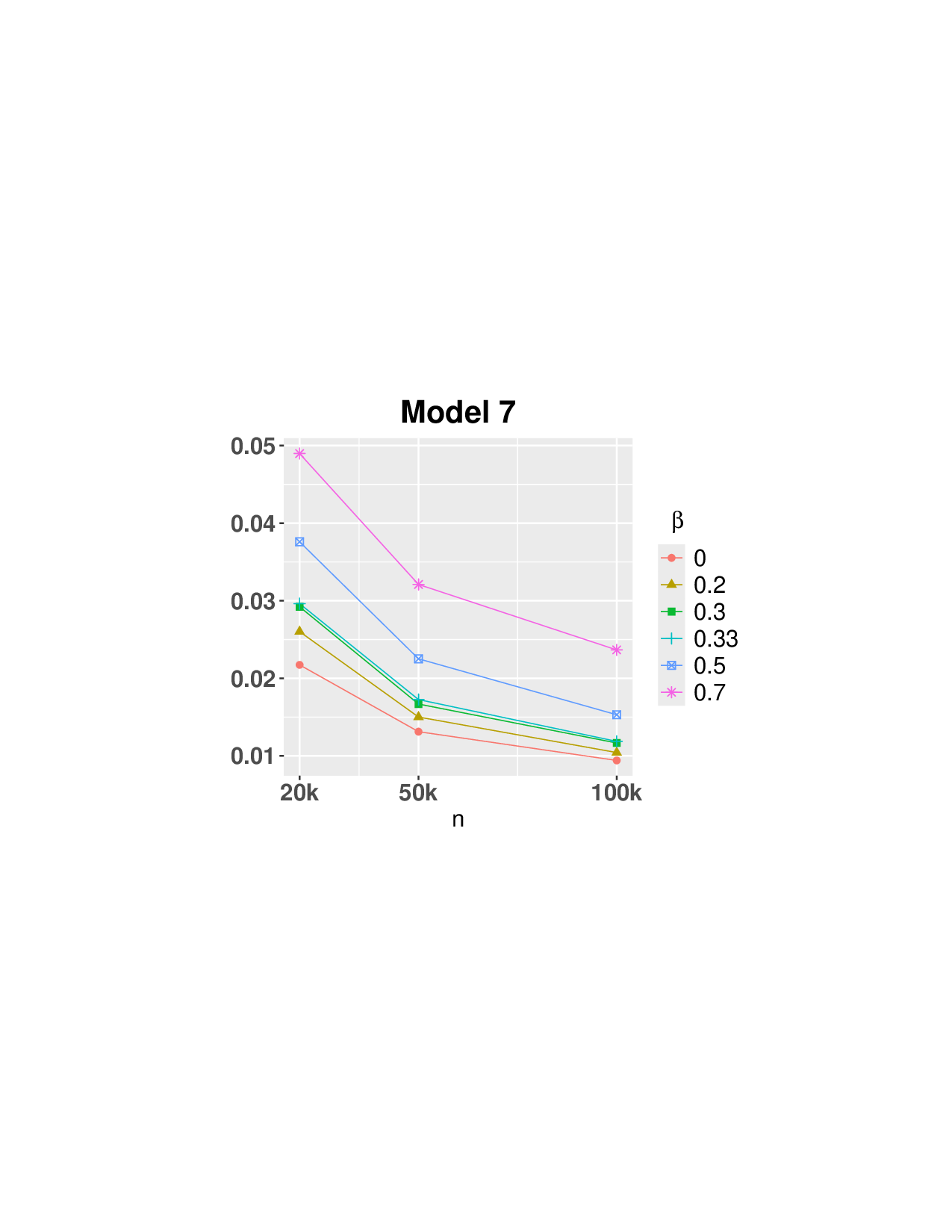}
\includegraphics[width=2.5 in]{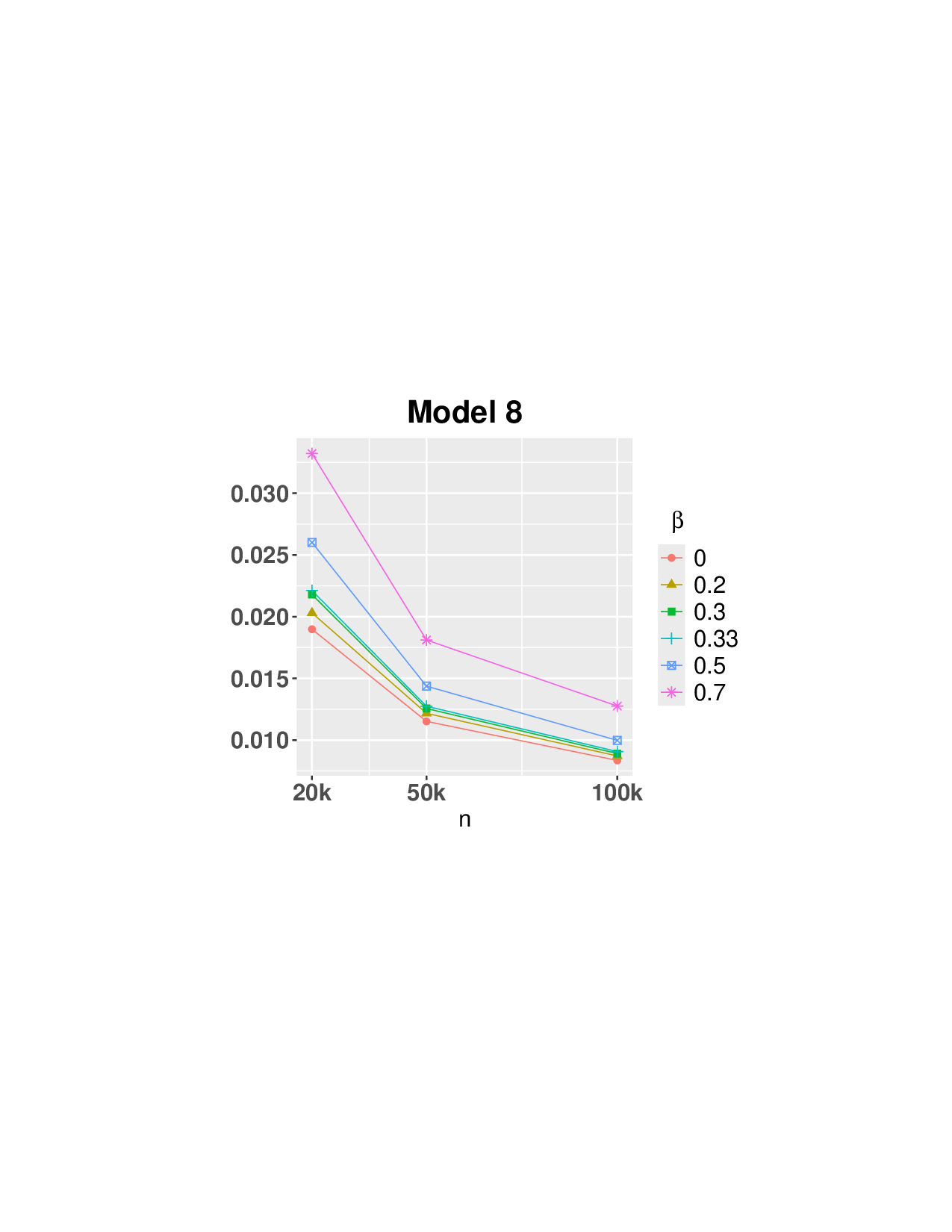}
\caption{RMSE for Models 5-8}
\label{figure:RMSE:model5:8}
\end{figure}

\begin{figure}[htp!]
\centering
\includegraphics[width=2.5 in]{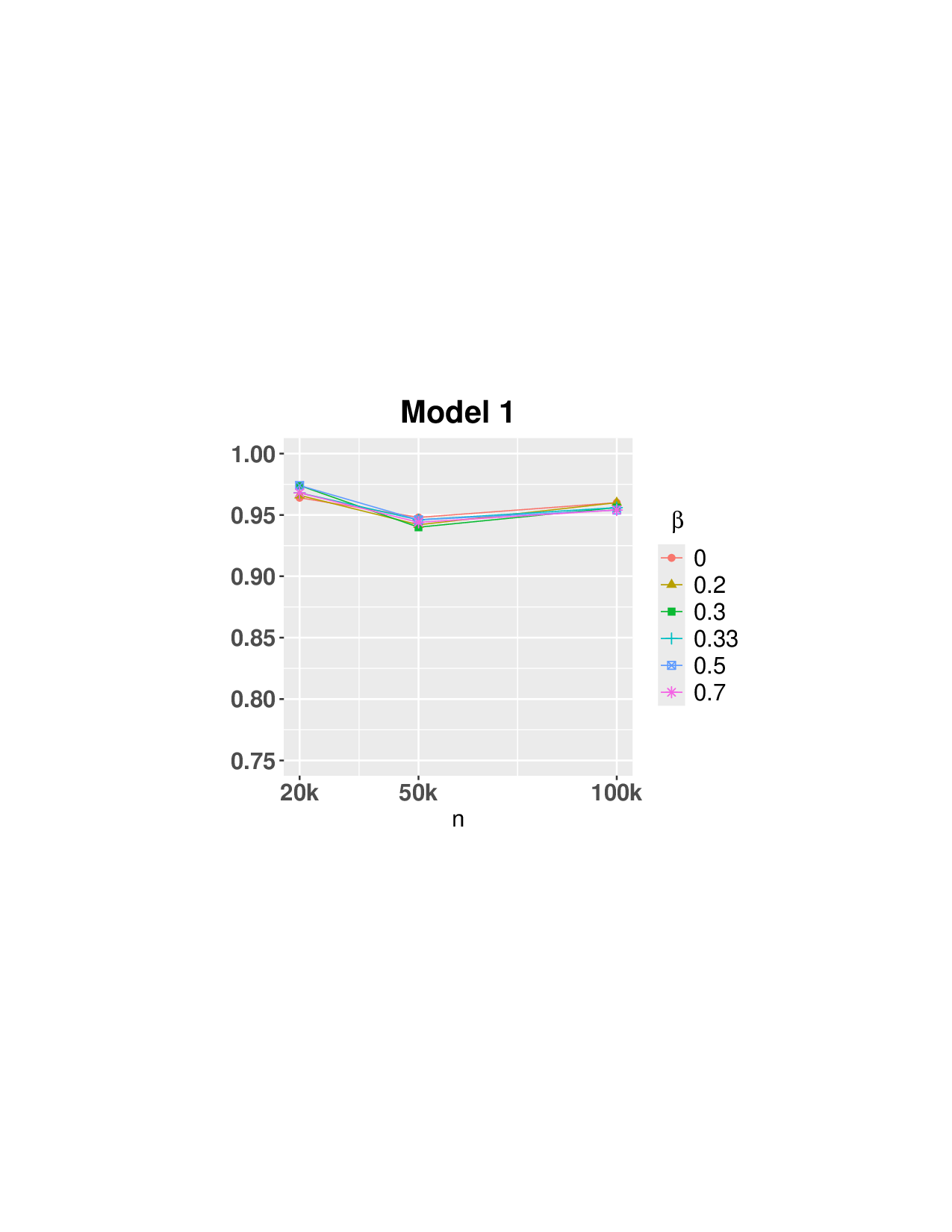}
\includegraphics[width=2.5 in]{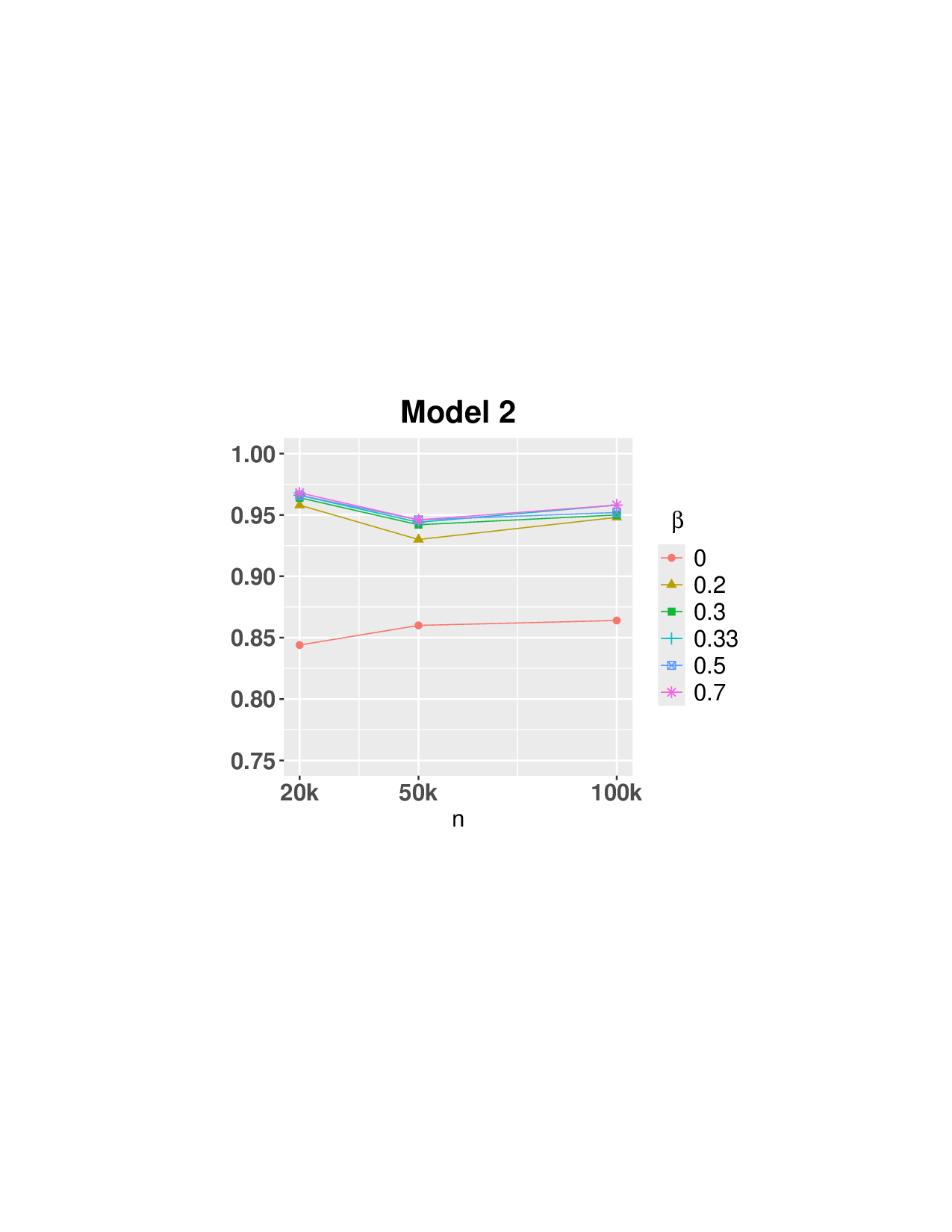}

\includegraphics[width=2.5 in]{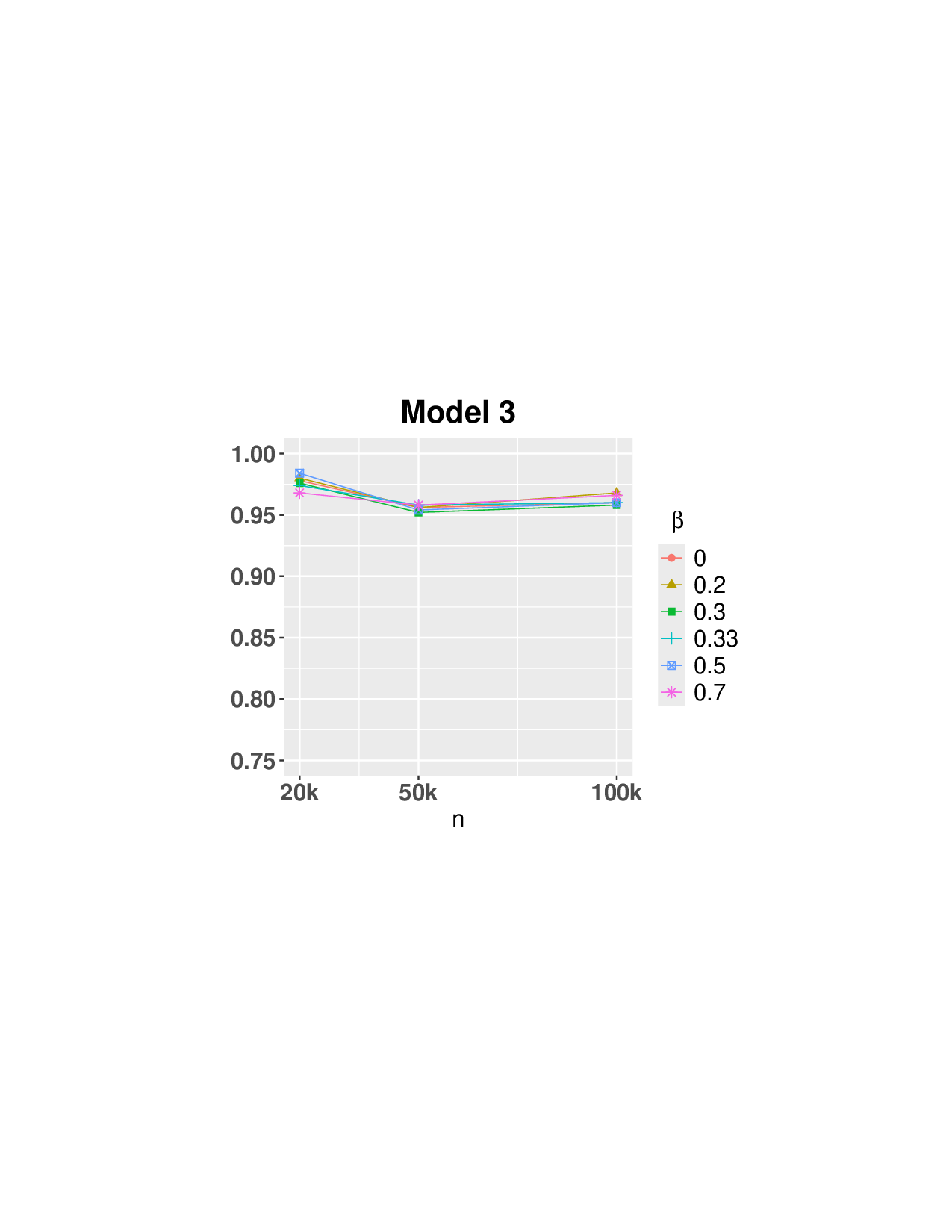}
\includegraphics[width=2.5 in]{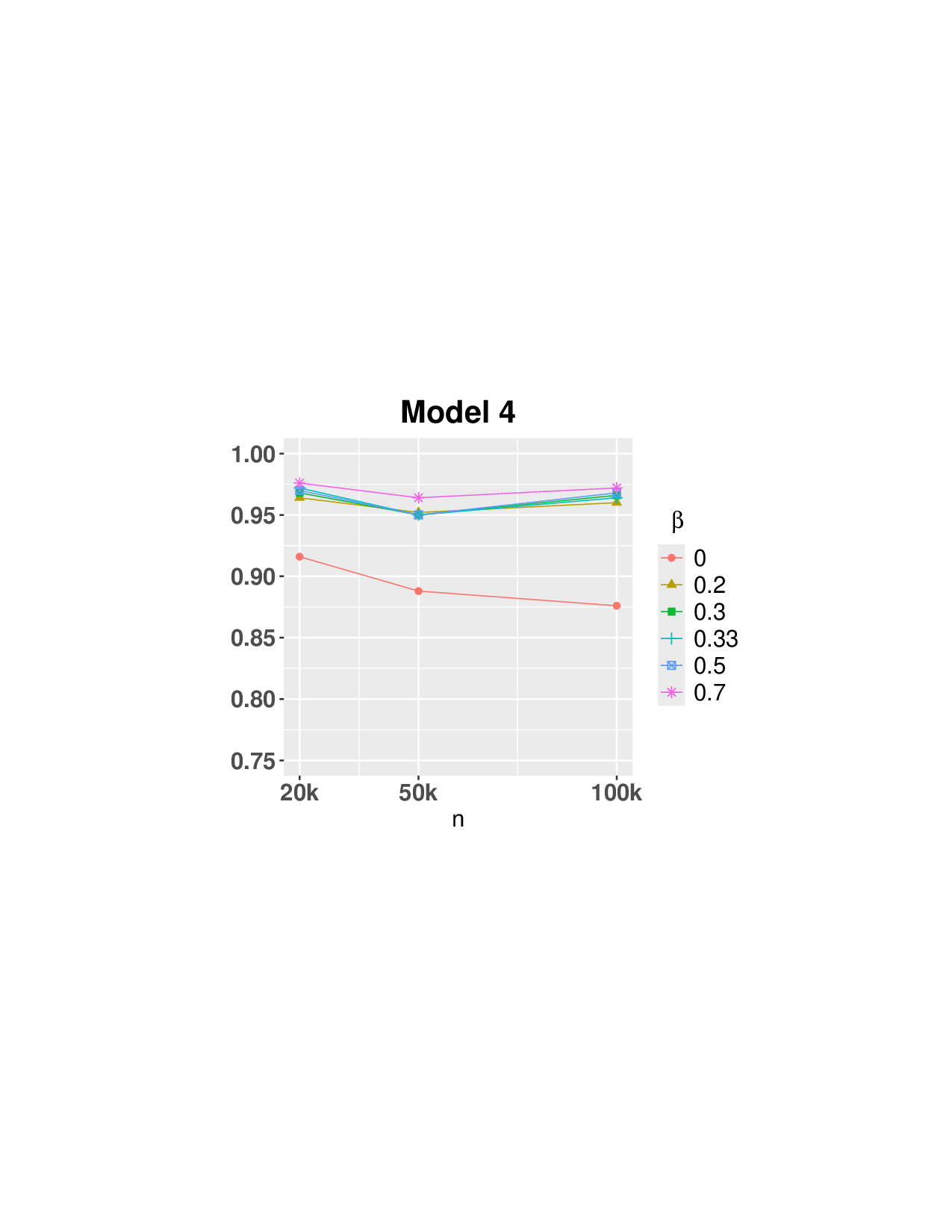}
\caption{CP for Models 1-4}
\label{figure:cp:model1:4}
\end{figure}

\begin{figure}[htp!]
\centering
	\includegraphics[width=1.8 in]{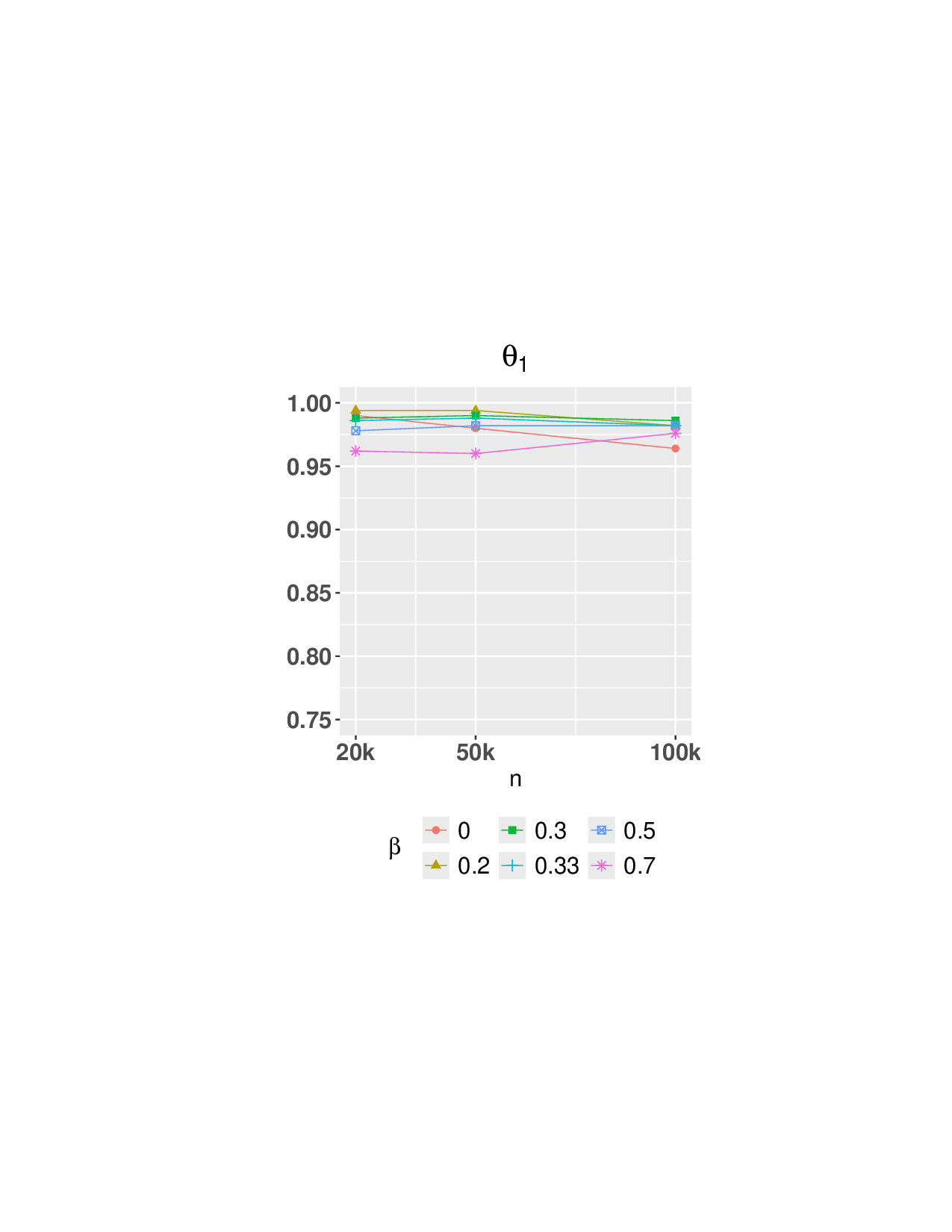}
	\includegraphics[width=1.8 in]{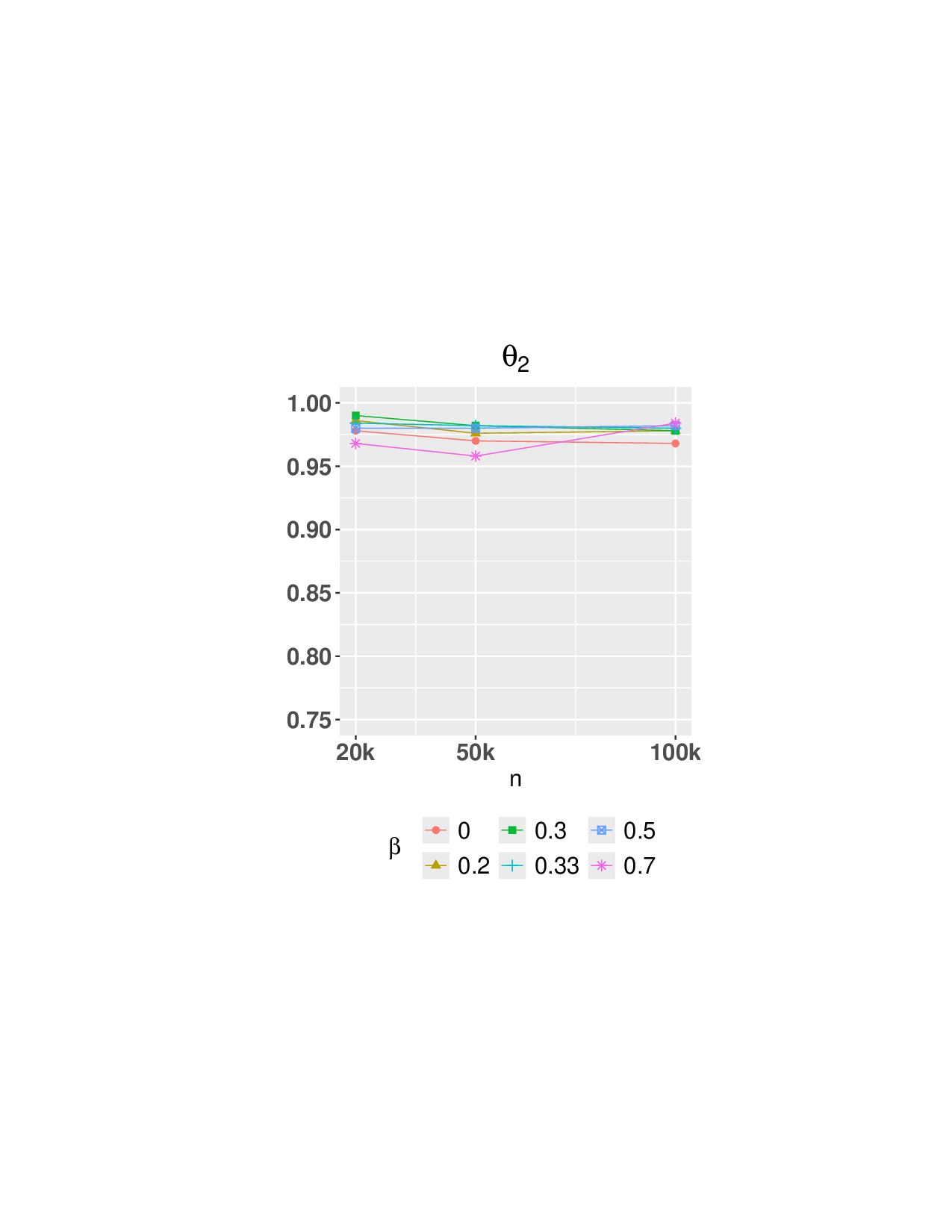}
	\includegraphics[width=1.8 in]{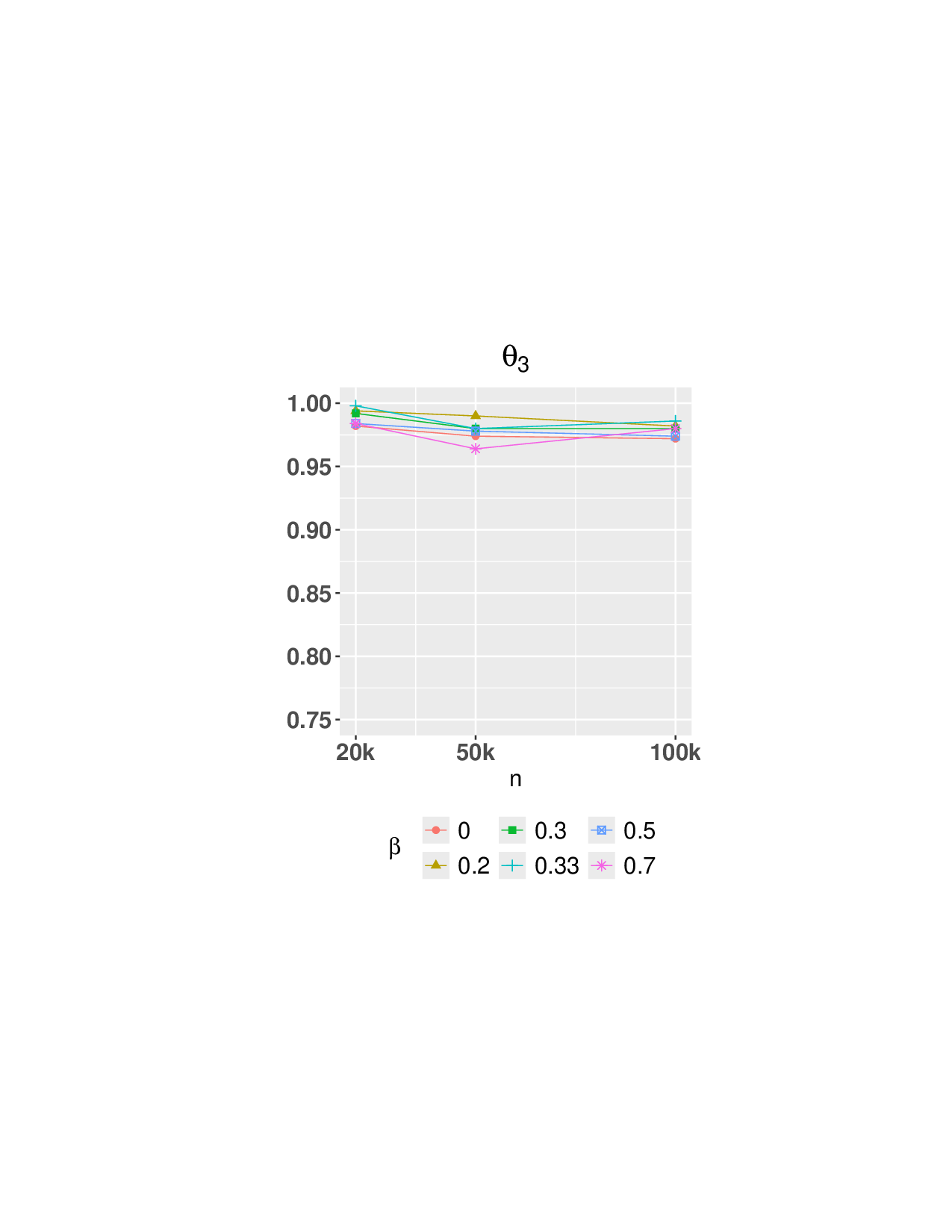}
\caption{CP for Model 5}
\label{figure:cp:model5}
\end{figure}

\begin{figure}[htp!]
\centering
\includegraphics[width=1.8 in]{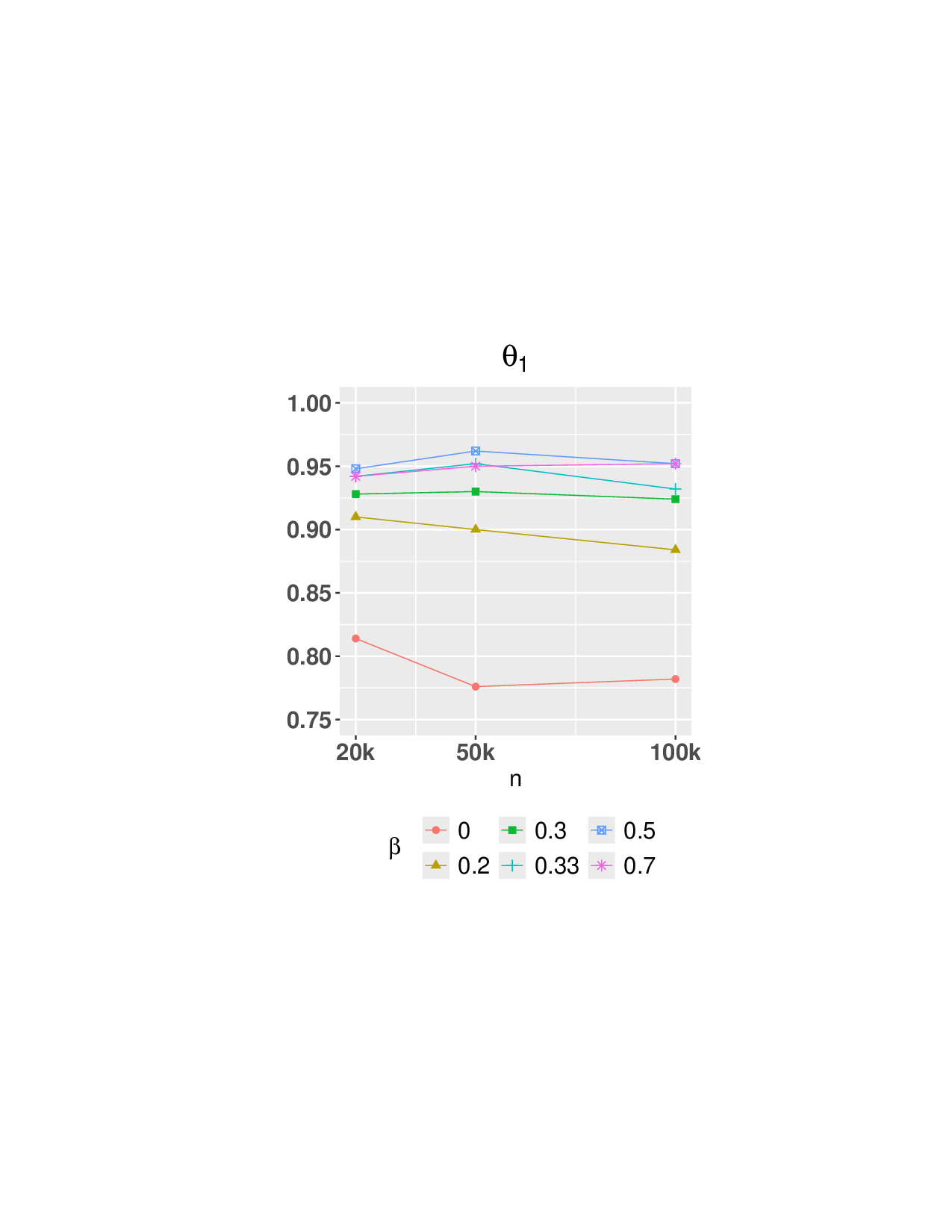}
\includegraphics[width=1.8 in]{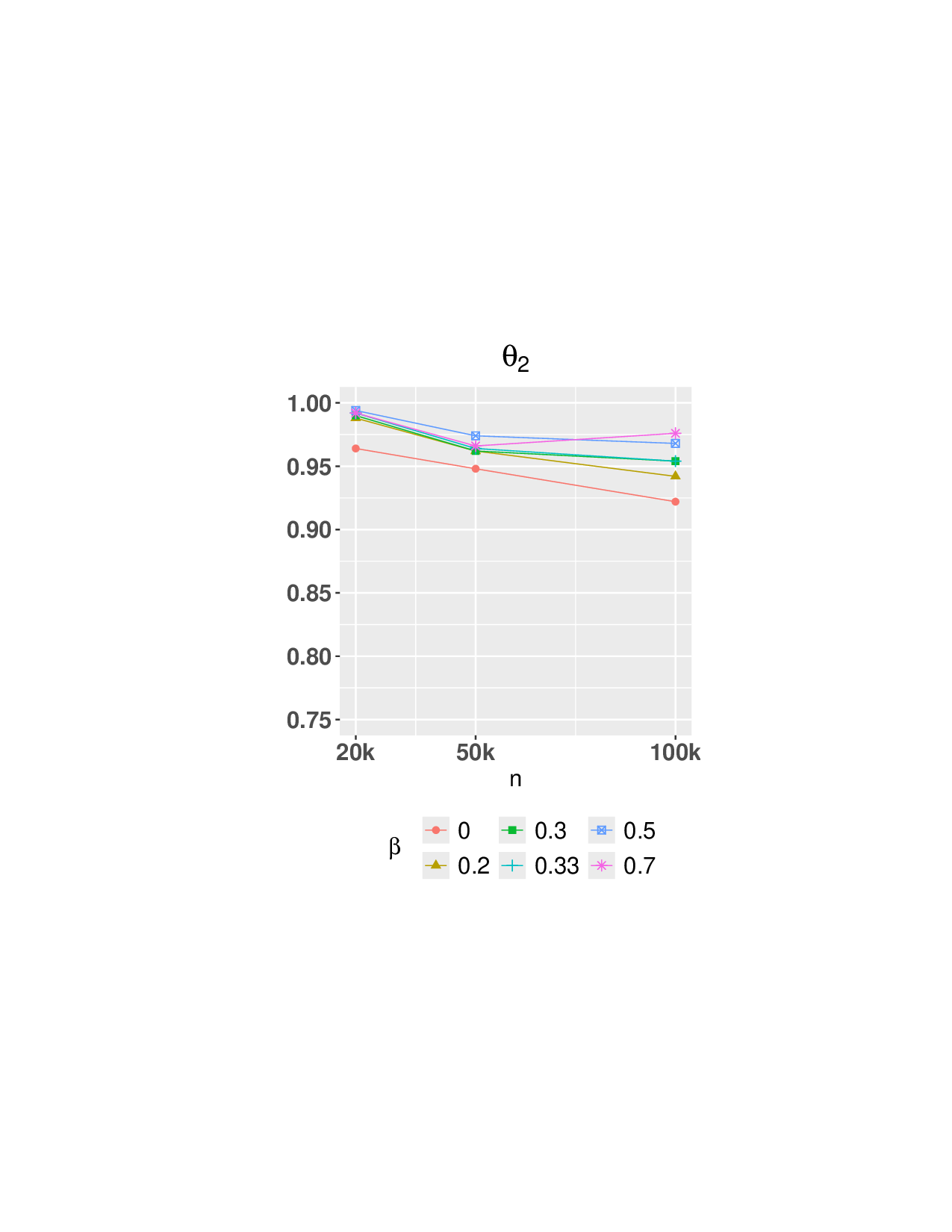}
\includegraphics[width=1.8 in]{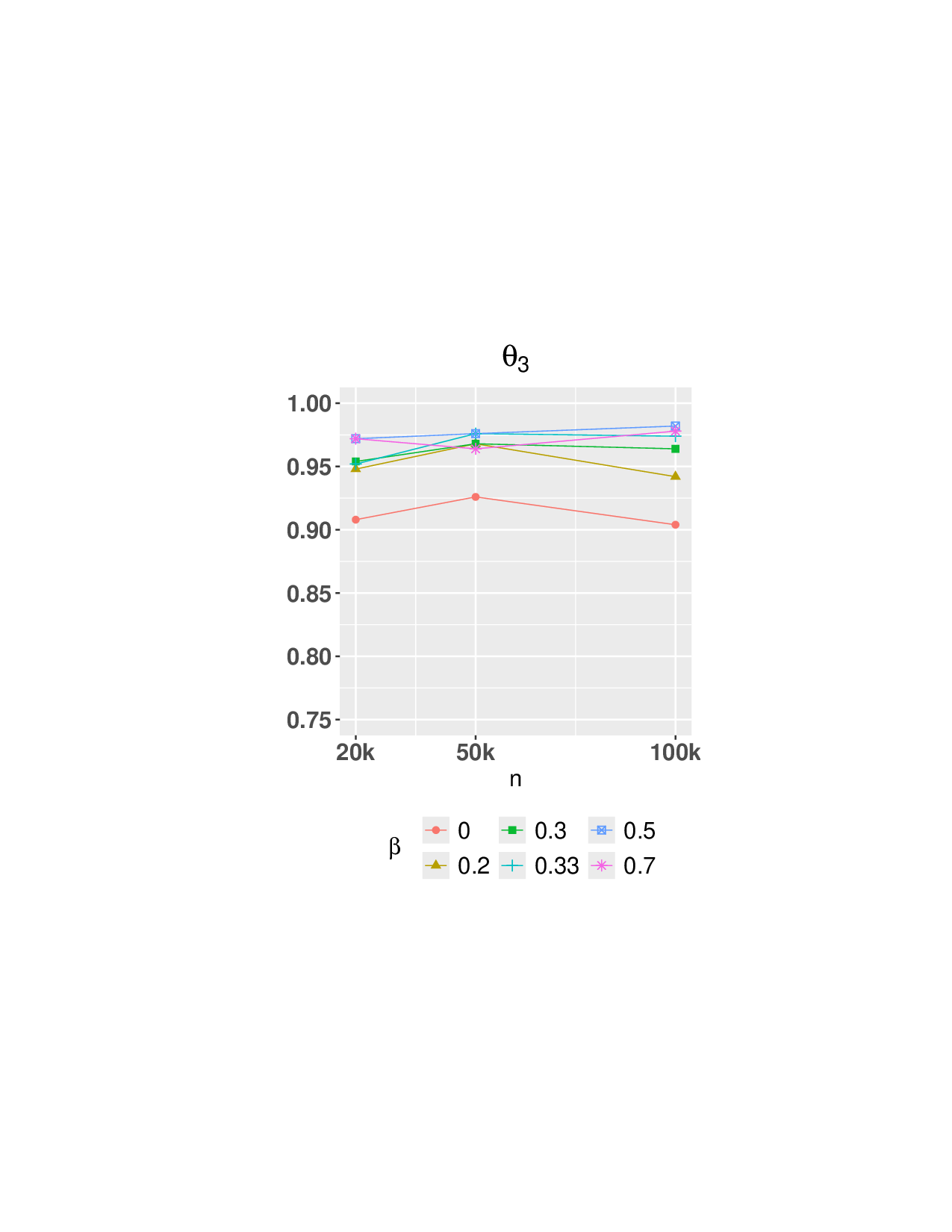}
\caption{CP for Model 6}
\label{figure:cp:model6}
\end{figure}

\begin{figure}[htp!]
\centering
\includegraphics[width=1.8 in]{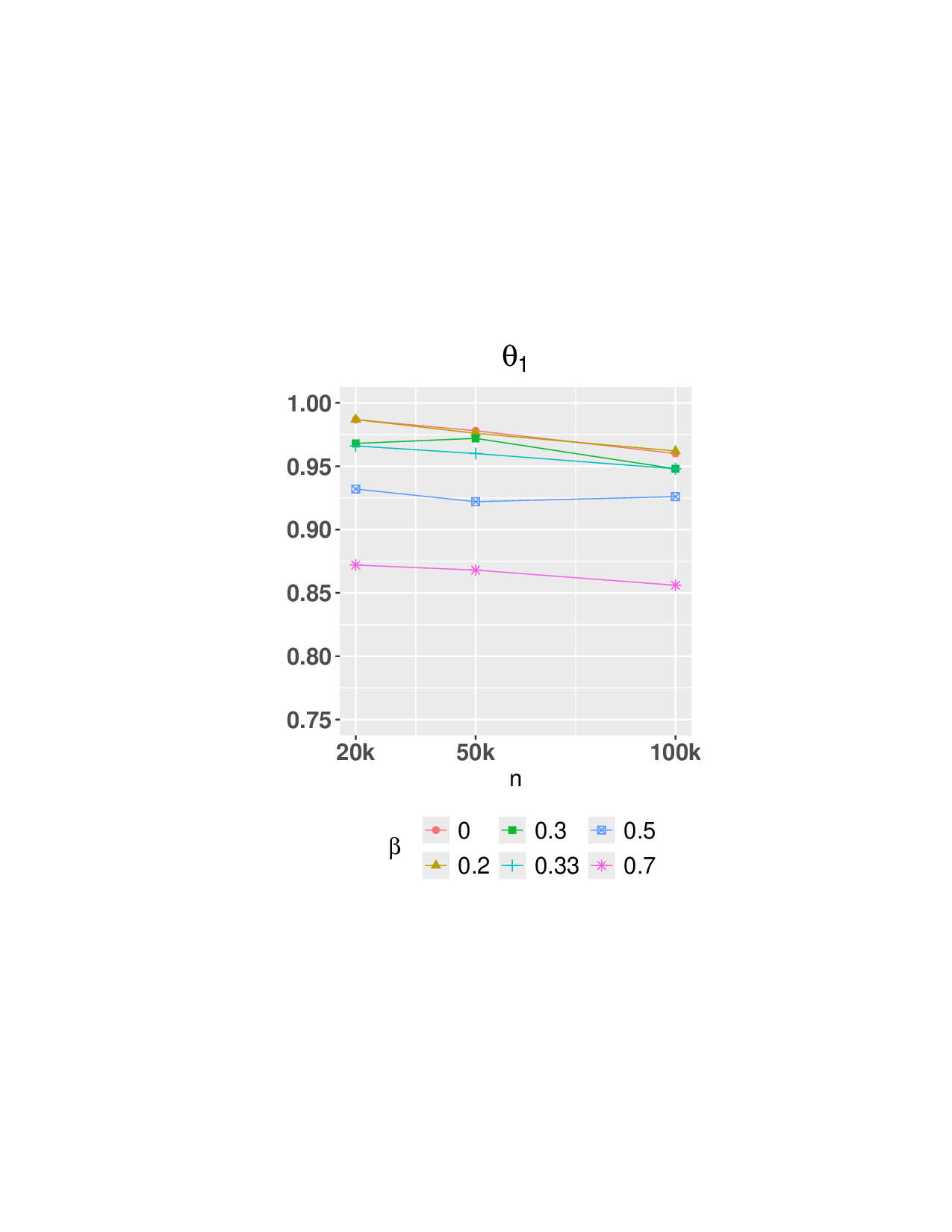}
\includegraphics[width=1.8 in]{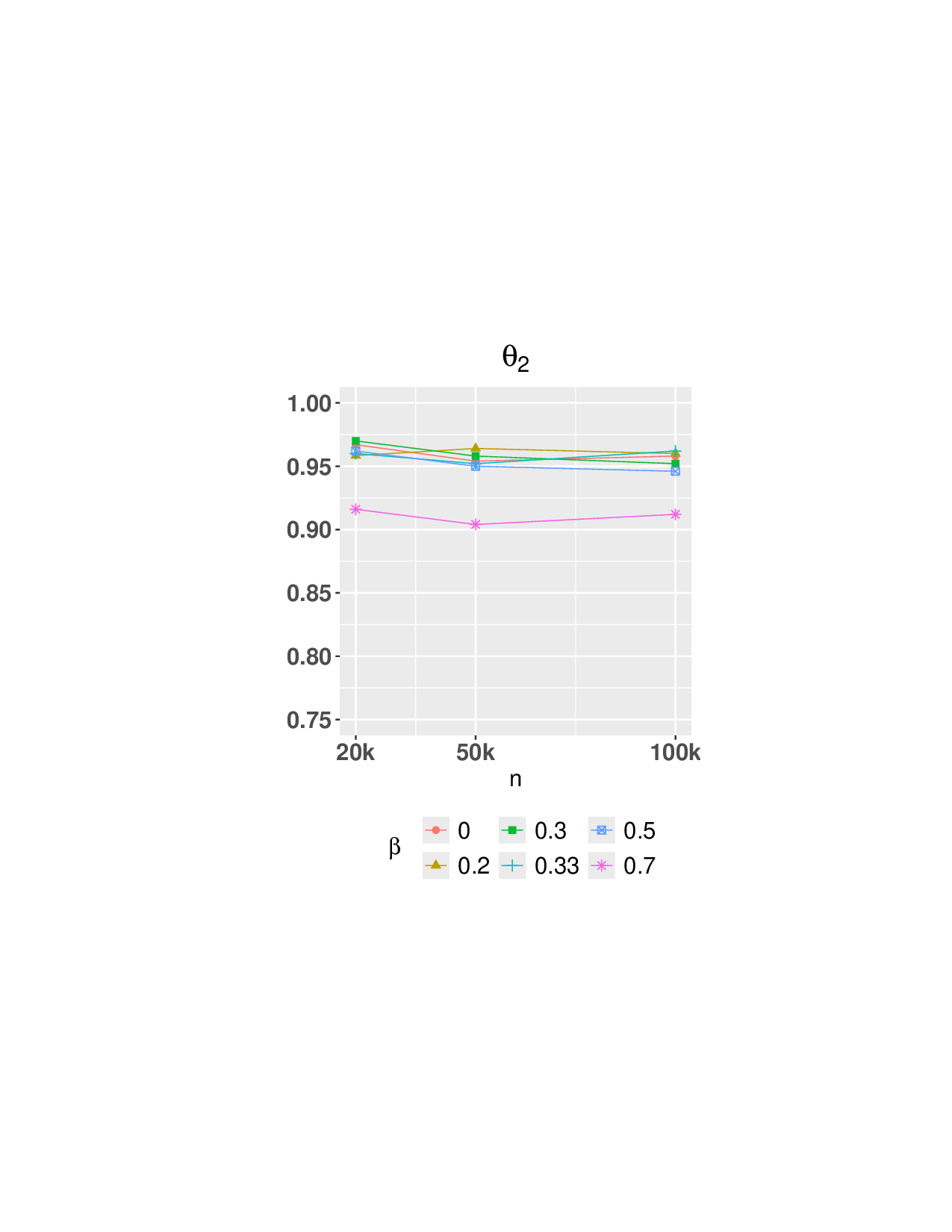}
\includegraphics[width=1.8 in]{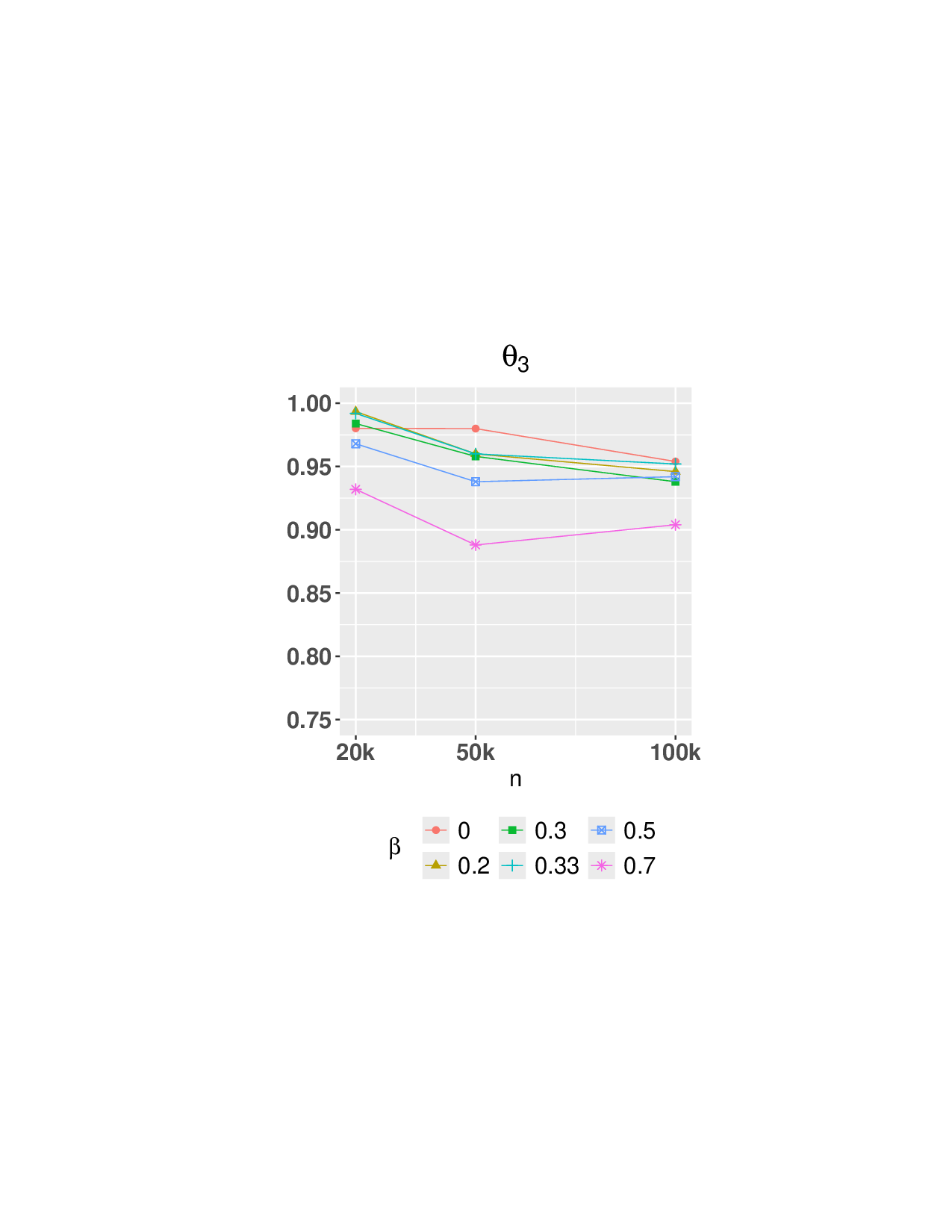}
\caption{CP for Model 7}
\label{figure:cp:model7}
\end{figure}

\begin{figure}[htp!]
\centering
\includegraphics[width=1.8 in]{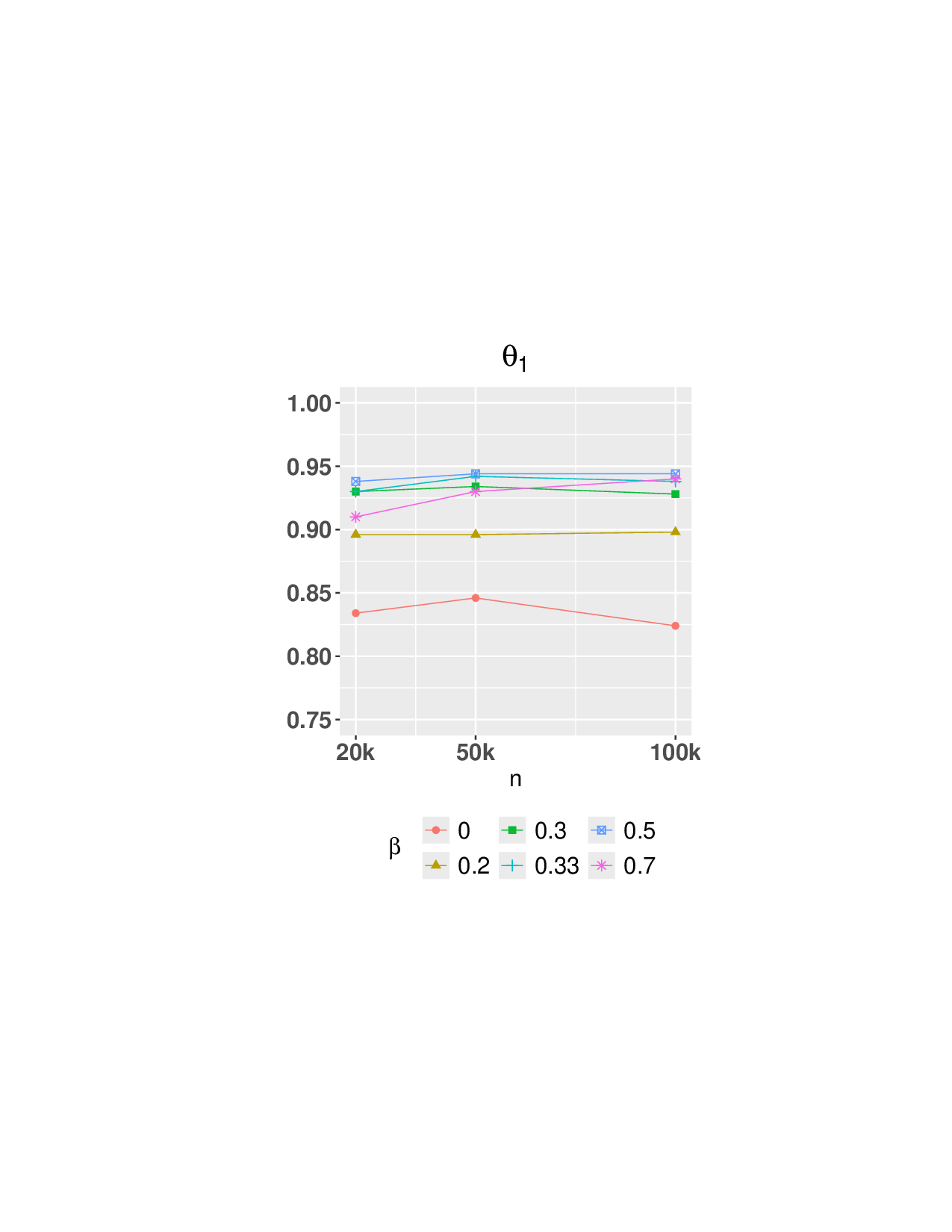}
\includegraphics[width=1.8 in]{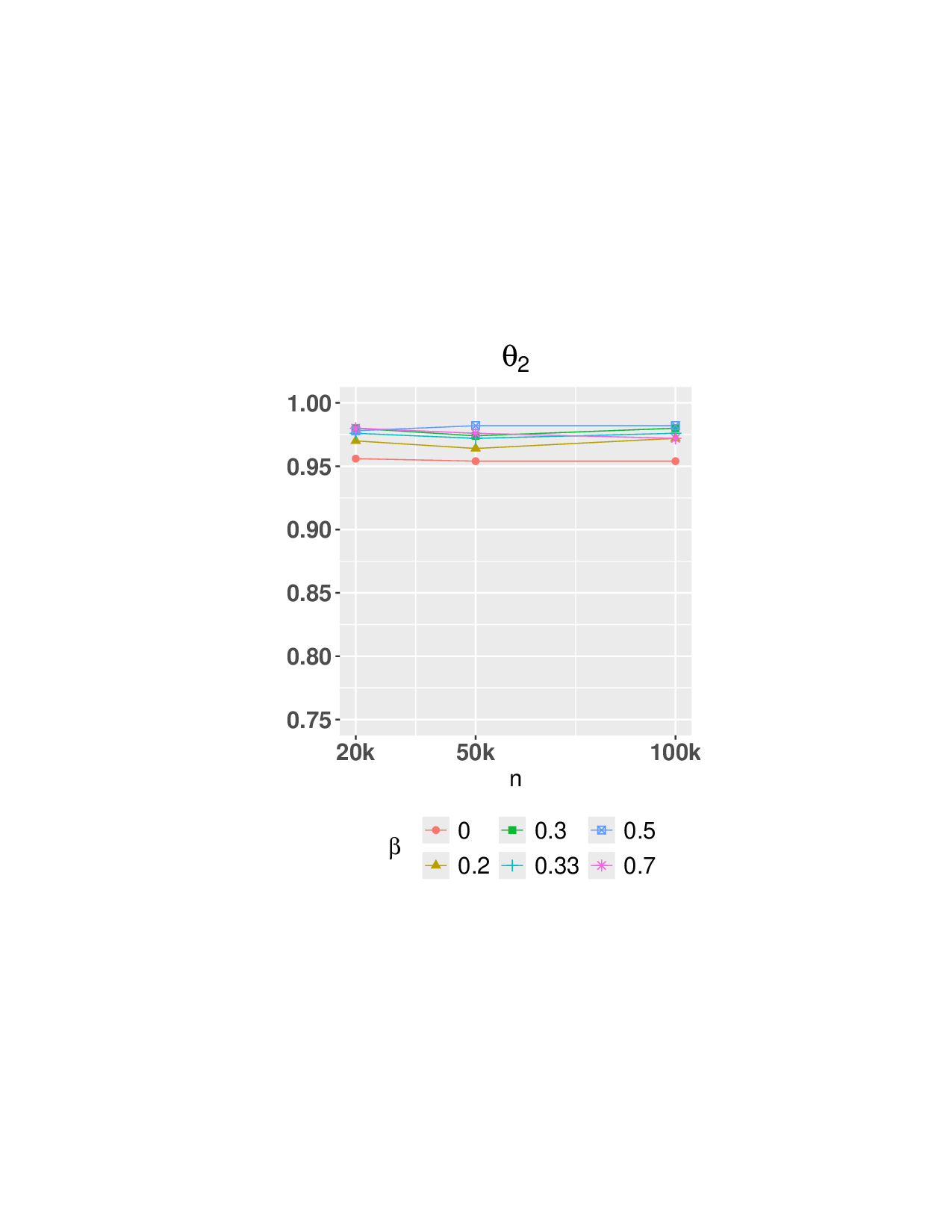}
\includegraphics[width=1.8 in]{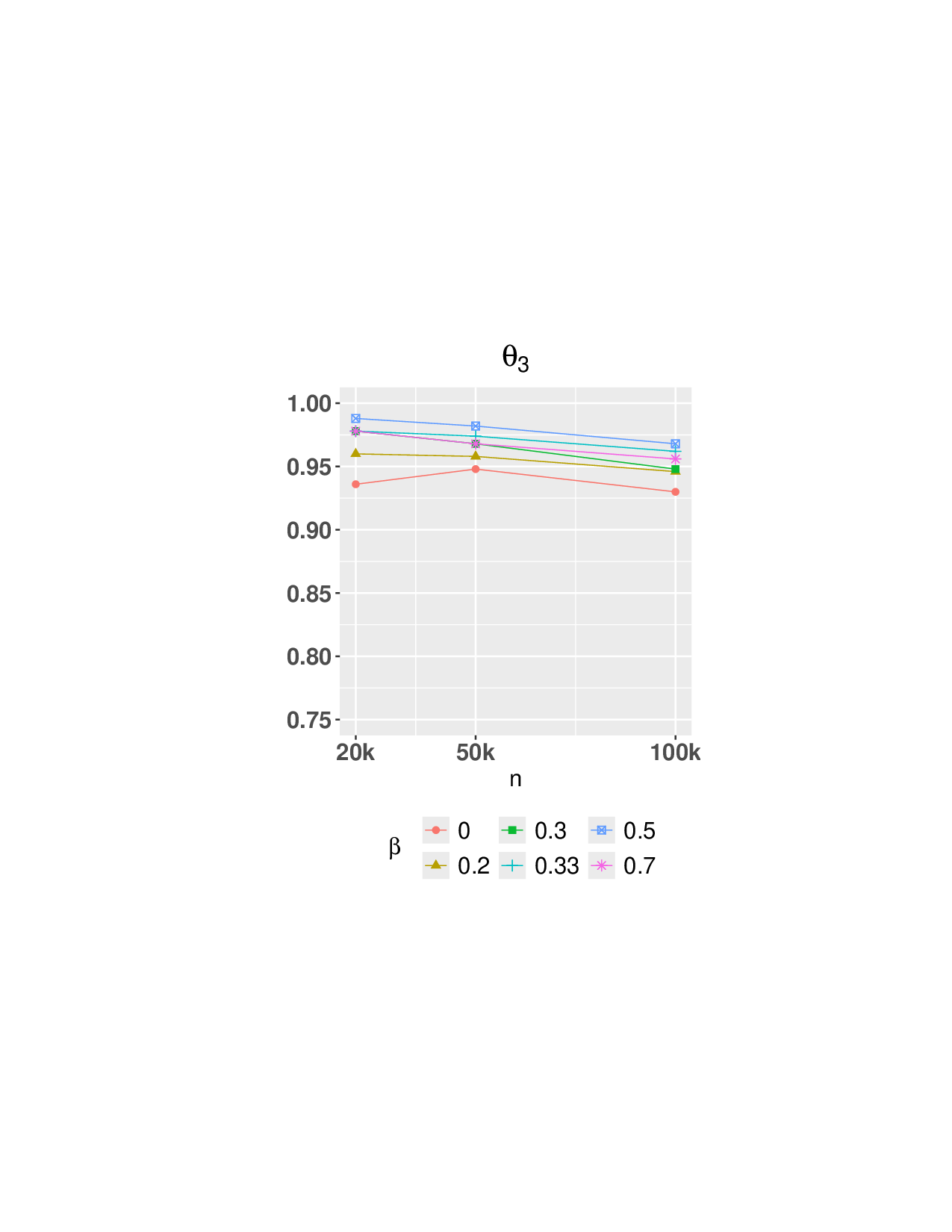}
\caption{CP for Model 8}
\label{figure:cp:model8}
\end{figure}

\section{Empirical Application}\label{sec:application}
In this section, we apply the proposed procedure to the POWER dataset from UCI Machine Learning Repository (\citealp{miscpowerdata2012}). The POWER dataset contains $2075259$ measurements of electric power consumption gathered in a house located in Sceaux during $47$ months. After removing missing values, we keep a record of $2049280$ observations. The time of a day is divided into three periods: (a) morning from 0:00 to 12:00; (b) afternoon from 12:00 to 18:00; (c) evening from  18:00 to 24:00. We  apply the OLS and the LAD regression based on the following four models:
\begin{enumerate}[wide, labelwidth=!, labelindent=0pt]
\item[{Model 1:}]   $\textrm{Power}=\theta_{11}\times \textrm{morning}+\theta_{12}\times\textrm{afternoon}+\theta_{13}\times\textrm{evening}+\textrm{error}$;
\item[{Model 2:}]    $\textrm{Sub-Metering 1}=\theta_{21}\times\textrm{morning}+\theta_{22}\times\textrm{afternoon}+\theta_{23}\times\textrm{evening}+\textrm{error}$;
\item[{Model 3:}]   $\textrm{Sub-Metering 2}=\theta_{31}\times\textrm{morning}+\theta_{32}\times\textrm{afternoon}+\theta_{33}\times\textrm{evening}+\textrm{error}$;
\item[{Model 4:}]   $\textrm{Sub-Metering 3}=\theta_{41}\times\textrm{morning}+\theta_{42}\times\textrm{afternoon}+\theta_{43}\times\textrm{evening}+\textrm{error}.$
\end{enumerate}
Here $\textrm{Power}$ is the household global minute-averaged active power  (in kilowatt), Sub-metering 1 corresponds to the active energy used by kitchen, Sub-metering 2 is the energy consumption in  the laundry room,  Sub-Metering 3 records the energy consumed by electric water-heaters and air-conditioners, and the covariates are the corresponding dummies. The observations of Sub-metering 1-3 are recorded in watt-hour. The batch size is set as $B_t=\floor{t^{0.33}}\vee 1$, while the choice of the initial values and the bootstrap setting follow those in the simulation studies. 

The point estimates and their corresponding $95\%$ confidence intervals are presented in Tables \ref{table:ols:power} and \ref{table:lad:power}. These results indicate that the error terms are right-skewed, as evidenced by most OLS estimates being larger than LAD estimates. Moreover, both tables underscore that electronic power consumption in the afternoon and evening typically surpasses that in the morning. Notably, Models 2 and 3 showcase estimates close to zero in Table \ref{table:lad:power}, which contrasts significantly with the estimates in Table \ref{table:ols:power}. This discrepancy can be attributed to the fact that approximately $91.75\%$ and $70.11\%$ of Sub-Metering 1 and Sub-Metering 2 observations, respectively, are zeros, while Sub-Metering 3 contains $41.58\%$ zeros. Furthermore, histograms (density) depicting $N=500$ bootstrap samples for each coefficient estimate are summarized in Figures \ref{figure:ols:bootstrap} and \ref{figure:lad:bootstrap}. These histograms exhibit bell-shaped distributions, demonstrating the effectiveness of the bootstrap method in constructing reliable confidence intervals.

%
%
%
%

%

\begin{table}[htp!]
  \centering
  \caption{OLS estimation of Power dataset}
    \label{table:ols:power}%
  \resizebox{\textwidth}{!}{  
    \begin{tabular}{lccrccrccrcc}
    \hline\hline
          & \multicolumn{2}{c}{Model 1} &       & \multicolumn{2}{c}{Model 2} &       & \multicolumn{2}{c}{Model 3} &       & \multicolumn{2}{c}{Model 4} \bigstrut\\
\cline{2-3}\cline{5-6}\cline{8-9}\cline{11-12}          & Est.  & 95\% CI &       & Est.  & 95\% CI &       & Est.  & 95\% CI &       & Est.  & 95\% CI \bigstrut\\
    \hline
    Morning & 0.8908 & (0.8806, 0.9006) &       & 0.5711 & (0.5328, 0.6013) &       & 0.7410 & (0.7054, 0.7790) &       & 6.2113 & (6.1137, 6.3084) \bigstrut[t]\\
    Afternoon & 1.0853 & (1.0688, 1.1003) &       & 1.2681 & (1.1912, 1.3437) &       & 2.2221 & (2.1011, 2.3329) &       & 6.8987 & (6.7576, 7.0491) \\
    Evening & 1.5602 & (1.5393, 1.5827) &       & 2.1668 & (2.0672, 2.2685) &       & 1.6886 & (1.5968, 1.7894) &       & 6.4014 & (6.2701, 6.5515) \bigstrut[b]\\
    \hline
    \end{tabular}%
    }
\end{table}%

\begin{table}[htp!]
  \centering
  \caption{LAD estimation of Power dataset}
    \label{table:lad:power}%
      \resizebox{\textwidth}{!}{  
    \begin{tabular}{lccrccrccrcc}
    \hline\hline
          & \multicolumn{2}{c}{Model 1} &       & \multicolumn{2}{c}{Model 2} &       & \multicolumn{2}{c}{Model 3} &       & \multicolumn{2}{c}{Model 4} \bigstrut\\
\cline{2-3}\cline{5-6}\cline{8-9}\cline{11-12}          & Est.  & 95\% CI &       & Est.  & 95\% CI &       & Est.  & 95\% CI &       & Est.  & 95\% CI \bigstrut\\
    \hline
    Morning & 0.4547 & (0.4451, 0.4691) &       & 0.0003 & (0.0003, 0.0005) &       & 0.0029 & (0.0033, 0.0045) &       & 0.5831 & (0.5738, 0.5960) \bigstrut[t]\\
    Afternoon & 0.7156 & (0.6919, 0.7413) &       & 0.0002 & (0.0002, 0.0007) &       & 0.0031 & (0.0037, 0.0054) &       & 2.8416 & (1.8638, 4.2264) \\
    Evening & 1.3459 & (1.3160, 1.3760) &       & 0.0006 & (0.0007, 0.0011) &       & 0.0040 & (0.0042, 0.0071) &       & 7.6681 & (6.4352, 9.0148) \bigstrut[b]\\
    \hline
    \end{tabular}%
    }
\end{table}%

\begin{figure}[htp!]
\centering
\includegraphics[width=2 in]{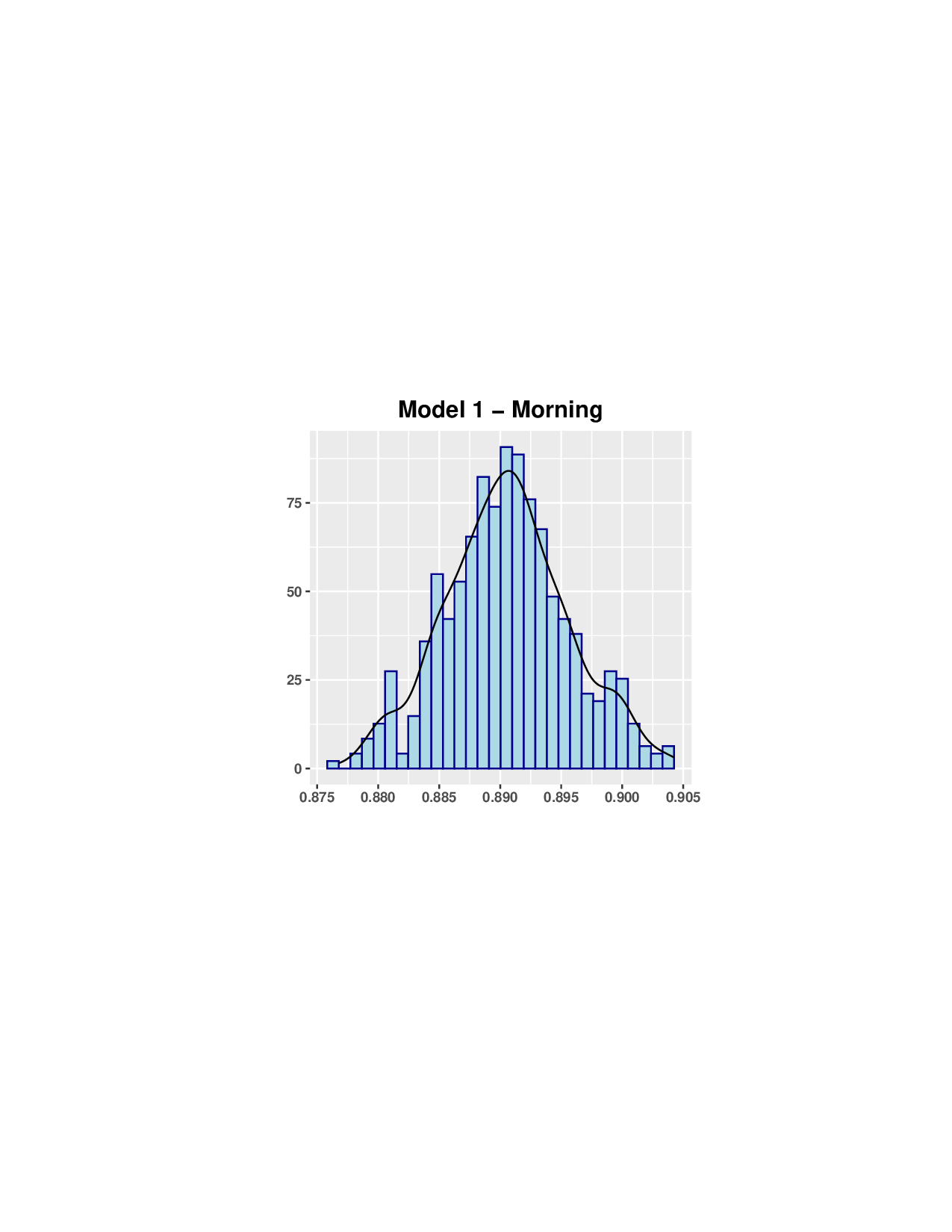}
\includegraphics[width=2 in]{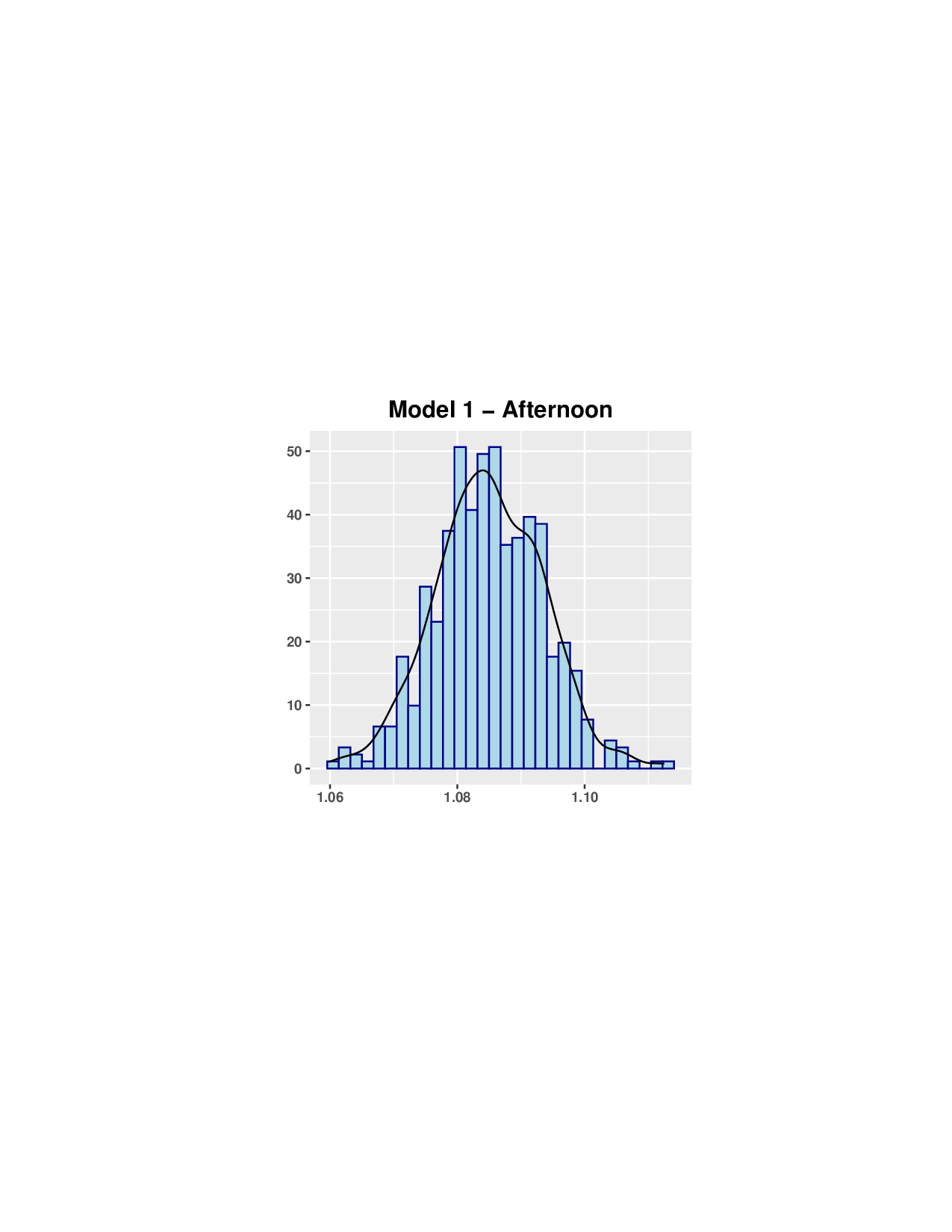}
\includegraphics[width=2 in]{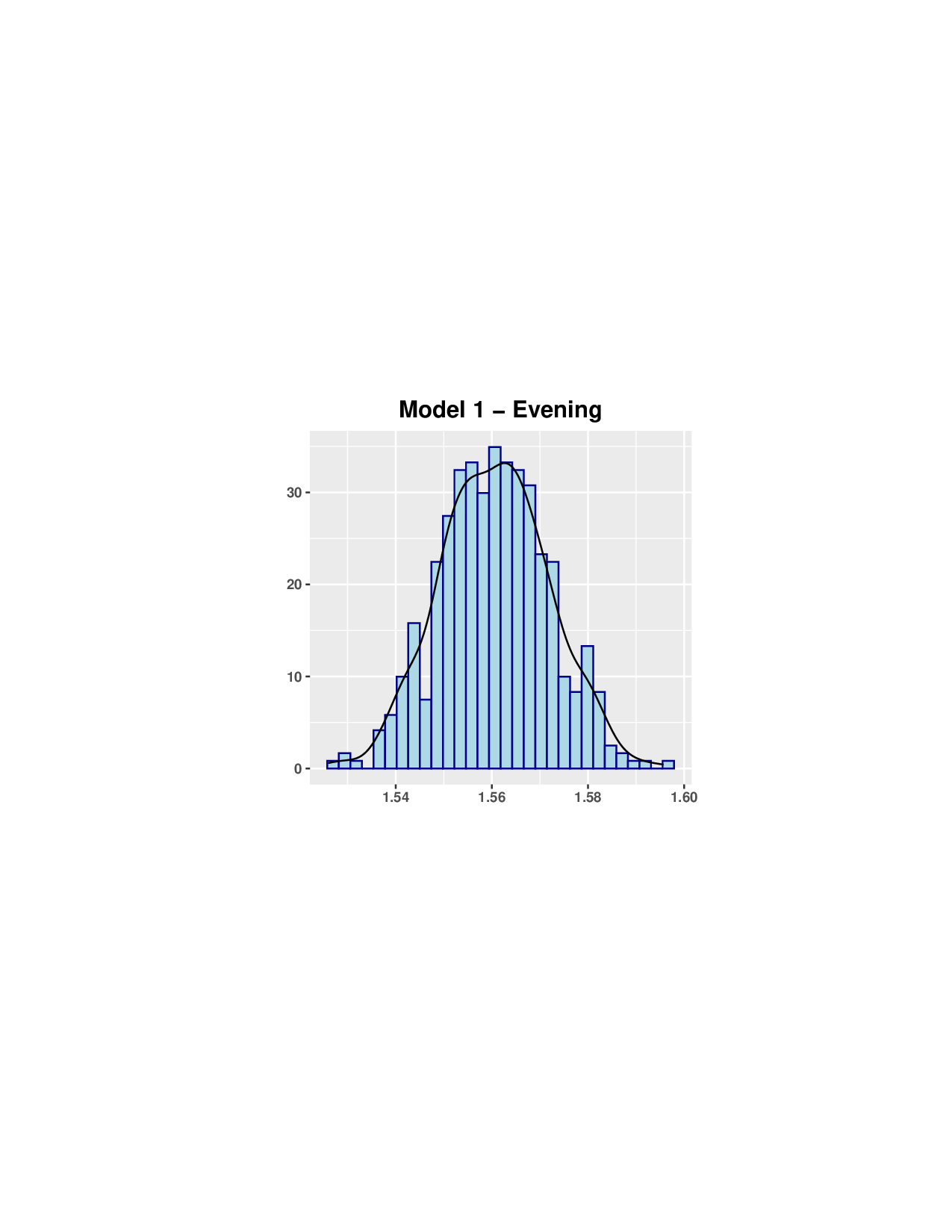}

\includegraphics[width=2 in]{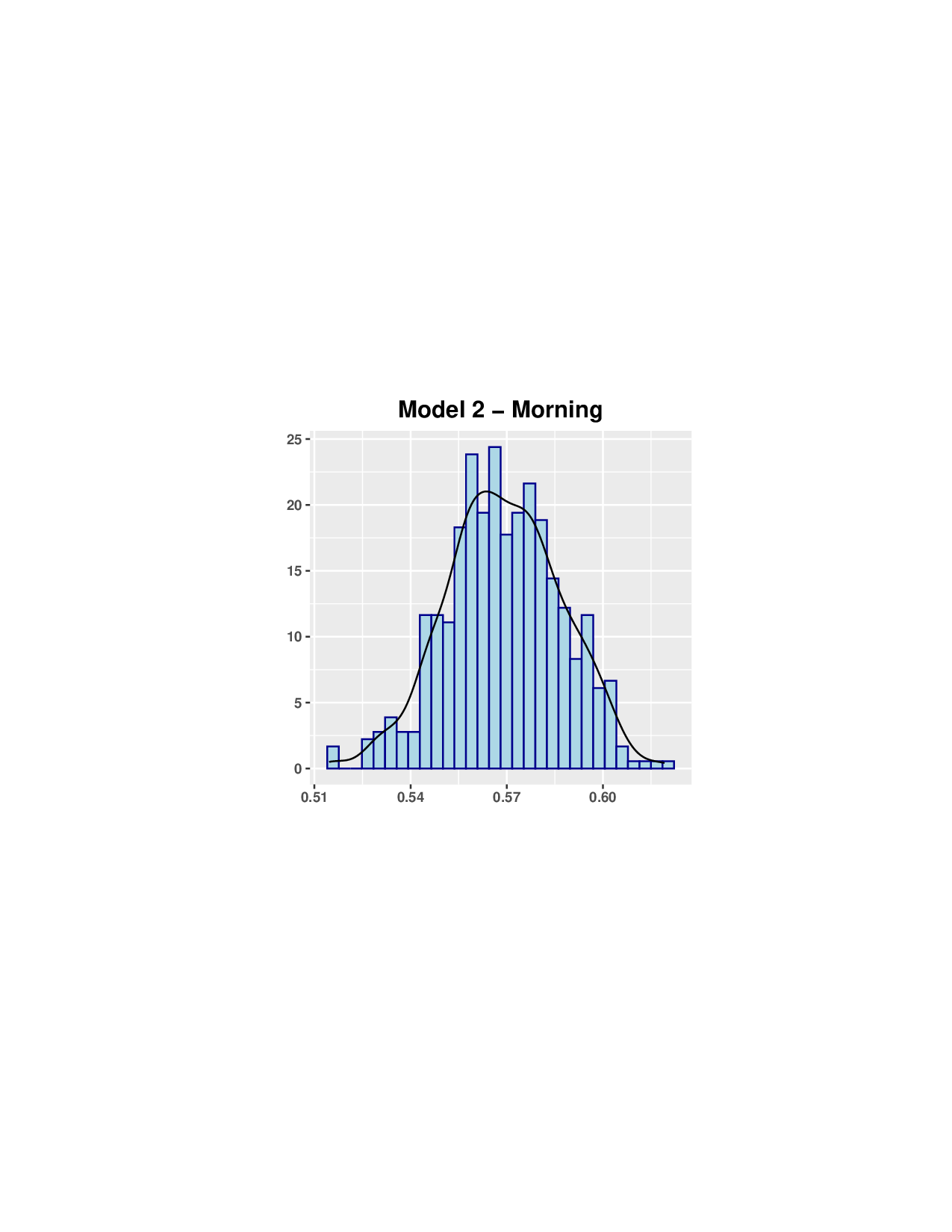}
\includegraphics[width=2 in]{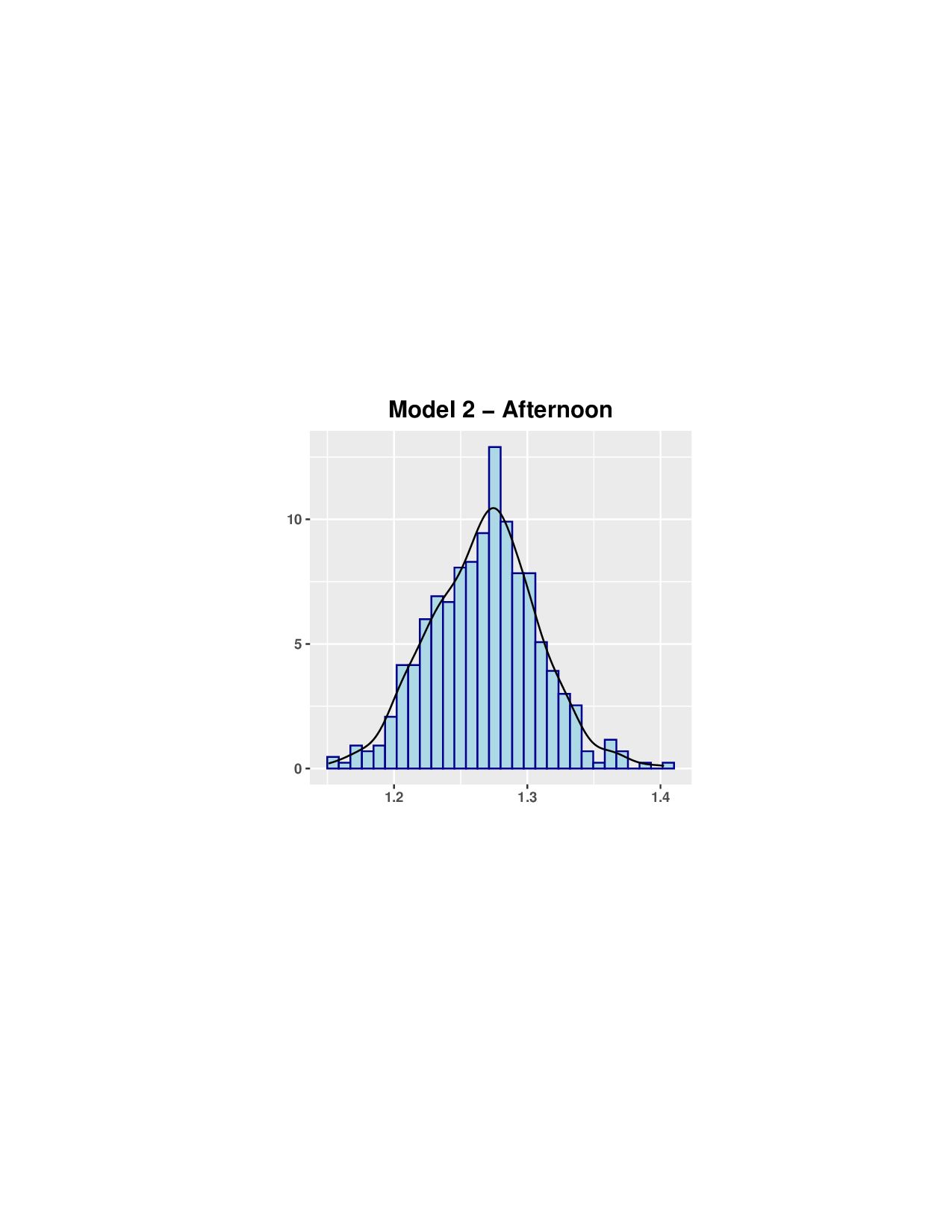}
\includegraphics[width=2 in]{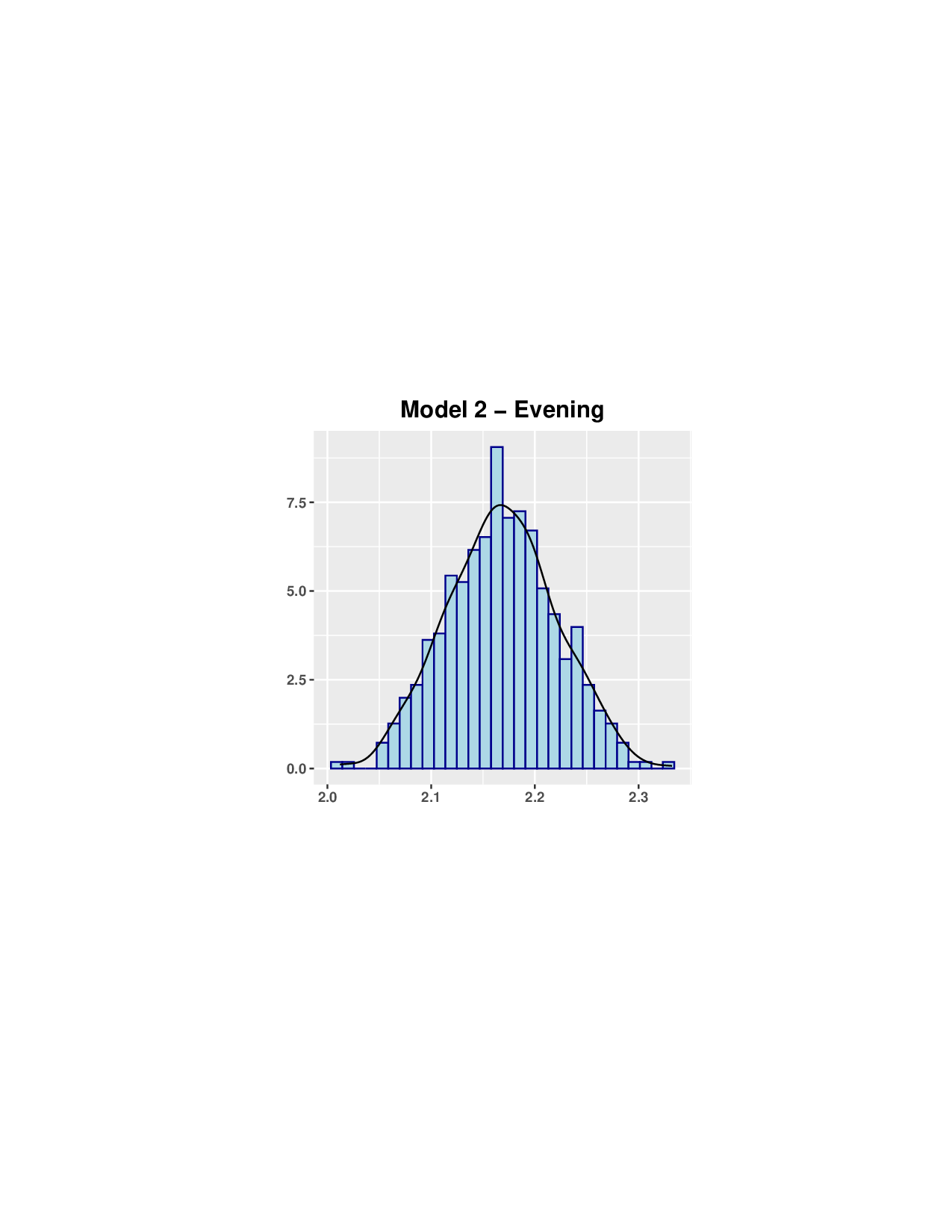}

\includegraphics[width=2 in]{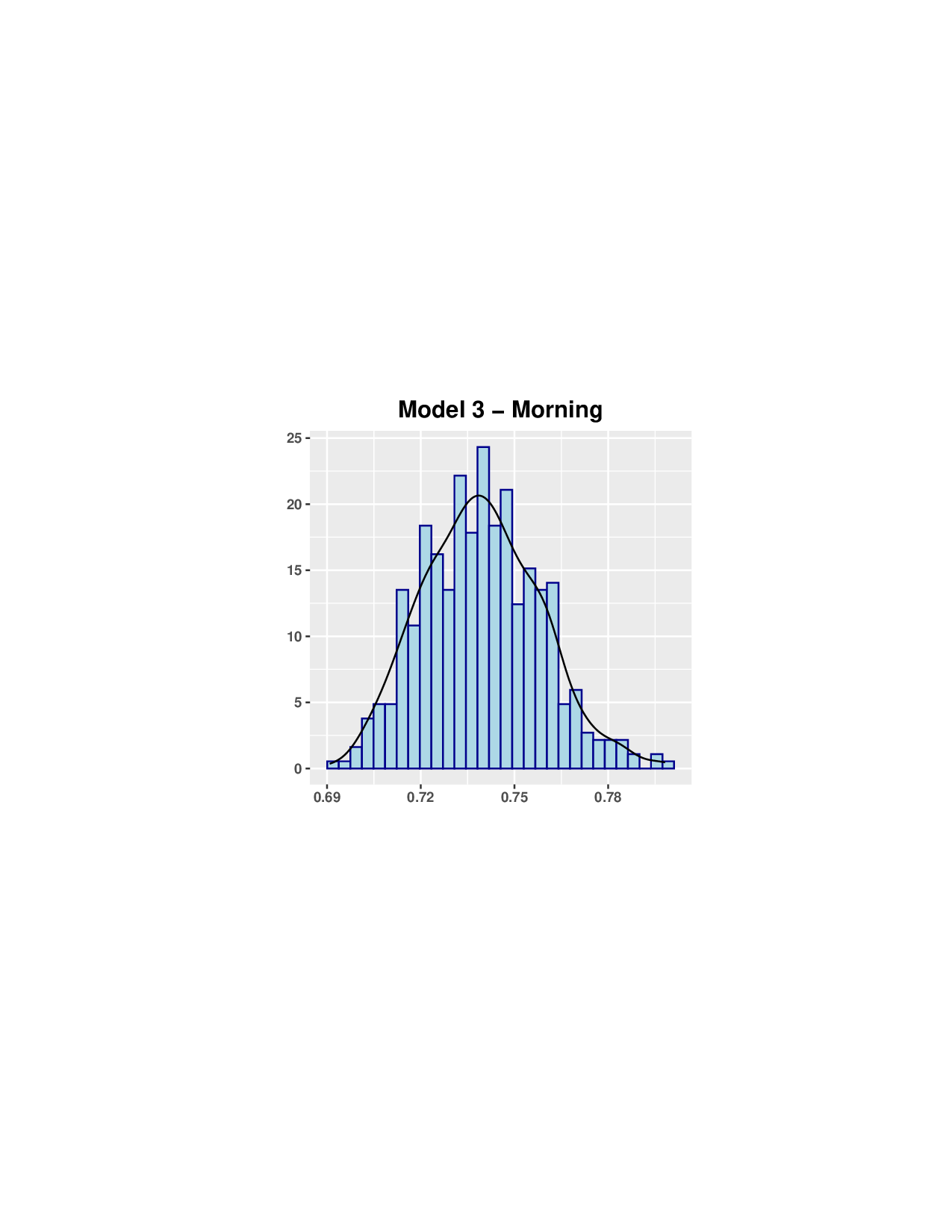}
\includegraphics[width=2 in]{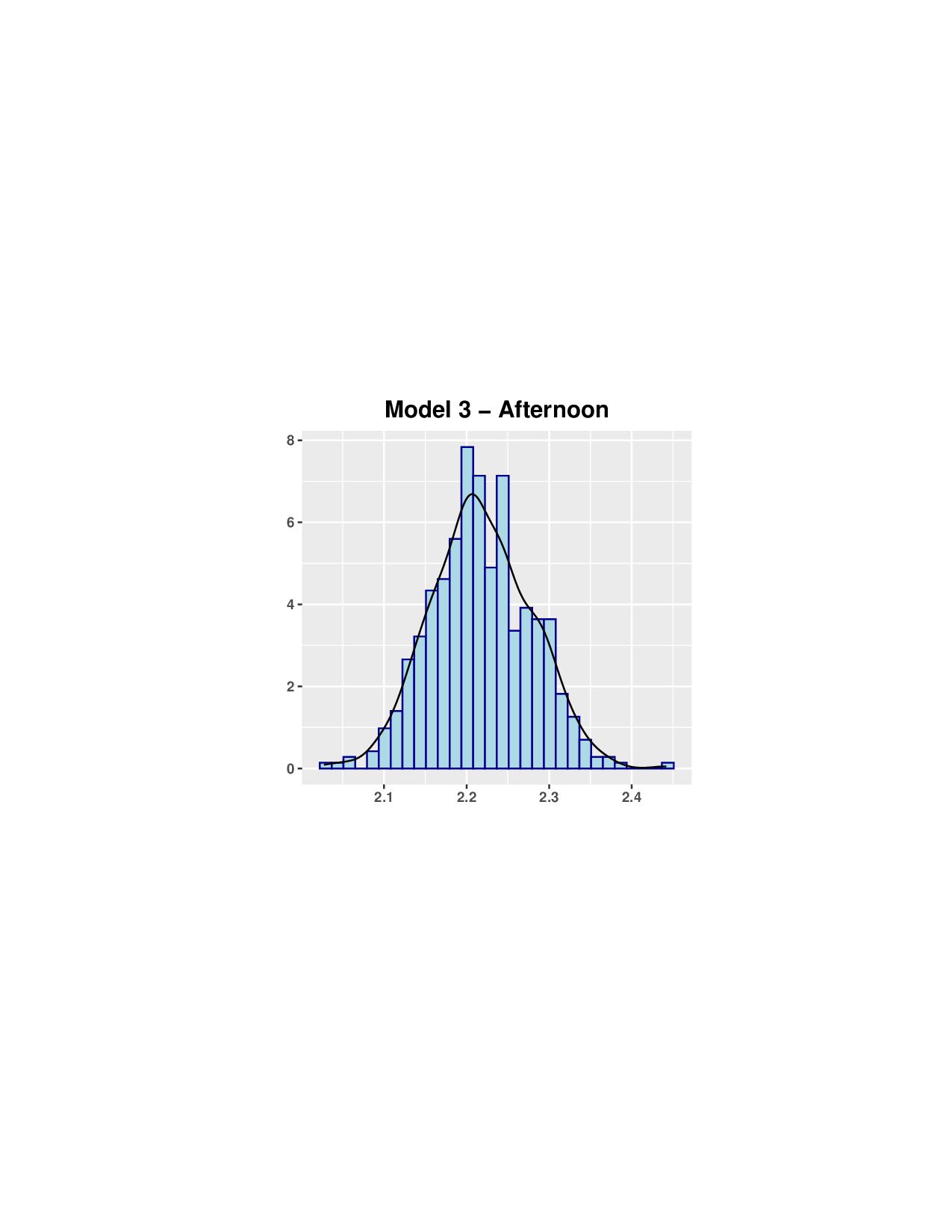}
\includegraphics[width=2 in]{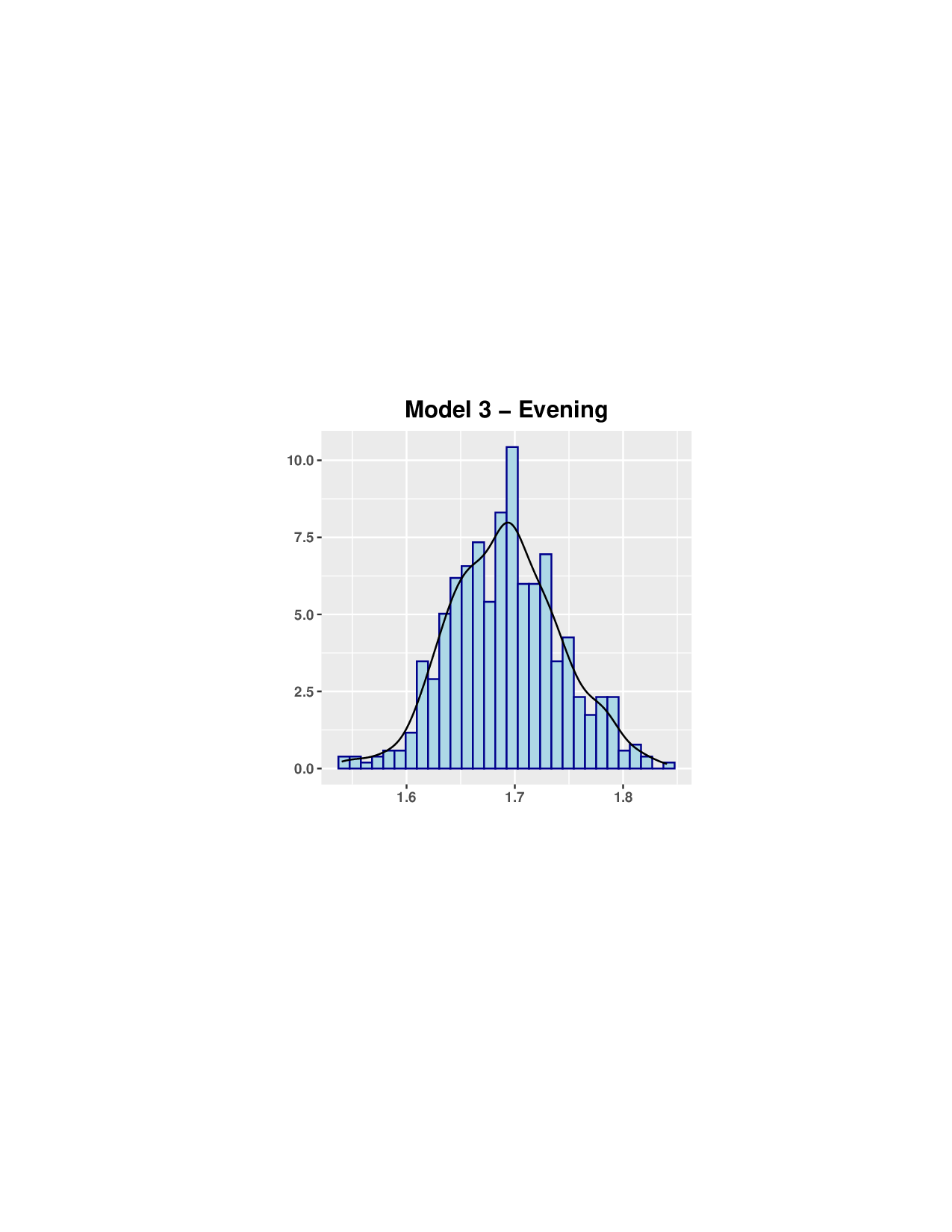}

\includegraphics[width=2 in]{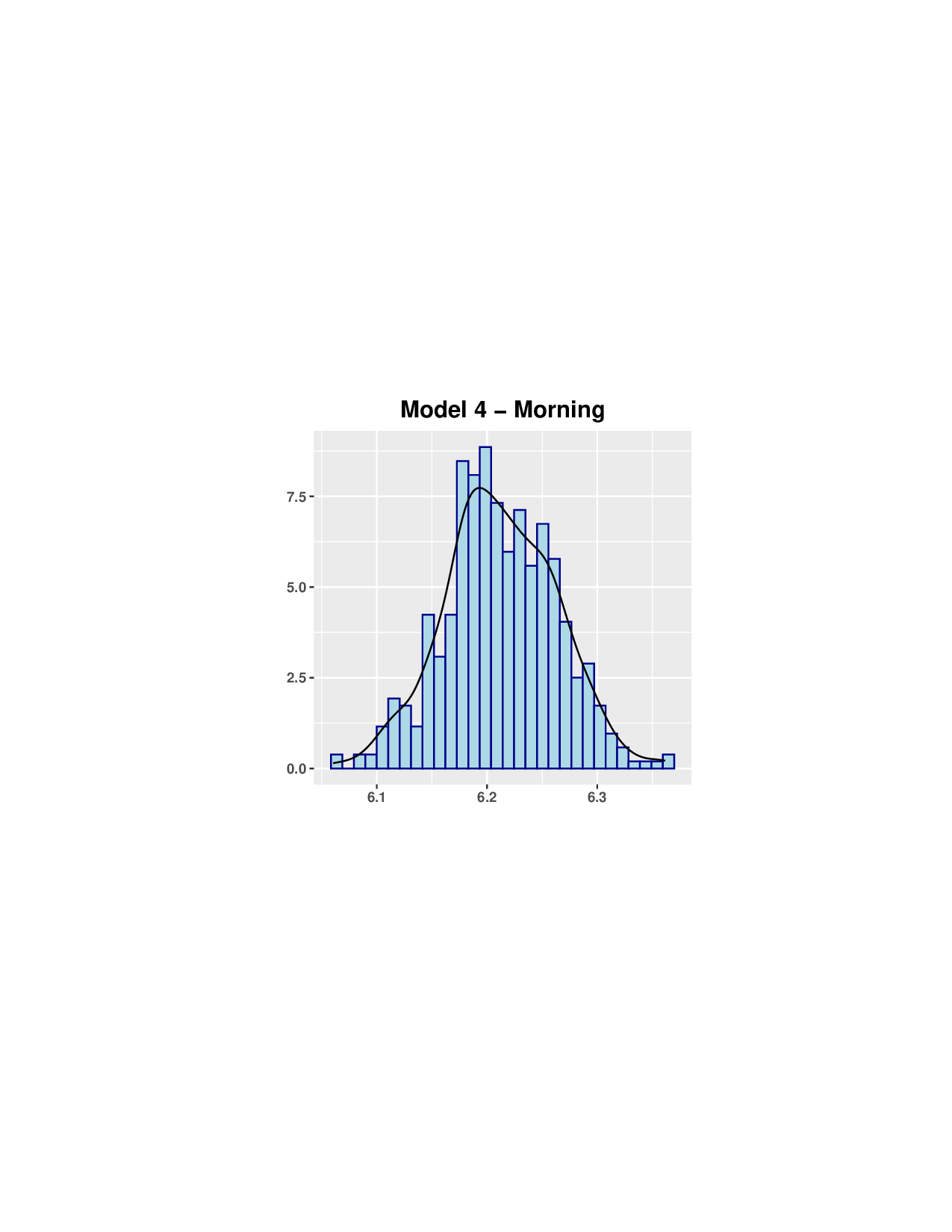}
\includegraphics[width=2 in]{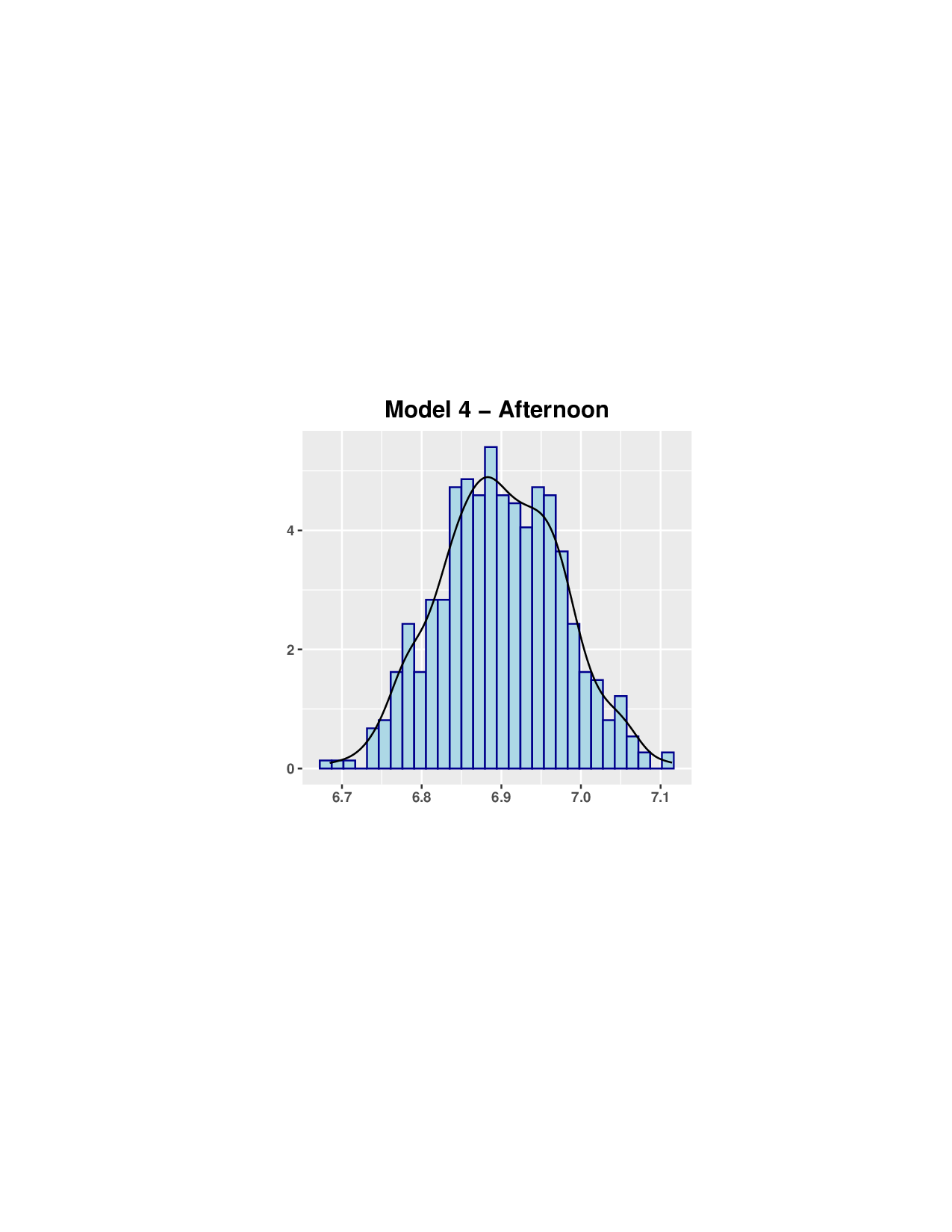}
\includegraphics[width=2 in]{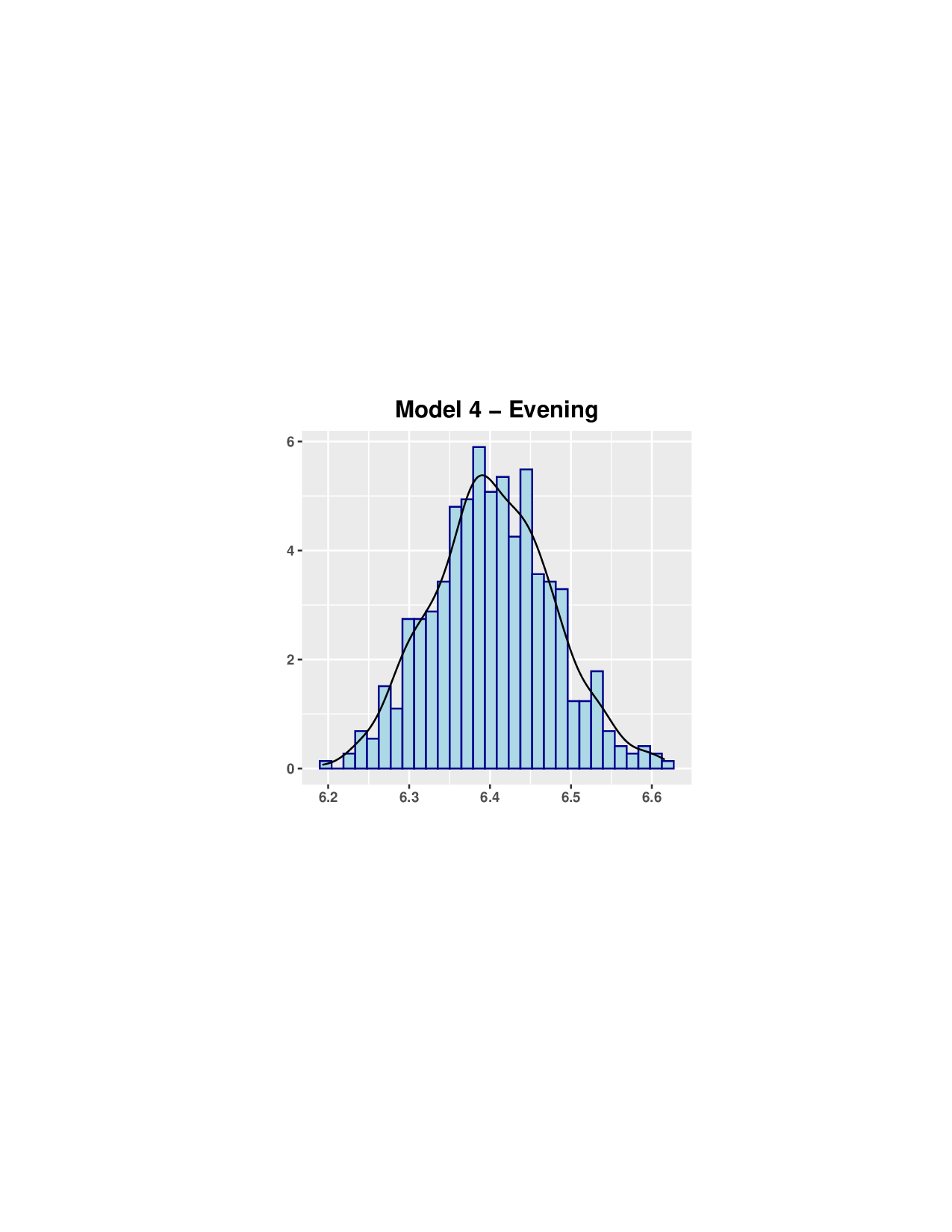}
\caption{Bootstrap histograms for OLS}
\label{figure:ols:bootstrap} 
\end{figure}

\begin{figure}[htp!]
\centering
\includegraphics[width=2 in]{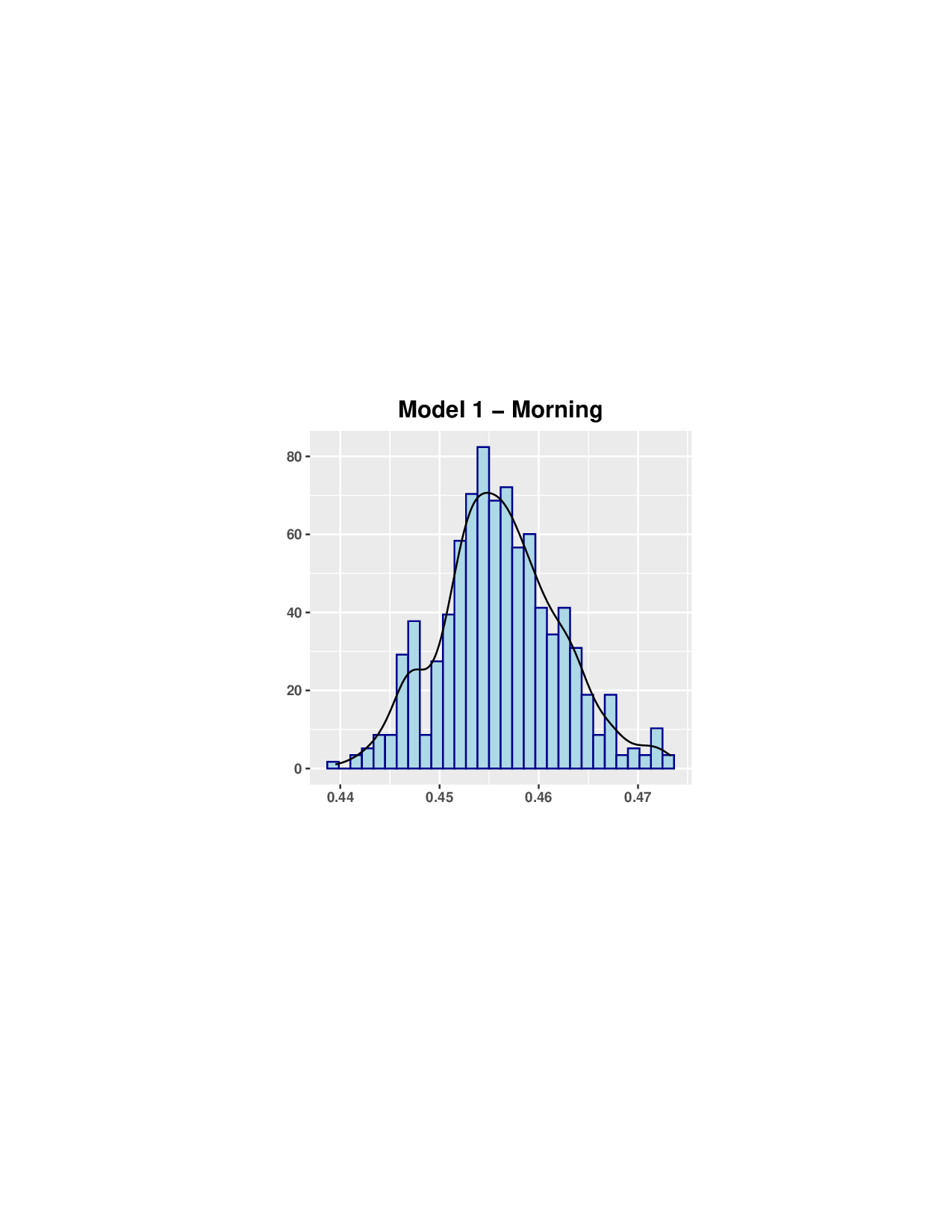}
\includegraphics[width=2 in]{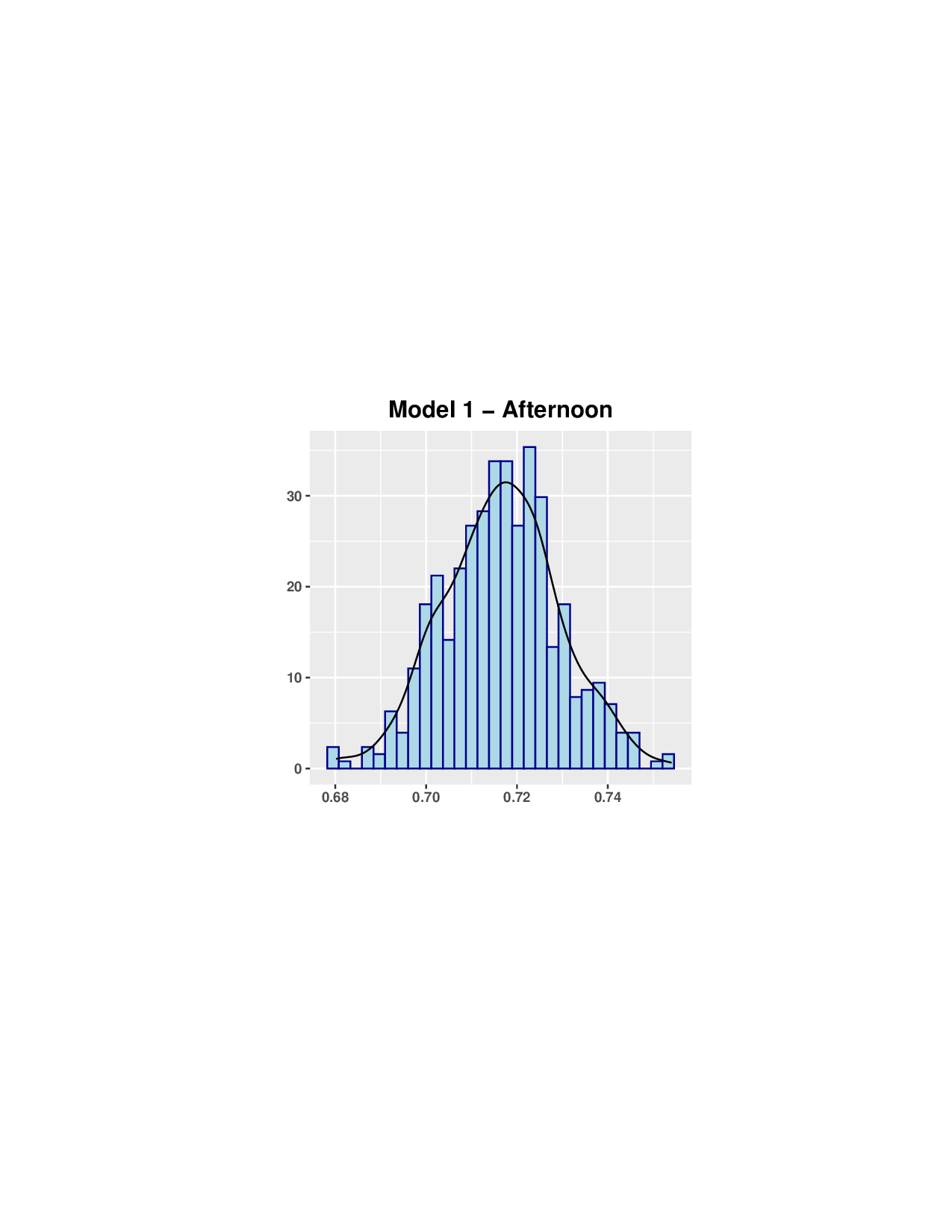}
\includegraphics[width=2 in]{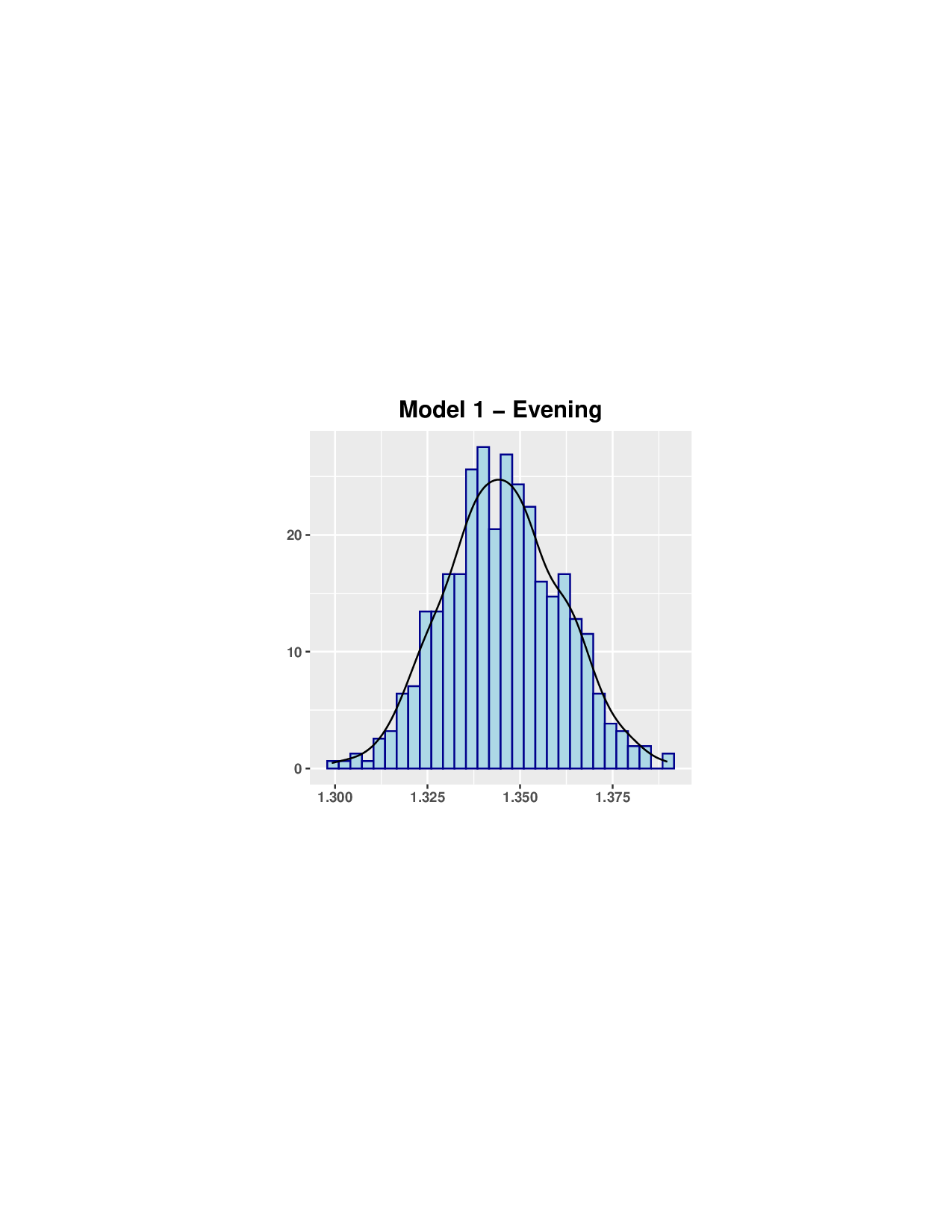}

\includegraphics[width=2 in]{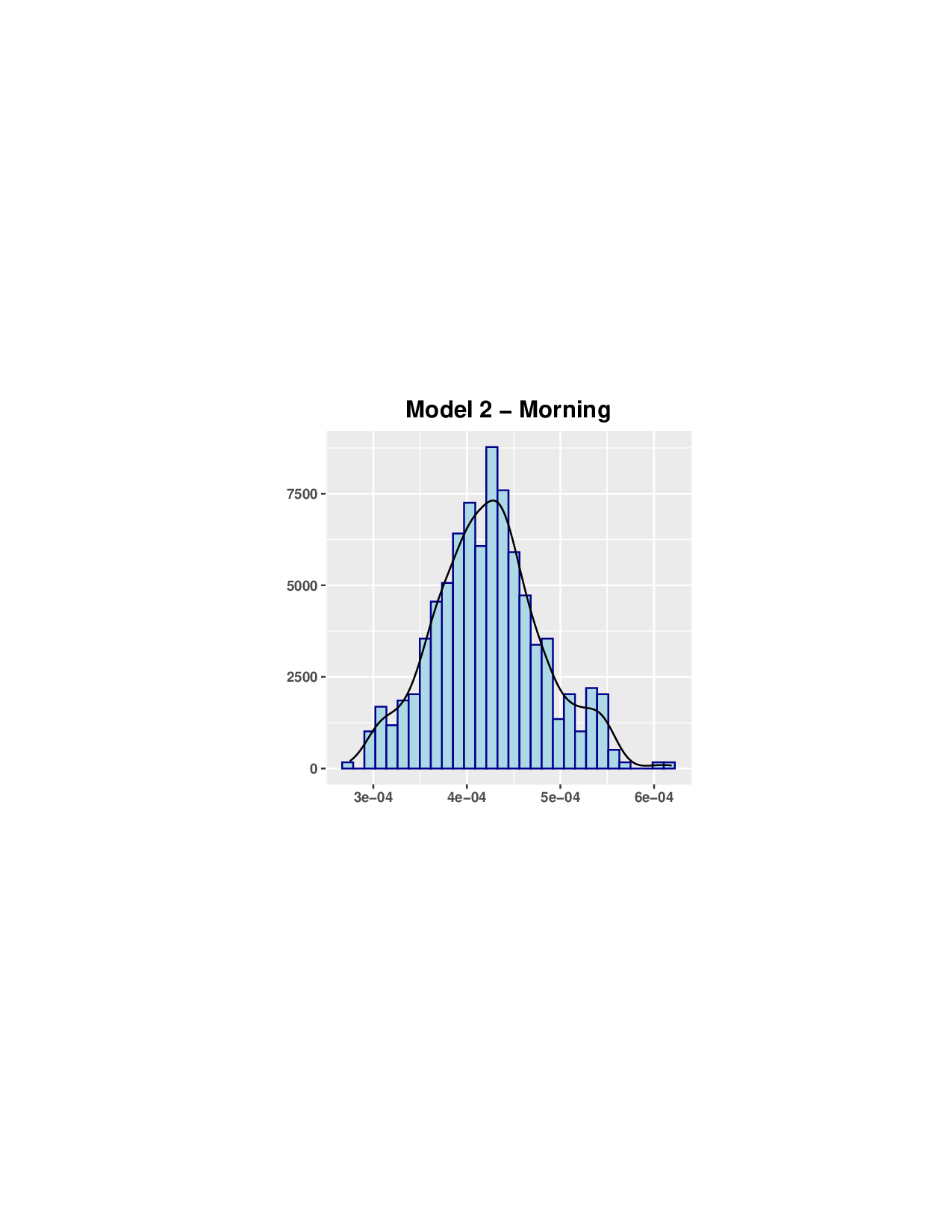}
\includegraphics[width=2 in]{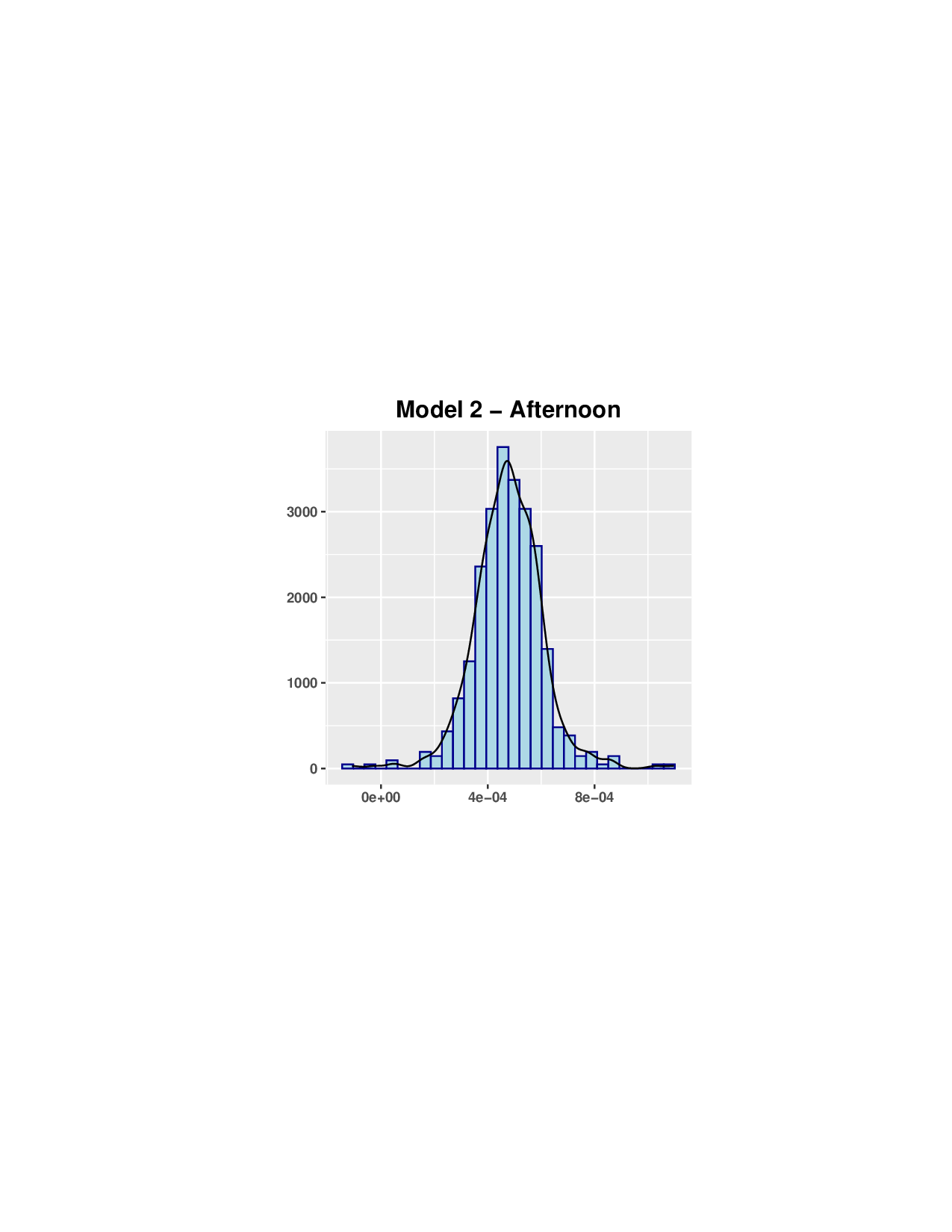}
\includegraphics[width=2 in]{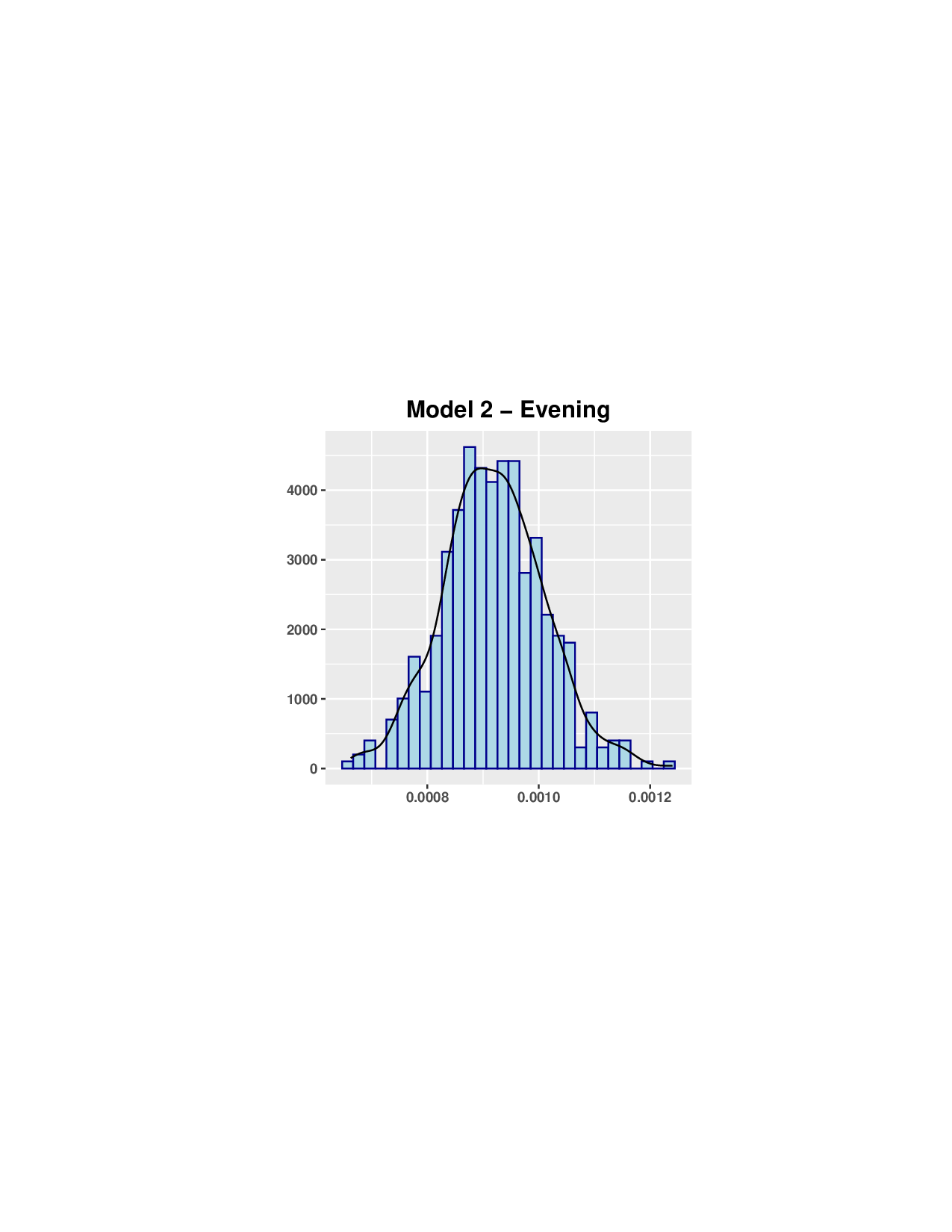}

\includegraphics[width=2 in]{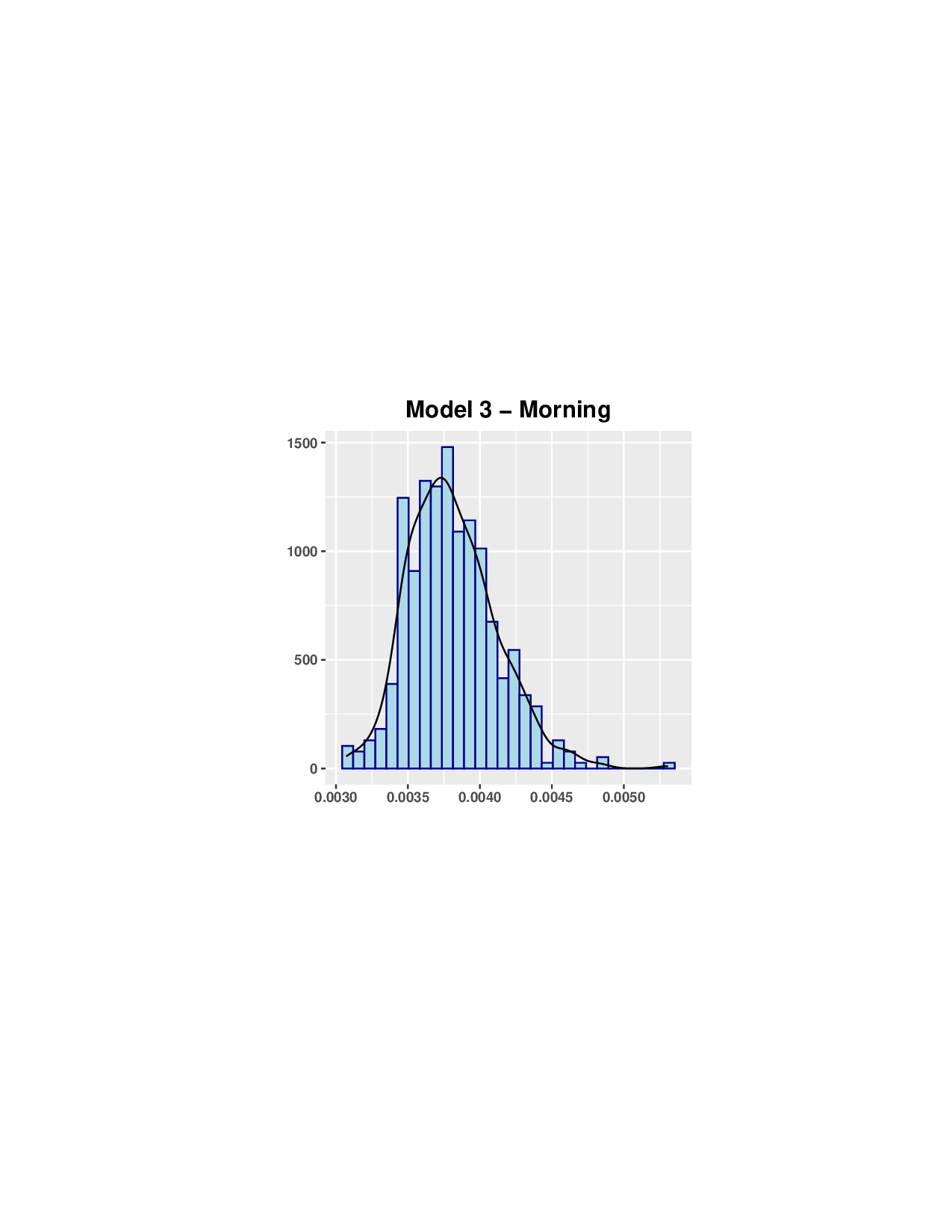}
\includegraphics[width=2 in]{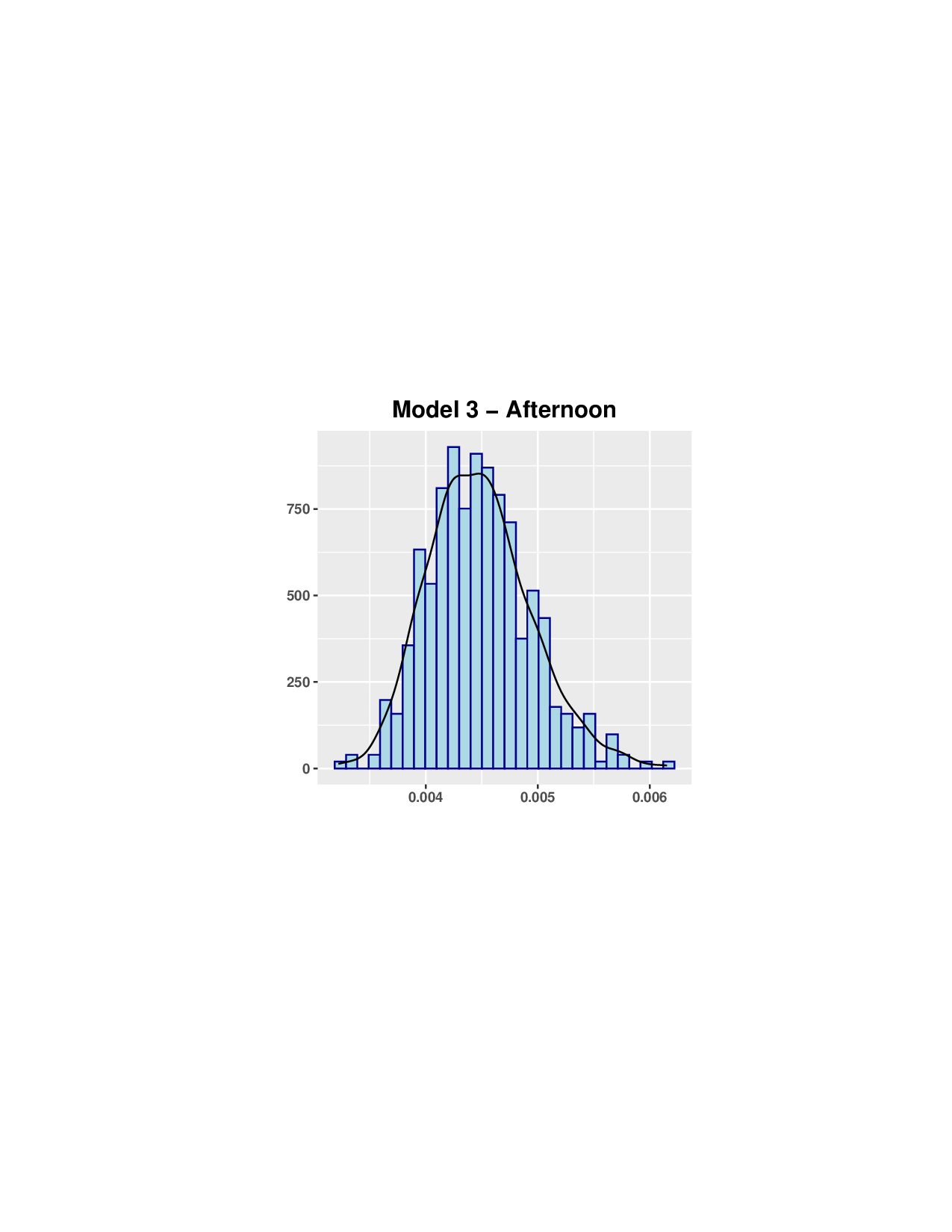}
\includegraphics[width=2 in]{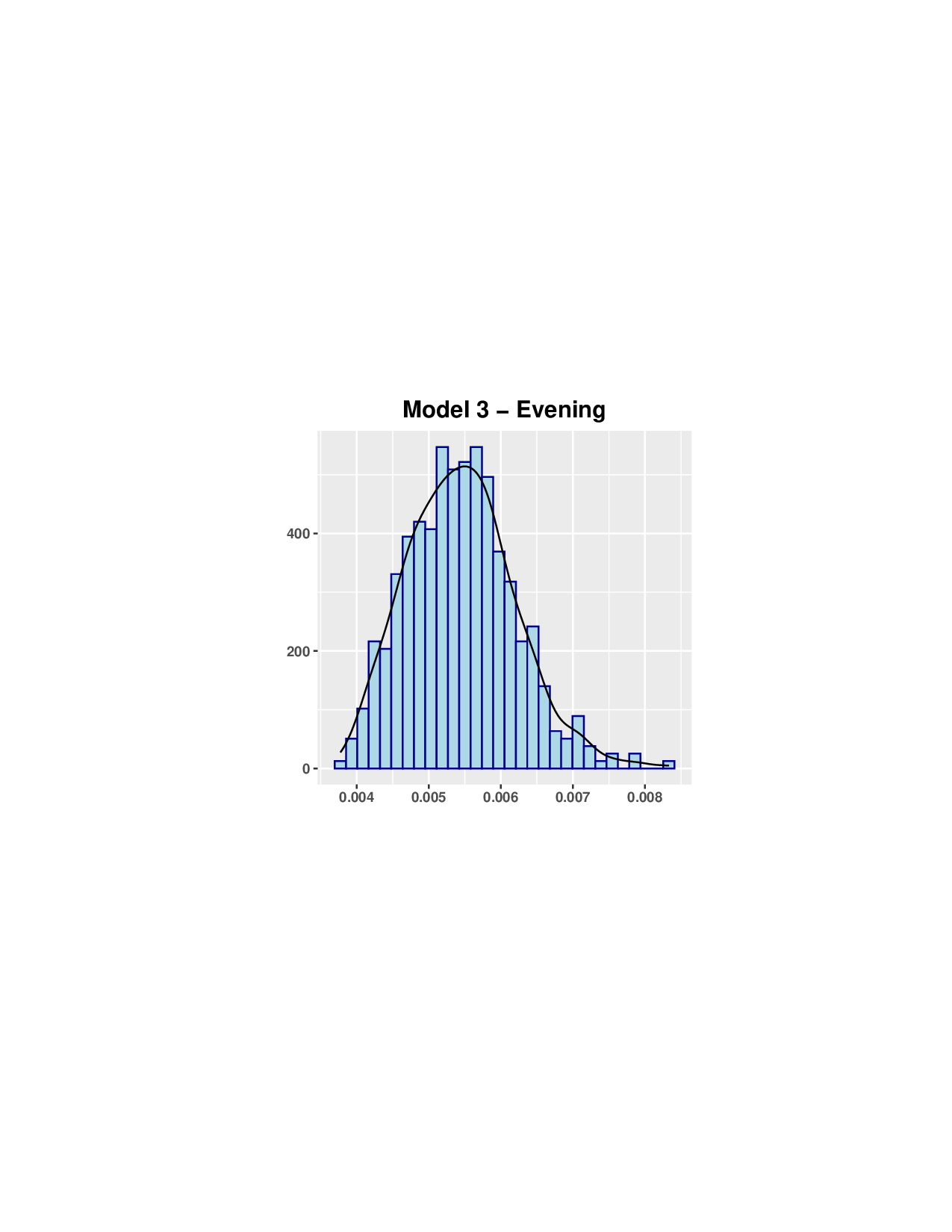}

\includegraphics[width=2 in]{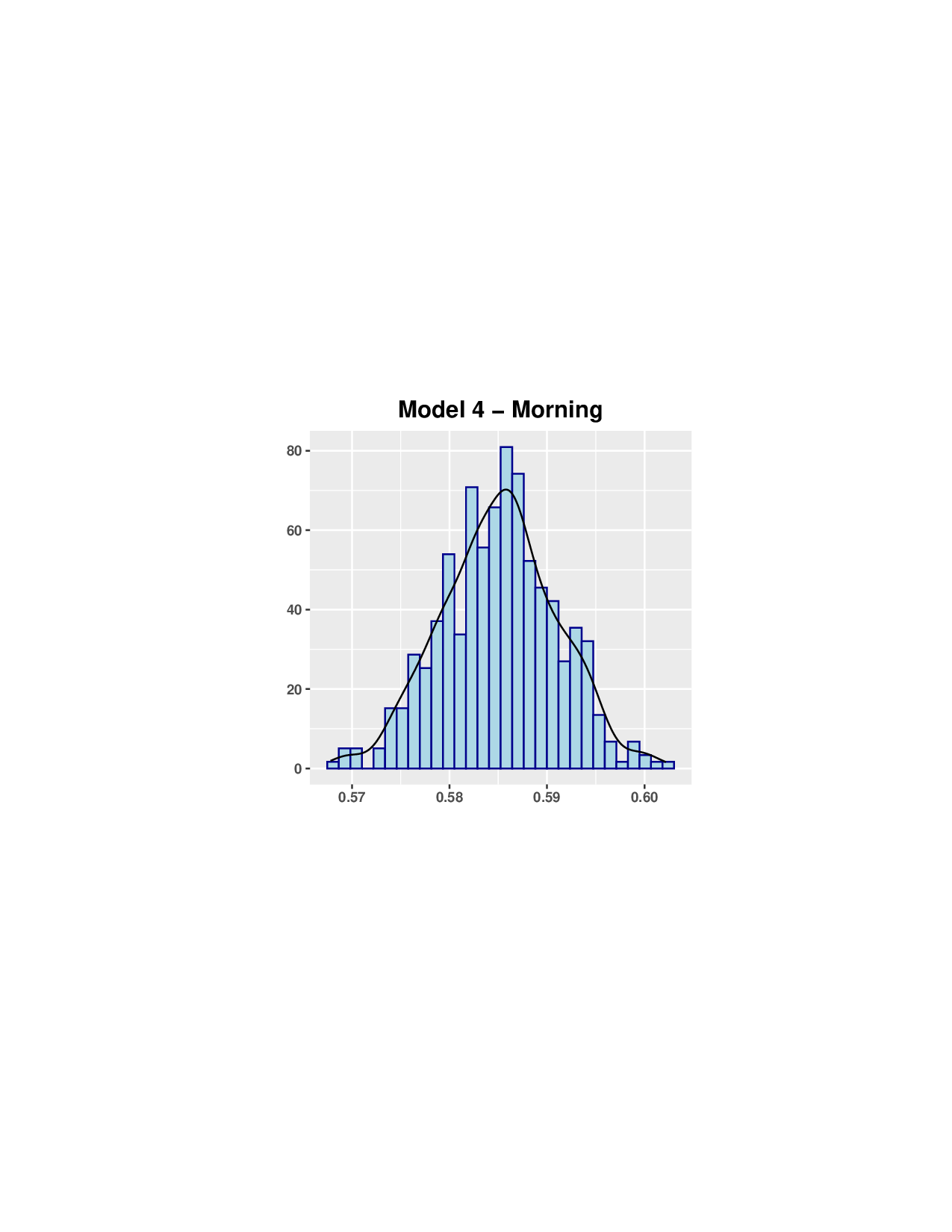}
\includegraphics[width=2 in]{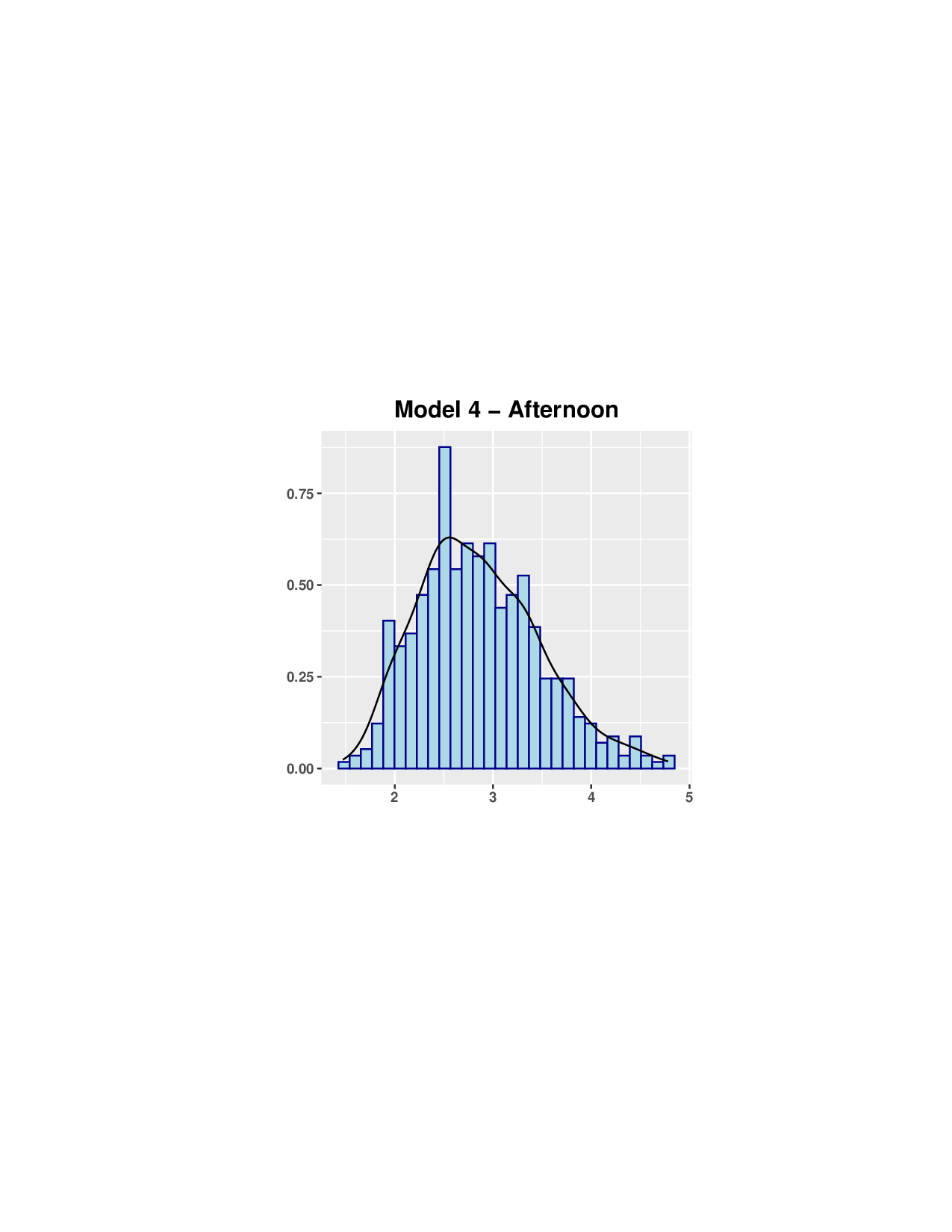}
\includegraphics[width=2 in]{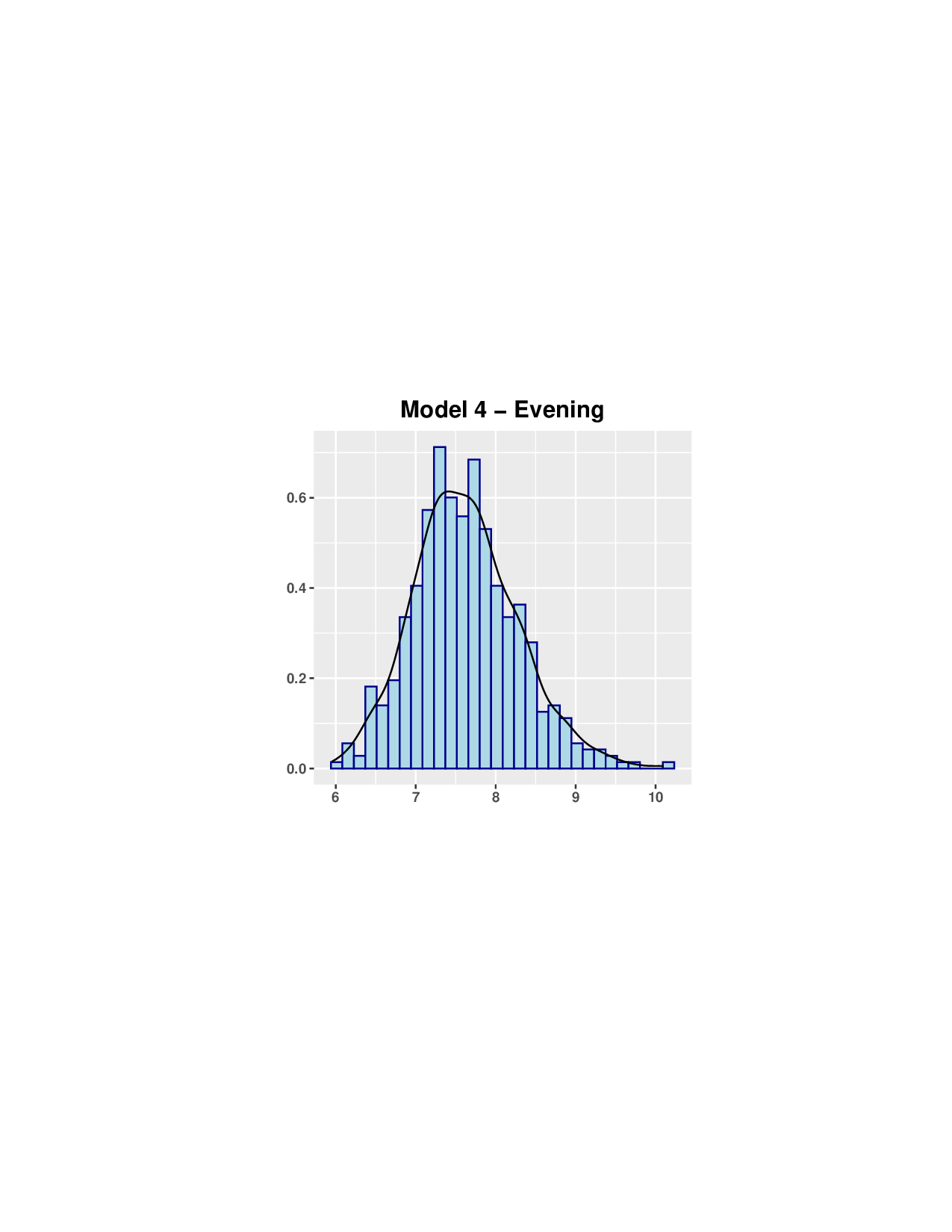}
\caption{Bootstrap histograms for LAD}
\label{figure:lad:bootstrap} 
\end{figure}

\section{Conclusion and Future Works}
In this paper, we proposed a mini-batch SGD algorithm to conduct statistical inference about the system parameters when the observations are $\phi$-mixing. The proposed algorithm is applicable to both smooth and non-smooth loss functions, which can cover many popular statistical models. In addition, a mini-batch bootstrap SGD procedure is developed to construct confidence intervals. The limiting distribution of the mini-batch SGD estimator and the validity of the bootstrap procedure are both established. To demonstrate the risk of ignoring the correlation, we also design a concrete example showing the invalidity of the bootstrap SGD in \cite{fang2018online} when the observations are $\phi$-mixing. Monte Carlo simulations and an application to a real-world dataset are conducted to examine the finite-sample properties of the proposed method, and the results confirm our theoretical findings. 

There are several interesting next directions rooted on our approach. First, a natural step is to extend our framework from $\phi$-mixing to more general dependence like strong mixing ($\alpha$-mixing) or near‐epoch dependence (NED). Furthermore, it would be valuable to develop a functional central limit theorem of our estimator, which enables interval estimation using the random scaling method in \cite{lee2022fast, lee2022fastb}. Lastly, extending the results in Theorems \ref{theorem:consistency} and \ref{theorem:asymptotic:normality} from mini-batch SGD to the case of standard SGD ($B_t=1$) under $\phi$-mixing data would contribute significantly to the field.


\section*{\textcolor{white}{Supplement}}
\renewcommand
\theequation{S.\arabic{equation}}\setcounter{equation}{0} %
\setcounter{section}{0} \renewcommand
\thelemma{S.\arabic{lemma}} \renewcommand
\thesection{S.\arabic{section}}\renewcommand
\thesubsection{S.\arabic{subsection}}
\setcounter{subsection}{0}
\renewcommand{\thesubsection}{S.\Roman{subsection}}
\setcounter{equation}{0}
\renewcommand{\theequation}{S.\arabic{equation}}
\setcounter{lemma}{0}
\renewcommand{\thelemma}{S.\arabic{lemma}}
\setcounter{theorem}{0}
\renewcommand{\thetheorem}{S.\arabic{theorem}}
\setcounter{proposition}{0}
\renewcommand{\theproposition}{S.\arabic{proposition}}
\begin{center}
{\Large \textbf{Supplement to ``Statistical Inference with Stochastic Gradient Methods under $\phi$-mixing Data"}}\\[0.25in]
\if1\blind
{
Ruiqi Liu, Xi Chen and Zuofeng Shang\\[0.25in]
} \fi

%



\end{center}
This supplement includes the proofs of the main theorems. Section \ref{section:prelinmiary:lemma} provides the proofs of some technical lemmas. The proofs of Theorems \ref{theorem:consistency}-\ref{theorem:bootstrap:asymptotic} are given in Sections \ref{section:proof:consistency}-\ref{section:proof:bootstrap}, respectively. We prove Proposition \ref{prop:failure:bootstrap:sgd} in Section \ref{section:proof:fail:sgd}. Additional theoretical and numerical results are included in  Sections \ref{sec:additional:theoretical:result} and \ref{sec:additional:simulation}.

Since $\theta_T^a$ (or $\theta_T^{*a}$) plays a similar role as $\theta_T^b$ (or $\theta_T^{*b}$), it suffices to study the former. For ease of presentation, we drop the superscript in (\ref{eq:iteration:estimator}) and (\ref{eq:iteration:estimator:bootstrap}). Moreover, iteration procedures in (\ref{eq:iteration:estimator}) and (\ref{eq:iteration:estimator:bootstrap}) can be generalized as follows:
\begin{eqnarray}
\theta_t=\Pi\left\{\theta_{t-1}-\gamma_t U_t\widehat{H}_t(W_t,  \theta_{t-1})\right\}.\label{eq:iteration:general}
\end{eqnarray}
where $U_t$'s are i.i.d random variables satisfying Assumption \ref{Assumption:A4}, and $W_t=W_t^a$ after dropping the superscript.

\begin{AssumptionB}\label{Assumption:A4}
The i.i.d. random variables $U_j$'s  are independent from the observations $Z_j$'s. Moreover, it holds that $\ev(U_j)=1$ and $\ev(|U_j|^p)<\infty$. Here $p$ is the constant introduced in Assumption \ref{Assumption:A1}\ref{A1:moment:condition}.
\end{AssumptionB}

Clearly, $\theta_t$ becomes the estimators in (\ref{eq:iteration:estimator}) when $U_t=1$, and it is identical to bootstrap estimators in (\ref{eq:iteration:estimator:bootstrap}) when $U_t=V_t$. Before proceeding, let us define an ancillary time series $\widetilde{\bfZ}=\{\widetilde{Z}_t\}_{t=1}^\infty$, which  has the same distribution as $\bfZ=\{Z_t\}_{t=1}^\infty$ and is independent from $\bfZ$ and $U_t$'s. Similarly, we define $\widetilde{W}_t=\{\widetilde{Z}_i, i\in I_t\}$, which is identically distributed as $W_t$. Moreover, we will use the following notation: 
\begin{eqnarray*}
\ev_{t-1}(\cdot)&=&\ev(\cdot|W_1, \ldots, W_{t-1}),\\
H(\theta)&=&\ev\{\nabla l(Z,\theta)\}=\nabla L(\theta),\\
e_t&=&\ev_{t-1}\{\widehat{H}_t(W_t, \theta_{t-1})\}-H(\theta_{t-1}),\\
\zeta_t&=&U_t\widehat{H}_t({W}_t,  \theta_{t-1})-\ev_{t-1}\{\widehat{H}_t(W_t, \theta_{t-1})\},\\
\widetilde{\zeta}_t&=&U_t\widehat{H}_t(\widetilde{W}_t,  \theta_{t-1})-\ev_{t-1}\{\widehat{H}_t(\widetilde{W}_t, \theta_{t-1})\}=U_t\widehat{H}_t(\widetilde{W}_t,  \theta_{t-1})-H(\theta_{t-1}),\\
\Delta_t&=&\theta_t-\theta^*.
\end{eqnarray*}
Hence, the iteration in (\ref{eq:iteration:general}) can be written as
\begin{eqnarray*}
\theta_t=\Pi\left\{\theta_{t-1}-\gamma_t  H(\theta_{t-1})-\gamma_t  e_t-\gamma_t \zeta_t\right\}.
\end{eqnarray*}


\subsection{Preliminary Lemmas}\label{section:prelinmiary:lemma}

\begin{lemma}\label{lemma:comparison:test}
Suppose that $\{a_n\}_{n=1}^\infty$ is a positive and non-increasing sequence with $\sum_{n=1}^\infty a_n<\infty$. Then it holds that $\lim_{n\to \infty}na_n=0$.
\end{lemma}
\begin{proof}
Since $a_n$ is non-increasing, we have
\begin{eqnarray*}
2na_{2n}=2\sum_{k=n+1}^{2n}a_{2n}\leq 2\sum_{k=n+1}^{2n}a_{k}\leq 2\sum_{k=n+1}^{\infty}a_{k}.
\end{eqnarray*}
Noting that $\sum_{n=1}^\infty a_n<\infty$, all the terms in the preceding display will converge to zero as $n\to \infty$.
\end{proof}

\begin{lemma}\label{lemma:weighted:cesaro:sum}
Suppose that $\{a_n\}_{n=1}^\infty$ is a positive sequence with $\sum_{n=1}^\infty a_n=\infty$, and $\{b_n\}_{n=1}^\infty$ is a sequence with $\lim_{n\to \infty}b_n=b$. Then it holds that
\begin{eqnarray*}
\lim_{T\to \infty} \frac{\sum_{n=1}^T a_nb_n}{\sum_{n=1}^T a_n}=b.
\end{eqnarray*}
\end{lemma}
\begin{proof}
For any $\epsilon>0$, there exists a constant $N>0$ such that $|b_n-b|<\epsilon$ for all $n\geq N$. As a consequence, it follows that
\begin{eqnarray*}
 \bigg|\frac{\sum_{n=1}^T a_nb_n}{\sum_{n=1}^T a_n}-b\bigg|&=& \bigg|\frac{\sum_{n=1}^T a_n(b_n-b)}{\sum_{n=1}^T a_n}\bigg|\\
 &\leq&\bigg|\frac{\sum_{n=1}^{N} a_n(b_n-b)}{\sum_{n=1}^T a_n}\bigg|+\bigg|\frac{\sum_{n=N+1}^{T} a_n(b_n-b)}{\sum_{n=1}^T a_n}\bigg|\\
 &\leq& \bigg|\frac{\sum_{n=1}^{N} a_n(b_n-b)}{\sum_{n=1}^T a_n}\bigg|+\epsilon.
\end{eqnarray*}
Since $\sum_{n=1}^\infty a_n=\infty$, we conclude that $\lim_{T\to \infty} \left|{\sum_{n=1}^T a_nb_n}/{\sum_{n=1}^T a_n}-b\right|\leq \epsilon.$ Noting that $\epsilon>0$ is arbitrary, we complete the proof.
\end{proof}
 
\begin{lemma}\label{lemma:weighted:cesaro:sum:2}
Suppose that $\{a_n\}_{n=1}^\infty$ is a positive sequence with $\sum_{n=1}^\infty a_n=\infty$, and $\{b_n\}_{n=1}^\infty$  is a  sequence with $\sum_{n=1}^\infty |b_n|<\infty$. Then it holds that
\begin{eqnarray*}
\lim_{T\to \infty} \frac{\sum_{n=1}^T a_n \sum_{k=n}^T b_k}{\sum_{n=1}^T a_n}=0.
\end{eqnarray*}
\end{lemma}
\begin{proof}
Notice that $|\sum_{k=n}^T b_k|\leq \sum_{k=n}^\infty| b_k|\to 0$ as $n\to \infty$.  Applying Lemma \ref{lemma:weighted:cesaro:sum}, we obtain the desired result.
\end{proof}

\begin{lemma}\label{lemma:weighted:cesaro:sum:3}
Suppose that $\{a_n\}_{n=1}^\infty$ is a positive sequence with $\sum_{n=1}^\infty a_n=\infty$, and $\{b_n\}_{n=1}^\infty$  is a  sequence with $\lim_{n\to \infty}b_n/a_n=0$. Then it holds that
\begin{eqnarray*}
\lim_{T\to \infty} \frac{\sum_{n=1}^T b_n}{\sum_{n=1}^T a_n}=0.
\end{eqnarray*}
\end{lemma}
\begin{proof}
Noting that $|b_n|=a_n(|b_n|/a_n)$, applying Lemma \ref{lemma:weighted:cesaro:sum} completes the proof.
\end{proof}

\begin{lemma}\label{lemma:moment:inequality:phi:mixing}
Let $X_1, X_2, \ldots$ be a stationary sequence with the $\phi$-mixing coefficients bounded by $\phi(t)$. Moreover, assume that $\sum_{t=1}^\infty \phi^{1/2}(t)<\infty$,  $\ev(X_t)=0$, and $\ev(\|X_t\|^k)<\infty$ for some constant $k>2$. Then it follows that $\ev(\|\sum_{t=1}^T X_t\|^k)\leq C_k T^{k/2}$, where $C_k$ is some constant relying on $k$ and the dimension of $X_i$'s.
\end{lemma}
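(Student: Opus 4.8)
The plan is to reduce the vector-valued inequality to a scalar one and then invoke the Rosenthal-type moment bound for $\phi$-mixing sequences. Writing $X_t=(X_{t,1},\dots,X_{t,d})^\top$, I would use $\|X_t\|\le\sum_{j=1}^d|X_{t,j}|$ together with the elementary bound $|\sum_{j=1}^d a_j|^k\le d^{k-1}\sum_{j=1}^d|a_j|^k$ to get $\mathbb{E}|\sum_{t=1}^T X_t|^k\le d^{k-1}\sum_{j=1}^d \mathbb{E}|\sum_{t=1}^T X_{t,j}|^k$. Because the $\sigma$-algebra generated by one coordinate is contained in that generated by the whole vector, each coordinate process $\{X_{t,j}\}_t$ is again stationary, centered, and $\phi$-mixing with coefficients bounded by the same $\phi(t)$, and it inherits the finite $k$-th moment; hence it suffices to prove $\mathbb{E}|S_T|^k\le C_k T^{k/2}$ for a scalar sequence, where $S_T=\sum_{t=1}^T X_t$.

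The engine of the scalar bound is the second moment. By stationarity $\mathbb{E}(S_T^2)=T\,r(0)+2\sum_{t=1}^{T-1}(T-t)\,r(t)$ with $r(t)=\mathrm{Cov}(X_0,X_t)$, and the $\phi$-mixing covariance inequality $|r(t)|\le 2\,\phi^{1/2}(t)\,r(0)$ combined with the hypothesis $\sum_{t\ge 1}\phi^{1/2}(t)<\infty$ yields $\mathbb{E}(S_T^2)\le T\,r(0)\,(1+4\sum_{t\ge 1}\phi^{1/2}(t))=O(T)$. This already settles the case $k=2$ and, crucially, supplies the linear-in-$T$ variance that governs the higher moments.

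To pass from $k=2$ to arbitrary $k>2$, I would apply the moment inequality of \cite{yokoyama1980moment}, which for a centered stationary $\phi$-mixing sequence with $\sum_t\phi^{1/2}(t)<\infty$ and finite $k$-th moment controls $\mathbb{E}|S_T|^k$ by a constant multiple of $(\mathrm{Var}\,S_T)^{k/2}$ plus a term of order $T\,\mathbb{E}|X_1|^k$. Substituting $\mathrm{Var}\,S_T=O(T)$ and using $k/2\ge 1$ so that $T\le T^{k/2}$, every contribution is $O(T^{k/2})$, giving the stated $C_k T^{k/2}$. If one prefers a self-contained derivation in place of citing \cite{yokoyama1980moment}, the same estimate follows from a blocking argument: group the $X_t$ into consecutive blocks, treat the block sums as approximately independent and apply the classical Rosenthal inequality to them, and absorb the deviation from independence into error series that are summable precisely because $\sum_t\phi^{1/2}(t)<\infty$; the requirement $k>2$ is exactly what lets the resulting recursion close at the rate $T^{k/2}$.

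I expect the main obstacle to lie in the control of the mixed (cross) moments arising in the higher-moment expansion. In contrast with the independent case, block sums over adjacent windows are only approximately independent, and quantifying the discrepancy requires repeated use of the $\phi$-mixing covariance inequalities with suitably conjugate exponents and summation of the resulting $\phi^{1/2}(t)$ series. Verifying that these remainders converge and remain of order no larger than $T^{k/2}$ is the technical heart of the argument; by comparison the coordinatewise reduction and the variance computation are routine.
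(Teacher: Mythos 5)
Your proposal is correct and takes essentially the same route as the paper, whose entire proof is a one-line citation of Theorem 3 in \cite{yokoyama1980moment}: that theorem directly delivers $\ev(|\sum_{t=1}^T X_t|^k)\leq C_k T^{k/2}$ for centered stationary $\phi$-mixing sequences with $\sum_{t=1}^\infty \phi^{1/2}(t)<\infty$ and finite $k$-th moments. Your coordinatewise reduction to the scalar case and the second-moment computation are sound additions (indeed they treat the vector-valued statement more carefully than the paper's citation does), but they are scaffolding around the same key ingredient rather than a different argument.
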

\begin{proof}
It follows from  Theorem 3 in \cite{yokoyama1980moment}.
\end{proof}

\begin{lemma}\label{lemma:maxmum:is:ratio}
Let $X$ be a real random variable on a probability space $(\Omega, \mcA, P)$ with $\|X\|_\infty <\infty$. Then
\begin{eqnarray*}
\sup_{A\in \mcA, P(A)>0}\frac{1}{P(A)}\left|\int_A XdP\right|=\|X\|_\infty.
\end{eqnarray*}
\end{lemma}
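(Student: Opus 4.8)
The plan is to establish the identity by proving two matching inequalities, writing $M:=\|X\|_\infty$ for the essential supremum of $|X|$. The upper bound $\sup_{A}\frac{1}{P(A)}\left|\int_A X\,dP\right|\le M$ is immediate: since $|X|\le M$ almost surely, any $A$ with $P(A)>0$ satisfies $\left|\int_A X\,dP\right|\le\int_A|X|\,dP\le M\,P(A)$, and dividing by $P(A)$ and taking the supremum over such $A$ yields the claim. So the substance lies entirely in the reverse inequality.

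For the reverse direction I would show that the supremum can be driven arbitrarily close to $M$. Fix $\epsilon\in(0,M)$, the case $M=0$ being trivial. By the very definition of the essential supremum, the event $\{|X|>M-\epsilon\}$ has positive probability. I would then decompose it as the disjoint union $A_+\cup A_-$, where $A_+=\{X>M-\epsilon\}$ and $A_-=\{X<-(M-\epsilon)\}$ (these are disjoint because $M-\epsilon>0$), so at least one of $A_+,A_-$ carries positive probability. If $P(A_+)>0$, then $\int_{A_+}X\,dP\ge(M-\epsilon)P(A_+)$, whence $\frac{1}{P(A_+)}\left|\int_{A_+}X\,dP\right|\ge M-\epsilon$; symmetrically, if $P(A_-)>0$, then $\int_{A_-}X\,dP\le-(M-\epsilon)P(A_-)$, giving $\frac{1}{P(A_-)}\left|\int_{A_-}X\,dP\right|\ge M-\epsilon$. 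In either case there is an admissible set witnessing a value at least $M-\epsilon$, and letting $\epsilon\downarrow 0$ gives $\sup_A\frac{1}{P(A)}\left|\int_A X\,dP\right|\ge M$. Combining the two inequalities proves the identity.

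The only genuinely delicate point is the interplay between the absolute value and the fact that $\|X\|_\infty$ measures $|X|$ rather than $X$ itself: a set on which $|X|$ is close to $M$ might be one where $X$ is large and \emph{negative}, so that integrating $X$ over it produces a large negative value rather than a positive one. The splitting into $A_+$ and $A_-$ is precisely what handles this obstacle, since whichever sign dominates, the modulus of the averaged integral still attains a value near $M$.
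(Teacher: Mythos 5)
Your proof is correct and follows essentially the same route as the paper: the trivial upper bound from $|X|\leq \|X\|_\infty$ a.s., and the lower bound via the two sets $\{X> \|X\|_\infty-\epsilon\}$ and $\{-X> \|X\|_\infty-\epsilon\}$ (the paper's $F^+$ and $F^-$), using whichever has positive probability to witness a ratio of at least $\|X\|_\infty-\epsilon$. If anything, your write-up is more explicit than the paper's, which asserts the key inequality $P^{-1}(F)\left|\int_F X\,dP\right|\geq \|X\|_\infty-\epsilon$ without spelling out the sign argument you give.
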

\begin{proof}
If $M\geq 0$ is a constant such that $\|X\|_\infty\leq M$, and $A\in \mcA$ is such that $P(A)>0$, then it follows that $P^{-1}(A)|\int_A XdP|\leq M$, which further implies that  $P^{-1}(A)|\int_A XdP|\leq \|X\|_\infty$.

Now for any $\epsilon>0$ small enough such that $\|X\|_\infty-\epsilon>0$, let us define
\begin{eqnarray*}
F^+&=&\{X\geq \|X\|_\infty-\epsilon\},\\
F^-&=&\{-X\geq \|X\|_\infty-\epsilon\},\\
F&=&\begin{cases}
F^+& \textrm{ if } P(F^+)\geq P(F^-);\\
F^- & \textrm{ otherwise}.
\end{cases}
\end{eqnarray*}
Then we show that $P^{-1}(F)|\int_F XdP|\geq \|X\|_\infty-\epsilon$. Since $\epsilon>0$ is arbitrary, we complete the proof.
\end{proof}

\begin{lemma}\label{lemma:basis:phi:mixing:inequality}
Consider the probability space $(\Omega, \mcA\bigotimes \mcB, Q)$. It follows that $\sum_{i=1}^k |Q(B_i|\mcA)-Q	_2(B_i)|\leq 2\phi_Q(\mcB, \mcA)$ for any integer $k$ and  any disjoint sets $B_1,\ldots, B_k\in \mcB$. 
\end{lemma}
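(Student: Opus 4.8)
The plan is to read the claimed inequality as an almost-sure bound on the $\mcA$-measurable random variable $Y:=\sum_{i=1}^k |Q(B_i\mid\mcA)-Q(B_i)|$, and to prove it by combining the $L^\infty$/ratio duality of Lemma \ref{lemma:maxmum:is:ratio} with a sign-grouping (total-variation) argument. Throughout I take $B_1,\dots,B_k$ to be disjoint, i.e.\ a $\mcB$-measurable partition, which is the setting in which the factor $2$ (as opposed to $k$) is correct and in which the lemma is actually applied. Since $0\le Y\le k$, the variable $Y$ is bounded and $\mcA$-measurable, so Lemma \ref{lemma:maxmum:is:ratio} gives $\|Y\|_\infty=\sup_{A\in\mcA,\,Q(A)>0}Q(A)^{-1}\big|\int_A Y\,dQ\big|$. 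As $Y\ge 0$, it therefore suffices to show that $\int_A Y\,dQ\le 2\phi_Q(\mcB,\mcA)\,Q(A)$ for every $A\in\mcA$ with $Q(A)>0$.

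First I would linearize the absolute values. For a fixed sign vector $\epsilon\in\{-1,+1\}^k$, set $A_\epsilon=A\cap\{\omega:\sign(Q(B_i\mid\mcA)(\omega)-Q(B_i))=\epsilon_i,\ i=1,\dots,k\}$; because each $Q(B_i\mid\mcA)$ is $\mcA$-measurable, the $A_\epsilon$ are $\mcA$-measurable and partition $A$. On $A_\epsilon$ we have $Y=\sum_i \epsilon_i(Q(B_i\mid\mcA)-Q(B_i))$, and the defining property of conditional probability gives $\int_{A_\epsilon}(Q(B_i\mid\mcA)-Q(B_i))\,dQ=Q(B_i\cap A_\epsilon)-Q(B_i)Q(A_\epsilon)$. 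Hence, for each $\epsilon$ with $Q(A_\epsilon)>0$,
\begin{eqnarray*}
\frac{1}{Q(A_\epsilon)}\int_{A_\epsilon} Y\,dQ=\sum_{i=1}^k \epsilon_i\big(Q(B_i\mid A_\epsilon)-Q(B_i)\big).
\end{eqnarray*}

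The crux is to bound this last sum by $2\phi_Q(\mcB,\mcA)$ uniformly in $\epsilon$ and $A_\epsilon$. Here I would group the indices by sign: put $P=\bigcup_{i:\epsilon_i=+1}B_i$ and $N=\bigcup_{i:\epsilon_i=-1}B_i$, both of which lie in $\mcB$. Since the $B_i$ are disjoint, $\sum_{i:\epsilon_i=+1}(Q(B_i\mid A_\epsilon)-Q(B_i))=Q(P\mid A_\epsilon)-Q(P)$ and $\sum_{i:\epsilon_i=-1}(-1)(Q(B_i\mid A_\epsilon)-Q(B_i))=Q(N)-Q(N\mid A_\epsilon)$; each is at most $\phi_Q(\mcB,\mcA)$ in absolute value by the definition of the mixing coefficient (applied with the $\mcB$-sets $P,N$ and the $\mcA$-set $A_\epsilon$). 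Thus the displayed ratio is at most $2\phi_Q(\mcB,\mcA)$, so $\int_{A_\epsilon}Y\,dQ\le 2\phi_Q(\mcB,\mcA)\,Q(A_\epsilon)$ (trivially also when $Q(A_\epsilon)=0$). Summing over the finitely many sign patterns yields $\int_A Y\,dQ\le 2\phi_Q(\mcB,\mcA)\,Q(A)$, which is precisely what Lemma \ref{lemma:maxmum:is:ratio} requires.

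I expect the main obstacle to be the bookkeeping around the $\omega$-dependent signs: the quantity $Q(B_i\mid\mcA)-Q(B_i)$ is a random variable whose sign varies with $\omega$, so one cannot simply fix a global $\epsilon$ as in the single-conditioning-set estimate. Partitioning $A$ into the $\mcA$-measurable sign-constancy sets $A_\epsilon$ is what makes the conditioning $Q(B_i\mid A_\epsilon)$ legitimate and reduces everything to the deterministic total-variation estimate; the decisive improvement from $k$ to $2$ then comes entirely from collapsing the positive and negative index groups into the two single $\mcB$-sets $P$ and $N$, which is exactly where disjointness of the $B_i$ is used.
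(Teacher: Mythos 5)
Your proof is correct and takes essentially the same route as the paper's: both rest on the $L^\infty$/ratio duality of Lemma \ref{lemma:maxmum:is:ratio} combined with grouping the indices by the sign of $Q(B_i\mid\mcA)-Q(B_i)$ and collapsing each sign class into a single $\mcB$-set, which is exactly where disjointness of the $B_i$ enters. In fact your write-up is slightly more careful than the paper's on two points it glosses over: the lemma statement omits the disjointness hypothesis that the additivity step $\sum_{i\in I_+}Q(B_i\mid\mcA)=Q(\cup_{i\in I_+}B_i\mid\mcA)$ requires (you flag this explicitly), and the sign sets depend on $\omega$, which you handle rigorously via the $\mcA$-measurable sign-constancy partition $\{A_\epsilon\}$ of the conditioning set, whereas the paper treats $I_\pm$ as fixed (a gap that is repairable, since there are only finitely many candidate unions, each obeying the single-set almost-sure bound).
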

\begin{proof}
First let us prove 
\begin{eqnarray}
|Q(B|\mcA)-Q(B)|\leq \phi_Q(\mcB, \mcA).\label{eq:lemma:basis:phi:mixing:inequality:eq1}
\end{eqnarray}
Let $X=Q(B|\mcA)-Q(B)$, and direct examination leads to
\begin{eqnarray*}
\frac{1}{Q(A)}\left|\int_A XdQ\right|=\frac{1}{Q(A)}\left|Q(B\cap A)-Q(B)Q(A)\right|=\left|Q(B|A)-Q(B)\right|.
\end{eqnarray*} 
By Lemma \ref{lemma:maxmum:is:ratio} and the above equation, we have
\begin{eqnarray*}
\left|Q(B|\mcA)-Q(B)\right|=|X|\leq \|X\|_\infty=\sup_{A\in \mcA, Q(A)>0}\left|Q(B|A)-Q(B)\right|\leq \phi_Q(\mcB, \mcA),
\end{eqnarray*}
which proves (\ref{eq:lemma:basis:phi:mixing:inequality:eq1}). Let us define
\begin{eqnarray*}
I_+=\{i: Q(B_i|\mcA)-Q(B_i)\geq 0\} \textrm{ and } I_-=\{i: Q(B_i|\mcA)-Q(B_i)< 0\}.
\end{eqnarray*}
As a consequence, we have
\begin{eqnarray*}
\sum_{i=1}^k |Q(B_i|\mcA)-Q(B_i)|&=&\sum_{i\in I_+}[Q(B_i|\mcA)-Q(B_i)]-\sum_{i\in I_-}[Q(B_i|\mcA)-Q(B_i)]\\
&=&[Q(\cup_{i\in I_+}B_i|\mcA)-Q(\cup_{i\in I_+}B_i)]-[Q(\cup_{i\in I_-}B_i|\mcA)-Q(\cup_{i\in I_-}B_i)]\\
&\leq& 2\phi_Q(\mcB, \mcA),
\end{eqnarray*}
where the last inequality follows from (\ref{eq:lemma:basis:phi:mixing:inequality:eq1}).
\end{proof}

\begin{lemma}\label{lemma:conditional:difference:measure}
Consider the probability space $(\Omega, \mcA\bigotimes \mcB, Q)$, and the marginal probability of $Q$ is $Q_1$ and $Q_2$. Let $P=Q_1\times Q_2$ be the product measure. For any measurable function $h\in \mcA\bigotimes \mcB$, it follows that
\begin{eqnarray*}
|Q(h|\mcA)-P(h|\mcA)|^p \leq C_p [Q(|h|^p)+P(|h|^p)]\phi_Q^{p-1}(\mcB, \mcA),
\end{eqnarray*}
where $C_p>0$ is a constant relying on $p$. In particular, if $|h|\leq K$, it holds that
\begin{eqnarray*}
|Q(h|\mcA)(x)-P(h|\mcA)(x)|\leq K\phi_Q(\mcB, \mcA).
\end{eqnarray*}
\end{lemma}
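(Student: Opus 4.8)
The plan is to realize both conditional expectations as integrals of $h(x,\cdot)$ against measures on $\mcB$ and to control their difference through the $\phi$-mixing bound already established. Working on a standard Borel space so that regular conditional distributions exist, write $\nu_x(\cdot)=Q(\cdot\mid\mcA)(x)$ for the regular conditional distribution of the $\mcB$-coordinate given $\mcA$ under $Q$. Because $P=Q_1\times Q_2$ makes the two coordinates independent, the corresponding conditional distribution under $P$ is simply the marginal $Q_2$, so that $P(h\mid\mcA)(x)=\int h(x,y)\,dQ_2(y)$ while $Q(h\mid\mcA)(x)=\int h(x,y)\,d\nu_x(y)$. Setting $\mu_x=\nu_x-Q_2$, I obtain the clean representation $Q(h\mid\mcA)(x)-P(h\mid\mcA)(x)=\int h(x,y)\,d\mu_x(y)$, which reduces the entire lemma to estimating the integral of $h$ against the signed measure $\mu_x$. (Here the quantities $Q(|h|^p)$, $P(|h|^p)$ on the right-hand side are to be read as the conditional expectations $Q(|h|^p\mid\mcA)(x)$, $P(|h|^p\mid\mcA)(x)$, which is what the argument naturally produces.)

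The second step quantifies how small $\mu_x$ is. For any $B\in\mcB$ one has $\mu_x(B)=Q(B\mid\mcA)(x)-Q(B)$, so the pointwise bound \eqref{eq:lemma:basis:phi:mixing:inequality:eq1} gives $\sup_{B\in\mcB}|\mu_x(B)|\le\phi_Q(\mcB,\mcA)$, and Lemma \ref{lemma:basis:phi:mixing:inequality} upgrades this to the total-variation estimate $\|\mu_x\|_{TV}\le 2\phi_Q(\mcB,\mcA)$. Taking the Jordan decomposition $\mu_x=\mu_x^{+}-\mu_x^{-}$, each part has mass at most $\phi_Q(\mcB,\mcA)$; moreover $\mu_x^{+}\le\nu_x$ and $\mu_x^{-}\le Q_2$ as measures, so that integrals of $|h|^p$ against $\mu_x^{\pm}$ are dominated by $\int|h|^p\,d\nu_x=Q(|h|^p\mid\mcA)(x)$ and $\int|h|^p\,dQ_2=P(|h|^p\mid\mcA)(x)$ respectively.

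For the $L^p$ estimate I would then split $\int h\,d\mu_x=\int h\,d\mu_x^{+}-\int h\,d\mu_x^{-}$ and apply Jensen's inequality to the normalized probability measures $\mu_x^{\pm}/\mu_x^{\pm}(\Omega)$. This pulls out a factor $\mu_x^{\pm}(\Omega)^{1-1/p}\le\phi_Q^{1-1/p}(\mcB,\mcA)$ while leaving $(\int|h|^p\,d\mu_x^{\pm})^{1/p}$, which by the domination above is at most $Q(|h|^p\mid\mcA)(x)^{1/p}$ and $P(|h|^p\mid\mcA)(x)^{1/p}$. Combining the two terms and raising to the $p$-th power using $(a+b)^p\le 2^{p-1}(a^p+b^p)$ yields the claim with $C_p=2^{p-1}$. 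The bounded case is the degenerate endpoint: when $|h|\le K$ the same splitting gives $|\int h\,d\mu_x|\le K\big(\mu_x^{+}(\Omega)+\mu_x^{-}(\Omega)\big)=K\|\mu_x\|_{TV}$, which is of order $K\phi_Q(\mcB,\mcA)$ by \eqref{eq:lemma:basis:phi:mixing:inequality:eq1} and Lemma \ref{lemma:basis:phi:mixing:inequality}.

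I expect the main obstacle to be the measure-theoretic bookkeeping rather than any single inequality: one must justify the existence of the regular conditional distribution $\nu_x$, verify that $P(\cdot\mid\mcA)=Q_2$ really follows from the product structure, and confirm the domination $\mu_x^{+}\le\nu_x$, $\mu_x^{-}\le Q_2$ that converts the $\mu_x^{\pm}$-integrals into the conditional $L^p$ quantities on the right-hand side. Once these identifications are in place, the Jensen interpolation producing exactly the power $\phi^{p-1}$ is the only genuinely quantitative step, and it is short.
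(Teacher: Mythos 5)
Your proposal is correct, and it takes a genuinely different route from the paper's proof. The paper argues at the level of functions: it first takes $h$ simple, identifies $Q(I_{D_i}|\mcA)(x)=Q(D_i^x|\mcA)(x)$ and $P(I_{D_i}|\mcA)(x)=Q_2(D_i^x)$ via sections, splits $|a_i||\delta_i|=|a_i||\delta_i|^{1/p}|\delta_i|^{1/q}$ and applies H\"older's inequality to the finite sum, bounds $\sum_i|\delta_i|\leq 2\phi_Q(\mcB,\mcA)$ by Lemma \ref{lemma:basis:phi:mixing:inequality}, and finally passes to general $h$ by monotone convergence. You instead argue at the level of measures: the signed measure $\mu_x=\nu_x-Q_2$, its Jordan decomposition, the dominations $\mu_x^{+}\leq\nu_x$ and $\mu_x^{-}\leq Q_2$ (both valid, via the Hahn set), and Jensen's inequality on the normalized parts. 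Your Jensen step is the continuum analogue of the paper's H\"older step --- both extract exactly the factor $\phi^{p-1}$ --- and your domination step plays the role of the paper's bound $|\delta_i|\leq Q(D_i^x|\mcA)(x)+Q_2(D_i^x)$. What your route buys: no simple-function approximation or limit passage, and a cleaner conceptual reading of the lemma as a statement about the total variation of $\nu_x-Q_2$. What it costs: you need the set-wise bound (\ref{eq:lemma:basis:phi:mixing:inequality:eq1}) to hold almost surely \emph{uniformly} over $B\in\mcB$, since the Hahn set depends on $x$ and the per-set almost-sure statement cannot be applied to it directly; this is exactly where your standard-Borel/countably-generated hypothesis genuinely enters, whereas the paper sidesteps the issue because each simple function involves only finitely many sets. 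Two cosmetic points you share with the paper: the right-hand side that both arguments actually produce is the conditional quantity $Q(|h|^p|\mcA)(x)+P(|h|^p|\mcA)(x)$ rather than the unconditional one in the statement, and in the bounded case both arguments really yield $2K\phi_Q(\mcB,\mcA)$ (the total variation bound is $2\phi$, not $\phi$), so the constant in the displayed bound is loose by a factor of $2$ in the paper's own proof as well.
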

\begin{proof}
For simplicity, let $q=(1-1/p)^{-1}$. First, let us assume $h=\sum_{i=1}^k a_iI_{D_i}(x, y)$ is a simple function, where $D_1, \ldots, D_k \in \mcA \bigotimes \mcB$ are disjoint and $a_1,\ldots, a_k\in \mathbb{R}$. By simple algebra, it follows that
\begin{eqnarray*}
Q(I_{D_i}|\mcA)(x)=\int I_{D_i}(x, y)Q(x, dy)=\int I_{D_i^x}(y)Q(x, dy)=Q(D_i^x|\mcA)(x).
\end{eqnarray*}
Here $D_i^x=\{y: (x, y)\in D_i\}, i=1,\ldots, k$ are also disjoint. Similarly, we have $P(I_{D_i}|\mcA)(x)=P(D_i^x|\mcA)(x)=Q_2(D_i^x)(x).$ For any $x\in \Omega_\mcA$, it follows that
\begin{eqnarray*}
&&|Q(h|\mcA)(x)-P(h|\mcA)(x)|^p\\
&=&  \left|\sum_{i=1}^k a_i \left[\int I_{D_i^x}(y)Q(x, dy)-\int I_{D_i^x}(y)Q_2(dy)\right] \right|^p\\
&=&  \left|\sum_{i=1}^k a_i \left[Q(D_i^x|\mcA)(x)-Q_2(D_i^x)(x)\right] \right|^p\\
&\leq&\left[\sum_{i=1}^k |a_i| \left|Q(D_i^x|\mcA)(x)-Q_2(D_i^x)(x)\right| \right]^p\\
&\leq&\left[\sum_{i=1}^k |a_i| \left|Q(D_i^x|\mcA)(x)-Q_2(D_i^x)(x)\right|^{1/p} \left|Q(D_i^x|\mcA)(x)-Q_2(D_i^x)(x)\right|^{1/q}\right]^p\\
&\leq&\left[\sum_{i=1}^k |a_i|^p \left|Q(D_i^x|\mcA)(x)-Q_2(D_i^x)(x)\right|\right] \left[ \sum_{i=1}^k \left|Q(D_i^x|\mcA)(x)-Q_2(D_i^x)(x)\right|\right]^{p/q}\\
&\leq &[Q(|h|^p|\mcA)(x)+P(|h|^p|\mcA)(x)] \left[ \sum_{i=1}^k \left|Q(D_i^x|\mcA)(x)-Q_2(D_i^x)(x)\right|\right]^{p/q}.
\end{eqnarray*}
Using Lemma \ref{lemma:basis:phi:mixing:inequality}, we show that
\begin{eqnarray}
|Q(h|\mcA)(x)-P(h|\mcA)(x)|^p\leq 2^{p/q} [Q(|h|^p|\mcA)+P(|h|^p|\mcA)] \phi_Q^{p/q}(\mcB, \mcA).\label{eq:lemma:conditional:difference:measure:eq1}
\end{eqnarray}
Similarly, when $|a_i|\leq K$, we have
\begin{eqnarray*}
|Q(h|\mcA)(x)-P(h|\mcA)(x)|&=&  \left|\sum_{i=1}^k a_i \left[\int I_{D_i^x}(y)Q(x, dy)-\int I_{D_i^x}(y)Q_2(dy)\right] \right|\\
&=&  \left|\sum_{i=1}^k a_i \left[Q(D_i^x|\mcA)(x)-Q_2(D_i^x)(x)\right] \right|\\
&\leq&  \max_{1\leq i\leq k}|a_i|\sum_{i=1}^k  \left|Q(D_i^x|\mcA)(x)-Q_2(D_i^x)(x)\right|.
\end{eqnarray*}
Lemma  \ref{lemma:basis:phi:mixing:inequality} implies that
\begin{eqnarray}
|Q(h|\mcA)(x)-P(h|\mcA)(x)|\leq K\phi_Q(\mcB, \mcA).\label{eq:lemma:conditional:difference:measure:eq2}
\end{eqnarray}

For a general measure function $h$, there exists a sequence of simple functions $h_n$ such that $h_n\to h$ pointwise and $h_n\leq h$. Using monotone convergence theorem, we show that the (\ref{eq:lemma:conditional:difference:measure:eq1}) holds for $h$. In addition, if $|h|\leq K$, we can apply similar argument to show that (\ref{eq:lemma:conditional:difference:measure:eq2}) holds for $h$.
\end{proof}

\begin{lemma}\label{lemma:conditional:difference}
Let $(X, Y)$ and $(X, \widetilde{Y})$ be random vectors such that $X$ and $\widetilde{Y}$ are independent. Moreover, let $Y$ and $\widetilde{Y}$ have the same marginal distribution. Then it follows that 
\begin{eqnarray*}
\|\ev[h(X,Y)| X]-\ev[h(X,\widetilde{Y})| X]\| \leq m\phi(X, Y)+  \frac{\ev(\|h(X, Y)\|^{p}|X)}{m^{p-1}}+ \frac{\ev(\|h(X, \widetilde{Y})\|^{p}|X)}{m^{p-1}},
\end{eqnarray*}
where $m>0$ is an arbitrary constant. Moreover, it holds that
\begin{eqnarray*}
\|\ev[h(X,Y)| X]-\ev[h(X,\widetilde{Y})| X]\|^p\leq C_p\left(\ev[\|h(X,Y)\|^p | X]+\ev[\|h(X,\widetilde{Y})\|^p | X]\right)\phi^{p-1}(X, Y),
\end{eqnarray*}
where $C_p>0$ is a constant relying on $p$.
\end{lemma}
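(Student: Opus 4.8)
The plan is to reduce everything to the measure-theoretic statement of Lemma \ref{lemma:conditional:difference:measure}. First I would fix the identification of the two conditional expectations with conditional integrals against two measures. Writing $\mcA=\sigma(X)$ and $\mcB=\sigma(Y)$, let $Q$ be the joint law of $(X,Y)$ with marginals $Q_1,Q_2$, and set $P=Q_1\times Q_2$. Since $X$ and $\widetilde{Y}$ are independent and $\widetilde{Y}$ has the same marginal as $Y$, the law of $(X,\widetilde{Y})$ is exactly $P$. Hence $\ev[h(X,Y)\mid X]=Q(h\mid\mcA)$ and $\ev[h(X,\widetilde{Y})\mid X]=P(h\mid\mcA)$, so the quantity to bound is $\|Q(h\mid\mcA)-P(h\mid\mcA)\|$, and $\phi(X,Y)$ plays the role of $\phi_Q(\mcB,\mcA)$.

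Next I would establish the first inequality by a truncation argument. Fix $m>0$ and split $h=h_m+h^m$ where $h_m=h\,\mathbf{1}\{\|h\|\le m\}$ and $h^m=h-h_m$. For the bounded piece, for any unit vector $u\in\mathbb{R}^d$ the scalar function $\langle u,h_m\rangle$ satisfies $|\langle u,h_m\rangle|\le m$ by Cauchy--Schwarz, so the ``$|h|\le K$'' part of Lemma \ref{lemma:conditional:difference:measure} yields $|\langle u,\,Q(h_m\mid\mcA)-P(h_m\mid\mcA)\rangle|\le m\,\phi_Q(\mcB,\mcA)$; taking the supremum over unit $u$ gives $\|Q(h_m\mid\mcA)-P(h_m\mid\mcA)\|\le m\,\phi(X,Y)$. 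For the tail piece I would combine Jensen's inequality $\|\ev[h^m\mid X]\|\le\ev[\|h^m\|\mid X]$ with the elementary bound $v\,\mathbf{1}\{v>m\}\le v^{p}/m^{p-1}$ (valid for $v\ge0$, $p>1$), applied to $v=\|h\|$. This controls the $Y$-term by $\ev[\|h(X,Y)\|^{p}\mid X]/m^{p-1}$ and, evaluated under $P$, the $\widetilde{Y}$-term by $\ev[\|h(X,\widetilde{Y})\|^{p}\mid X]/m^{p-1}$. The triangle inequality then assembles the three pieces into the first claimed bound.

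Finally I would deduce the second inequality by optimizing the free parameter $m$ in the first. Writing $A=\ev[\|h(X,Y)\|^{p}\mid X]+\ev[\|h(X,\widetilde{Y})\|^{p}\mid X]$ and $\phi=\phi(X,Y)$, the first bound reads $\|Q(h\mid\mcA)-P(h\mid\mcA)\|\le m\phi+A\,m^{-(p-1)}$; minimizing the right-hand side at $m=((p-1)A/\phi)^{1/p}$ yields $\|\cdot\|\le c_p\,A^{1/p}\phi^{(p-1)/p}$ for the explicit constant $c_p=(p-1)^{1/p}+(p-1)^{-(p-1)/p}$, and raising to the $p$-th power produces $\|\cdot\|^{p}\le C_p\,A\,\phi^{p-1}$ with $C_p=c_p^{p}$. (Alternatively, the second inequality follows directly from the $L^p$ part of Lemma \ref{lemma:conditional:difference:measure} applied to each projection $\langle u,h\rangle$ followed by a supremum over unit $u$.)

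I expect the only genuine subtlety to be the vector-valued nature of $h$: Lemma \ref{lemma:conditional:difference:measure} is stated for scalar integrands, so I must pass through the one-dimensional projections $\langle u,h\rangle$ and take a supremum over unit vectors to recover the Euclidean norm without picking up a dimension-dependent factor. Everything else---the truncation split, Jensen's inequality, and the scalar optimization over $m$---is routine.
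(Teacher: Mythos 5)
Your proposal is correct and follows essentially the same route as the paper's proof: both reduce the statement to Lemma \ref{lemma:conditional:difference:measure} by identifying the law of $(X,\widetilde{Y})$ with the product measure, prove the first bound by truncating $h$ at level $m$ (bounded piece via the $|h|\leq K$ case of that lemma, tail piece via Jensen and the Markov-type bound $v\,\mathbf{1}\{v>m\}\leq v^p/m^{p-1}$), and obtain the second bound from the $L^p$ part of the same lemma. Your additions --- making the scalar-to-vector passage explicit through projections $\langle u,h\rangle$ onto unit vectors, and the alternative derivation of the second inequality by optimizing $m$ (which even yields an explicit constant $C_p$) --- are refinements of details the paper leaves to the reader, not a different argument.
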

\begin{proof}
For simplicity, let us assume $h(x, y)$ is a scalar, as it is easy to extend to the vector case. Let $h_1(x, y)=h(x, y)I(|h(x, y)|\leq m)$ and $h_2(x, y)=h(x, y)I(|h(x, y)|> m)$. Using Lemma \ref{lemma:conditional:difference:measure}, we have
\begin{eqnarray*}
|\ev[h_1(X,Y)| X]-\ev[h_1(X,\widetilde{Y})| X]|\leq m\phi(X, Y).
\end{eqnarray*}
Moreover, it follows that
\begin{eqnarray*}
\ev(|h_2(X, Y)| |X)\leq \frac{\ev(|h(X, Y)|^{p}|X)}{m^{p-1}},\quad  \ev(|h_2(X, \widetilde{Y})| |X)\leq \frac{\ev(|h(X, \widetilde{Y})|^{p}|X)}{m^{p-1}}.
\end{eqnarray*}
Combining the above inequalities, we complete the proof of the first statement. The second statement follows from Lemma \ref{lemma:conditional:difference:measure}.
\end{proof}

\begin{lemma}\label{lemma:my:iteration:bound}
Let $c_1$ and $c_2$  be arbitrary positive constants. Let  $\{\gamma_t\}_{t=1}^\infty$ and  $\{M_t\}_{t=1}^\infty$ be two positive sequences such that 
\begin{eqnarray*}
A_t=(1-c_1\gamma_t) A_{t-1}+c_2M_t,
\end{eqnarray*}
with $\gamma_t\asymp t^{-\rho}$ for some $\rho\in (1/2, 1)$, $M_t/\gamma_t\leq M_{t-1}/\gamma_{t-1}$, and $\sum_{t=1}^\infty M_t<\infty$. Then there is a constant $C>0$ such that $A_t\leq Ce^{-C^{-1}t^{1-\rho}}+CM_{\floor{t/2}}/\gamma_t$ for all $t\geq 1$. As a consequence, if $M_t\lesssim t^{-b}$ for some $b>1$, then we have    $A_t\leq Ct^{-b+\rho}$ for all $t\geq 1$.
\end{lemma}
\begin{proof}
Since $A_t, M_t\geq 0$, by recursively substituting, we obtain that
\begin{eqnarray*}
A_t\leq \prod_{i=1}^t(1-c_1\gamma_i) A_0+c_2\sum_{i=1}^t M_i \prod_{k=i+1}^t (1-c_1\gamma_k):=S_1+S_2.
\end{eqnarray*} 
By the elementary inequality $1-x\leq e^{-x}$ for $x\geq 0$, it holds that
\begin{eqnarray*}
S_1\leq \exp\left(-c_1 \sum_{i=1}^t \gamma_i\right) A_0\leq  \exp\left( -\frac{c_1}{1-\rho}[t^{1-\rho}-2^{1-\rho}]\right)A_0.
\end{eqnarray*}
To handle $S_2$,  let $m=\floor{t/2}$, and similar calculation leads to
\begin{eqnarray*}
S_2&=&c_2\sum_{i=1}^{m} M_i \prod_{k=i+1}^t (1-c_1\gamma_k )+c_2\sum_{i=m+1}^{t} M_i \prod_{k=i+1}^t (1-c_1\gamma_k )\\
&=&c_2\sum_{i=1}^{m} M_i \prod_{k=i+1}^t (1-c_1\gamma_k )+c_2\sum_{i=m+1}^{t} \frac{M_i}{c_1\gamma_i} \left[ \prod_{k=i+1}^t (1-c_1\gamma_k )-\prod_{k=i}^t (1-c_1\gamma_k )\right]\\
&\leq & c_2 \prod_{k=m+1}^t (1-c_1\gamma_k )\sum_{i=1}^{m} M_i+  \frac{c_2M_m}{c_1 \gamma_m}\sum_{i=m+1}^{t} \left[ \prod_{k=i+1}^t (1-c_1\gamma_k )-\prod_{k=i}^t (1-c_1\gamma_k )\right]\\
&\leq& c_2\exp\left(-c_1 \sum_{k=m+1}^t \gamma_k\right) \sum_{i=1}^\infty M_i+\frac{c_2M_m}{c_1\gamma_m}\left[ 1-\prod_{k=m+1}^t (1-c_1\gamma_k )\right]\\
&\leq&c_2 \exp\left( -\frac{c_1}{1-\rho}[t^{1-\rho}-(m+1)^{1-\rho}]\right)\sum_{i=1}^\infty M_i +\frac{c_2M_m}{c_1\gamma_m}.
\end{eqnarray*}
Taking $m\asymp t/2$, since $\gamma_{m}\asymp \gamma_t\asymp t^{-\rho}$ and $t^{1-\rho}-(m+1)^{1-\rho}\gtrsim t^{1-\rho}$,  we verify that $S_2\leq Ce^{-C^{-1}t^{1-\rho}}+CM_{t}/\gamma_t$ for some $C>0$ and for all $t\geq 1$. Hence, combining the bounds of $S_1$ and $S_2$, we complete the proof.
\end{proof}

\subsection{\textbf{Consistency}}
\label{section:proof:consistency}
In this section, we first sketch the proofs of strong consistency and $L_2$ convergence.

\textbf{Proof sketch of strong consistency:}
By definition of $\theta_t$, it follows that
\begin{eqnarray*}
\Delta_t=\theta_t-\theta^*&=&\Pi\left\{\theta_{t-1}-\gamma_t  H(\theta_{t-1})-\gamma_t  e_t-\gamma_t \zeta_t\right\}-\theta^* \\
&=&\Pi\left\{\theta_{t-1}-\gamma_t  H(\theta_{t-1})-\gamma_t  e_t-\gamma_t \zeta_t\right\}-\Pi(\theta^*).
\end{eqnarray*}
Using the fact that $\|\Pi(x)-\Pi(y)\|\leq \|x-y\|$ for all $x, y \in \mathbb{R}^d$ due to the contraction property of projection, we have
\begin{eqnarray}
\|\Delta_t\|^2 &=&\left\|\Pi\left\{\theta_{t-1}-\gamma_t  H(\theta_{t-1})-\gamma_t  e_t-\gamma_t \zeta_t \right\}-\Pi(\theta^*)\right\|^2\nonumber\\
&\leq& \|\Delta_{t-1}-\gamma_t  H(\theta_{t-1})-\gamma_t  e_t-\gamma_t \zeta_t\|^2.\label{eq:sketch:proof:consistency:eq:0}
\end{eqnarray}
Using the moment bounds of $H(\theta_{t-1}), e_t, \zeta_t$ in Lemmas \ref{lemma:bound:h}-\ref{lemma:some:rate:consistency}, we will lead an inequality of the following type:
\begin{eqnarray}
\ev_{t-1}(\|\Delta_t\|^2)&\leq& (1+C\gamma_t^2)\|\Delta_{t-1}\|^2+C\gamma_t^2(1+v_t)+C\gamma_t \phi^{1/2-1/p}(B_{t-1})\sqrt{v_t}\nonumber\\
&&-C^{-1} \gamma_t\|\Delta_{t-1}\|^2. \label{eq:sketch:proof:consistency:eq:1}
\end{eqnarray}
Here $C>0$ is a constant free of $t$, and $v_t\geq 0$ is a random variable generated by the observations $W_1,\ldots, W_{t-1}$ such that $\ev(v_t)\leq C$.  By the conditions $\sum_{t=1}^\infty \gamma_t^2<\infty$, $\sum_{t=1}^\infty \gamma_t \phi^{1/2-1/p}(B_{t-1})<\infty$ in Assumption \ref{Assumption:A2} and Robbins-Siegmund Theorem (e.g., see \citealp{robbins1971convergence}), we can show that $\Delta_{t}\cae 0$. The formal proof is given in Lemma \ref{lemma:consistency}.

\textbf{Proof sketch of $L_2$ convergence:}
Taking expectation of (\ref{eq:sketch:proof:consistency:eq:1}), we have
\begin{eqnarray}
\ev(\|\Delta_t\|^2)\leq (1-C^{-1} \gamma_t+C\gamma_t^2)\ev(\|\Delta_{t-1}\|^2)+C\gamma_t^2+C\gamma_t \phi^{1/2-1/p}(B_{t-1}).\nonumber
\end{eqnarray}
Applying Lemma \ref{lemma:my:iteration:bound}, we can get the desired bound.  The formal proof is given in Lemma \ref{lemma:rate:evDelta2}.

\begin{lemma}\label{lemma:bound:h} Let  $h_p(\theta)=\ev(\|\nabla l(Z, \theta)-\nabla l(Z, \theta^*)\|^p)\to 0$. Under Assumptions \ref{Assumption:A0}-\ref{Assumption:A2} and \ref{Assumption:A4},  it follows that $h_p(\theta)\to 0$ when $\theta\to \theta^*$.
\end{lemma}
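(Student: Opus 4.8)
The plan is to show that $h_p$ is continuous at $\theta^*$, i.e.\ that $h_p(\theta)\to h_p(\theta^*)=0$ as $\theta\to\theta^*$, by a dominated convergence argument. Since $h_p(\theta^*)=0$ is immediate (the integrand vanishes identically at $\theta=\theta^*$), and since a limit can be checked sequentially, it suffices to fix an arbitrary sequence $\theta_n\to\theta^*$ in $\Theta$ and prove $h_p(\theta_n)\to 0$.

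First I would produce an integrable envelope from Assumption~\ref{Assumption:A1}\ref{A1:moment:condition}. For every $\theta\in\Theta$,
\[
\|\nabla l(Z,\theta)-\nabla l(Z,\theta^*)\|^p\le\big(\|\nabla l(Z,\theta)\|+\|\nabla l(Z,\theta^*)\|\big)^p\le 2^p M^p(Z),
\]
and $\ev[2^pM^p(Z)]<\infty$, so the whole family of integrands indexed by $\theta$ is dominated by a single integrable function; this supplies the uniform integrability needed to pass a limit through the expectation.

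Next I would establish the almost sure pointwise convergence $\|\nabla l(Z,\theta_n)-\nabla l(Z,\theta^*)\|\to 0$ with respect to the law of $Z$. For the smooth losses (Examples~\ref{example:linear} and \ref{example:logistic}) the map $\theta\mapsto\nabla l(z,\theta)$ is continuous for every $z$, so this is immediate. For the non-smooth LAD loss (Example~\ref{example:LAD}), $\nabla l(z,\theta)=-\sign(y-x^\top\theta)x$ is discontinuous in $\theta$ only on the set $\{y=x^\top\theta^*\}$; because the conditional density of $\epsilon$ given $X$ exists, this set is $Z$-null, and off it $x^\top\theta_n\to x^\top\theta^*$ together with $y-x^\top\theta^*\neq 0$ forces $\sign(y-x^\top\theta_n)=\sign(y-x^\top\theta^*)$ for all large $n$. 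Hence the integrand tends to $0$ almost surely in every case, and the dominated convergence theorem with envelope $2^pM^p(Z)$ yields $h_p(\theta_n)\to 0$; as the sequence was arbitrary, $h_p(\theta)\to 0$ as $\theta\to\theta^*$.

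The only delicate point is the almost sure pointwise convergence for non-smooth losses, where the per-sample subgradient jumps as $\theta$ crosses the hyperplane $\{y=x^\top\theta\}$; I would handle this by confining the jumps to a null set via the conditional density assumption, after which the domination and the dominated convergence steps are entirely routine.
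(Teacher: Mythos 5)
Your proof is correct and follows essentially the same route as the paper, whose entire proof is a one-line appeal to the dominated convergence theorem with the envelope furnished by Assumption \ref{Assumption:A1}\ref{A1:moment:condition}. The only difference is that you spell out the pointwise a.s.\ convergence of $\|\nabla l(Z,\theta_n)-\nabla l(Z,\theta^*)\|$ (including the null-set argument needed for the non-smooth LAD subgradient), a step the paper leaves implicit even though it is genuinely required for the dominated convergence argument to go through.
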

\begin{proof}
This is a direct consequence from dominated convergence theorem and Assumption \ref{Assumption:A1}\ref{A1:moment:condition}. 
\end{proof}

\begin{lemma}\label{lemma:some:inequalities}Under Assumptions \ref{Assumption:A0}-\ref{Assumption:A2} and \ref{Assumption:A4}, the following statements hold for some constant $C>0$.
\begin{enumerate}[label=(\roman*)]
\item \label{lemma:some:inequalities:item:1} $(\theta-\theta^*)^\top H(\theta)\geq C^{-1}\|\theta-\theta^*\|^2$;
\item \label{lemma:some:inequalities:item:2} $\sup_{\theta \in \Theta}\|\theta\|\leq C$;
\item \label{lemma:some:inequalities:item:3} $\sup_{\theta\in \Theta}\|H(\theta)\|\leq C$.
\end{enumerate}
\end{lemma}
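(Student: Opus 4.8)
The plan is to read off each of the three bounds directly from the structural assumptions on $L$ and $\Theta$, using throughout the identity $H(\theta)=\ev[\nabla l(Z,\theta)]=\nabla L(\theta)$ from Assumption \ref{Assumption:A1}\ref{A1:expectation:H:theta}, together with the first-order condition $\nabla L(\theta^*)=H(\theta^*)=0$, which holds because $\theta^*$ is interior to $\Theta$ and is the unique solution of $\nabla L(\theta)=0$ in Assumption \ref{Assumption:A1}\ref{A1:identification}. At the very end I would take $C$ to be the maximum of the three separate constants produced below, so that a single $C$ serves all three statements.

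For statement \ref{lemma:some:inequalities:item:1} I would invoke the strong convexity of $L$ from Assumption \ref{Assumption:A1}\ref{A1:identification}. Strong convexity with some modulus $\mu>0$ gives the standard gradient-monotonicity inequality $(\nabla L(\theta)-\nabla L(\tilde\theta))^\top(\theta-\tilde\theta)\geq \mu\|\theta-\tilde\theta\|^2$ for all $\theta,\tilde\theta\in\Theta$. Specializing to $\tilde\theta=\theta^*$ and using $\nabla L(\theta^*)=0$ yields $(\theta-\theta^*)^\top H(\theta)\geq \mu\|\theta-\theta^*\|^2$, so the claim holds with $C^{-1}=\mu$, i.e.\ $C=1/\mu$. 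Statement \ref{lemma:some:inequalities:item:2} is then immediate from the compactness of $\Theta$ in Assumption \ref{Assumption:A1}\ref{A1:compact:Theta}: a compact subset of $\mathbb{R}^d$ is bounded, so $\sup_{\theta\in\Theta}\|\theta\|$ is finite, and I would simply name this finite value as the constant.

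For statement \ref{lemma:some:inequalities:item:3} there are two equally short routes, and I would keep whichever reads more cleanly. The first uses the Lipschitz bound on $\nabla L$ in Assumption \ref{Assumption:A1}\ref{A1:lipchiz:H}: writing $\|H(\theta)\|=\|\nabla L(\theta)-\nabla L(\theta^*)\|\leq K\|\theta-\theta^*\|$ and bounding $\|\theta-\theta^*\|$ by the (finite) diameter of $\Theta$ gives a uniform bound. The second, perhaps more direct, uses Jensen's inequality with the envelope condition in Assumption \ref{Assumption:A1}\ref{A1:moment:condition}: $\|H(\theta)\|=\|\ev[\nabla l(Z,\theta)]\|\leq \ev\|\nabla l(Z,\theta)\|\leq \ev[M(Z)]$, which is finite since $\ev[M^p(Z)]<\infty$ with $p>2$ forces $\ev[M(Z)]<\infty$ by Lyapunov's inequality.

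None of the three bounds presents a genuine obstacle; the only point deserving any care is the identity $\nabla L(\theta^*)=0$ underlying \ref{lemma:some:inequalities:item:1} and \ref{lemma:some:inequalities:item:3}, which is guaranteed by the interiority and uniqueness of $\theta^*$ in Assumption \ref{Assumption:A1}\ref{A1:identification}. It is worth noting that the mixing-type hypotheses in Assumptions \ref{Assumption:A0}, \ref{Assumption:A2} and \ref{Assumption:A4} play no role in this lemma and are listed only to keep the distributional setup uniform across the preliminary results.
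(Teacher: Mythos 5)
Your proposal is correct and follows essentially the same route as the paper: statement (i) from the gradient-monotonicity consequence of strong convexity together with $\nabla L(\theta^*)=0$, statement (ii) from compactness of $\Theta$, and statement (iii) from the Lipschitz bound on $\nabla L$ combined with boundedness of $\Theta$ (your alternative Jensen/envelope argument for (iii) is a valid extra, but the paper's own one-line proof cites exactly the Lipschitz-plus-compactness route). Your added care about $H(\theta)=\nabla L(\theta)$ and the interiority of $\theta^*$ only makes explicit what the paper leaves implicit.
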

\begin{proof}
Statement \ref{lemma:some:inequalities:item:1} follows from the property of strong convexity of $L$ in Assumption \ref{Assumption:A1}\ref{A1:identification}. Statements \ref{lemma:some:inequalities:item:2} and \ref{lemma:some:inequalities:item:3} are direct consequences of Assumptions \ref{Assumption:A1}\ref{A1:compact:Theta}, \ref{Assumption:A1}\ref{A1:lipchiz:hession}, \ref{Assumption:A1}\ref{A1:moment:condition}, and \ref{Assumption:A1}\ref{A1:expectation:H:theta}
\end{proof}

\begin{lemma}\label{lemma:some:rate:consistency} Suppose that Assumptions \ref{Assumption:A0}-\ref{Assumption:A2} and \ref{Assumption:A4} hold. Then there is a constant $C$ and a random variable $v_t$ depending on $W_1,\ldots, W_{t-1}$ with $\ev(v_t)\leq C$ such that 
\begin{enumerate}[label=(\roman*)]
\item \label{lemma:some:rate:consistency:item1}  $\|e_t\|^2\leq \phi^{1-2/p}(B_{t-1})v_t$;
\item \label{lemma:some:rate:consistency:item2}   $\ev[\|\widehat{H}_t(W_t, \theta)-H(\theta)\|^2]\leq CB_t^{-1}(1+\|\theta-\theta^*\|^2)$;
\item \label{lemma:some:rate:consistency:item3}  $\ev_{t-1}(\|\zeta_t\|^2)\leq \phi^{1-2/p}(B_{t-1})v_{t}+CB_t^{-1}(1+\|\theta_{t-1}-\theta^*\|^2)$.
\end{enumerate}
\end{lemma}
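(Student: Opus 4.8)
The plan is to prove the three bounds by systematically replacing the dependent ``fresh'' block $W_t$ by the independent ancillary copy $\widetilde{W}_t$, paying at each step the price of a $\phi$-mixing coefficient. The enabling observation is that $\theta_{t-1}$ is measurable with respect to the past blocks $W_1,\ldots,W_{t-1}$ together with the independent weights $U_1,\ldots,U_{t-1}$, that the index set $I_t$ is separated from this past by a gap of at least $B_{t-1}$ observations, and that---since the $U_j$'s are independent of the data---the $\phi$-mixing coefficient between this conditioning field and $\sigma(W_t)$ is bounded by $\phi(B_{t-1})$. Throughout I would use the uniform envelope $\sup_{\theta\in\Theta}\|\nabla l(Z,\theta)\|\le M(Z)$ with $\ev M^p(Z)<\infty$ from Assumption~\ref{Assumption:A1}\ref{A1:moment:condition}, which renders all $p$-th moments below bounded uniformly in $\theta$.

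For statement~\ref{lemma:some:rate:consistency:item1}, I would first note that, because $\widetilde{W}_t$ is independent of the past and equal in law to $W_t$, one has $\ev_{t-1}[\widehat{H}_t(\widetilde{W}_t,\theta_{t-1})]=H(\theta_{t-1})$, so that $e_t=\ev_{t-1}[\widehat{H}_t(W_t,\theta_{t-1})]-\ev_{t-1}[\widehat{H}_t(\widetilde{W}_t,\theta_{t-1})]$. This is exactly the object controlled by the second inequality of Lemma~\ref{lemma:conditional:difference} with $h=\widehat{H}_t(\cdot,\theta_{t-1})$, giving $\|e_t\|^p\le C_p(\ev_{t-1}\|\widehat{H}_t(W_t,\theta_{t-1})\|^p+\ev_{t-1}\|\widehat{H}_t(\widetilde{W}_t,\theta_{t-1})\|^p)\phi^{p-1}(B_{t-1})$. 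Applying Jensen to the block average and the envelope $M$, both conditional $p$-th moments are dominated by quantities with bounded expectation. Raising to the power $2/p$ and using $\phi\le 1$ to replace $\phi^{2(p-1)/p}$ by the larger $\phi^{1-2/p}$ yields $\|e_t\|^2\le\phi^{1-2/p}(B_{t-1})v_t$, where $v_t$ is the $(2/p)$-th power of the bracketed moment sum; that $\ev v_t\le C$ follows from $\ev(a^{2/p})\le(\ev a)^{2/p}$ and stationarity.

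For statement~\ref{lemma:some:rate:consistency:item2}, I would center the block average, writing $\widehat{H}_t(W_t,\theta)-H(\theta)=B_t^{-1}\sum_{i\in I_t}(\nabla l(Z_i,\theta)-H(\theta))$, a sum over a stationary, mean-zero, $\phi$-mixing sequence with $\sum_t\phi^{1/2}(t)<\infty$ (Assumption~\ref{Assumption:A2}\ref{A2:mixing:condition}). Lemma~\ref{lemma:moment:inequality:phi:mixing} applied coordinatewise with exponent $p$ gives $\ev\|\sum_{i\in I_t}(\nabla l(Z_i,\theta)-H(\theta))\|^p\le C_pB_t^{p/2}$, with $C_p$ uniform in $\theta$ because $\sup_\theta\ev\|\nabla l(Z,\theta)-H(\theta)\|^p\le 2^{p-1}(\ev M^p(Z)+(\ev M(Z))^p)<\infty$; Jensen then lowers the exponent to $2$, so $\ev\|\widehat{H}_t(W_t,\theta)-H(\theta)\|^2\le CB_t^{-1}$ uniformly in $\theta$, which is in fact stronger than the claimed $CB_t^{-1}(1+\|\theta-\theta^*\|^2)$.

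For statement~\ref{lemma:some:rate:consistency:item3}, the identity $\ev_{t-1}[U_t\widehat{H}_t]=\ev_{t-1}[\widehat{H}_t]$ (from $\ev U_t=1$ and the independence of $U_t$, Assumption~\ref{Assumption:A4}) gives $\ev_{t-1}\|\zeta_t\|^2=\ev[U_t^2]\,\ev_{t-1}\|\widehat{H}_t\|^2-\|\ev_{t-1}\widehat{H}_t\|^2$. The crux is to control the conditional second moment $\ev_{t-1}\|\widehat{H}_t(W_t,\theta_{t-1})-H(\theta_{t-1})\|^2$, in which $W_t$ is dependent on the conditioning field, whereas statement~\ref{lemma:some:rate:consistency:item2} only bounds the unconditional version. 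I would bridge this gap exactly as in~\ref{lemma:some:rate:consistency:item1}: comparing with the independent copy $\widetilde{W}_t$ through Lemma~\ref{lemma:conditional:difference} applied to $h=\|\widehat{H}_t(\cdot,\theta_{t-1})-H(\theta_{t-1})\|^2$ with exponent $p/2>1$, where the bookkeeping $(p/2-1)(2/p)=1-2/p$ produces precisely the $\phi^{1-2/p}(B_{t-1})v_t$ term, while $\ev_{t-1}\|\widehat{H}_t(\widetilde{W}_t,\theta_{t-1})-H(\theta_{t-1})\|^2$ reduces to the unconditional bound of~\ref{lemma:some:rate:consistency:item2} evaluated at $\theta=\theta_{t-1}$, contributing $CB_t^{-1}(1+\|\theta_{t-1}-\theta^*\|^2)$. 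This conditional-to-unconditional passage, together with the exponent accounting needed to land exactly on $\phi^{1-2/p}$, is the main obstacle. In the consistency setting $U_t\equiv 1$ that this lemma serves, $\ev_{t-1}\|\zeta_t\|^2\le\ev_{t-1}\|\widehat{H}_t-H(\theta_{t-1})\|^2$ and the two displayed terms are exactly the claim; for a genuinely random weight the expansion $\widehat{H}_t=(\widehat{H}_t-H(\theta_{t-1}))+H(\theta_{t-1})$ additionally leaves $\textrm{Var}(U_t)\|H(\theta_{t-1})\|^2\le \textrm{Var}(U_t)K^2\|\theta_{t-1}-\theta^*\|^2$ (using Assumption~\ref{Assumption:A1}\ref{A1:lipchiz:H} and $\nabla L(\theta^*)=0$), an $O(\|\theta_{t-1}-\theta^*\|^2)$ contribution that vanishes when $U_t\equiv 1$.
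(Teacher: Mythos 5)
Your proposal is correct and, for items \ref{lemma:some:rate:consistency:item1} and \ref{lemma:some:rate:consistency:item2}, follows essentially the paper's own route: the independent-copy comparison via Lemma~\ref{lemma:conditional:difference} (the paper uses its truncation form with $m=\phi^{-2/p}(B_{t-1})$, you use its $L^p$ form and then relax $\phi^{2-2/p}$ to $\phi^{1-2/p}$ via $\phi\le 1$, which lands in the same place with an even slightly stronger exponent), and Yokoyama's moment inequality (Lemma~\ref{lemma:moment:inequality:phi:mixing}) plus Jensen, with the envelope $M(Z)$ supplying the uniform-in-$\theta$ constant, exactly as the paper intends.

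Where you genuinely differ is item \ref{lemma:some:rate:consistency:item3}, and there your treatment is sharper than the paper's. The paper introduces $\widehat\zeta_t=U_t\widehat H_t(\widetilde W_t,\theta_{t-1})-\ev_{t-1}[\widehat H_t(W_t,\theta_{t-1})]$, pays the mixing price for $|\ev_{t-1}(\|\zeta_t\|^2)-\ev_{t-1}(\|\widehat\zeta_t\|^2)|$, bounds $\ev_{t-1}(\|\widehat\zeta_t\|^2)\le 2\ev_{t-1}(\|\widetilde\zeta_t\|^2)+2\|e_t\|^2$, and then asserts that $\ev_{t-1}(\|\widetilde\zeta_t\|^2)$ is controlled ``using statements \ref{lemma:some:rate:consistency:item1} and \ref{lemma:some:rate:consistency:item2}''. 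But $\widetilde\zeta_t=U_t\widehat H_t(\widetilde W_t,\theta_{t-1})-H(\theta_{t-1})$ carries the random weight while item \ref{lemma:some:rate:consistency:item2} does not: exactly as in your variance identity, $\ev_{t-1}(\|\widetilde\zeta_t\|^2)=\ev(U_t^2)\,\ev_{t-1}\|\widehat H_t(\widetilde W_t,\theta_{t-1})-H(\theta_{t-1})\|^2+\mathrm{Var}(U_t)\|H(\theta_{t-1})\|^2$, and the second term --- the one you isolate --- is silently dropped in the paper's proof. It vanishes when $U_t\equiv1$ (the case needed for Theorem~\ref{theorem:consistency}), but under Assumption~\ref{Assumption:A4} with non-degenerate weights it is of order $\|\theta_{t-1}-\theta^*\|^2$ with no $B_t^{-1}$ factor, and it cannot be absorbed into either term of the stated bound (nor into $v_t$ with $\ev(v_t)\le C$, since $\phi(B_{t-1})\to0$). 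So your closing remark is not a defect of your argument; it exposes a genuine gap in the paper's own proof, and the lemma as stated should carry the extra $\mathrm{Var}(U_t)K^2\|\theta_{t-1}-\theta^*\|^2$ term. One can check the correction is harmless downstream: the Robbins--Siegmund recursion in Lemma~\ref{lemma:consistency} absorbs an extra $C\gamma_t^2\|\theta_{t-1}-\theta^*\|^2$, and Lemmas~\ref{lemma:rate:evDelta2} and \ref{lemma:expansion:S2} tolerate the resulting $O(\gamma_{t-1})$ addition to $\ev(\|\zeta_t\|^2)$.
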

\begin{proof}
\begin{enumerate}[label=(\roman*)]
\item  By the definition of $e_t$ and Lemma \ref{lemma:moment:inequality:phi:mixing}, we have
\begin{eqnarray*}
\left\|e_t\right\|&=&\left\|\ev_{t-1}\left\{\widehat{H}_t({W}_t, \theta_{t-1})\right\}-\ev_{t-1}\left\{\widehat{H}_t(\widetilde{W}_t, \theta_{t-1})\right\}\right\|\\
&\leq&  m\phi(B_{t-1})+\frac{\ev_{t-1}\left(\left\|\widehat{H}_t({W}_t, \theta_{t-1})\right\|^{p/2}\right)}{m^{p/2-1}}+\frac{\ev_{t-1}\left(\left\|\widehat{H}_t(\widetilde{W}_t, \theta_{t-1})\right\|^{p/2}\right)}{m^{p/2-1}},
\end{eqnarray*}
which further leads to
\begin{eqnarray*}
\left\|e_t\right\|^2&\leq& 9m^2\phi^2(B_{t-1})+\frac{9\ev_{t-1}^2\left(\left\|\widehat{H}_t({W}_t, \theta_{t-1})\right\|^{p/2}\right)}{m^{p-2}}+\frac{9\ev_{t-1}^2\left(\left\|\widehat{H}_t(\widetilde{W}_t, \theta_{t-1})\right\|^{p/2}\right)}{m^{p-2}}\\
&\leq&9m^2\phi^2(B_{t-1})+\frac{9\ev_{t-1}\left(\left\|\widehat{H}_t({W}_t, \theta_{t-1})\right\|^{p}\right)}{m^{p-2}}+\frac{9\ev_{t-1}\left(\left\|\widehat{H}_t(\widetilde{W}_t, \theta_{t-1})\right\|^{p}\right)}{m^{p-2}}.
\end{eqnarray*}
Using Assumption \ref{Assumption:A1}\ref{A1:moment:condition} and Lemma \ref{lemma:some:inequalities}, we have
\begin{eqnarray*}
\ev^{1/p}\left(\left\|\widehat{H}_t({W}_t, \theta_{t-1})\right\|^{p}\right)&\leq& \frac{1}{B_t}\sum_{i\in I_t}\ev^{1/p}\left\{\|\nabla l(Z_i, \theta_{t-1})\|^p\right\}\\
&\leq& \frac{1}{B_t}\sum_{i\in I_t}\ev^{1/p}\left\{M^p(Z_i)\right\}=\ev^{1/p}\left\{M^p(Z)\right\}.
\end{eqnarray*}
Similarly, we can show that
\begin{eqnarray*}
\ev^{1/p}\left(\left\|\widehat{H}_t({W}_t, \theta_{t-1})\right\|^{p}\right)\leq \ev^{1/p}\left\{M^p(Z)\right\}.
\end{eqnarray*}
Combining the preceding three displays and letting $m=\phi^{-2/p}(B_{t-1})$, we conclude that $\|e_t\|^2\leq \phi^{2-4/p}(B_{t-1})v_{1t}\leq \phi^{1-2/p}(B_{t-1})v_{1t}$, where
\begin{eqnarray*}
v_{1t}=9+9\ev_{t-1}\left(\left\|\widehat{H}_t({W}_t, \theta_{t-1})\right\|^{p}\right)+9\ev_{t-1}\left(\left\|\widehat{H}_t(\widetilde{W}_t, \theta_{t-1})\right\|^{p}\right).
\end{eqnarray*}
\item This is a direct consequence of Lemma \ref{lemma:moment:inequality:phi:mixing} and Assumption \ref{Assumption:A2}\ref{A2:mixing:condition}. 

\item  Let $\widehat{\zeta}_t=U_t \widehat{H}_t(\widetilde{W}_t, \theta_{t-1})-\ev_{t-1}\left\{\widehat{H}_t(W_t, \theta_{t-1})\right\}.$ By the definition of $\zeta_t$, we can see that
\begin{eqnarray*}
|\ev_{t-1}(\|\zeta_t\|^2)-\ev_{t-1}(\|\widehat{\zeta}_t\|^2)|\leq \phi(B_{t-1})m+\frac{\ev_{t-1}(\|\zeta_t\|^p)}{m^{p/2-1}}+\frac{\ev_{t-1}(\|\widehat{\zeta}_t\|^p)}{m^{p/2-1}}
\end{eqnarray*}
Similarly to the proof of Statement \ref{lemma:some:rate:consistency:item1}, we can show that $\ev(\|\zeta_t\|^p)\leq C_p$ and $\ev(\|\widehat{\zeta}_t\|^p)\leq C_p$ for some constant $C_p>0$. Taking $m=\phi^{-2/p}(B_{t-1})$, we show that
\begin{eqnarray*}
|\ev_{t-1}(\|\zeta_t\|^2)-\ev_{t-1}(\|\widehat{\zeta}_t\|^2)|\leq \phi^{1-2/p}(B_{t-1})\left\{1+\ev_{t-1}(\|\zeta_t\|^p)+\ev_{t-1}(\|\widehat{\zeta}_t\|^p)\right\}.
\end{eqnarray*}
Moreover, direct examination leads to 
\begin{eqnarray*}
\ev_{t-1}(\|\widehat{\zeta}_t\|^2)\leq 2\ev_{t-1}(\|\widetilde{\zeta}_t\|^2)+2\|e_t\|^2,
\end{eqnarray*}
where  $\widetilde{\zeta}_t=U_t \widehat{H}_t(\widetilde{W}_t, \theta_{t-1})-\ev_{t-1}\left\{\widehat{H}_t(\widetilde{W}_t, \theta_{t-1})\right\}.$
Using Statements \ref{lemma:some:rate:consistency:item1} and \ref{lemma:some:rate:consistency:item2}, we show that
\begin{eqnarray*}
\ev_{t-1}(\|\zeta_t\|^2)\leq \phi^{1-2/p}(B_{t-1})v_{2t}+CB_t^{-1}(1+\|\theta_{t-1}-\theta^*\|^2),
\end{eqnarray*}
where $v_{t2}$ satisfies $\ev(v_{2t})\leq C$ for some constant $C$.
\end{enumerate}
We can take $v_t=v_{1t}+v_{2t}$ to complete the proof.
\end{proof}

\begin{lemma}\label{lemma:consistency} Under Assumptions \ref{Assumption:A0}-\ref{Assumption:A2} and \ref{Assumption:A4}, it follows that $\theta_T \cae \theta^*$ as $T\to \infty.$
\end{lemma}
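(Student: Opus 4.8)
The plan is to apply the Robbins--Siegmund almost-supermartingale convergence theorem to the squared error $V_t = \|\theta_t-\theta^*\|^2$. Write the generalized iteration as $\theta_t=\Pi[\theta_{t-1}-\gamma_t H(\theta_{t-1})-\gamma_t e_t-\gamma_t\zeta_t]$ and set $\mathcal{F}_{t-1}:=\sigma(W_1,\ldots,W_{t-1})$. Using $\theta^*=\Pi[\theta^*]$ together with the non-expansiveness of the Euclidean projection onto the convex compact set $\Theta$, I would first bound
\[
V_t \le \|\theta_{t-1}-\theta^*-\gamma_t(H(\theta_{t-1})+e_t+\zeta_t)\|^2,
\]
expand the right-hand side, and take $\ev_{t-1}$. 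Since $U_t$ is independent of the data with mean one, $\ev_{t-1}[\zeta_t]=0$, so the only first-order contributions come from $H(\theta_{t-1})$ and the bias $e_t$, giving
\[
\ev_{t-1}[V_t]\le V_{t-1}-2\gamma_t(\theta_{t-1}-\theta^*)^\top H(\theta_{t-1})-2\gamma_t(\theta_{t-1}-\theta^*)^\top e_t+\gamma_t^2\,\ev_{t-1}\|H(\theta_{t-1})+e_t+\zeta_t\|^2.
\]

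Next I would bound each piece using the preliminary lemmas. The strong-convexity bound in Lemma~\ref{lemma:some:inequalities}\ref{lemma:some:inequalities:item:1} yields the negative drift $-2\gamma_t(\theta_{t-1}-\theta^*)^\top H(\theta_{t-1})\le -2C^{-1}\gamma_t V_{t-1}$, which plays the role of $-\eta_t$. For the bias cross-term, Cauchy--Schwarz together with the boundedness of $\Theta$ (Lemma~\ref{lemma:some:inequalities}\ref{lemma:some:inequalities:item:2}, so $\|\theta_{t-1}-\theta^*\|\le 2C$) and $\|e_t\|^2\le \phi^{1-2/p}(B_{t-1})v_t$ from Lemma~\ref{lemma:some:rate:consistency}\ref{lemma:some:rate:consistency:item1} gives a remainder of order $\gamma_t\phi^{1/2-1/p}(B_{t-1})\sqrt{v_t}$. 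For the quadratic term I would use $\|a+b+c\|^2\le 3(\|a\|^2+\|b\|^2+\|c\|^2)$ with $\sup_\theta\|H(\theta)\|\le C$ (Lemma~\ref{lemma:some:inequalities}\ref{lemma:some:inequalities:item:3}) and the $\zeta_t$ bound in Lemma~\ref{lemma:some:rate:consistency}\ref{lemma:some:rate:consistency:item3}. Collecting terms casts the recursion into the Robbins--Siegmund form $\ev_{t-1}[V_t]\le(1+\beta_t)V_{t-1}+\xi_t-\eta_t$, with $\beta_t\asymp\gamma_t^2B_t^{-1}$, $\eta_t=2C^{-1}\gamma_t V_{t-1}$, and $\xi_t$ absorbing the order-$\gamma_t\phi^{1/2-1/p}(B_{t-1})\sqrt{v_t}$ and order-$\gamma_t^2(1+\phi^{1-2/p}(B_{t-1})v_t+B_t^{-1})$ terms; crucially $\beta_t,\xi_t,\eta_t$ are all $\mathcal{F}_{t-1}$-measurable and nonnegative, since $e_t$ and $v_t$ are.

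Finally I would verify the summability hypotheses. Because $B_t\ge1$ and $\sum_t\gamma_t^2<\infty$ (from $\rho>1/2$), $\sum_t\beta_t<\infty$ deterministically. For $\sum_t\xi_t$ I would pass to expectations and invoke Tonelli: since $v_t$ is $\mathcal{F}_{t-1}$-measurable with $\ev(v_t)\le C$, Jensen gives $\ev\sqrt{v_t}\le\sqrt{C}$, and the identity $\sqrt{\phi^{1-2/p}}=\phi^{1/2-1/p}$ reduces the bias contribution to $\sum_t\gamma_t\phi^{1/2-1/p}(B_{t-1})$, which is finite by Assumption~\ref{Assumption:A2}\ref{A2:batch:rate}; the remaining terms are dominated by $\sum_t\gamma_t^2<\infty$ after using $\phi\le1$. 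Hence $\sum_t\ev(\xi_t)<\infty$, so $\sum_t\xi_t<\infty$ almost surely. Robbins--Siegmund then yields that $V_t$ converges almost surely to a finite limit $V_\infty$ and that $\sum_t\eta_t=2C^{-1}\sum_t\gamma_t V_{t-1}<\infty$; since $\sum_t\gamma_t=\infty$ this forces $\liminf_t V_t=0$, whence $V_\infty=0$ and $\theta_T\cae\theta^*$. I expect the main obstacle to be the bias term $e_t$: unlike the i.i.d. case it does not vanish in conditional mean, and controlling its cumulative effect is precisely where the $\phi$-mixing decay $\phi(B_{t-1})$, the growth of the batch sizes, and the measurability/integrability bookkeeping for the random $v_t$ (so that Tonelli upgrades summability-in-expectation to almost-sure summability) must be combined.
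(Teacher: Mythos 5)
Your proposal is correct and follows essentially the same route as the paper's proof: non-expansiveness of the projection, expansion of $\|\theta_t-\theta^*\|^2$, taking $\ev_{t-1}$ (which kills the $\zeta_t$ cross-terms since $\ev_{t-1}[\zeta_t]=0$), the bounds from Lemmas \ref{lemma:some:inequalities} and \ref{lemma:some:rate:consistency}, the Robbins--Siegmund theorem, and finally $\sum_t\gamma_t=\infty$ to force the almost-sure limit of $\|\theta_t-\theta^*\|^2$ to be zero. If anything, your Tonelli/Jensen bookkeeping upgrading $\sum_t\ev(\xi_t)<\infty$ to $\sum_t\xi_t<\infty$ almost surely is spelled out more explicitly than in the paper, which passes directly from the summability of $\gamma_t^2$ and $\gamma_t\phi^{1/2-1/p}(B_{t-1})$ to the conclusion.
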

\begin{proof}
By (\ref{eq:sketch:proof:consistency:eq:0}), we have
\begin{eqnarray*}
\|\Delta_t\|^2 &=&\left\|\Pi\left\{\theta_{t-1}-\gamma_t  H(\theta_{t-1})-\gamma_t  e_t-\gamma_t \zeta_t \right\}-\Pi(\theta^*)\right\|^2\\
&\leq& \|\Delta_{t-1}-\gamma_t  H(\theta_{t-1})-\gamma_t  e_t-\gamma_t \zeta_t\|^2\\
&=& \|\Delta_{t-1}\|^2+\gamma_t^2 \|H(\theta_{t-1})\|^2+\gamma_t^2  \|e_t\|^2+\gamma_t^2\|\zeta_t\|^2\\
&&-2\gamma_t \Delta_{t-1}^\top H(\theta_{t-1})-2\gamma_t  \Delta_{t-1}^\top e_t-2\gamma_t  \Delta_{t-1}^\top\zeta_t\\
&&+2\gamma_t^2  H^\top(\theta_{t-1})e_t+2\gamma_t^2 H^\top(\theta_{t-1})\zeta_t+2\gamma_t^2  e_t^\top \zeta_t\\
&\leq & \|\Delta_{t-1}\|^2+\gamma_t^2 \|H(\theta_{t-1})\|^2+\gamma_t^2  \|e_t\|^2+\gamma_t^2\|\zeta_t\|^2\\
&&-2\gamma_t \Delta_{t-1}^\top H(\theta_{t-1})+2\gamma_t  \|\Delta_{t-1}\| \|e_t\|-2\gamma_t  \Delta_{t-1}^\top\zeta_t\\
&&+2\gamma_t^2  \|H(\theta_{t-1})\|\|e_t\|+2\gamma_t^2 H^\top(\theta_{t-1})\zeta_t+2\gamma_t^2 \| e_t\|\| \zeta_t\|.
\end{eqnarray*}
Taking conditional expectation, it follows that
\begin{eqnarray*}
\ev_{t-1}(\|\Delta_t\|^2)&\leq&  \|\Delta_{t-1}\|^2+\gamma_t^2 \|H(\theta_{t-1})\|^2+\gamma_t^2  \|e_t\|^2+\gamma_t^2\ev_{t-1}(\|\zeta_t\|^2)\\
&&-2\gamma_t \Delta_{t-1}^\top H(\theta_{t-1})+2\gamma_t  \|\Delta_{t-1}\| \|e_t\|\\
&&+2\gamma_t^2  \|H(\theta_{t-1})\|\|e_t\|+2\gamma_t^2 \| e_t\|\ev_{t-1}(\| \zeta_t\|).
\end{eqnarray*}
Lemma \ref{lemma:some:rate:consistency} implies that $\|e_t\|^2\leq \phi^{1-2/p}(B_{t-1})v_t$ and $\ev_{t-1}(\|\zeta_t\|^2)\leq \phi^{1-2/p}(B_{t-1})v_{t}+CB_t^{-1}(1+\|\Delta_{t-1}\|^2)$ for some constant $C>0$, where $v_t$ is a random variable depending on $W_1,\ldots, W_{t-1}$ such that $\ev(v_t)\leq C$.  As a consequence of Lemma \ref{lemma:some:inequalities}, we have
\begin{eqnarray}
\ev_{t-1}(\|\Delta_t\|^2)&\leq&   \|\Delta_{t-1}\|^2+\gamma_t^2 C(1+\|\Delta_{t-1}\|^2)-2\gamma_t C^{-1}\|\Delta_{t-1}\|^2 \nonumber\\
&&+\gamma_t^2\phi^{1-1/p}(B_{t-1})v_t+\gamma_t^2\phi^{1-2/p}(B_{t-1})v_t+C\gamma_t^2B_t^{-1}(1+\|\Delta_{t-1}\|^2)\nonumber\\
&&+2C\gamma_t \sqrt{\phi^{1-2/p}(B_{t-1})v_t}+2C\gamma_t^2 \sqrt{\phi^{1-2/p}(B_{t-1})v_t}\nonumber\\
&&+2\gamma_t^2 \sqrt{\phi^{1-2/p}(B_{t-1})v_t}\sqrt{\phi^{1-2/p}(B_{t-1})v_{t}+CB_t^{-1}(1+\|\Delta_{t-1}\|^2)}\nonumber\\
&\leq&(1+C\gamma_t^2+C\gamma_t^2 B_t^{-1})\|\Delta_{t-1}\|^2\nonumber\\
&&+C\gamma_t^2(1+B_t^{-1})+4\gamma_t^2 \phi^{1-2/p}(B_{t-1})v_t+4C\gamma_t \sqrt{\phi^{1-2/p}(B_{t-1})v_t}\nonumber\\
&&+2\gamma_t^2 \sqrt{\phi^{1-2/p}(B_{t-1})v_t}\sqrt{CB_t^{-1}(1+C^2)}-2\gamma_tC^{-1}\|\Delta_{t-1}\|^2. \label{eq:lemma:consistency:eq1}
\end{eqnarray}
Moreover,  Assumption \ref{Assumption:A2} tells that  $\sum_{t=1}^\infty \gamma_t^2<\infty$ and $\sum_{t=1}^\infty \gamma_t \phi^{1/2-1/p}(B_{t-1})<\infty$. Hence,  Robbins-Siegmund Theorem (e.g., see \citealp{robbins1971convergence}) implies that $\Delta_t\cae \Delta^*$ for some random vector $\Delta^*$ and $\sum_{t=1}^\infty \gamma_t\|\Delta_{t-1}\|^2<\infty$ almost surely. The condition $\sum_{t=1}^\infty \gamma_t=\infty$ in Assumption \ref{Assumption:A2}\ref{A2:leanring:rate} implies that $\Delta_t\cae 0$.
\end{proof}


\begin{lemma}\label{lemma:rate:evDelta2}
Under Assumptions \ref{Assumption:A0}-\ref{Assumption:A2} and \ref{Assumption:A4}, it follows that
\begin{eqnarray*}
\ev(\|\theta_t-\theta^*\|^2)\leq C\left(\gamma_t+\phi^{\frac{1}{2}-\frac{1}{p}}(B_t)\right),
\end{eqnarray*}
where $C>0$ is a constant free of $t$. 
\end{lemma}
\begin{proof}
Using (\ref{eq:lemma:consistency:eq1}), it follows that
\begin{eqnarray}
\ev(\|\Delta_t\|^2)&\leq&(1+C\gamma_t^2+C\gamma_t^2 B_t^{-1})\ev(\|\Delta_{t-1}\|^2)\nonumber\\
&&+C\gamma_t^2(1+B_t^{-1})+4\gamma_t^2 \phi^{1-2/p}(B_{t-1})\ev(v_t)+4C\gamma_t \sqrt{\phi^{1-2/p}(B_{t-1})\ev(v_t)}\nonumber\\
&&+2\gamma_t^2 \sqrt{\phi^{1-2/p}(B_{t-1})\ev(v_t)}\sqrt{CB_t^{-1}(1+C^2)}-2\gamma_tC^{-1}\ev(\|\Delta_{t-1}\|^2)\nonumber\\
&\leq&(1-c_1\gamma_1)\ev(\|\Delta_{t-1}\|^2)+c_2\left(\gamma_t^2+\gamma_t \phi^{1/2-1/p}(B_{t-1})\right),\nonumber
\end{eqnarray}
for some constant $c_1, c_2>0$. Here we use the fact that $\ev(v_t)\leq C$ in Lemma \ref{lemma:some:rate:consistency}.  

Noting that $\sum_{t=1}^\infty \gamma_t<\infty$ and $\sum_{t=1}^\infty\gamma_t \phi^{1/2-1/p}(B_{t-1})<\infty$ by Assumption \ref{Assumption:A2}, the preceding display and  Lemmas \ref{lemma:my:iteration:bound} together imply
\begin{eqnarray*}
\ev(\|\Delta_t\|^2)&\leq& Ce^{-C^{-1}t^{1-\rho}}+C\gamma_t+C\gamma_t \gamma^{-1}_{\floor{t/2}}  \phi^{1/2-1/p}(B_{\floor{t/2}-1})\\
&\lesssim& t^{-\rho}+ \phi^{1/2-1/p}(t^{\beta})\asymp \gamma_t+\phi^{1/2-1/p}(B_t),
\end{eqnarray*}
where Assumption \ref{Assumption:A2} is used. Hence, we prove the desired result.
\end{proof}

\begin{proof}[\textbf{Proof of Theorem \ref{theorem:consistency}}]
Taking $U_j=1$,  it follows from Lemmas  \ref{lemma:weighted:cesaro:sum}, \ref{lemma:consistency}, and \ref{lemma:rate:evDelta2}. 
\end{proof}

\subsection{Asymptotic Normality}\label{section:proof:normal}
\textbf{Proof sketch:}
By the iteration formula, we have
\begin{eqnarray*}
\theta_t=\theta_{t-1}-\gamma_t  H(\theta_{t-1})-\gamma_t  e_t-\gamma_t \zeta_t+\xi_{t},
\end{eqnarray*}
where
\begin{eqnarray*}
\xi_t=\Pi\left\{\theta_{t-1}-\gamma_t  H(\theta_{t-1})-\gamma_t  e_t-\gamma_t \zeta_t\right\}-\left(\theta_{t-1}-\gamma_t  H(\theta_{t-1})-\gamma_t  e_t-\gamma_t \zeta_t\right).
\end{eqnarray*}
Hence, we have
\begin{eqnarray*}
 H(\theta_{t-1})=\frac{\theta_{t-1}-\theta_t}{\gamma_t}-e_t-\zeta_t+\frac{\xi_t}{\gamma_t}.
\end{eqnarray*}
For any $\theta\in \Theta$, let us define $r_\theta=H(\theta)-G(\theta-\theta^*).$
Combining the preceding three displays, it holds that
\begin{eqnarray*}
G(\theta_{t-1}-\theta^*)=\frac{\theta_{t-1}-\theta_t}{\gamma_t}-e_t-\zeta_t-r_{\theta_{t-1}}+\frac{\xi_t}{\gamma_t},
\end{eqnarray*}
which further leads to
\begin{eqnarray}
&&\sum_{t=2}^T B_{t-1}G(\theta_{t-1}-\theta^*)\nonumber\\
&=&\sum_{t=2}^T B_{t-1}\frac{\theta_{t-1}-\theta_t}{\gamma_t}-\sum_{t=2}^T B_{t-1}e_t-\sum_{t=2}^T B_{t-1}\zeta_t-\sum_{t=2}^T B_{t-1} r_{\theta_{t-1}}+\sum_{t=2}^T \frac{B_{t-1}\xi_t}{\gamma_t}. \label{eq:sketch:proof:asymptotic:normal:eq:1} 
\end{eqnarray}
In Lemmas \ref{lemma:r:theta:bound}-\ref{lemma:B:theta:difference:bound}, we will show that the term $\sum_{t=2}^T B_{t-1}\zeta_t$ contributes to the asymptotic normality, while the rest four terms are asymptotically negligible. The formal proof is given in Lemmas \ref{lemma:asymptotic:expansion} and \ref{lemma:pre:gaussian}.

\begin{lemma}\label{lemma:old:assumption}
Under Assumptions \ref{Assumption:A0}-\ref{Assumption:A3}, it holds that 
\begin{enumerate}[label=(\roman*)]
\item $\sum_{t=1}^\infty B_{t} \phi^{\frac{1}{2}-\frac{1}{p}}(B_{t})<\infty.$
\item $\sum_{t=1}^\infty \phi^{\frac{1}{2}-\frac{1}{p}}(t)<\infty.$
\end{enumerate}
\end{lemma}
\begin{proof}
Since $t^{{(2\rho+\beta+1)}/{\beta}}\phi^{{1}/{2}-{1}/{p}}(t)\to 0$ by Assumption \ref{Assumption:A4}\ref{A3:rate:batch}, we see that
\begin{eqnarray*}
\sum_{t=1}^\infty B_{t} \phi^{\frac{1}{2}-\frac{1}{p}}(B_{t})\lesssim \sum_{t=1}^\infty B_t^{-\frac{2\rho}{\beta}-\frac{1}{\beta}}\asymp \sum_{t=1}^\infty t^{-2\rho-1}<\infty.
\end{eqnarray*}
which is the first statement. 

For the second statement, similar calculation leads to
\begin{eqnarray*}
\sum_{t=1}^\infty \phi^{\frac{1}{2}-\frac{1}{p}}(t)\lesssim \sum_{t=1}^\infty t^{{-2\rho}/{\beta}-1-{1}/{\beta}}<\infty.
\end{eqnarray*}
The proof is complete.
\end{proof}

\begin{lemma}\label{lemma:bound:covariance:sum}
Under Assumptions \ref{Assumption:A0}-\ref{Assumption:A3} and \ref{Assumption:A4}, it holds that $\sum_{t=1}^\infty\|r(t)\|<\infty$.
\end{lemma}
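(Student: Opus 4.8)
The plan is to prove the summable bound $\|r(t)\|\lesssim \phi^{1-1/p}(t)$ and then close the argument with the mixing summability in Assumption \ref{Assumption:A2}\ref{A2:mixing:condition}. Write $g(z):=\nabla l(z,\theta^*)$ for brevity. Two preliminary observations fix the structure of the proof. First, by Assumption \ref{Assumption:A1}\ref{A1:expectation:H:theta} together with $\nabla L(\theta^*)=0$ (Assumption \ref{Assumption:A1}\ref{A1:identification}), we have $\ev[g(Z)]=\nabla L(\theta^*)=0$, so that $r(t)=\ev[g(Z_{t+k})g^\top(Z_k)]$ is a genuine (auto)covariance with zero-mean factors. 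Second, $\|g(Z)\|\le M(Z)$ with $\ev[M^p(Z)]<\infty$ by Assumption \ref{Assumption:A1}\ref{A1:moment:condition}, so $g\in L^p$ with $p>2$. Conditioning on the past and pulling out the $\sigma(Z_k)$-measurable factor gives $r(t)=\ev\big[\ev[g(Z_{t+k})\mid Z_k]\,g^\top(Z_k)\big]$, which reduces the problem to controlling the conditional mean $\ev[g(Z_{t+k})\mid Z_k]$.

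The second step is the independent-block comparison. Let $\widetilde{Z}_{t+k}$ be the ancillary copy (as constructed in the excerpt), independent of $Z_k$ and sharing the marginal of $Z_{t+k}$; then $\ev[g(\widetilde{Z}_{t+k})\mid Z_k]=\ev[g(Z)]=0$. Applying the second inequality of Lemma \ref{lemma:conditional:difference} with $X=Z_k$, $Y=Z_{t+k}$, $\widetilde Y=\widetilde Z_{t+k}$, and $h(x,y)=g(y)$, and then taking expectations,
\begin{eqnarray*}
\ev\big\|\ev[g(Z_{t+k})\mid Z_k]\big\|^p\le C_p\big(\ev\|g(Z_{t+k})\|^p+\ev\|g(\widetilde Z_{t+k})\|^p\big)\phi^{p-1}(t)=2C_p\,\ev\|g(Z)\|^p\,\phi^{p-1}(t),
\end{eqnarray*}
using stationarity. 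Here the relevant mixing coefficient is $\phi_Q(\sigma(Z_k),\sigma(Z_{t+k}))$, and by monotonicity of $\phi_Q$ in its arguments together with $\sigma(Z_k)\subseteq \mcF_1^k$ and $\sigma(Z_{t+k})\subseteq \mcF_{k+t}^\infty$, this is bounded by $\phi_{\bfZ}(t)\le\phi(t)$, which is why the gap-$t$ coefficient appears.

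The final step combines this with Hölder's inequality. Since $\|r(t)\|\le \ev\big[\|\ev[g(Z_{t+k})\mid Z_k]\|\,\|g(Z_k)\|\big]$, applying Hölder with conjugate exponents $p$ and $q=p/(p-1)$ yields
\begin{eqnarray*}
\|r(t)\|\le \big(\ev\|\ev[g(Z_{t+k})\mid Z_k]\|^p\big)^{1/p}\,\|g(Z_k)\|_q\le (2C_p\,\ev\|g\|^p)^{1/p}\,\|g\|_q\,\phi^{1-1/p}(t),
\end{eqnarray*}
and $\|g\|_q\le\|g\|_p<\infty$ because $q<p$. Thus $\|r(t)\|\le C'\phi^{1-1/p}(t)$. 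Since $\phi(t)\le 1$ and $1-1/p\ge 1-2/p>0$, we have $\phi^{1-1/p}(t)\le \phi^{1-2/p}(t)$, so $\sum_{t=1}^\infty\|r(t)\|\le C'\sum_{t=1}^\infty\phi^{1-2/p}(t)<\infty$ by Assumption \ref{Assumption:A2}\ref{A2:mixing:condition}, which is the claim.

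The main obstacle is the limited moment budget: we only control $p>2$ moments of $g$, so one cannot apply a covariance inequality directly to the product $g(Z_{t+k})g^\top(Z_k)$ (that would require $2p$ moments through Cauchy--Schwarz). The conditioning-then-Hölder route sidesteps this by spending the mixing decay on a single factor, $\ev[g(Z_{t+k})\mid Z_k]$, and paying for the other factor only in $L^q$ with $q<p$. The remaining care is purely bookkeeping: correctly matching the gap-$t$ mixing coefficient through the $\sigma$-algebra inclusions, and noting that the resulting exponent $1-1/p$ is in fact weaker than (hence implied by) the summability hypothesis $\sum_t\phi^{1-2/p}(t)<\infty$.
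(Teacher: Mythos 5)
Your proof is correct, but it takes a genuinely different route from the paper's. The paper disposes of this lemma in one line by citation: it converts the $\phi$-mixing coefficient to an $\alpha$-mixing coefficient (via the standard relation $\alpha(t)\leq \phi(t)/2$, citing Bradley's survey) and then invokes the Davydov-type covariance inequality of Theorem 2.20 in \cite{fanYao2003} with both factors $\nabla l(Z_{t+k},\theta^*),\nabla l(Z_k,\theta^*)\in L^p$, yielding $\|r(t)\|\lesssim \alpha^{1-2/p}(t)\lesssim \phi^{1-2/p}(t)$, which is summable by Assumption \ref{Assumption:A2}\ref{A2:mixing:condition}. You instead stay entirely inside the paper's own toolkit: the tower property reduces $r(t)$ to the conditional mean $\ev[\nabla l(Z_{t+k},\theta^*)\mid Z_k]$, the independent-copy comparison in Lemma \ref{lemma:conditional:difference} (second inequality) bounds its $L^p$ norm by $\phi^{1-1/p}(t)$ up to constants, and H\"older with exponents $p$ and $p/(p-1)$ closes the product — this is in effect a rederivation of Ibragimov's $\phi$-mixing covariance inequality, and it gives the sharper exponent $1-1/p$ rather than $1-2/p$ (the loss of a factor $\phi^{1/p}$ is exactly the price the paper pays for routing through $\alpha$-mixing). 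The extra sharpness is immaterial here, since both bounds are absorbed by the same summability hypothesis $\sum_t \phi^{1-2/p}(t)<\infty$, but your argument has the merit of being self-contained (no external covariance inequality is needed) and of making explicit the $\sigma$-algebra bookkeeping ($\sigma(Z_k)\subseteq\mcF_1^k$, $\sigma(Z_{t+k})\subseteq\mcF_{k+t}^\infty$, and monotonicity of $\phi_Q$), which the paper's citation-style proof leaves implicit; the paper's route, in exchange, is shorter.
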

\begin{proof}
Using the relation between $\alpha$-mixing coefficient and $\phi$-mixing coefficient (e.g., see \citealp{Bradleymixing2005}), Assumptions  \ref{Assumption:A1}\ref{A1:moment:condition}, \ref{Assumption:A2}\ref{A2:mixing:condition}, and Theorem 2.20 in \cite{fanYao2003} imply the desired result.
\end{proof}

\begin{lemma}\label{lemma:bound:tilde:zetat}
Under Assumptions \ref{Assumption:A0}-\ref{Assumption:A3} and \ref{Assumption:A4}, it follows that $\ev(\|\widetilde{\zeta}_t\|^2)\leq CB_T^{-1}\ev^{2/p}[h_p(\theta_{t-1})]$, where $h_p(\theta)=\ev(\|\nabla l(Z, \theta)-\nabla l(Z, \theta^*)\|^p)$.
\end{lemma}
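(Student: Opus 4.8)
The plan is to bound $\ev\|\widetilde\zeta_t\|^2$ by conditioning on the history up to iteration $t-1$ and exploiting the independent-block construction, under which the fresh block $\widetilde W_t$ (drawn from the ancillary copy $\widetilde\bfZ$) and the weight $U_t$ are independent of $\theta_{t-1}$. Expanding $\widetilde\zeta_t$ and isolating its centered gradient-difference part, set $D_i(\theta)=[\nabla l(\widetilde Z_i,\theta)-\nabla l(\widetilde Z_i,\theta^*)]-[\nabla L(\theta)-\nabla L(\theta^*)]$, so that the quantity to control is $\ev\|U_t B_t^{-1}\sum_{i\in I_t}D_i(\theta_{t-1})\|^2$ (the leading $\theta^*$-noise $U_t\widehat H_t(\widetilde W_t,\theta^*)$ feeds the CLT term and is treated separately). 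Since $\widetilde Z_i$ is distributed as $Z$, Assumption \ref{Assumption:A1}\ref{A1:expectation:H:theta} gives $\ev D_i(\theta)=0$ for every fixed $\theta$, and for fixed $\theta$ the family $\{D_i(\theta)\}_{i\in I_t}$ is a contiguous block of a stationary $\phi$-mixing sequence inheriting the mixing coefficients of $\bfZ$.

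First I would condition on the $\sigma$-field $\mathcal{G}_{t-1}$ of data and weights through iteration $t-1$, with respect to which $\theta_{t-1}$ is measurable while $(\widetilde W_t,U_t)$ are independent. Because $U_t$ is independent of $\widetilde W_t$ and of $\mathcal{G}_{t-1}$ and the $D_i(\theta_{t-1})$ are conditionally centered, the cross terms vanish and
\[
\ev_{t-1}\|\widetilde\zeta_t\|^2=\ev(U_t^2)\,\ev_{t-1}\Big\|B_t^{-1}\sum_{i\in I_t}D_i(\theta_{t-1})\Big\|^2,
\]
where $\ev(U_t^2)$ is a bounded constant.

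Next I would bound the conditional second moment exactly as in Lemma \ref{lemma:some:rate:consistency}\ref{lemma:some:rate:consistency:item2}, but applied to the gradient differences $D_i(\theta)$ instead of the gradients themselves. Fixing $\theta=\theta_{t-1}$, Lemma \ref{lemma:moment:inequality:phi:mixing} applied with exponent $p$ (legitimate since $\sum_t\phi^{1/2}(t)<\infty$ by Assumption \ref{Assumption:A2}\ref{A2:mixing:condition} and $\ev\|D_1(\theta)\|^p<\infty$ by Assumption \ref{Assumption:A1}\ref{A1:moment:condition}) followed by Lyapunov's inequality ($p>2$) yields
\[
\ev_{t-1}\Big\|B_t^{-1}\sum_{i\in I_t}D_i(\theta)\Big\|^2\le\Big(\ev_{t-1}\Big\|B_t^{-1}\sum_{i\in I_t}D_i(\theta)\Big\|^p\Big)^{2/p}\le C B_t^{-1}\big(\ev\|D_1(\theta)\|^p\big)^{2/p}.
\]
A triangle/Jensen estimate gives $\ev\|D_1(\theta)\|^p\le 2^p\,\ev\|\nabla l(Z,\theta)-\nabla l(Z,\theta^*)\|^p=2^p h_p(\theta)$, so combining the last two displays produces $\ev_{t-1}\|\widetilde\zeta_t\|^2\le C B_t^{-1}(h_p(\theta_{t-1}))^{2/p}$.

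Finally I would take total expectation and apply Jensen once more: since $x\mapsto x^{2/p}$ is concave for $p>2$, $\ev[(h_p(\theta_{t-1}))^{2/p}]\le(\ev[h_p(\theta_{t-1})])^{2/p}=\ev^{2/p}[h_p(\theta_{t-1})]$, giving the claim $\ev\|\widetilde\zeta_t\|^2\le C B_t^{-1}\ev^{2/p}[h_p(\theta_{t-1})]$. The main obstacle is making the $\phi$-mixing moment estimate carry the correct $h_p$-scaling rather than merely an absolute constant: Lemma \ref{lemma:moment:inequality:phi:mixing} records only the order $B_t^{p/2}$, so I would recover the factor $\ev\|D_1(\theta)\|^p$ by a homogeneity (normalization) argument, applying the inequality to $D_i(\theta)/(\ev\|D_1(\theta)\|^p)^{1/p}$ and rescaling, and then verify that the two Jensen steps (the $L^p\to L^2$ passage inside $\ev_{t-1}$ and the outer concavity of $x^{2/p}$) compose to give precisely the exponent $2/p$ on $\ev[h_p]$ and a single factor $B_t^{-1}$.
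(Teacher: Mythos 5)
Your proposal is correct, but it reaches the bound by a genuinely different route than the paper. The paper expands $\ev(\|\widetilde{\zeta}_t\|^2)$ into $B_t$ diagonal terms plus the off-diagonal terms $\ev\{[\widetilde{m}_i-\ev_{t-1}(\widetilde{m}_i)]^\top[\widetilde{m}_j-\ev_{t-1}(\widetilde{m}_j)]\}$, bounds each off-diagonal term through the conditional $\phi$-mixing covariance inequality of Lemma \ref{lemma:conditional:difference} (this is where the factor $\phi^{1-2/p}(i-j)h_p^{2/p}(\theta_{t-1})$ enters), and then sums $\sum_{k\geq 1}(1-k/B_t)\phi^{1-2/p}(k)<\infty$ via Assumption \ref{Assumption:A2}\ref{A2:mixing:condition}; the final outer Jensen step is the same as yours. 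You instead work at exponent $p$ throughout: a conditional application of Yokoyama's inequality (Lemma \ref{lemma:moment:inequality:phi:mixing}) to the frozen-$\theta$ block of centered gradient differences, Lyapunov down to $L^2$, then outer Jensen. Everything you need is available: $\sum_t\phi^{1/2}(t)<\infty$ holds by Assumption \ref{Assumption:A2}\ref{A2:mixing:condition}, freezing $\theta=\theta_{t-1}$ is legitimate since $\widetilde{W}_t$ and $U_t$ are independent of the past, and your normalization trick is exactly the right (and necessary) patch for the fact that Lemma \ref{lemma:moment:inequality:phi:mixing} buries the moment dependence inside $C_k$; it is valid because Yokoyama's constant for a stationary sequence with unit $p$-th moment depends only on $p$ and the mixing rate, and both sides of the inequality are $p$-homogeneous. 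Tradeoffs: the paper's pairwise-covariance computation needs only $\sum_k\phi^{1-2/p}(k)<\infty$ and displays the constant explicitly, while your argument is shorter and delivers the stronger conditional $L^p$ bound as a byproduct. (Both derivations produce $B_t^{-1}$; the $B_T^{-1}$ in the statement is a typo.)

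One discrepancy to flag. You center the block average \emph{before} multiplying by $U_t$, working with $U_tB_t^{-1}\sum_{i\in I_t}D_i(\theta_{t-1})$, whereas the paper's proof (and its use in Lemma \ref{lemma:asymptotic:normal:rate:1}) centers \emph{after}: $B_t^{-1}\sum_{i\in I_t}[\widetilde{m}_i-\ev_{t-1}(\widetilde{m}_i)]=U_tA-H(\theta_{t-1})$ with $A=B_t^{-1}\sum_{i\in I_t}[\nabla l(\widetilde{Z}_i,\theta_{t-1})-\nabla l(\widetilde{Z}_i,\theta^*)]$. The two coincide when $U_t\equiv 1$, which is all that Theorem \ref{theorem:asymptotic:normality} requires, but in the bootstrap case they differ by $(U_t-1)H(\theta_{t-1})$, and a direct computation gives $\ev_{t-1}\|U_tA-H(\theta_{t-1})\|^2=\ev(U_t^2)\ev_{t-1}\|A-H(\theta_{t-1})\|^2+\mathrm{Var}(U_t)\|H(\theta_{t-1})\|^2$. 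The last term is of order $\ev\|\theta_{t-1}-\theta^*\|^2\asymp\gamma_{t-1}$ (Lemma \ref{lemma:rate:evDelta2}), which in general exceeds $B_t^{-1}\ev^{2/p}[h_p(\theta_{t-1})]$ by a factor of order $B_t$ (linear regression is an example), so the stated bound actually fails for the paper's centering when $\mathrm{Var}(U_t)>0$: the paper's off-diagonal mixing step silently drops this contribution, because the shared factor $U_t$ is not a function of $\widetilde{Z}_j$ alone and the decoupled conditional mean $(U_t-1)H(\theta_{t-1})$ does not vanish. Your centering is the one for which the lemma is true as stated; just be aware that to feed the paper's bootstrap pipeline you must carry the extra $(U_t-1)H(\theta_{t-1})$ term explicitly and absorb it downstream, which Lemma \ref{lemma:rate:evDelta2} together with the condition $\sum_{j=1}^tj^{-\rho}=o(\sqrt{\sum_{j=1}^tB_j^{-1}})$ in Assumption \ref{Assumption:A3} permits.
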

\begin{proof}
Let  $\widetilde{m}_i=U_t[\nabla l(\widetilde{Z}_i, \theta_{t-1})-\nabla l(\widetilde{Z}_i, \theta^*)]$. Using Lemma \ref{lemma:conditional:difference}, it follows that
\begin{eqnarray*}
&&|\ev_{t-1}\{[\widetilde{m}_i-\ev_{t-1}(\widetilde{m}_i)]^\top [\widetilde{m}_j-\ev_{t-1}(\widetilde{m}_j)]\}|\\
&\leq& C_{p/2}^{2/p}\phi^{1-2/p}(i-j)\ev_{t-1}^{2/p}\{| [\widetilde{m}_i-\ev_{t-1}(\widetilde{m}_i)]^\top [\widetilde{m}_j-\ev_{t-1}(\widetilde{m}_j)]|^{p/2}\}\\
&\leq&  C_{p/2}^{2/p}\phi^{1-2/p}(i-j)\ev_{t-1}^{1/p}(\|\widetilde{m}_i-\ev_{t-1}(\widetilde{m}_i)\|^p)\ev_{t-1}^{1/p}(\|\widetilde{m}_j-\ev_{t-1}(\widetilde{m}_j)\|^p)\\
&\leq&  4C_{p/2}^{2/p}\phi^{1-2/p}(i-j)\ev_{t-1}^{1/p}(\|\widetilde{m}_i\|^p)\ev_{t-1}^{1/p}(\|\widetilde{m}_j\|^p)\\
&=&4C_{p/2}^{2/p}\phi^{1-2/p}(i-j)h^{2/p}_p(\theta_{t-1}).
\end{eqnarray*}
Taking expectation and using Jensen's inequality, we have
\begin{eqnarray*}
|\ev\{[\widetilde{m}_i-\ev_{t-1}(\widetilde{m}_i)]^\top [\widetilde{m}_j-\ev_{t-1}(\widetilde{m}_j)]\}|\leq   4C_{p/2}^{2/p}\phi^{1-2/p}(i-j)\ev^{2/p}[h_p(\theta_{t-1})]
\end{eqnarray*}
As a consequence, we show that
\begin{eqnarray*}
&&\ev(\|\widetilde{\zeta}_t\|^2)\\
&=&\frac{1}{B_t^2}\left(\sum_{i\in I_t}\ev\{\|\widetilde{m}_i-\ev_{t-1}(\widetilde{m}_i)\|^2\}+2\sum_{i\neq j, i,j\in I_t}\ev\{[\widetilde{m}_i-\ev_{t-1}(\widetilde{m}_i)]^\top [\widetilde{m}_j-\ev_{t-1}(\widetilde{m}_j)]\}\right)\\
&\leq& \frac{4C_{p/2}^{2/p}}{B_t}\ev^{2/p}[h_p(\theta_{t-1})]+\frac{8C_{p/2}^{2/p}}{B_t^2}\ev^{2/p}[h_p(\theta_{t-1})]\sum_{i\neq j, i,j\in I_t}\phi^{1-2/p}(i-j).
\end{eqnarray*}
Simple algebra leads to
\begin{eqnarray*}
\frac{1}{B_t}\sum_{i>j, i,j\in I_t}\phi^{1-2/p}(i-j)=\sum_{k=1}^{B_t-1} (1-k/B_t)\phi^{1-2/p}(k),
\end{eqnarray*}
which, by dominated convergence theorem and the statement $\sum_{t=1}^\infty \phi^{1-2/p}(t)\leq \sum_{t=1}^\infty \phi^{1/2-1/p}(t)<\infty$ in Lemma \ref{lemma:old:assumption}, further implies that
\begin{eqnarray*}
\lim_{B_t\to \infty}\frac{1}{B_t}\sum_{i>j, i,j\in I_t}\phi^{1-2/p}(i-j)=\sum_{k=1}^\infty \phi^{1-2/p}(k)\leq \sum_{k=1}^\infty \phi^{1/2-1/p}(k)<\infty.
\end{eqnarray*}
Hence, there is a constant $C>0$ such that $\ev(\|\widetilde{\zeta}_t\|^2)\leq CB_T^{-1}\ev^{2/p}[h_p(\theta_{t-1})].$
\end{proof}


\begin{lemma}\label{lemma:asymptotic:normal:rate:1}
Let  $\epsilon_t=U_t[\widehat{H}_t(W_t, \theta_{t-1})-\widehat{H}_t(W_t,\theta^*)]-\ev_{t-1}[\widehat{H}_t(W_t, \theta_{t-1})-\widehat{H}_t( W_t, \theta^*)]$.
Under Assumptions \ref{Assumption:A0}-\ref{Assumption:A3} and \ref{Assumption:A4}, it follows that $\ev(\|\epsilon_t\|^2)\leq CB_t^{-1}\ev^{2/p}[h_p(\theta_{t-1})]+C\phi^{1-2/p}(B_{t-1})$ for some constant $C>0$. Here $h_p(\theta)=\ev(\|\nabla l(Z, \theta)-\nabla l(Z, \theta^*)\|^p)$.
\end{lemma}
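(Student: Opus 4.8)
The plan is to follow the template of Lemma \ref{lemma:some:rate:consistency}\ref{lemma:some:rate:consistency:item3}, but with the \emph{increment} $\widehat{H}_t(W_t,\theta_{t-1})-\widehat{H}_t(W_t,\theta^*)$ in place of $\widehat{H}_t(W_t,\theta_{t-1})$, so that the resulting second-moment bound carries the extra factor $h_p(\theta_{t-1})$; the ``independent block'' portion is then handed to Lemma \ref{lemma:bound:tilde:zetat}. First I would introduce the ancillary copy $\widetilde W_t$ and define
\[
\widehat{\epsilon}_t=U_t[\widehat{H}_t(\widetilde W_t,\theta_{t-1})-\widehat{H}_t(\widetilde W_t,\theta^*)]-\ev_{t-1}[\widehat{H}_t(W_t,\theta_{t-1})-\widehat{H}_t(W_t,\theta^*)],
\]
which retains the same $\mcF_{t-1}$-measurable centering as $\epsilon_t$ but feeds the independent block into the random part. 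Because $\theta_{t-1}$ is determined by $\mcD_{t-1}^a$ and exactly $B_{t-1}$ observations (the block $J_{t-1}$) separate $\mcD_{t-1}^a$ from the indices in $I_t$, the relevant mixing coefficient is at most $\phi(B_{t-1})$. Applying Lemma \ref{lemma:conditional:difference} with $m=\phi^{-2/p}(B_{t-1})$ to the map $w\mapsto\|U_t[\widehat{H}_t(w,\theta_{t-1})-\widehat{H}_t(w,\theta^*)]-(\text{centering})\|^2$ then gives
\[
|\ev_{t-1}(\|\epsilon_t\|^2)-\ev_{t-1}(\|\widehat{\epsilon}_t\|^2)|\leq C\phi^{1-2/p}(B_{t-1})v_t,\qquad \ev(v_t)\leq C,
\]
where the uniform $p$-th moment bounds required to apply Lemma \ref{lemma:conditional:difference} follow from $\|\widehat{H}_t(W_t,\theta)-\widehat{H}_t(W_t,\theta^*)\|\leq \tfrac{2}{B_t}\sum_{i\in I_t}M(Z_i)$, Assumption \ref{Assumption:A1}\ref{A1:moment:condition}, and $\ev|U_t|^p<\infty$ from Assumption \ref{Assumption:A4}, exactly as in Lemma \ref{lemma:some:rate:consistency}\ref{lemma:some:rate:consistency:item1}.

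Next I would replace the centering by the one matched to the independent block. Setting
\[
\widetilde{\epsilon}_t=U_t[\widehat{H}_t(\widetilde W_t,\theta_{t-1})-\widehat{H}_t(\widetilde W_t,\theta^*)]-\ev_{t-1}[\widehat{H}_t(\widetilde W_t,\theta_{t-1})-\widehat{H}_t(\widetilde W_t,\theta^*)],
\]
the difference $\widehat{\epsilon}_t-\widetilde{\epsilon}_t$ is $\mcF_{t-1}$-measurable and, since $\ev_{t-1}\widehat{H}_t(\widetilde W_t,\theta)=H(\theta)$ for the independent copy, equals $-(e_t-e_t^*)$, where $e_t=\ev_{t-1}[\widehat{H}_t(W_t,\theta_{t-1})]-H(\theta_{t-1})$ and $e_t^*=\ev_{t-1}[\widehat{H}_t(W_t,\theta^*)]-H(\theta^*)$. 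Both obey $\|e_t\|^2,\|e_t^*\|^2\leq \phi^{1-2/p}(B_{t-1})v_t$ by (the argument of) Lemma \ref{lemma:some:rate:consistency}\ref{lemma:some:rate:consistency:item1}, so $\|\widehat{\epsilon}_t-\widetilde{\epsilon}_t\|^2\leq C\phi^{1-2/p}(B_{t-1})v_t$. As $\widehat{\epsilon}_t$ and $\widetilde{\epsilon}_t$ share the same random part, this yields $\ev_{t-1}(\|\widehat{\epsilon}_t\|^2)\leq 2\ev_{t-1}(\|\widetilde{\epsilon}_t\|^2)+C\phi^{1-2/p}(B_{t-1})v_t$. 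It then remains only to bound $\widetilde{\epsilon}_t$, which is precisely the centered block average $\tfrac{1}{B_t}\sum_{i\in I_t}[\widetilde m_i-\ev_{t-1}(\widetilde m_i)]$ with $\widetilde m_i=U_t[\nabla l(\widetilde Z_i,\theta_{t-1})-\nabla l(\widetilde Z_i,\theta^*)]$; this is exactly the quantity whose within-block covariance is estimated in Lemma \ref{lemma:bound:tilde:zetat}, so that argument applies verbatim to give $\ev(\|\widetilde{\epsilon}_t\|^2)\leq CB_t^{-1}\ev^{2/p}[h_p(\theta_{t-1})]$.

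Finally, taking expectations, using the tower property together with $\ev(v_t)\leq C$, and combining the three estimates gives
\[
\ev(\|\epsilon_t\|^2)\leq 2\ev(\|\widetilde{\epsilon}_t\|^2)+C\phi^{1-2/p}(B_{t-1})\leq CB_t^{-1}\ev^{2/p}[h_p(\theta_{t-1})]+C\phi^{1-2/p}(B_{t-1}),
\]
as claimed. The genuinely delicate points are bookkeeping: confirming that the centering common to $\epsilon_t$ and $\widehat{\epsilon}_t$ is $\mcF_{t-1}$-measurable (so the past plays the role of $X$ in Lemma \ref{lemma:conditional:difference}), and checking that the increment structure $\nabla l(\cdot,\theta_{t-1})-\nabla l(\cdot,\theta^*)$ is preserved through each substitution so that the $h_p(\theta_{t-1})$ factor survives into the leading $B_t^{-1}$ term rather than being coarsened into a uniform moment bound. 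I expect this last accounting---ensuring it is the $\theta^*$-centered differences, not the raw gradients, that enter Lemma \ref{lemma:bound:tilde:zetat}---to be the main obstacle.
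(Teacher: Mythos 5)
Your proposal is correct and takes essentially the same route as the paper: swap the real block $W_t$ for the ancillary copy $\widetilde W_t$ via the $\phi$-mixing comparison in Lemma \ref{lemma:conditional:difference} (with the gap $B_{t-1}$ created by the interleaved block $J_{t-1}$), and then bound the resulting independent-block increment average by Lemma \ref{lemma:bound:tilde:zetat}. The only difference is a point of care, not of method: the paper applies Lemma \ref{lemma:conditional:difference} in its $p$-norm form directly to $\ev_{t-1}(\|\epsilon_t\|^2)$ versus $\ev_{t-1}(\|\widetilde{\zeta}_t\|^2)$, silently passing over the fact that the two quantities carry different ($\mcF_{t-1}$-measurable) centerings, whereas you make that centering swap explicit through $\widehat{\epsilon}_t$ and the $e_t$-type bounds --- a legitimate refinement (indeed the cross term even vanishes, since $\ev_{t-1}\widetilde{\epsilon}_t=0$, so no factor of two is needed) that yields the same final estimate.
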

\begin{proof}
Let us define $m_i=U_t[\nabla l(Z_i, \theta_{t-1})-\nabla l(Z_i, \theta^*)]$ and $\widetilde{m}_i=U_t[\nabla l(\widetilde{Z}_i, \theta_{t-1})-\nabla l(\widetilde{Z}_i, \theta^*)]$. Since $\sup_{\theta \in \Theta}\|\nabla l(Z, \theta)\|\leq M(Z)$ by Assumption \ref{Assumption:A1}\ref{A1:moment:condition} and $\epsilon_t=\sum_{i\in I_t} [m_i-\ev_{t-1}(m_i)]/B_t$, it follows that 
\begin{eqnarray*}
\ev^{1/p}(\|\epsilon_t\|^p)\leq \ev^{1/p}(|U_t|^p) \frac{1}{B_T}\sum_{i\in I_t}2\ev^{1/p}(\|m_i\|^p)&\leq& \ev^{1/p}(|U_t|^p) \frac{1}{B_T}\sum_{i\in I_t}2\ev^{1/p}(\|M(Z_i)\|^p)\\
&=& 2\ev^{1/p}(|U|^p) \ev^{1/p}(\|M(Z)\|^p),
\end{eqnarray*}
where Assumption \ref{Assumption:A4} is used.
Similarly, we can show that  $\ev^{1/p}(\|\widetilde{\zeta}_t \|^p)\leq 2\ev^{1/p}(|U|^p) \ev^{1/p}(\|M(Z)\|^p)$, where $\widetilde{\zeta}_t=\sum_{i\in I_t} [\widetilde{m}_i-\ev_{t-1}(\widetilde{m}_i)]/B_t$.
Using Lemma \ref{lemma:conditional:difference}, we have
\begin{eqnarray*}
|\ev_{t-1}(\|\epsilon_t\|^2)-\ev_{t-1}(\|\widetilde{\zeta}_t\|^2)|^{p/2}\leq C_{p/2}\phi^{p/2-1}(B_{t-1}) [\ev_{t-1}(\|\epsilon_t\|^p)+\ev_{t-1}(\|\widetilde{\zeta}_t\|^p)],
\end{eqnarray*}
where $C_{p/2}$ is a constant depending on $p$. Combining the last three inequalities with Lemma \ref{lemma:bound:tilde:zetat}, we show that
\begin{eqnarray*}
\ev(\|\epsilon_t\|^2)&\leq& \ev(\|\widetilde{\zeta}_t\|^2)+C_{p/2}^{2/p}\phi^{1-2/p}(B_{t-1})[\ev^{2/p}(\|\epsilon_t\|^p)+\ev^{2/p}(\|\widetilde{\epsilon}_t\|^p)]\\
&\leq& CB_t^{-1}\ev^{2/p}[h_p(\theta_{t-1})]+C\phi^{1-2/p}(B_{t-1}),
\end{eqnarray*}
for some constant $C>0$.
\end{proof}

\begin{lemma}\label{lemma:asymptotic:normal:rate:2}
Under Assumptions \ref{Assumption:A0}-\ref{Assumption:A3} and \ref{Assumption:A4}, there is a constant $C>0$ such that $\ev(\|\ev_{t-1}[\widehat{H}_t(W_t, \theta^*)]\|^2)\leq C \phi^{2-2/p}(B_{t-1})B_t^{-1}$.
\end{lemma}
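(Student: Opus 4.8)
The plan is to exploit the two features that distinguish $\theta^*$ from a generic $\theta$: first, $H(\theta^*)=\nabla L(\theta^*)=0$ by Assumption \ref{Assumption:A1}\ref{A1:identification}, so $\nabla l(Z_i,\theta^*)$ is centered; second, $\widehat{H}_t(W_t,\theta^*)=B_t^{-1}\sum_{i\in I_t}\nabla l(Z_i,\theta^*)$ is a block average of $B_t$ centered terms, which should contribute a variance-reduction factor $B_t^{-1}$. The goal is therefore to combine the independent-block decoupling (as in Lemma \ref{lemma:some:rate:consistency}\ref{lemma:some:rate:consistency:item1}) with a $\phi$-mixing moment bound applied to the \emph{whole} block sum rather than term by term.

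First I would introduce the ancillary block $\widetilde{W}_t=\{\widetilde{Z}_i, i\in I_t\}$, which is identically distributed to $W_t$ and independent of $W_1,\dots,W_{t-1}$. Since $\ev[\widehat{H}_t(\widetilde{W}_t,\theta^*)]=H(\theta^*)=0$ by Assumption \ref{Assumption:A1}\ref{A1:expectation:H:theta}, independence gives $\ev_{t-1}[\widehat{H}_t(\widetilde{W}_t,\theta^*)]=0$. Applying Lemma \ref{lemma:conditional:difference} with $X=(W_1,\dots,W_{t-1})$, $Y=W_t$, $\widetilde{Y}=\widetilde{W}_t$ and $h=\widehat{H}_t(\cdot,\theta^*)$, and noting that the gap of $B_{t-1}$ observations (the block $J_{t-1}$) separating $W_t$ from the conditioning $\sigma$-field makes the relevant $\phi$-mixing coefficient at most $\phi(B_{t-1})$, yields
\[
\|\ev_{t-1}[\widehat{H}_t(W_t,\theta^*)]\|^p \le C_p\big(\ev_{t-1}[\|\widehat{H}_t(W_t,\theta^*)\|^p]+\ev_{t-1}[\|\widehat{H}_t(\widetilde{W}_t,\theta^*)\|^p]\big)\phi^{p-1}(B_{t-1}).
\]
Taking expectations reduces the problem to bounding $\ev[\|\widehat{H}_t(W_t,\theta^*)\|^p]$ and its tilded analogue.

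The key estimate is then $\ev[\|\widehat{H}_t(W_t,\theta^*)\|^p]=B_t^{-p}\,\ev[\|\sum_{i\in I_t}\nabla l(Z_i,\theta^*)\|^p]\le C_p B_t^{-p/2}$, which follows from Yokoyama's moment inequality (Lemma \ref{lemma:moment:inequality:phi:mixing}) applied to the centered stationary sequence $\nabla l(Z_i,\theta^*)$: Assumption \ref{Assumption:A2}\ref{A2:mixing:condition} supplies $\sum_t\phi^{1/2}(t)<\infty$ and Assumption \ref{Assumption:A1}\ref{A1:moment:condition} the finite $p$-th moment, while $\ev[\nabla l(Z,\theta^*)]=0$ provides the centering. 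The same bound holds for $\widetilde{W}_t$. Substituting gives $\ev(\|\ev_{t-1}[\widehat{H}_t(W_t,\theta^*)]\|^p)\le C B_t^{-p/2}\phi^{p-1}(B_{t-1})$, and a final application of Jensen's inequality through the concave map $x\mapsto x^{2/p}$ converts this $L^p$ bound into $\ev(\|\ev_{t-1}[\widehat{H}_t(W_t,\theta^*)]\|^2)\le \big(C B_t^{-p/2}\phi^{p-1}(B_{t-1})\big)^{2/p}=C B_t^{-1}\phi^{2-2/p}(B_{t-1})$, as claimed.

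The main obstacle, and the step that genuinely sharpens the generic bound $\ev(\|e_t\|^2)\lesssim\phi^{1-2/p}(B_{t-1})$ of Lemma \ref{lemma:some:rate:consistency} to $\phi^{2-2/p}(B_{t-1})B_t^{-1}$, is to resist bounding the block sum term by term: the extra power of $\phi$ and the factor $B_t^{-1}$ are recovered only by keeping the sum $\sum_{i\in I_t}\nabla l(Z_i,\theta^*)$ intact and invoking the $\phi$-mixing moment inequality at the block level, which requires the centering $H(\theta^*)=0$ to be in force throughout.
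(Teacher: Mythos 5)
Your proposal is correct and follows essentially the same route as the paper's own proof: decoupling via the independent ancillary block $\widetilde{W}_t$ together with Lemma \ref{lemma:conditional:difference} (using $\ev_{t-1}[\widehat{H}_t(\widetilde{W}_t,\theta^*)]=H(\theta^*)=0$), the block-level $p$-th moment bound $\ev[\|\widehat{H}_t(W_t,\theta^*)\|^p]\leq C_pB_t^{-p/2}$ from Yokoyama's inequality (Lemma \ref{lemma:moment:inequality:phi:mixing}), and a final Jensen step $\ev(\|\cdot\|^2)\leq \ev^{2/p}(\|\cdot\|^p)$ to obtain the stated $\phi^{2-2/p}(B_{t-1})B_t^{-1}$ rate. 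No gaps.
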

\begin{proof}
Lemma \ref{lemma:conditional:difference} implies that
\begin{eqnarray*}
\|\ev_{t-1}[\widehat{H}_t(W_t, \theta^*)]\|^p&=&\|\ev_{t-1}[\widehat{H}_t(W_t, \theta^*)]-\ev_{t-1}[\widehat{H}_t(\widetilde{W}_t, \theta^*)]\|^p\\
&\leq& C_p \phi^{p-1}(B_{t-1})\left(\ev_{t-1}[\|\widehat{H}_t(W_t, \theta^*)\|^p]+\ev_{t-1}[\|\widehat{H}_t(\widetilde{W}_t, \theta^*)\|^p]\right).
\end{eqnarray*}
Using Lemma \ref{lemma:moment:inequality:phi:mixing} and Assumption \ref{Assumption:A2}\ref{A2:mixing:condition}, we can show that 
\begin{eqnarray*}
\ev[\|\widehat{H}_t(W_t, \theta^*)\|^p]\leq C_pB_t^{-p/2}\; \textrm{ and }\; \ev[\|\widehat{H}_t(\widetilde{W}_t, \theta^*)\|^p]\leq C_pB_t^{-p/2}.
\end{eqnarray*}
As consequence, we have
\begin{eqnarray*}
\ev(\|\ev_{t-1}[\widehat{H}_t(W_t, \theta^*)]\|^2)\leq \ev^{2/p}(\|\ev_{t-1}[\widehat{H}_t(W_t, \theta^*)]\|^p)\leq C \phi^{2-2/p}(B_{t-1})B_t^{-1}.
\end{eqnarray*}
The proof is complete.
\end{proof}


\begin{lemma}\label{lemma:r:theta:bound}
Let $r_\theta=H(\theta)-G(\theta-\theta^*)$.  Under Assumptions \ref{Assumption:A0}-\ref{Assumption:A3} and \ref{Assumption:A4}, it holds that
\begin{eqnarray*}
\sum_{t=2}^T B_{t-1} r_{\theta_{t-1}}=o_P\left(\sqrt{\sum_{t=1}^T B_t}\right).
\end{eqnarray*}
\end{lemma}
\begin{proof}
The definition of $r_{\theta}$ implies that
\begin{eqnarray*}
\|r_\theta\|&=&\|H(\theta)-G(\theta-\theta^*)\|\\
&=&\|\nabla L(\theta)-G(\theta-\theta^*)-\nabla L(\theta^*)\|\\
&=&\left\|\left\{\nabla^2 L(\theta')-G\right\}(\theta-\theta^*)\right\|=\left\|\left\{\nabla^2 L(\theta')-\nabla^2 L(\theta^*)\right\}(\theta-\theta^*)\right\|,
\end{eqnarray*}
where $\theta'=c\theta+(1-c)\theta^*$ for some $c\in [0, 1]$. By Assumption \ref{Assumption:A1}\ref{A1:lipchiz:hession}, we see that $\|r_\theta\|\leq K \|\theta-\theta^*\|^2.$
Here $K$ is a constant free of $\theta$.  By Lemma \ref{lemma:rate:evDelta2} and Assumption \ref{Assumption:A2}, we have
\begin{eqnarray*}
\ev\left(\left\|\sum_{t=2}^T B_{t-1} r_{\theta_{t-1}}\right\|\right)&\leq& K \sum_{t=2}^T B_{t-1}\ev(\|\theta_{t-1}-\theta^*\|^2)\\
&\leq& C\sum_{t=1}^T B_{t}\gamma_t+C\sum_{t=1}^T B_{t}\phi^{1/2-1/p}(B_t):=S_1+S_2.
\end{eqnarray*}
Here the definitions of $S_1$ and $S_2$ are clear from the context.

To handle $S_1$, using the rate conditions in Assumption \ref{Assumption:A2}, we see that
\begin{eqnarray*}
S_1\asymp \sum_{t=1}^T t^{\beta-\rho}\asymp T^{\beta-\rho+1}=o\left(T^{(\beta+1)/2}\right),
\end{eqnarray*}
where  the last equation follows from the rate condition $\beta<2\rho-1$ in Assumption \ref{Assumption:A3}\ref{A3:rate:phi}. 

For $S_2$, the condition $t^{(\beta+1)/(2\beta)}\phi^{1/2-1/p}(t)\leq t^{{(2\rho+\beta+1)}/{\beta}}\phi^{{1}/{2}-{1}/{p}}(t)\to 0$ in Assumption \ref{Assumption:A4}\ref{A3:rate:batch}  implies that 
\begin{eqnarray*}
  \frac{B_t\phi^{1/2-1/p}(B_t)}{t^{\frac{\beta-1}{2}}} \asymp \frac{B_t\phi^{1/2-1/p}(B_t)}{B_t^{\frac{\beta-1}{2\beta}}}=B_t^{\frac{\beta+1}{2\beta}}\phi^{1/2-1/p}(B_t)\to 0.
\end{eqnarray*}
Using the preceding limit and Lemma \ref{lemma:weighted:cesaro:sum:3}, it follows that
\begin{eqnarray*}
S_2\asymp \sum_{t=1}^T B_t\phi^{1/2-1/p}(B_t)=o\left(T^{(\beta+1)/2}\right).
\end{eqnarray*}

Combining the above bounds of $S_1, S_2$ and noting that $\sum_{t=1}^T B_t\asymp T^{\beta+1}$, we complete the proof. 
%
\end{proof}

\begin{lemma}\label{lemma:B:e:bound}
Under Assumptions \ref{Assumption:A0}-\ref{Assumption:A3} and \ref{Assumption:A4}, it holds that
\begin{eqnarray*}
\sum_{t=2}^T B_{t-1}e_t=o_P\left(\sqrt{\sum_{t=1}^T B_t} \right).
\end{eqnarray*}
\end{lemma}
\begin{proof}
By Lemma \ref{lemma:some:rate:consistency}, it holds that
\begin{eqnarray*}
\left\|\sum_{t=2}^T B_{t-1}e_t\right\|\leq \sum_{t=2}^T B_{t-1}\|e_t\| \leq \sum_{t=2}^T B_{t-1} \phi^{1/2-1/p}(B_{t-1})\sqrt{v_t},
\end{eqnarray*}
where $v_t\geq 0$ is a random variable such that $\ev(v_t)\leq C$ for some $C>0$. Taking expectation, we have
\begin{eqnarray*}
\ev\left\{\left\|\sum_{t=2}^T B_{t-1} e_t\right\|\right\}\leq C \sum_{t=2}^T B_{t-1} \phi^{1/2-1/p}(B_{t-1})=o\left(\sqrt{\sum_{t=1}^T B_t} \right).
\end{eqnarray*}
Here the last equation follows from Lemma \ref{lemma:old:assumption}. Hence, we complete the proof.
\end{proof}

\begin{lemma}\label{lemma:B:zeta:bound}
Under Assumptions \ref{Assumption:A0}-\ref{Assumption:A3} and \ref{Assumption:A4}, it holds that
\begin{eqnarray*}
\sum_{t=2}^T B_{t-1}\zeta_t=\sum_{t=1}^T B_{t}U_t\widehat{H}_t(W_t, \theta^*)+o_P\left(\sqrt{\sum_{t=1}^T B_t}\right).
\end{eqnarray*}
\end{lemma}
\begin{proof}
The proof is divided into two steps.

\textbf{Step 1:}  First, we will prove that
\begin{eqnarray}
\sum_{t=2}^T B_{t-1}\zeta_t=\sum_{t=2}^T B_{t-1}U_t\widehat{H}_t(W_t, \theta^*)+o_P\left(\sqrt{\sum_{t=1}^T B_t}\right).\label{eq:lemma:B:zeta:bound:eq:0}
\end{eqnarray}
Consider the decomposition $\zeta_t=U_t\widehat{H}_t(W_t, \theta^*)+\epsilon_t-y_t$, where 
\begin{eqnarray*}
\epsilon_t&=&U_t[\widehat{H}_t(W_t, \theta_{t-1})-\widehat{H}_t(W_t, \theta^*)]-\ev_{t-1}[\widehat{H}_t(W_t, \theta_{t-1})-\widehat{H}_t(W_t, \theta^*)],\\
y_t&=&\ev_{t-1}[\widehat{H}_t(W_t, \theta^*)].
\end{eqnarray*}
It suffices to show 
\begin{eqnarray}
\ev\left(\left\|\sum_{t=2}^T B_{t-1} \epsilon_t\right\|^2 \right)=o\left({\sum_{t=1}^T B_t} \right)\;\textrm{ and }\; \ev\left(\left\|\sum_{t=2}^T B_{t-1} y_t\right\|^2\right)=o\left({\sum_{t=1}^T B_t} \right).\label{eq:lemma:B:zeta:bound:eq:1}
\end{eqnarray}
By Lemma \ref{lemma:asymptotic:normal:rate:1} and the fact that $\ev(\epsilon_j^\top \epsilon_i)=0$ for $i\neq j$, we have
\begin{eqnarray*}
\ev\left(\left\|\sum_{t=2}^T B_{t-1} \epsilon_t\right\|^2 \right)&=&\sum_{t=2}^T B_{t-1}^2 \ev(\|\epsilon_t\|^2)\\
&\leq&C\sum_{t=1}^T B_t\ev^{2/p}\left\{h_p(\theta_{t-1})\right\}+C\sum_{t=1}^T B_{t-1}^2 \phi^{1-2/p}(B_{t-1})\\
&=&C\sum_{t=1}^T B_t\ev^{2/p}\left\{h_p(\theta_{t-1})\right\}+C\sum_{t=1}^T B_{t-1}  \left(B_{t-1}\phi^{1-2/p}(B_{t-1})\right).
\end{eqnarray*}
Here $h_p(\theta)\geq 0$ is defined in Lemma \ref{lemma:bound:h} such that $\lim_{\theta \to \theta^*}h_p(\theta)=0$ and $\sup_{\theta\in \Theta}h_p(\theta)<\infty$. Since $\theta_t\cae \theta^*$ by Lemma \ref{lemma:consistency}, Lebesgue dominated convergence theorem implies that $\ev^{2/p}\{h_p(\theta_{t-1})\}\to 0$. Moreover, the statement $\sum_{t=1}^\infty B_t\phi^{1/2-1/p}(B_t)<\infty$ in Lemma \ref{lemma:old:assumption} and Lemma \ref{lemma:comparison:test} imply that $\lim_{t\to \infty}B_t\phi^{1-2/p}(B_t)\leq \lim_{t\to \infty}B_t\phi^{1/2-1/p}(B_t)=0$. Using Lemma \ref{lemma:weighted:cesaro:sum} and the preceding display, we verify the first inequality in (\ref{eq:lemma:B:zeta:bound:eq:1}).

Using  Lemma \ref{lemma:asymptotic:normal:rate:2}, we have
\begin{eqnarray*}
\ev\left(\left\| \sum_{t=2}^T  B_{t-1}y_t\right\|^2\right)&\leq& \sum_{i=2}^T \sum_{j=2}^T B_{i-1}B_{j-1} \ev(\|y_i\|\|y_j\|)\\
&\leq& C\sum_{i=2}^T \sum_{j=2}^T  B_{i-1}B_{j-1} \phi^{1-1/p}(B_{j-1})\phi^{1-1/p}(B_{i-1})B_j^{-1/2}B_i^{-1/2} \\
&= &C\left\{\sum_{j=2}^T   B_{j-1}\left(\phi^{1-1/p}(B_{j-1})B_j^{-1/2}\right)\right\}^2\\
&\leq& C\left\{\sum_{t=2}^T   B_{t}^{1/2}\phi^{1-1/p}(B_{t})\right\}^2=o\left(\sum_{t=1}^T B_t\right).
\end{eqnarray*}
Here the last equation follows from Lemma \ref{lemma:old:assumption}. Hence, the second inequality in (\ref{eq:lemma:B:zeta:bound:eq:1}) holds.

\textbf{Step 2:}  By direct examination, it holds that
\begin{eqnarray*}
\left\|\sum_{t=2}^T B_{t-1}U_t\widehat{H}_t(W_t, \theta^*)-\sum_{t=1}^T B_{t}U_t\widehat{H}_t(W_t, \theta^*)\right\|\leq \sum_{t=1}^T (B_t-B_{t-1})\left\|U_t\widehat{H}_t(W_t, \theta^*)\right\|,
\end{eqnarray*}
where $B_0=0$. Taking expectation and using Lemma \ref{lemma:moment:inequality:phi:mixing}, we see that
\begin{eqnarray*}
&&\ev\left\{\left\|\sum_{t=2}^T B_{t-1}U_t\widehat{H}_t(W_t, \theta^*)-\sum_{t=1}^T B_{t}U_t\widehat{H}_t(W_t, \theta^*)\right\|\right\}\\
&\leq&C\sum_{t=1}^T (B_t-B_{t-1})B_t^{-1/2}\asymp \sum_{t=1}^T t^{\beta-1} t^{-\beta/2}\asymp T^{\beta/2}=o\left(T^{(\beta+1)/2}\right).
\end{eqnarray*}
Combining the preceding display and (\ref{eq:lemma:B:zeta:bound:eq:0}) and noting that  $\sum_{t=1}^T B_t\asymp T^{\beta+1}$, we complete the proof.
\end{proof}

\begin{lemma}\label{lemma:B:theta:difference:bound}
Under Assumptions \ref{Assumption:A0}-\ref{Assumption:A3} and \ref{Assumption:A4}, it holds that
\begin{eqnarray*}
\sum_{t=2}^T B_{t-1}\frac{\theta_t-\theta_{t-1}}{\gamma_t}=\left(\sqrt{\sum_{t=1}^T B_t}\right).
\end{eqnarray*}
\end{lemma}
\begin{proof}
Let $a_{t-1}=B_{t-1}/\gamma_t$. Applying the Abel summation, we get
\begin{eqnarray*}
\sum_{t=2}^T B_{t-1}\frac{\theta_t-\theta_{t-1}}{\gamma_t}&=&\sum_{t=2}^T a_{t-1}(\theta_t-\theta_{t-1})\\
&=&\sum_{t=1}^{T-1} a_{t}(\Delta_{t+1}-\Delta_{t})\\
&=&-\sum_{t=2}^{T-1}\Delta_t(a_t-a_{t-1})+a_{T-1}\Delta_{T}-a_1 \Delta_1:=-S_1+S_2-S_3.
\end{eqnarray*}
Here the definitions of $S_1, S_2, S_3$ are clear from the context. 

First, let us investigate $S_1$. By direct examination, it holds that
\begin{eqnarray*}
S_1=\sum_{t=2}^{T-1}\Delta_t(a_t-a_{t-1})&=&\sum_{t=2}^{T-1}\Delta_t\left(\frac{B_t}{\gamma_{t+1}}-\frac{B_{t-1}}{\gamma_{t}}\right)\\
&=&\sum_{t=2}^{T-1}\Delta_t\left(\frac{B_t}{\gamma_{t+1}}-\frac{B_{t}}{\gamma_{t}}\right)+\sum_{t=2}^{T-1}\Delta_t\left(\frac{B_t}{\gamma_{t}}-\frac{B_{t-1}}{\gamma_{t}}\right)\\
&:=&S_{11}+S_{12},
\end{eqnarray*}
where the definitions of $S_{11}, S_{12}$ are clear. Using Lemma \ref{lemma:rate:evDelta2} and Assumption \ref{Assumption:A2}, we obtain that
\begin{eqnarray}
\ev(\|S_{11}\|)&\leq&\sum_{t=2}^{T-1}\sqrt{\ev(\|\Delta_t\|^2)}B_t\left(\frac{1}{\gamma_{t}}-\frac{1}{\gamma_{t+1}}\right)\nonumber\\
&\leq&C \sum_{t=2}^{T-1} \gamma_t^{1/2} B_t\left(\frac{1}{\gamma_{t}}-\frac{1}{\gamma_{t+1}}\right)+C \sum_{t=2}^{T-1} \phi^{\frac{1}{4}-\frac{1}{2p}}(B_t) B_t\left(\frac{1}{\gamma_{t}}-\frac{1}{\gamma_{t+1}}\right)\nonumber\\
&\asymp& \sum_{t=1}^T t^{-\rho/2} t^{\beta} t^{\rho-1}+\sum_{t=1}^T \phi^{\frac{1}{4}-\frac{1}{2p}}(B_t) B_t t^{\rho-1}.\label{eq:lemma:B:theta:difference:bound:eq:1}
\end{eqnarray}
Using the condition $\rho+\beta<1$ in Assumption \ref{Assumption:A3}\ref{A3:rate:phi}, we have
\begin{eqnarray*}
 \sum_{t=1}^T t^{-\rho/2} t^{\beta} t^{\rho-1}\asymp  \sum_{t=1}^T  t^{\rho/2+\beta-1}\asymp T^{\rho/2+\beta}=o\left(T^{(\beta+1)/2}\right).
\end{eqnarray*}
Moreover, the condition $t^{{(2\rho+\beta+1)}/{\beta}}\phi^{{1}/{2}-{1}/{p}}(t)\to 0$ in Assumption \ref{Assumption:A4}\ref{A3:rate:batch} implies that 
\begin{eqnarray*}
\frac{B_t \phi^{\frac{1}{4}-\frac{1}{2p}}(B_t)  t^{\rho-1}}{t^{\frac{\beta-1}{2}}}= B_t \phi^{\frac{1}{4}-\frac{1}{2p}}(B_t)  t^{\frac{2\rho-\beta+1}{2}}\asymp B_t^{\frac{2\rho+\beta+1}{2\beta}}\phi^{\frac{1}{4}-\frac{1}{2p} }(B_t)\to 0.
\end{eqnarray*}
Using the preceding limit and Lemma \ref{lemma:weighted:cesaro:sum:3}, it follows that
\begin{eqnarray*}
\sum_{t=1}^T B_t \phi^{\frac{1}{4}-\frac{1}{2p}}(B_t) t^{\rho-1} =o\left(T^{(\beta+1)/2}\right).
\end{eqnarray*}
Hence, the preceding four displays shows that $S_{11}=o_P(T^{(\beta+1)/2})$

Similarly, we also have
\begin{eqnarray*}
\ev(\|S_{12}\|)&\leq&\sum_{t=2}^{T-1}\sqrt{\ev(\|\Delta_t\|^2)} \gamma_t^{-1} (B_t-B_{t-1})\\
&\leq&C \sum_{t=2}^{T-1} \gamma_t^{-1/2} (B_t-B_{t-1})+C\sum_{t=2}^{T-1} \phi^{\frac{1}{4}-\frac{1}{2p}}(B_t)\gamma_t^{-1} (B_t-B_{t-1})  \\
&\asymp& \sum_{t=1}^T t^{\rho/2}t^{\beta-1}+\sum_{t=1}^{T} \phi^{\frac{1}{4}-\frac{1}{2p}}(B_t) t^{\rho+\beta-1}\\
&\asymp& T^{\rho/2+\beta}+\sum_{t=1}^{T}\phi^{\frac{1}{4}-\frac{1}{2p}}(B_t) t^{\beta-1}=o\left(T^{(\beta+1)/2}\right).
\end{eqnarray*}
Here the last equation follows from the same calculation as that in (\ref{eq:lemma:B:theta:difference:bound:eq:1}). Combining the bounds of $S_{11}$ and $S_{12}$, we show that $S_1=o_P(T^{(\beta+1)/2}).$

For $S_2$, Lemma \ref{lemma:rate:evDelta2} tells that
\begin{eqnarray*}
\ev(\|S_2\|)\leq a_{T-1}\sqrt{\ev(\|\Delta_T\|^2)}&\leq& CB_{T}\gamma_T^{-1/2}+CB_T\gamma_T^{-1}\phi^{\frac{1}{4}-\frac{1}{2p}}(B_T)\\
&\asymp& T^{\rho/2+\beta}+ T^{\rho}B_T\phi^{\frac{1}{4}-\frac{1}{2p}}(B_T).
\end{eqnarray*}
Since  $t^{{(2\rho+\beta+1)}/{\beta}}\phi^{{1}/{2}-{1}/{p}}(t)\to 0$ by Assumption \ref{Assumption:A4}\ref{A3:rate:batch}, we see that
\begin{eqnarray*}
B_T\phi^{\frac{1}{4}-\frac{1}{2p}}(B_T)\lesssim B_T^{-\frac{\rho}{\beta}+\frac{1}{2}-\frac{1}{2\beta}}\asymp T^{-\rho+\frac{\beta-1}{2}}.
\end{eqnarray*}
Combing the preceding two displays, we show that
\begin{eqnarray*}
\ev(\|S_2\|)\lesssim T^{\rho/2+\beta}+T^{\frac{\beta-1}{2}}=o\left(T^{(\beta+1)/2}\right).
\end{eqnarray*}

Since $S_3$ is stochastically bounded,  it holds that $S_3=O_P(1)=o_P(T^{(\beta+1)/2}).$ 

Combining the above bounds of $S_i$'s and noting that $\sum_{t=1}^T B_t\asymp T^{\beta+1}$, we complete the proof. 
\end{proof}

\begin{lemma}\label{lemma:asymptotic:expansion}
Under Assumptions \ref{Assumption:A0}-\ref{Assumption:A3} and \ref{Assumption:A4}, it holds that
\begin{eqnarray*}
\sum_{t=1}^T B_t(\theta_t-\theta^*)= \sum_{t=1}^T  U_t B_{t} G^{-1}\widehat{H}_t(W_t, \theta^*)+o_P\left(\sqrt{\sum_{t=1}^T B_t}\right).
\end{eqnarray*}
\end{lemma}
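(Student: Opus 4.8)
The plan is to run the Polyak--Ruppert linearization, now for the generalized recursion (\ref{eq:iteration:general}) under $\phi$-mixing data. First I would use Theorem \ref{theorem:consistency} and the remark following it: since $\theta_t\cae\theta^*$ and the projection $\Pi$ is active only finitely often almost surely, there is an almost surely finite random time after which $\theta_{t}$ lies in the $\delta$-ball around $\theta^*$ and obeys the unprojected recursion $\theta_t=\theta_{t-1}-\gamma_t H(\theta_{t-1})-\gamma_t e_t-\gamma_t\zeta_t$. On that ball Assumption \ref{Assumption:A1}\ref{A1:lipchiz:hession} gives $H(\theta_{t-1})=G(\theta_{t-1}-\theta^*)+r_t$ with $\|r_t\|\le\frac{K}{2}\|\theta_{t-1}-\theta^*\|^2$, so writing $\Delta_t=\theta_t-\theta^*$,
\[
\Delta_t=(I-\gamma_t G)\Delta_{t-1}-\gamma_t(r_t+e_t+\zeta_t).
\]
The finitely many early iterations (before the ball is entered) add only $O_P(1/T)$ to the average, which is negligible because $\sum_t B_t^{-1}\to\infty$.

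Next I would unfold this linear recursion as $\Delta_t=D_1^{t+1}\Delta_0-\sum_{j=1}^t D_{j+1}^{t+1}\gamma_j(r_j+e_j+\zeta_j)$, average over $t=1,\dots,T$, and exchange the order of summation. With $Q_j^T=\gamma_j\gamma_{j+1}^{-1}\overline{D}_{j+1}^{T+1}$ as in Lemma \ref{lemma:expansion:S2}, this expresses $\frac{1}{T}\sum_{t=1}^T\Delta_t$ through the decaying initialization $\frac{1}{T}\sum_{t=1}^T D_1^{t+1}\Delta_0$ and the three weighted sums $\frac{1}{T}\sum_{j=1}^T Q_j^T r_j$, $\frac{1}{T}\sum_{j=1}^T Q_j^T e_j$, and $\frac{1}{T}\sum_{j=1}^T Q_j^T\zeta_j$. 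The noise piece is precisely the object of Lemma \ref{lemma:expansion:S2}, which replaces $Q_j^T$ by $G^{-1}$ and collapses $\zeta_j$ to $U_j\widehat{H}_j(\theta^*,W_j)$, identifying it with $\frac{1}{T}\sum_{t=1}^T G^{-1}U_t\widehat{H}_t(W_t,\theta^*)$ up to a remainder $R_n$ with $\ev\|R_n\|^2=o(\sum_{t=1}^T B_t^{-1}/T^2)$; Markov's inequality makes $R_n=o_P(\sqrt{\sum_t B_t^{-1}}/T)$. This furnishes the leading term of the statement, and it remains to show the initialization and the $r_j$, $e_j$ sums are of order $o_P(\sqrt{\sum_t B_t^{-1}}/T)$.

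For the initialization, Lemma \ref{lemma:lemma:1:polyak}\ref{lemma:lemma:1:polyak:item4} gives $\|D_1^{t+1}\|\le\exp(-\lambda_G\sum_{k=1}^t(k+\gamma)^{-\rho})\lesssim\exp(-ct^{1-\rho})$, which is summable, so $\frac{1}{T}\sum_t D_1^{t+1}\Delta_0=O(1/T)$ and is negligible. For the bias sum I would use $\|Q_j^T\|\le C$ (Lemma \ref{lemma:lemma:1:polyak}\ref{lemma:lemma:1:polyak:item1}, with $\gamma_j/\gamma_{j+1}$ bounded) together with $\ev\|e_j\|\le\ev^{1/2}(\|e_j\|^2)\lesssim\phi^{1/2-1/p}(B_{j-1})$ from Lemma \ref{lemma:some:rate:consistency}\ref{lemma:some:rate:consistency:item1}, giving $\ev\|\frac{1}{T}\sum_j Q_j^T e_j\|\lesssim\frac{1}{T}\sum_{j=1}^T\phi^{1/2-1/p}(B_{j-1})$, which is $o(\sqrt{\sum_t B_t^{-1}}/T)$ by the condition $\sum_{j=1}^t\phi^{1/2-1/p}(B_j)=o(\sqrt{\sum_{j=1}^t B_j^{-1}})$ in Assumption \ref{Assumption:A3}. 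For the Taylor sum, combining $\|r_j\|\lesssim\|\Delta_{j-1}\|^2$ with the $L^2$-rate $\ev\|\Delta_{j-1}\|^2\le C\gamma_{j-1}\asymp j^{-\rho}$ from Lemma \ref{lemma:rate:evDelta2} gives $\ev\|\frac{1}{T}\sum_j Q_j^T r_j\|\lesssim\frac{1}{T}\sum_{j=1}^T j^{-\rho}$, again $o(\sqrt{\sum_t B_t^{-1}}/T)$ by the condition $\sum_{j=1}^t j^{-\rho}=o(\sqrt{\sum_{j=1}^t B_j^{-1}})$ in Assumption \ref{Assumption:A3}. Markov's inequality upgrades each $L^1$ bound to $o_P$, and summing the four pieces gives the claim.

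I expect the genuinely delicate work to already be isolated in Lemma \ref{lemma:expansion:S2} (the collapse of the martingale sum with time-varying gain matrices $Q_j^T$ to the clean $G^{-1}$-scaled partial sums). Within the present lemma the hardest step is the Taylor-remainder bound: it is where the quadratic linearization error, accumulated at the iterate $L^2$-rate $\gamma_j\asymp j^{-\rho}$, must be dominated by the target normalization $\sqrt{\sum_t B_t^{-1}}$, which is exactly the role of the rate condition $\sum_{j=1}^t j^{-\rho}=o(\sqrt{\sum_{j=1}^t B_j^{-1}})$ in Assumption \ref{Assumption:A3}. The only other bookkeeping subtlety is keeping track of the index shift $B_{j-1}$ versus $B_j$ in the bias bound, which is harmless under the monotonicity of $B_t$.
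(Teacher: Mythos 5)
Your proposal is correct and follows essentially the same route as the paper's proof: the same removal of the projection via a.s.\ consistency, the same unfolding into initialization, bias ($e_j$), noise ($\zeta_j$), and Taylor-remainder sums, with the noise term delegated to Lemma \ref{lemma:expansion:S2}, the gain matrices controlled by Lemma \ref{lemma:lemma:1:polyak}, and the bias and Taylor sums bounded exactly as in the paper via Lemmas \ref{lemma:some:rate:consistency} and \ref{lemma:rate:evDelta2} together with the rate conditions of Assumption \ref{Assumption:A3}. Your handling of the finitely many iterates outside the $\delta$-ball matches the paper's split of the $S_4$ sum, so no gap remains.
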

\begin{proof}
By (\ref{eq:sketch:proof:asymptotic:normal:eq:1}), we have
\begin{eqnarray}
&&\sum_{t=2}^T B_{t-1}G(\theta_{t-1}-\theta^*)\nonumber\\
&=&\sum_{t=2}^T B_{t-1}\frac{\theta_{t-1}-\theta_t}{\gamma_t}-\sum_{t=2}^T B_{t-1}e_t-\sum_{t=2}^T B_{t-1}\zeta_t-\sum_{t=2}^T B_{t-1} r_{\theta_{t-1}}+\sum_{t=2}^T \frac{B_{t-1}\xi_t}{\gamma_t}. \label{eq:lemma:asymptotic:expansion:eq:0}
\end{eqnarray}

Since $\theta_t \cae \theta^*$ by Lemma \ref{lemma:consistency}, and $\theta^*$ is in the interior of $\Theta$ due to Assumption \ref{Assumption:A1}\ref{A1:identification}, the projection only happens finite times. If we define $\tau=\sup\{t\geq 1: \xi_t\neq 0\}$, then it follows that $\pr(\tau<\infty)=1$. Therefore, we conclude that
\begin{eqnarray*}
\pr\left(\sum_{t=2}^T \frac{B_{t-1}\xi_t}{\gamma_t}<\infty\right)\leq \pr\left(\sum_{t=2}^\infty \frac{B_{t-1}\|\xi_t\|}{\gamma_t}<\infty\right)=\pr\left(\sum_{t=2}^\tau \frac{B_{t-1}\|\xi_t\|}{\gamma_t}<\infty\right)=1.
\end{eqnarray*}
By the preceding display, (\ref{eq:lemma:asymptotic:expansion:eq:0}) and Lemmas \ref{lemma:r:theta:bound}-\ref{lemma:B:theta:difference:bound}, we have
\begin{eqnarray}
\sum_{t=1}^{T-1} B_{t}G(\theta_{t}-\theta^*)=\sum_{t=2}^T B_{t-1}G(\theta_{t-1}-\theta^*)=\sum_{t=1}^T B_{t}U_t \widehat{H}_t(W_t, \theta^*)+o_P\left(\sqrt{\sum_{t=1}^T B_t}\right).\label{eq:lemma:asymptotic:expansion:eq:1}
\end{eqnarray}
Moreover, Lemma \ref{lemma:rate:evDelta2}  and Assumption \ref{Assumption:A2} tell that 
\begin{eqnarray*}
\ev\left\{B_T^2\|\theta_T-\theta^*\|^2)\right\}\leq CB_T^2\gamma_T.+CB_T^2\phi^{\frac{1}{2}-\frac{1}{p}}(B_T).
\end{eqnarray*}
Because of the condition $t^{{(2\rho+\beta+1)}/{\beta}}\phi^{{1}/{2}-{1}/{p}}(t)\to 0$ in Assumption \ref{Assumption:A4}\ref{A3:rate:batch}, we see that
\begin{eqnarray*}
B_T^2\phi^{\frac{1}{2}-\frac{1}{p}}(B_T)\lesssim B_T^{-\frac{2\rho}{\beta}+1-\frac{1}{\beta}}\asymp T^{-2\rho+\beta-1}.
\end{eqnarray*}
Combining the preceding two displays, we conclude that
\begin{eqnarray}
\ev\left\{B_T^2\|\theta_T-\theta^*\|^2)\right\}\lesssim T^{2\beta-\rho}+T^{-2\rho+\beta-1}=o\left(T^{\beta+1}\right).\label{eq:lemma:asymptotic:expansion:eq:2}
\end{eqnarray}
Here the last equation is due to $\beta<\rho+1$ from Assumption \ref{Assumption:A3}\ref{A3:rate:phi}.

Finally, using (\ref{eq:lemma:asymptotic:expansion:eq:1}), (\ref{eq:lemma:asymptotic:expansion:eq:2}), and  the fact  $\sum_{t=1}^T B_t\asymp T^{\beta+1}$, we complete the proof.
\end{proof}

\begin{lemma}\label{lemma:pre:gaussian}
Suppose that Assumptions \ref{Assumption:A0}-\ref{Assumption:A3} hold. Then it follows that
\begin{eqnarray*}
\frac{1}{\sqrt{2\sum_{t=1}^T B_t}}\sum_{t=1}^T B_t\left\{\widehat{H}_t^a(W_t^a, \theta^*)+\widehat{H}_t^b(W_t^b, \theta^*)\right\}\cid N(0,  V),
\end{eqnarray*}
where $V= r(0)+2\sum_{k=1}^\infty r(k)$.
\end{lemma}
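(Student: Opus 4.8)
The plan is to prove the statement by the Cramér--Wold device together with a central limit theorem for a weighted sum of $\phi$-mixing vectors, which I would obtain through a big-block/small-block (Bernstein) decomposition and the $\phi$-mixing coupling of Lemmas \ref{lemma:conditional:difference:measure}--\ref{lemma:conditional:difference}. Write $g_i=\nabla l(Z_i,\theta^*)$; by Assumptions \ref{Assumption:A1}\ref{A1:identification} and \ref{Assumption:A1}\ref{A1:expectation:H:theta} this is a stationary, mean-zero, $\phi$-mixing sequence with $\ev[g_{t+k}g_t^\top]=r(k)$. Since the index blocks $I_1,J_1,I_2,J_2,\dots$ are contiguous, the quantity of interest is a weighted sum of a single stretch of observations,
\[
S_T:=\sum_{t=1}^T[\widehat{H}_t(W_t^a,\theta^*)+\widehat{H}_t(W_t^b,\theta^*)]=\sum_{i=1}^{N}w_i g_i,\qquad N=2\sum_{t=1}^TB_t,
\]
where $w_i=B_t^{-1}$ for every $i$ in the $t$-th double block $I_t\cup J_t$. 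Fixing an arbitrary $\lambda\in\mathbb{R}^d$, it then suffices to prove $(\sum_{t}B_t^{-1})^{-1/2}\lambda^\top S_T\cid N(0,\lambda^\top V\lambda)$.

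First I would pin down the limiting variance. Grouping the cross-products by lag gives $\textrm{Var}(\lambda^\top S_T)=\sum_k(\lambda^\top r(k)\lambda)\sum_j w_{j+k}w_j$. By Cauchy--Schwarz $0\le\sum_j w_{j+k}w_j\le\sum_j w_j^2=2\sum_t B_t^{-1}$, and for each fixed $k$ the ratio $\sum_j w_{j+k}w_j/\sum_j w_j^2\to1$, because the indices at which $w_{j+k}\ne w_j$ are confined to the $O(|k|)$ positions straddling each block boundary and contribute only $o(\sum_t B_t^{-1})$ once $B_t\uparrow\infty$. Dominated convergence against the summable sequence $\{\|r(k)\|\}$ from Lemma \ref{lemma:bound:covariance:sum} then yields $\textrm{Var}(\lambda^\top S_T)/\sum_t B_t^{-1}\to 2\lambda^\top(\sum_{k=-\infty}^\infty r(k))\lambda=\lambda^\top V\lambda$, using $\lambda^\top r(k)^\top\lambda=\lambda^\top r(k)\lambda$ to recover the form $V=2r(0)+4\sum_{k\ge1}r(k)$. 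This also shows that the covariances between adjacent $a$- and $b$-blocks are genuine contributions to $V$, so no separate handling of the $a$/$b$ interaction is required.

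For the limiting law I would cut the contiguous range $\{1,\dots,N\}$ into alternating big blocks and small (gap) blocks, with the gap lengths $\ell_t$ chosen so that $\ell_t\to\infty$ yet $\ell_t=o(B_t)$. The small blocks then carry total weighted variance of order $\sum_t\ell_t B_t^{-2}=o(\sum_t B_t^{-1})$ and are asymptotically negligible. Each big-block sum is a functional of an observation stretch separated from the others by a gap, so the $\phi$-mixing coupling of Lemma \ref{lemma:conditional:difference} lets me replace them by independent copies with an error governed by $\sum_t\phi^{1-1/p}(\ell_t)$, which vanishes under the mixing conditions of Assumptions \ref{Assumption:A2}\ref{A2:mixing:condition} and \ref{Assumption:A3}. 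I would finish with the Lindeberg--Feller theorem for the resulting independent triangular array: each normalized big-block sum has variance $O(B_t^{-1}/\sum_s B_s^{-1})\to0$, and Lyapunov's condition holds because the Yokoyama moment inequality of Lemma \ref{lemma:moment:inequality:phi:mixing}, combined with the moment bound in Assumption \ref{Assumption:A1}\ref{A1:moment:condition}, gives $\ev|\lambda^\top\sum_{i\in\textrm{block}}w_i g_i|^{p}\lesssim B_t^{-p/2}$, so that $\sum_t B_t^{-p/2}=o((\sum_t B_t^{-1})^{p/2})$ precisely because $p>2$ and $B_t\to\infty$.

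The main obstacle will be the coupling step. Because the double blocks $I_t\cup J_t$ are contiguous with no intrinsic gaps, the block statistics cannot be treated as independent directly; the Bernstein gap lengths $\ell_t$ must simultaneously satisfy $\ell_t\to\infty$ with $\sum_t\phi(\ell_t)$ (more precisely $\sum_t\phi^{1-1/p}(\ell_t)$) summable, so that the accumulated coupling error is controlled, and $\ell_t=o(B_t)$, so that the discarded mass is variance-negligible. Verifying that Assumption \ref{Assumption:A3} indeed admits such a choice, and that the coupling error together with the residual cross-block covariances remains $o(\sqrt{\sum_t B_t^{-1}})$ after normalization, is the delicate part; the remaining Lindeberg bookkeeping is routine given Lemma \ref{lemma:moment:inequality:phi:mixing} and the variance computation above.
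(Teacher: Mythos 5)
Your variance computation is correct and is essentially an observation-level rewriting of what the paper does block-wise: the paper splits $\ev\big(\big|\sum_t\{\widehat{H}_t(W_t^a,\theta^*)+\widehat{H}_t(W_t^b,\theta^*)\}\big|^2\big)$ into within-double-block terms $S_1$, adjacent-block covariances $S_2$, and distant-block covariances $S_3$, kills $S_2$ and $S_3$ using the mixing bound together with Lemmas \ref{lemma:weighted:cesaro:sum} and \ref{lemma:weighted:cesaro:sum:2}, and gets $S_1/\sum_t B_t^{-1}\to V$ by a Ces\`aro argument; your lag-by-lag argument with the weights $w_i$ reaches the same limit (using Lemma \ref{lemma:bound:covariance:sum} for domination) and handles the $a$/$b$ interaction automatically, which is a mild simplification. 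Your Lyapunov bound $\ev\big|\lambda^\top\sum_{i\in\mathrm{block}}w_ig_i\big|^p\lesssim B_t^{-p/2}$ via Lemma \ref{lemma:moment:inequality:phi:mixing} is exactly the paper's Lindeberg estimate. The genuinely different element is your explicit Bernstein coupling: the paper verifies variance convergence and the Lindeberg condition for the \emph{dependent} double-block sums $Y_{T,t}$ and stops there, leaving the asymptotic independence of the blocks implicit, so your instinct to make this step explicit is the right one.

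However, your coupling step, as parametrized, has a genuine gap: the three requirements you impose on the gaps --- $\ell_t\to\infty$, $\ell_t=o(B_t)$ (so the discarded variance $\sum_t\ell_tB_t^{-2}$ is negligible), and $\sum_t\phi^{1-1/p}(\ell_t)<\infty$ (so the accumulated coupling error is controlled) --- are \emph{not} implied by Assumptions \ref{Assumption:A0}--\ref{Assumption:A3} and can be simultaneously unsatisfiable, because those assumptions constrain $\phi$ only along the sequence $B_t$, not at sub-block scales. Concretely, take $\phi(k)=e^{-ck}$, $B_t=\lceil b\log t\rceil$, $p=4$, $\rho=0.6$, $cb=2.5$: every condition in Assumptions \ref{Assumption:A2} and \ref{Assumption:A3} holds (e.g., $\phi^{1/2-1/p}(B_t)\asymp t^{-0.625}=O(t^{-\rho})$ and $\sum_t\phi^{1-1/p}(B_t)\asymp\sum_t t^{-1.875}<\infty$), yet summability of $\phi^{1-1/p}(\ell_t)=e^{-0.75c\,\ell_t}$ forces $\ell_t\geq(1-\delta)\log t/(0.75c)\approx 0.53(1-\delta)B_t$ for all $t$ outside a set so sparse that its contribution to $\sum_{t\le T}B_t^{-1}\asymp T/\log T$ is negligible; consequently $\sum_{t\le T}\ell_tB_t^{-2}$ is a non-vanishing fraction of $\sum_{t\le T}B_t^{-1}$, and the small blocks you discard are not variance-negligible. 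So the feasibility question you defer to the end is not merely delicate --- it can fail. The repair is to decouple the Bernstein blocks from the algorithm's blocks: for each $T$, cut the whole range $\{1,\dots,N\}$, $N=2\sum_{t\le T}B_t$, into big blocks of length $P_T$ and gaps of length $q_T$ chosen \emph{globally}, with $q_T=o(P_T)$ and $(N/P_T)\,\phi(q_T)\to0$, allowing blocks to straddle many sets $I_t\cup J_t$; the weights are locally nearly constant, so your variance and Lyapunov computations go through essentially verbatim (in the example above $P_T=\sqrt{T}$, $q_T=\log^2T$ works). With that modification your argument is sound, and indeed more complete than the paper's own write-up.
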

\begin{proof}
First, we notice that
\begin{eqnarray*}
B_t\left\{\widehat{H}_t^a(W_t^a, \theta^*)+\widehat{H}_t^b(W_t^b, \theta^*)\right\}=\sum_{i\in I_t\cup J_t} \nabla l(Z_i, \theta^*),
\end{eqnarray*}
which is a sum of a $\phi$-mixing sequence. Using the CLT for mixing sequence (e.g., Theorem 2.21 in \citealp{fanYao2003}) and the statement $\sum_{t=1}^\infty \phi^{1/2-1/p}(t)<\infty$ in Lemma \ref{lemma:old:assumption}, we complete the proof.
\end{proof}

\begin{proof}[\textbf{Proof of Theorem \ref{theorem:asymptotic:normality}}]
 By Lemma \ref{lemma:asymptotic:expansion} with $U_t=1$, we obtain that
\begin{eqnarray*}
\sum_{t=1}^T B_t(\theta_t^a-\theta^*)&=&\sum_{t=1}^T  B_{t} G^{-1}\widehat{H}_t(W_t^a, \theta^*)+o_P\left(\sqrt{\sum_{t=1}^T B_t}\right),\\
\sum_{t=1}^T B_t(\theta_t^b-\theta^*)&=&\sum_{t=1}^T  B_{t} G^{-1}\widehat{H}_t(W_t^b, \theta^*)+o_P\left(\sqrt{\sum_{t=1}^T B_t}\right).
\end{eqnarray*}
Therefore, we have
\begin{eqnarray*}
\overline{\theta}_T-\theta^*=\frac{1}{2\sum_{t=1}^T B_t}\sum_{t=1}^T  B_t G^{-1}\left\{\widehat{H}_t^a(W_t^a, \theta^*)+\widehat{H}_t^b(W_t^b, \theta^*)\right\}+o_P\left\{\left(\sum_{t=1}^T B_t\right)^{-1/2}\right\}.
\end{eqnarray*}
Applying Lemma \ref{lemma:pre:gaussian}, we complete the proof.
\end{proof}

\subsection{\textbf{Proof of Theorem \ref{theorem:bootstrap:asymptotic}}}\label{section:proof:bootstrap}
\begin{lemma}\label{lemma:variance:asymptotic}
 Suppose that Assumptions \ref{Assumption:A0}-\ref{Assumption:A3} hold. Then it follows that
\begin{eqnarray*}
\frac{1}{2\sum_{t=1}^T B_t}\sum_{t=1}^T \ev\left(B_t^2 \left\{\widehat{H}_t^a(W_t^a, \theta^*)+\widehat{H}_t^b(W_t^b, \theta^*)\right\}^2\right)\to r(0)+2\sum_{k=1}^\infty r(k).
\end{eqnarray*}
\end{lemma}
\begin{proof}
For simplicity, we assume $d=1$. The extension to $d>1$ can be made using the transformation $\widehat{H}_t^a+\widehat{H}_t^b\to v^\top (\widehat{H}_t^a+\widehat{H}_t^b)$ for $v\in \mathbb{R}^d$ and Cram\'er–Wold theorem.
\end{proof}

Let us define $D_t=I_t\cup J_t$, $u_i=\nabla l(Z_i, \theta^*)$, and $r(t)=\ev(u_{i+t} u_i)$.  By direct examination, it holds that
\begin{eqnarray*}
\ev\left\{\left(\sum_{i\in D_t}u_i\right)^2\right\}&=&2B_t r(0)+2\sum_{k=1}^{2B_t} (2B_t-k)r(k).
\end{eqnarray*}
Since $\sum_{m=1}^\infty |r(m)|<\infty$ by Lemma \ref{lemma:bound:covariance:sum}, dominated convergence theorem  implies that
\begin{eqnarray*}
\lim_{t\to \infty} \frac{1}{2B_{t}} \ev\left\{\left(\sum_{i\in D_t}u_i\right)^2\right\}=r(0)+ \lim_{t\to \infty}2\sum_{k=1}^{2B_t} \left(1-\frac{k}{2B_t}\right)r(k)=r(0)+2\sum_{k=1}^\infty r(k).
\end{eqnarray*}
Combining the preceding display with Lemma \ref{lemma:weighted:cesaro:sum}, we conclude that
\begin{eqnarray*}
\frac{1}{2\sum_{t=1}^T B_t}\sum_{t=1}^T \ev\left(B_t^2 \left\{\widehat{H}_t^a(W_t^a, \theta^*)+\widehat{H}_t^b(W_t^b, \theta^*)\right\}^2\right)&=&\frac{1}{2\sum_{t=1}^T B_t}\sum_{t=1}^T \ev\left\{\left(\sum_{i\in D_t}u_i\right)^2\right\}\\
&=&\frac{1}{2\sum_{t=1}^T B_t}\sum_{t=1}^T 2B_t \frac{ \ev\left\{\left(\sum_{i\in D_t}u_i\right)^2\right\}}{2B_t}\\
&\to& r(0)+2\sum_{k=1}^\infty r(k).
\end{eqnarray*}
Hence, we complete the proof.

\begin{proof}[\textbf{Proof of Theorem \ref{theorem:bootstrap:asymptotic}}]
Let $\ev^*(\cdot)$ denote the conditional expectation given $\mcD_T$, and let $v_t=B_t\{\widehat{H}_t^a(W_t^a, \theta^*)+\widehat{H}_t^b(W_t^b, \theta^*)\}$. Then it follows from Lemma \ref{lemma:asymptotic:expansion} with $U_t=V_t$ that
\begin{eqnarray*}
\sqrt{2\sum_{t=1}^T B_t}(\overline{\theta}_T^*-\overline{\theta}_T)&=&\frac{1}{\sqrt{2\sum_{t=1}^T B_t}}\sum_{t=1}^T (V_t-1)G^{-1}B_t\left\{\widehat{H}_t^a(W_t^a, \theta^*)+\widehat{H}_t^b(W_t^b, \theta^*)\right\}+o_P(1)\\
&=&\frac{1}{\sqrt{2\sum_{t=1}^T B_t}}\sum_{t=1}^T (V_t-1)G^{-1}v_t+o_P(1).
\end{eqnarray*}
It suffices to study the limiting behavior of $Y_T:=\sum_{t=1}^T (V_t-1)v_t/\sqrt{2\sum_{t=1}^T B_t}$. Let $A_0=\{u\in \mathbb{R}^d : \|u\|=1\}$, and we can verify  that
\begin{eqnarray*}
\ev^*(|u^\top Y_T|^2)=u^\top \left(\frac{1}{2\sum_{t=1}^T B_t}\sum_{t=1}^T v_tv_t^\top\right)u.
\end{eqnarray*}
Lemma \ref{lemma:moment:inequality:phi:mixing} implies that
\begin{eqnarray*}
\ev(\|v_tv_t^\top-\ev(v_tv_t^\top)\|^2)\leq \ev(\|v_t\|^4)\leq CB_t^{2},
\end{eqnarray*}
which, by Assumption \ref{Assumption:A2}, further implies that
\begin{eqnarray*}
\sum_{t=1}^\infty \frac{\ev(\|v_tv_t^\top-\ev(v_tv_t^\top)\|^2)}{(\sum_{j=1}^t B_j)^2}\leq C\sum_{t=1}^\infty \frac{ B_t^{2}}{(\sum_{j=1}^t B_j)^2}\asymp \sum_{t=1}^\infty  \frac{t^{2\beta}}{t^{2\beta+2}}\asymp \sum_{t=1}^\infty t^{-2} <\infty.
\end{eqnarray*}
The preceding display and Assumption \ref{Assumption:A2} suggest that the conditions of Corollary 2 in \cite{kuczmaszewska2011strong} are satisfied. Hence, we show that
\begin{eqnarray*}
\frac{1}{\sum_{t=1}^T B_t}\sum_{t=1}^T \{v_tv_t^\top-\ev(v_tv_t^\top)\}\cae 0.
\end{eqnarray*}
Moreover, Lemma \ref{lemma:variance:asymptotic} implies that
\begin{eqnarray}
V_T:=\frac{1}{2\sum_{t=1}^T B_t}\sum_{t=1}^T \ev(v_tv_t^\top)\to r(0)+2\sum_{k=1}^\infty r(k):=V. \label{eq:theorem:bootstrap:asymptotic:eq:1}
\end{eqnarray}
Hence, the preceding two equations lead to
\begin{eqnarray*}
u^\top V_T u=\ev^*(|u^\top Y_T|^2)\cae u^\top V u, \textrm{ uniformly for all } u\in A_0.
\end{eqnarray*}
Similarly, for any $\epsilon>0$, we can show that
\begin{eqnarray*}
g_T(u)&:=&\frac{1}{ 2 u^\top V_T u \sum_{j=1}^T B_j } \sum_{t=1}^T\ev^*\left\{|(U_t-1)u^\top v_t|^2I\left(|(U_t-1)u^\top v_t|>\epsilon \sqrt{2 u^\top V_T u\sum_{j=t}^T B_j}\right)\right\}\\
&\leq& \frac{\sum_{t=1}^T\ev^*(|(U_t-1)u^\top v_t|^4)}{4\epsilon^2 (u^\top V_T u)^{2}(\sum_{j=1}^T B_j)^{2}}\leq \frac{C\sum_{t=1}^T\| v_t\|^4}{\epsilon^2 \lambda^2_{\min}(V_T) (\sum_{j=1}^T B_j)^{2}}.
\end{eqnarray*}
Noting that $\lim_{T\to \infty}\pr(\lambda_{\min}(V_T) \geq \lambda_{\min}(V)/2)=1$ by (\ref{eq:theorem:bootstrap:asymptotic:eq:1}) and Weyl's inequality,  it holds for any $\delta>0$ that
\begin{eqnarray*}
\pr\left(g_T(u)>\delta, \textrm{ for all }  u\in A_0\right)&\leq& \frac{4C\sum_{t=1}^T\ev(\| v_t\|^4)}{\delta \epsilon^2 \lambda^2_{\min}(V) (\sum_{j=1}^T B_j)^{2}}+\pr(\lambda_{\min}(V_T) < \lambda_{\min}(V)/2)\\
&\leq& \frac{4C\sum_{t=1}^TB_t^{2}}{\delta \epsilon^2 \lambda^2_{\min}(V) (\sum_{j=1}^T B_j)^{2}}+\pr(\lambda_{\min}(V_T) < \lambda_{\min}(V)/2)\\
&\asymp& \frac{T^{2\beta+1}}{T^{2(\beta+1)}}+\pr(\lambda_{\min}(V_T) < \lambda_{\min}(V)/2)\to 0.
\end{eqnarray*}
Hence, Lindeberg's central limit theorem leads to $Y_T|\mcD_T \cid N(0, V)$ in probability.  By the definition of $Y_T$ and Theorem \ref{theorem:asymptotic:normality}, we complete the proof.
\end{proof}

\subsection{Proof of Proposition \ref{prop:failure:bootstrap:sgd}}\label{section:proof:fail:sgd}

\begin{proof}[\textbf{Proof of Proposition \ref{prop:failure:bootstrap:sgd}}]
Let $U_t=1$ or $U_t=V_t$, and we define 
\begin{eqnarray*}
{\theta}_t={\theta}_{t-1}+\gamma_t U_t (Y_t-{\theta}_{t-1}).
\end{eqnarray*}
We can see $\theta_t=\widehat{\theta}_t$ if $U_t=1$ and $\theta_t=\widehat{\theta}_t^*$ if $U_t=V_t$.
By induction, we have
\begin{eqnarray}
\theta_t&=&\theta_{t-1}+\gamma_t U_t (Y_t-\theta_{t-1})\label{eq:fail:bootstrap:eq:-1}\\
&=&(1-\gamma_t U_t)\theta_{t-1}+\gamma_t U_t Y_t\nonumber\\
&=&\prod_{k=1}^t (1-\gamma_k U_k)\theta_0+\sum_{i=1}^{t-1} \prod_{k=i+1}^t (1-\gamma_k U_k)\gamma_i U_i Y_i\nonumber\\
&:=& \prod_{k=1}^t (1-\gamma_k U_k)\theta_0+\sum_{i=1}^{t-1} d_i^t U_i Y_i, \label{eq:fail:bootstrap:eq:0}
\end{eqnarray}
where $U_t=1$ or $U_t=V_t$. For simplicity, we may assume $\theta^*=0$. If not, we can subtract $\theta^*$ on both sides of (\ref{eq:fail:bootstrap:eq:-1}) and apply the arguments to $\theta_t-\theta^*$. Let $v_2=\ev(U^2)$. We will divide the proof into four steps.

 \textbf{Step 1:} By the fact that $\ev(U_t)=1$ and the properties of $\gamma_t$, we have 
\begin{eqnarray}
\ev[(1-\gamma_kU_k)^2]=1-2\gamma_k+v_2\gamma_k^2\leq 1-\mu \gamma_k,\label{eq:fail:bootstrap:eq:1}
\end{eqnarray}
for some constant $\mu>0$. Since $\gamma_t=t^{-\rho}$, it follows that
\begin{eqnarray}
\sum_{k=i}^t \gamma_k \geq \int_i^{t} x^{-\rho} dx \geq (1-\rho)^{-1}  [t^{1-\rho}-i^{1-\rho}].\label{eq:fail:bootstrap:eq:1.5}
\end{eqnarray}
Therefore, for all $m=1,2,\ldots, t-1$, it follows that
\begin{eqnarray*}
\sum_{i=1}^{t-1} \gamma_i^2 \prod_{k=i+1}^t (1-\mu\gamma_k )&=&\sum_{i=1}^{m} \gamma_i^2 \prod_{k=i+1}^t (1-\mu\gamma_k )+\sum_{i=m+1}^{t-1} \gamma_i^2 \prod_{k=i+1}^t (1-\mu\gamma_k )\\
&\leq &  \prod_{k=m+1}^t (1-\mu\gamma_k )\sum_{i=1}^{m} \gamma_i^2+  \frac{\gamma_m}{\mu}\sum_{i=m+1}^{t-1} \left[ \prod_{k=i+1}^t (1-\mu\gamma_k )-\prod_{k=i}^t (1-\mu\gamma_k )\right]\\
&\leq& \exp\left(-\mu \sum_{k=m+1}^t \gamma_k\right) \sum_{i=1}^\infty \gamma_i^2+\frac{\gamma_m}{\mu}\left[ 1-\prod_{k=m+1}^t (1-\mu\gamma_k )\right]\\
&\leq& \exp\left( -\frac{\mu}{1-\rho}[t^{1-\rho}-(m+1)^{1-\rho}]\right) +\frac{\gamma_m}{\mu}.
\end{eqnarray*}
Taking $m\asymp t/2$, since $\gamma_{m}\asymp \gamma_t\asymp t^{-\rho}$ and $t^{1-\rho}-(m+1)^{1-\rho}\gtrsim t^{1-\rho}$,  the preceding display leads to
\begin{eqnarray}
\sum_{i=1}^{t=1} \gamma_i^2 \prod_{k=i+1}^t (1-\mu\gamma_k) \leq C\gamma_t, \label{eq:fail:bootstrap:eq:2}
\end{eqnarray}
where $C$ is a constant free of $t$. 

 \textbf{Step 2:}
By (\ref{eq:fail:bootstrap:eq:1}), we can show that
\begin{eqnarray*}
\ev[(d_i^t)^2]=v_2\gamma_i^2 \prod_{k=i+1}^t (1-2\gamma_k+v_2\gamma_k^2 )\leq v_2  \gamma_i^2 \prod_{k=i+1}^t (1-\mu \gamma_k).
\end{eqnarray*}
Moreover, notice that
\begin{eqnarray*}
d_i^t d_{i+1}^t&=&\gamma_i\gamma_{i+1}U_iU_{i+1} \prod_{k=i+1}^t (1-\gamma_k U_k) \prod_{k=i+2}^t (1-\gamma_k U_k)\\
&=& \gamma_i\gamma_{i+1}U_iU_{i+1}  (1-\gamma_{i+1} U_{i+1}) \prod_{k=i+2}^t (1-\gamma_k U_k)^2,
\end{eqnarray*}
taking expectation implies that
\begin{eqnarray*}
\ev(d_i^t d_{i+1}^t)=\gamma_i\gamma_{i+1}(1-v_2\gamma_{i+1}) \prod_{k=i+2}^t (1-2\gamma_k+v_2\gamma_k^2) &\leq& \gamma_i \gamma_{i+1}   \prod_{k=i+2}^t (1-\mu \gamma_k)\\
&\leq& C \gamma_{i+1}^2   \prod_{k=i+2}^t (1-\mu \gamma_k),
\end{eqnarray*}
where we use the fact that $\gamma_t\leq C\gamma_{t+1}$ for all $t\geq 1$ and some $C>0$. Since $\ev(Y_t Y_s)=0$ when $|t-s|\geq 2$, it implies that
\begin{eqnarray*}
\ev\left[\left(\sum_{i=1}^{t-1} d_i^t Y_i\right)^2\right]&=&\sum_{i=1}^{t-1} \ev(Y_i^2)\ev[(d_i^t)^2]+2\sum_{i=1}^{t-2}\ev(Y_iY_{i+1})\ev(d_i^t d_{i+1}^t)\\
&\leq& 2\ev(Y^2)\left( \sum_{i=1}^{t-1} \gamma_i^2    \prod_{k=i+1}^t (1-\mu \gamma_k)+C\sum_{i=1}^{t-2} \gamma_{i+1}^2\prod_{k=i+2}^t (1-\mu \gamma_k)\right)\\
&\leq& 2(C+1)\ev(Y^2)\gamma_t,
\end{eqnarray*}
where (\ref{eq:fail:bootstrap:eq:2}) is used. Similarly, we can verify from (\ref{eq:fail:bootstrap:eq:1.5}) that
\begin{eqnarray*}
\ev\left(\left[\prod_{k=1}^t (1-\gamma_k U_k)\right]^2\right)=\prod_{k=1}^t \ev[(1-\gamma_kU_k)^2]&=&\prod_{k=1}^t (1-2\gamma_k+v_2\gamma_k^2 )\\
&\leq& \prod_{k=1}^t (1-\mu \gamma_k)\\
&\leq&\exp\left(-\mu\sum_{k=1}^t \gamma_k\right)\\
&\leq& \exp\left(-\frac{\mu}{1-\rho}(t^{1-\rho}-1)\right).
\end{eqnarray*}
Combining the above two inequality and  (\ref{eq:fail:bootstrap:eq:0}), we show that
\begin{eqnarray}
\ev(\theta_t^2)\leq C\gamma_t \label{eq:fail:bootstrap:eq:3}
\end{eqnarray}
for some $C>0$ and all $t\geq 1$.

 \textbf{Step 3:}  Using (\ref{eq:fail:bootstrap:eq:-1}), we have
\begin{eqnarray*}
\frac{\theta_t-\theta_{t-1}}{\gamma_t}=U_tY_t-U_t\theta_{t-1}=U_tY_t-(U_t-1)\theta_{t-1}-\theta_{t-1}.
\end{eqnarray*}
Hence, taking average leads to
\begin{eqnarray}
\frac{1}{T}\sum_{t=1}^T \theta_{t-1}=\frac{1}{T}\sum_{t=1}^T U_tY_t-\frac{1}{T}\sum_{t=1}^T (U_t-1)\theta_{t-1}-\frac{1}{T}\sum_{t=1}^T\frac{\theta_t-\theta_{t-1}}{\gamma_t}:=S_1-S_2-S_3. \label{eq:fail:bootstrap:eq:4}
\end{eqnarray}
By direction examination, it follows that $\ev(S_2^2|\mcD_T)\leq v_2\sum_{t=1}^T \theta_{t-1}^2/T^2$, which, by (\ref{eq:fail:bootstrap:eq:3}), further implies that
\begin{eqnarray*}
\ev(S^2_2)\lesssim T^{-2}\sum_{t=1}^T \gamma_t \asymp T^{-2}\sum_{t=1}^T t^{-\rho} \asymp T^{-1-\rho}=o(T).
\end{eqnarray*}
Abel summation implies that
\begin{eqnarray*}
\sum_{t=1}^T\frac{\theta_t-\theta_{t-1}}{\gamma_t}=(\gamma_T^{-1}\theta_T-\gamma_1^{-1}\theta_0)-\sum_{t=2}^T \theta_{t-1}(\gamma_t^{-1}-\gamma_{t-1}^{-1}).
\end{eqnarray*}
Since $\ev(\theta_T^2)\leq C\gamma_T$ by (\ref{eq:fail:bootstrap:eq:3}), it follows that $\gamma_T^{-1}\theta_T=O_P(\gamma_T^{-1/2})$. Similarly, it follows that
\begin{eqnarray*}
\ev\left(\sum_{t=2}^T |\gamma_t^{-1}-\gamma_{t-1}^{-1}| |\theta_{t-1}| \right)&\leq& \sum_{t=2}^T |\gamma_t^{-1}-\gamma_{t-1}^{-1}|  \ev^{1/2}(\theta_{t-1}^2)\\
&\lesssim&  \sum_{t=1}^T t^{-1+\rho} \gamma_t^{1/2}\lesssim T^{\rho/2}=o(T^{1/2}),
\end{eqnarray*}
where we use the fact that $\gamma_t^{-1}-\gamma_t^{-1}\asymp t^\rho-(t-1)^\rho \asymp t^{-1+\rho}$.  Hence, we prove that $\ev(|S_3|)=o(T^{-1/2})$. Combining the above bounds with (\ref{eq:fail:bootstrap:eq:4}), we show that
\begin{eqnarray}
\frac{1}{T}\sum_{t=1}^T \theta_t =\frac{1}{T}\sum_{t=1}^TU_t Y_t+o_P(T^{1/2}).  \label{eq:fail:bootstrap:eq:5}
\end{eqnarray}

 \textbf{Step 4:} Using (\ref{eq:fail:bootstrap:eq:5}) and the definition of $\widehat{\theta}_t$ and $\widehat{\theta}_t^*$, we can prove the desired statements using similar arguments in the proof of Theorems \ref{theorem:asymptotic:normality} and \ref{theorem:bootstrap:asymptotic}.

\end{proof}

\subsection{Additional Theoretical Results}\label{sec:additional:theoretical:result}
In this section, we provide additional theoretical analysis for the proposed estimator, which is of independent interest.
\begin{lemma}\label{lemma:some:rate:consistency:higher:moment} Suppose that Assumptions \ref{Assumption:A0}-\ref{Assumption:A2} and \ref{Assumption:A4} hold. Then the following statements hold.
\begin{enumerate}[label=(\roman*)]
\item \label{lemma:some:rate:consistency:higher:moment:item:1}  $\ev(\|e_t\|^k)\leq C\phi^{k-2k/p}(B_{t-1})$ for all $2\leq k<p$ and $t\geq1$.
\item  \label{lemma:some:rate:consistency:higher:moment:item:2} $\ev(\|\zeta_t\|^k)\leq C$ for all $t\geq 1$.
\end{enumerate}
 Here $C>0$ is a constant free of $t$.
\end{lemma}
\begin{proof}
\begin{enumerate}[label=(\roman*)]
\item Following the proof of Lemma \ref{lemma:some:rate:consistency}, we have
\begin{eqnarray*}
\left\|e_t\right\|^k&\leq& 3^k m^k\phi^k(B_{t-1})+\frac{3^k \ev_{t-1}^k\left(\left\|\widehat{H}_t({W}_t, \theta_{t-1})\right\|^{p/k}\right)}{m^{p-k}}+\frac{3^k \ev_{t-1}^k\left(\left\|\widehat{H}_t(\widetilde{W}_t, \theta_{t-1})\right\|^{p/k}\right)}{m^{p-k}}\\
&\leq&3^k m^k\phi^k(B_{t-1})+\frac{3^k \ev_{t-1}\left(\left\|\widehat{H}_t({W}_t, \theta_{t-1})\right\|^{p}\right)}{m^{p-k}}+\frac{3^k \ev_{t-1}\left(\left\|\widehat{H}_t(\widetilde{W}_t, \theta_{t-1})\right\|^{p}\right)}{m^{p-k}}.
\end{eqnarray*}
Taking  $m=\phi^{-2/p}(B_{t-1})$ and using the similar arguments in the proof of Lemma \ref{lemma:some:rate:consistency}, we conclude that $\ev(\|e_t\|^k)\leq C\phi^{k-2k/p}(B_{t-1})$.
\item  Let $\widehat{\zeta}_t=U_t \widehat{H}_t(\widetilde{W}_t, \theta_{t-1})-\ev_{t-1}\left\{\widehat{H}_t(W_t, \theta_{t-1})\right\}.$ By the definition of $\zeta_t$, we can see that
\begin{eqnarray*}
|\ev_{t-1}(\|\zeta_t\|^k)-\ev_{t-1}(\|\widehat{\zeta}_t\|^k)|\leq \phi(B_{t-1})m+\frac{\ev_{t-1}(\|\zeta_t\|^p)}{m^{p/k-1}}+\frac{\ev_{t-1}(\|\widehat{\zeta}_t\|^p)}{m^{p/k-1}}
\end{eqnarray*}
Similarly to the proof of Lemma \ref{lemma:some:rate:consistency}, we can show that $\ev(\|\zeta_t\|^p)\leq C_p$ and $\ev(\|\widehat{\zeta}_t\|^p)\leq C_p$ for some constant $C_p>0$. Taking $m=1$, we show that
\begin{eqnarray*}
|\ev_{t-1}(\|\zeta_t\|^k)-\ev_{t-1}(\|\widehat{\zeta}_t\|^k)|\leq \phi(B_t)+\ev_{t-1}(\|\zeta_t\|^p)+\ev_{t-1}(\|\widehat{\zeta}_t\|^p).
\end{eqnarray*}
Moreover, direct examination leads to 
\begin{eqnarray*}
\ev_{t-1}(\|\widehat{\zeta}_t\|^k)\leq 2^{k-1}\ev_{t-1}(\|\widetilde{\zeta}_t\|^k)+2^{k-1}\|e_t\|^k,
\end{eqnarray*}
where  $\widetilde{\zeta}_t=U_t \widehat{H}_t(\widetilde{W}_t, \theta_{t-1})-\ev_{t-1}\left\{\widehat{H}_t(\widetilde{W}_t, \theta_{t-1})\right\}.$
Using Statement \ref{lemma:some:rate:consistency:higher:moment:item:1},  Lemma \ref{lemma:moment:inequality:phi:mixing} and Assumption \ref{Assumption:A2}\ref{A2:mixing:condition}, we show that
\begin{eqnarray*}
\ev(\|\widehat{\zeta}_t\|^k)\leq 2^{k-1}\ev(\|\widetilde{\zeta}_t\|^k)+2^{k-1}\ev(\|e_t\|^k)\leq C2^{k-1}B_{t}^{-1}+C\phi^{k-2k/p}(B_{t-1}).
\end{eqnarray*}
Hence, we complete the proof.
\end{enumerate}
\end{proof}

\begin{lemma}\label{lemma:additiona:lp:convergence:rate}
Suppose that Assumptions \ref{Assumption:A0}-\ref{Assumption:A2} and \ref{Assumption:A4} hold. 
\begin{enumerate}[label=(\roman*)]
\item  \label{lemma:additiona:lp:convergence:rate:item:1} In addition, if $\phi^{\frac{1}{2}-\frac{1}{p}}(B_t)\leq Ct^{-\rho}$ for some $C>0$ and all $t\geq 1$, then there is a constant $C_\delta>0$ such that $\ev(\|\Delta_t\|^{2+\delta})\leq C_\delta t^{-\frac{\rho(2+\delta)}{2}}$ for all $t\geq 1$ and $0\leq \delta<1-2/p$.
\item \label{lemma:additiona:lp:convergence:rate:item:2}  In addition, if  Assumptions \ref{Assumption:A0}-\ref{Assumption:A2} hold for all $p\geq 4$ and $\phi^{\frac{1}{2}}(B_t)\leq Ct^{-\rho}$  for some $C>0$ and all $t\geq 1$,  then there is a constant $C_\delta>0$ such that $\ev(\|\Delta_t\|^{2+\delta})\leq C_\delta t^{-\frac{\rho(2+\delta)}{2}}$ for all $t\geq 1$ and $\delta\geq 0$. 
\end{enumerate}

\end{lemma}
\begin{proof}
\begin{enumerate}[label=(\roman*)]
\item  Since projection is a contraction map, we see that
\begin{eqnarray*}
\|\Delta_t\|=\left\|\Pi \left\{\Delta_{t-1}-\gamma_t  H(\theta_{t-1})-\gamma_t  e_t-\gamma_t \zeta_t\right\}\right\|\leq \|\Delta_{t-1}-\gamma_t  H(\theta_{t-1})-\gamma_t  e_t-\gamma_t \zeta_t\|.
\end{eqnarray*}
By the proof of Proposition 3.1 in \cite{chen2021online}, it holds for any vectors $a, b$ that
\begin{eqnarray*}
\|a+b\|^{2+\delta}\leq \|a\|^{2+\delta}+(2+\delta)a^\top b\|a\|^{\delta}+C\|a\|^{\delta}\|b\|^{2}+C\|b\|^{2+\delta}.
\end{eqnarray*}
Here $C>0$ is a constant depending on $\delta$.
Let $a=\Delta_{t-1}$ and $b=-\gamma_t  H(\theta_{t-1})-\gamma_t  e_t-\gamma_t$. The preceding two displays imply that
\begin{eqnarray*}
\|\Delta_t\|^{2+\delta}&\leq&  \|\Delta_{t-1}\|^{2+\delta}+(2+\delta)\Delta_{t-1}^\top b \|\Delta_{t-1}\|^{\delta}+C\|\Delta_{t-1}\|^{\delta}\|b\|^{2}+C\|b\|^{2+\delta}\\
&=& \|\Delta_{t-1}\|^{2+\delta}-(2+\delta)\gamma_t \Delta_{t-1}^\top H(\theta_{t-1})\|\Delta_{t-1}\|^{\delta}-(2+\delta)\gamma_t \Delta_{t-1}^\top (e_t+\zeta_t)\|\Delta_{t-1}\|^{\delta}\\
&&+C\|\Delta_{t-1}\|^{\delta}\|b\|^{2}+C\|b\|^{2+\delta}\\
&\leq& \left(1-(2+\delta)K\gamma_t \right)\|\Delta_{t-1}\|^{2+\delta}+(2+\delta)\gamma_t\|e_t\|\|\Delta_{t-1}\|^{1+\delta}-(2+\delta)\gamma_t \Delta_{t-1}^\top \zeta_t\|\Delta_{t-1}\|^{\delta}\\
&&+C\|\Delta_{t-1}\|^{\delta}\|b\|^{2}+C\|b\|^{2+\delta}.
\end{eqnarray*}
Here we use Lemma \ref{lemma:some:inequalities}\ref{lemma:some:inequalities:item:1}. Moreover, it follows from Lemmas \ref{lemma:some:inequalities} and \ref{lemma:some:rate:consistency:higher:moment} that
\begin{eqnarray*}
\sup_{\theta\in \Theta}\|H(\theta)\|\leq C,\quad \ev(\|e_t\|^k)\leq C\phi^{k-2k/p}(B_{t-1}),\quad  \ev(\|\zeta_t\|^k)\leq  C,\quad \ev_{t-1}(\zeta_t)=0
\end{eqnarray*}
for some $C>0$ and for all $k<p, t\geq 1$. Taking expectation, it holds that
\begin{eqnarray*}
&&\ev\left(\|\Delta_t\|^{2+\delta}\right)\\
&\leq&  \left(1-(2+\delta)K\gamma_t \right) \ev\left(\|\Delta_{t-1}\|^{2+\delta}\right)+(2+\delta)\gamma_t \ev\left(\|e_t\|\|\Delta_{t-1}\|^{1+\delta}\right)\\
&&+C \ev\left(\|\Delta_{t-1}\|^{\delta}\|b\|^{2}\right)+C \ev\left(\|b\|^{2+\delta}\right)\\
&\leq&\left(1-(2+\delta)K\gamma_t \right) \ev\left(\|\Delta_{t-1}\|^{2+\delta}\right)+(2+\delta) \gamma_t\ev^{\frac{1-\delta}{2}}\left(\|e_t\|^{\frac{2}{1-\delta}}\right)\ev^{\frac{1+\delta}{2}}\left\{\|\Delta_{t-1}\|^{2}\right\}\\
&&+C \ev^{\frac{\delta}{2}}\left(\|\Delta_{t-1}\|^{2}\right)\ev^{\frac{2-\delta}{2}}\left(\|b\|^{\frac{4}{2-\delta}}\right)+C \ev\left\{\|b\|^{2+\delta}\right\}.
\end{eqnarray*}
Moreover, the rate condition $\phi^{\frac{1}{2}-\frac{1}{p}}(B_t)\lesssim t^{-\rho}$ and Lemma \ref{lemma:rate:evDelta2} imply that $\ev(\|\Delta_t\|^2)\leq Ct^{-\rho}$. Noting that $\delta<1-2/p$ implies that $2/(1-\delta)<p$ and $4/(2-\delta)<p$, by Holder's inequality, the preceding display leads to
\begin{eqnarray*}
\ev\left(\|\Delta_t\|^{2+\delta}\right)&\leq& (1-c\gamma_t)\ev\left(\|\Delta_{t-1}\|^{2+\delta}\right)+C
\gamma_t \phi^{1-2/p}(B_{t-1})\ev^{\frac{1+\delta}{2}}\left\{\|\Delta_{t-1}\|^{2}\right\}\\
&&+C \gamma_t^2 \ev^{\frac{\delta}{2}}\left(\|\Delta_{t-1}\|^{2}\right)+C\gamma_t^{2+\delta}\\
&\leq& (1-c\gamma_t)\ev\left(\|\Delta_{t-1}\|^{2+\delta}\right)+C t^{-2\rho-\frac{\rho(1+\delta)}{2}}+Ct^{-2\rho-\frac{\rho\delta}{2}}+Ct^{-\rho(2+\delta)}.
\end{eqnarray*}
According to Lemma \ref{lemma:my:iteration:bound}, we conclude that
\begin{eqnarray*}
\ev\left(\|\Delta_t\|^{2+\delta}\right)\lesssim    t^{-\rho-\frac{\rho(1+\delta)}{2}}+t^{-\rho-\frac{\rho\delta}{2}}+t^{-\rho-\rho\delta}\lesssim t^{-\frac{\rho(2+\delta)}{2}}.
\end{eqnarray*}
Hence, we complete the proof of Statement \ref{lemma:additiona:lp:convergence:rate:item:1}.
\item By Statement \ref{lemma:additiona:lp:convergence:rate:item:1} and the conditions given, we see that $\ev(\|\Delta_{t}\|^{2+\delta})\leq C_\delta t^{-\frac{\rho(2+\delta)}{2}}$ for all $\delta<1$. Using similar arguments as the proof in Statement  \ref{lemma:additiona:lp:convergence:rate:item:1}, we can show for $0<\delta<2-u, 0<u<1$ that
\begin{eqnarray*}
\ev\left(\|\Delta_t\|^{2+\delta}\right)&\leq& (1-c\gamma_t)\ev\left(\|\Delta_{t-1}\|^{2+\delta}\right)+C
\gamma_t \ev^{\frac{2-u-\delta}{3-u}}\left(\|e_t\|^{\frac{3-u}{2-u-\delta}}\right)\ev^{\frac{1+\delta}{3-u}}\left\{\|\Delta_{t-1}\|^{3-u}\right\}\\
&&+C \gamma_t^2 \ev^{\frac{\delta}{2}}\left(\|\Delta_{t-1}\|^{2}\right)+C\gamma_t^{2+\delta}\\
&\leq& (1-c\gamma_t)\ev\left(\|\Delta_{t-1}\|^{2+\delta}\right)+C
\gamma_t \phi^{1-2/p}(B_{t-1})\ev^{\frac{1+\delta}{3-u}}\left\{\|\Delta_{t-1}\|^{3-u}\right\}\\
&&+C \gamma_t^2 \ev^{\frac{\delta}{2}}\left(\|\Delta_{t-1}\|^{2}\right)+C\gamma_t^{2+\delta}\\
&\leq& (1-c\gamma_t)\ev\left(\|\Delta_{t-1}\|^{2+\delta}\right)+C t^{-2\rho-\frac{\rho(1+\delta)}{2}}+Ct^{-2\rho-\frac{\rho\delta}{2}}+Ct^{-\rho(2+\delta)}.
\end{eqnarray*}
According to Lemma \ref{lemma:my:iteration:bound} and noting that $0<u<1$ is arbitrary, we conclude that
\begin{eqnarray*}
\ev\left(\|\Delta_t\|^{2+\delta}\right)\lesssim    t^{-\rho-\frac{\rho(1+\delta)}{2}}+t^{-\rho-\frac{\rho\delta}{2}}+t^{-\rho-\rho\delta}\lesssim t^{-\frac{\rho(2+\delta)}{2}}
\end{eqnarray*}
for all $0<\delta<2$. By induction, we can show the desired result holds for all $\delta\geq 0$.
\end{enumerate}
\end{proof}
\begin{remark}
A potential application of Statement \ref{lemma:additiona:lp:convergence:rate:item:2} in Lemma \ref{lemma:additiona:lp:convergence:rate} is to establish non-Asymptotic confidence bounds of the SGD estimator (e.g., \citealp{chen2023recursive}).
\end{remark}

\subsection{Additional Numerical Results}\label{sec:additional:simulation}

In this section, we present additional numerical exploring the influence of the bootstrap sample size. We adopt the parameter settings outlined in Section \ref{sec:simulation}, utilizing the proposed batch size $\beta=0.33$, and examine various bootstrap sample sizes: $N=50, 200, 500, 1000$. From Figures \ref{figure:addition:cp:model1:4} to \ref{figure:addition:cp:model8}, it becomes evident that the choices of $N=200, 500, 1000$ consistently yield satisfactory performance.
\begin{figure}[htpb]
\centering
\includegraphics[width=2.5 in]{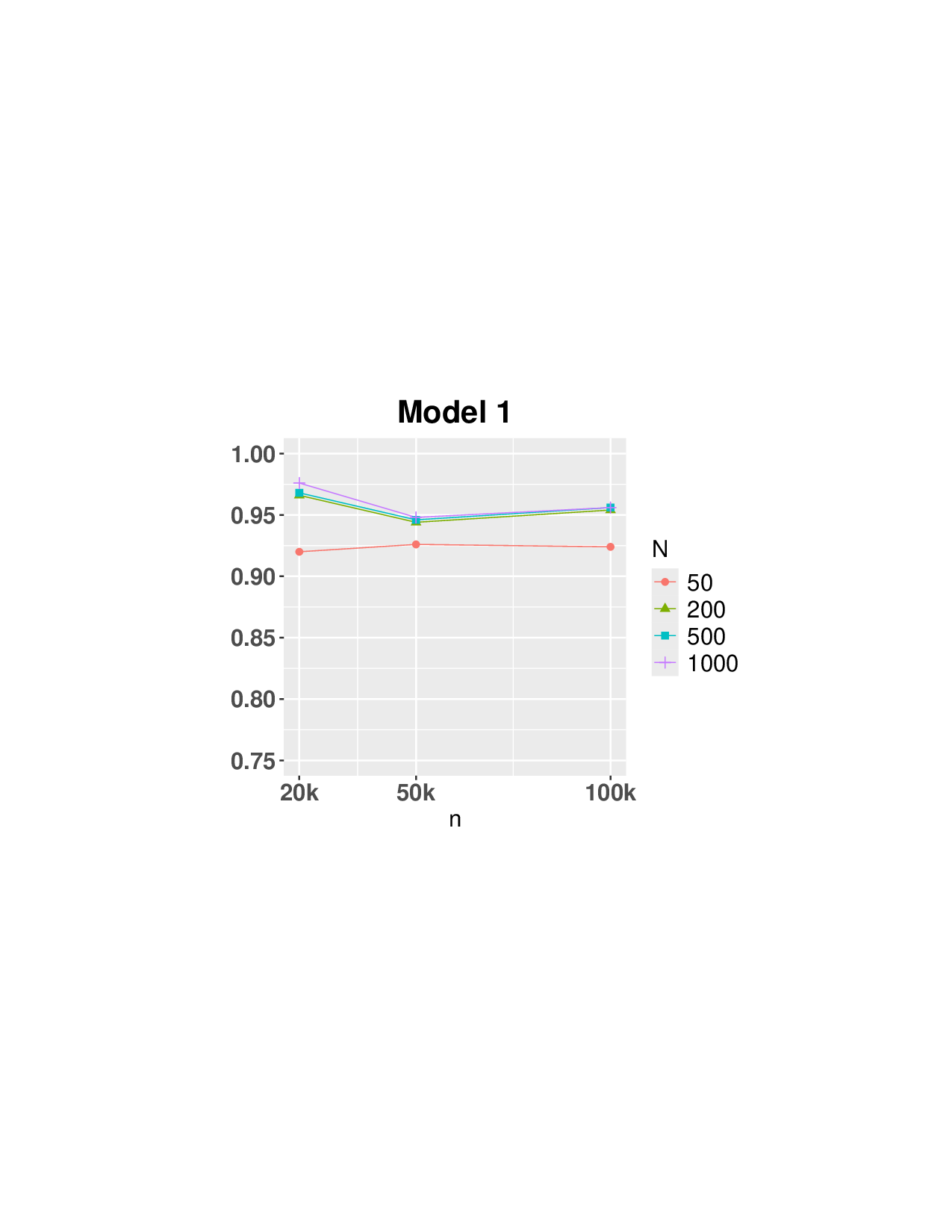}
\includegraphics[width=2.5 in]{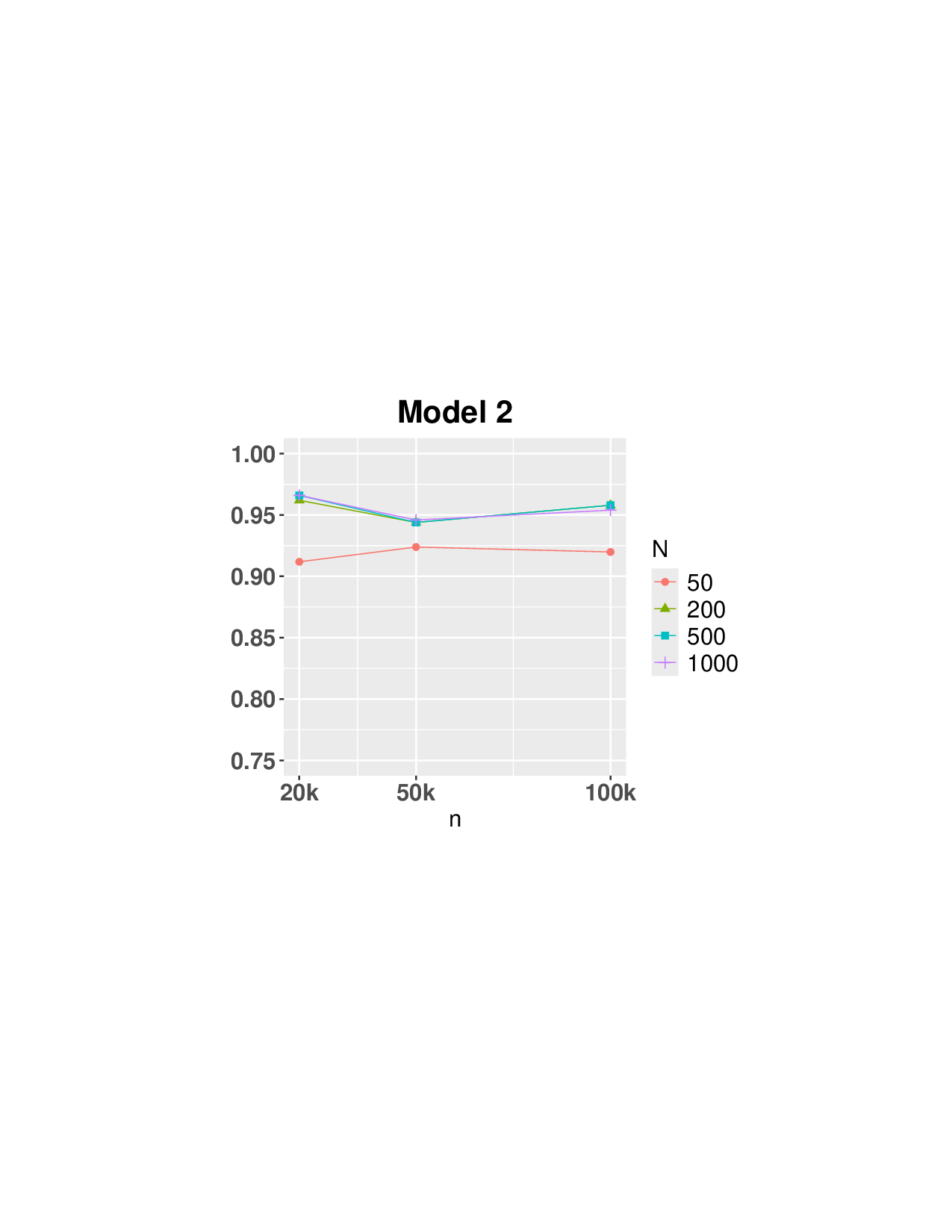}

\includegraphics[width=2.5 in]{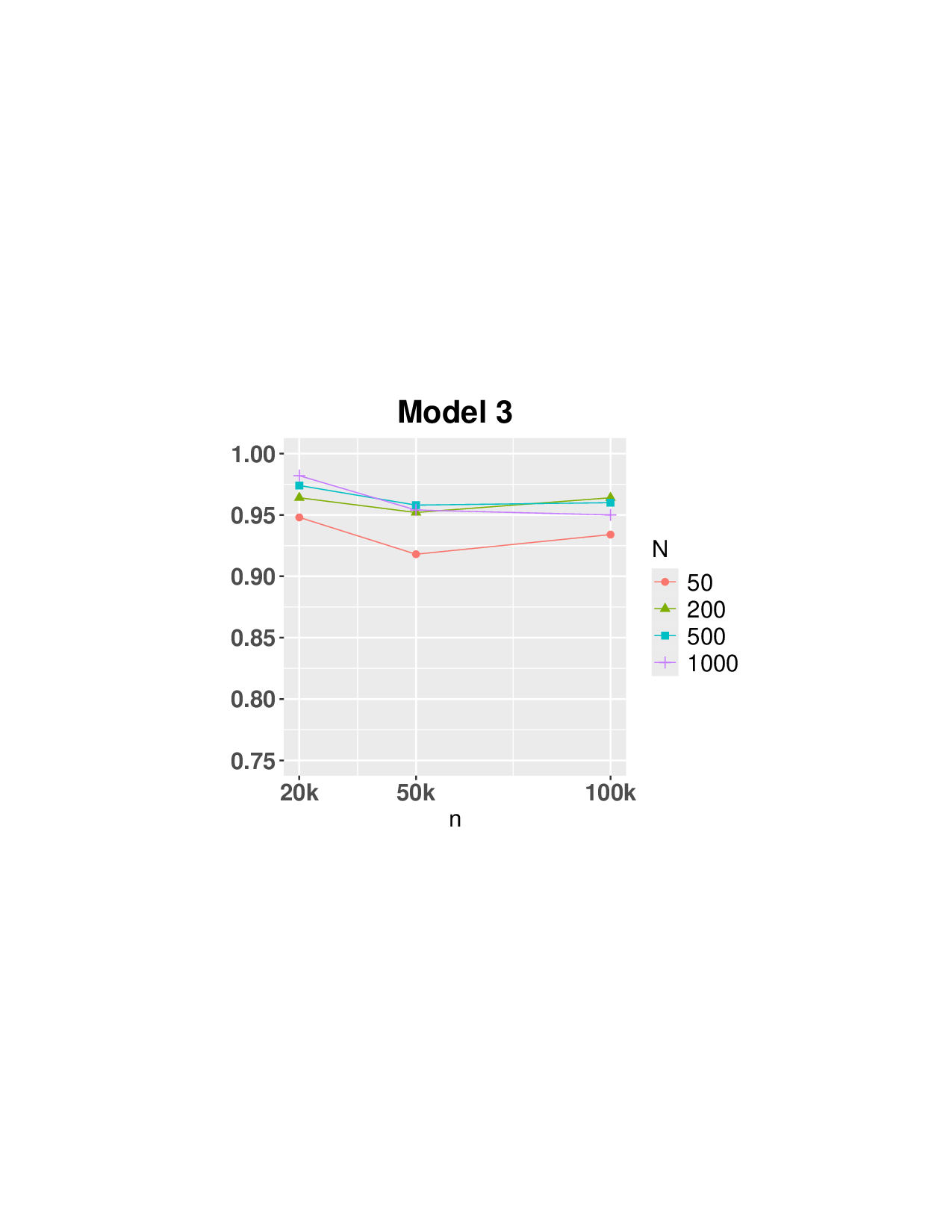}
\includegraphics[width=2.5 in]{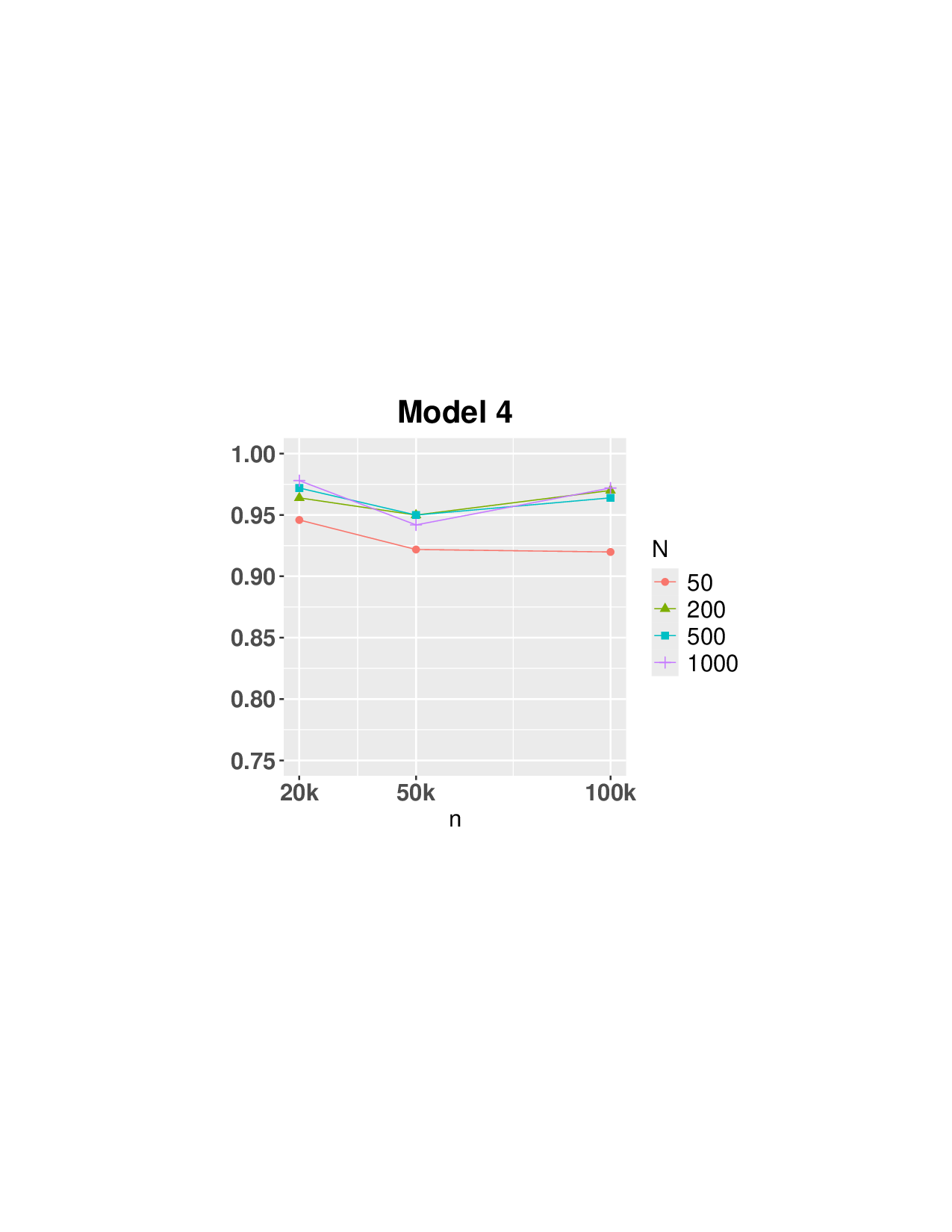}
\caption{CP for Models 1-4}
\label{figure:addition:cp:model1:4}
\end{figure}

\begin{figure}[htpb]
\centering
\includegraphics[width=1.8 in]{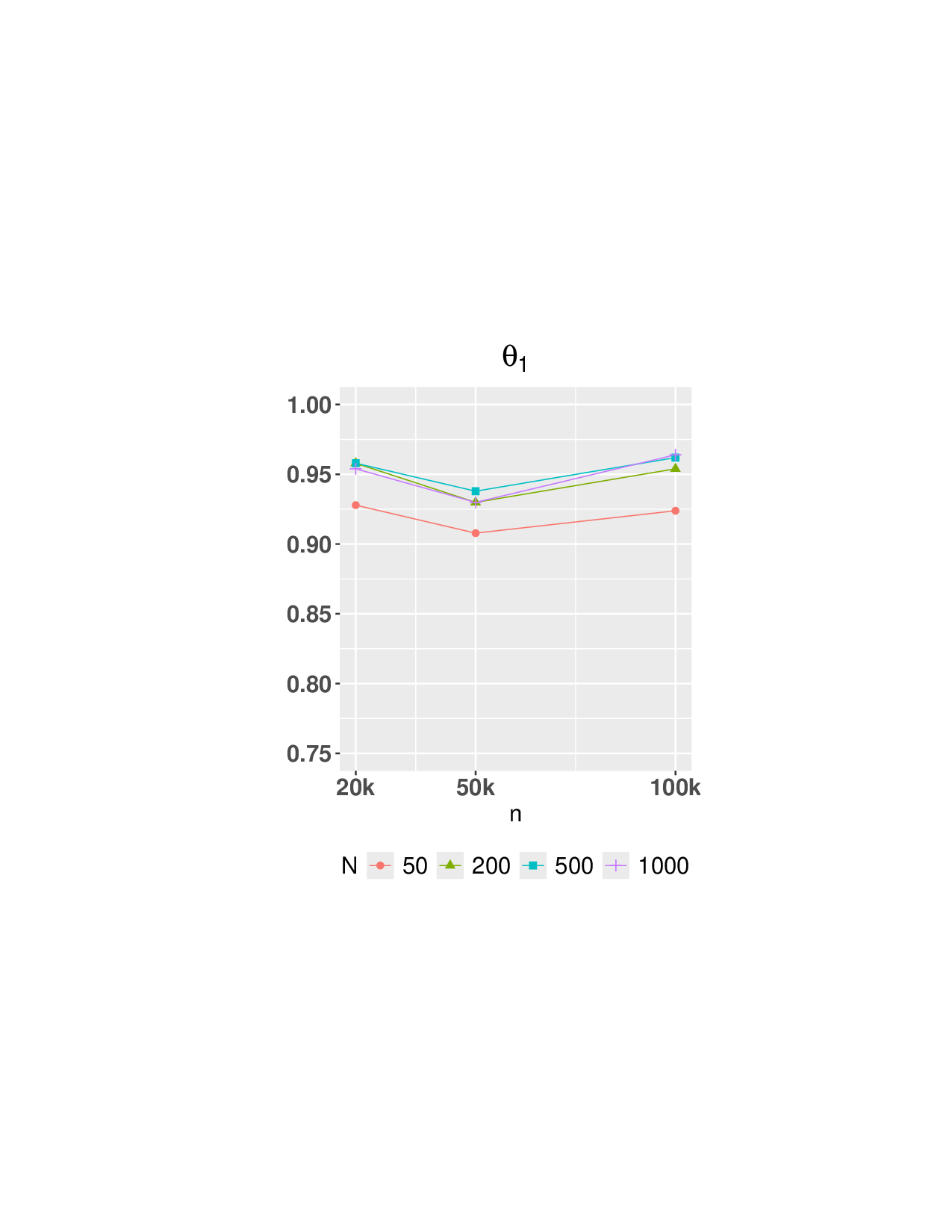}
\includegraphics[width=1.8 in]{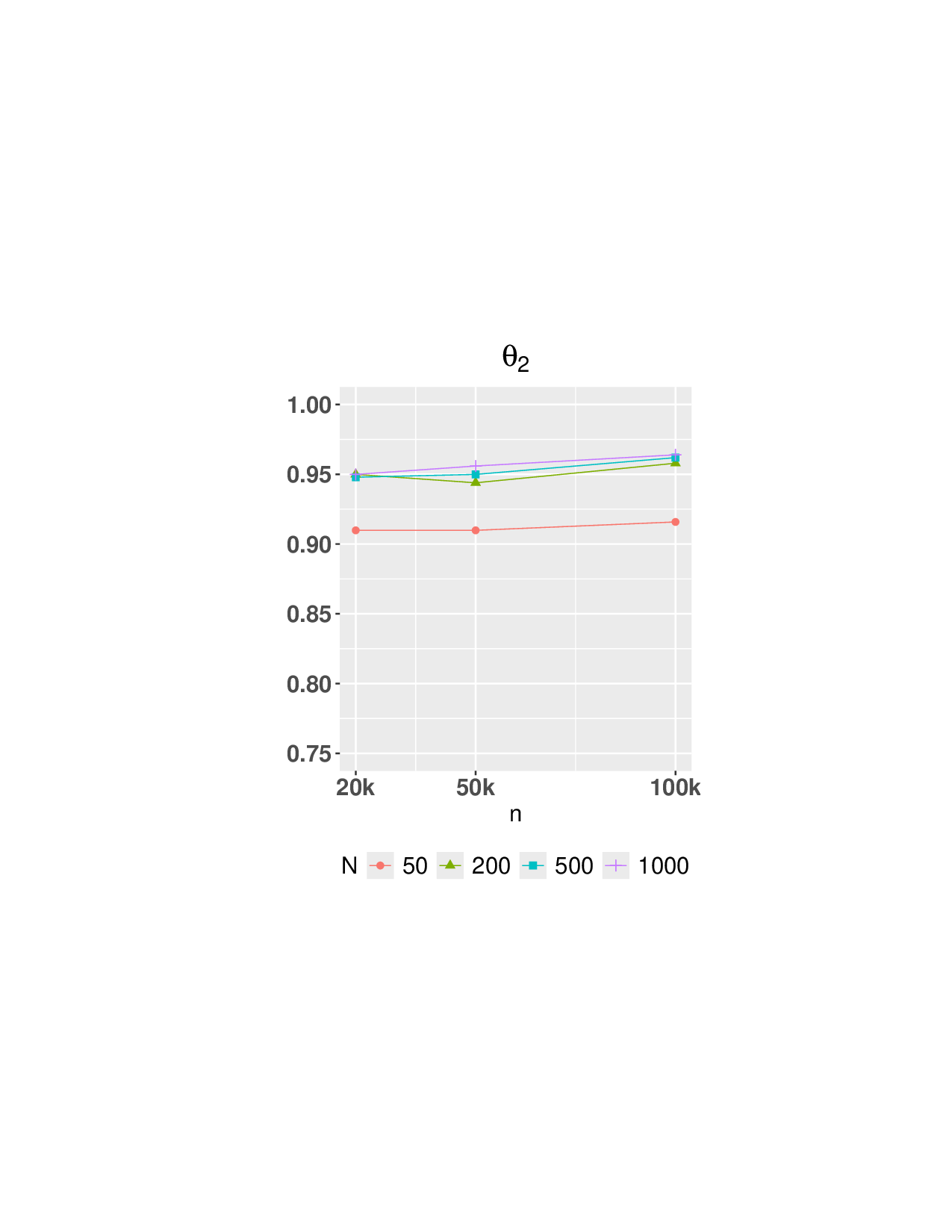}
\includegraphics[width=1.8 in]{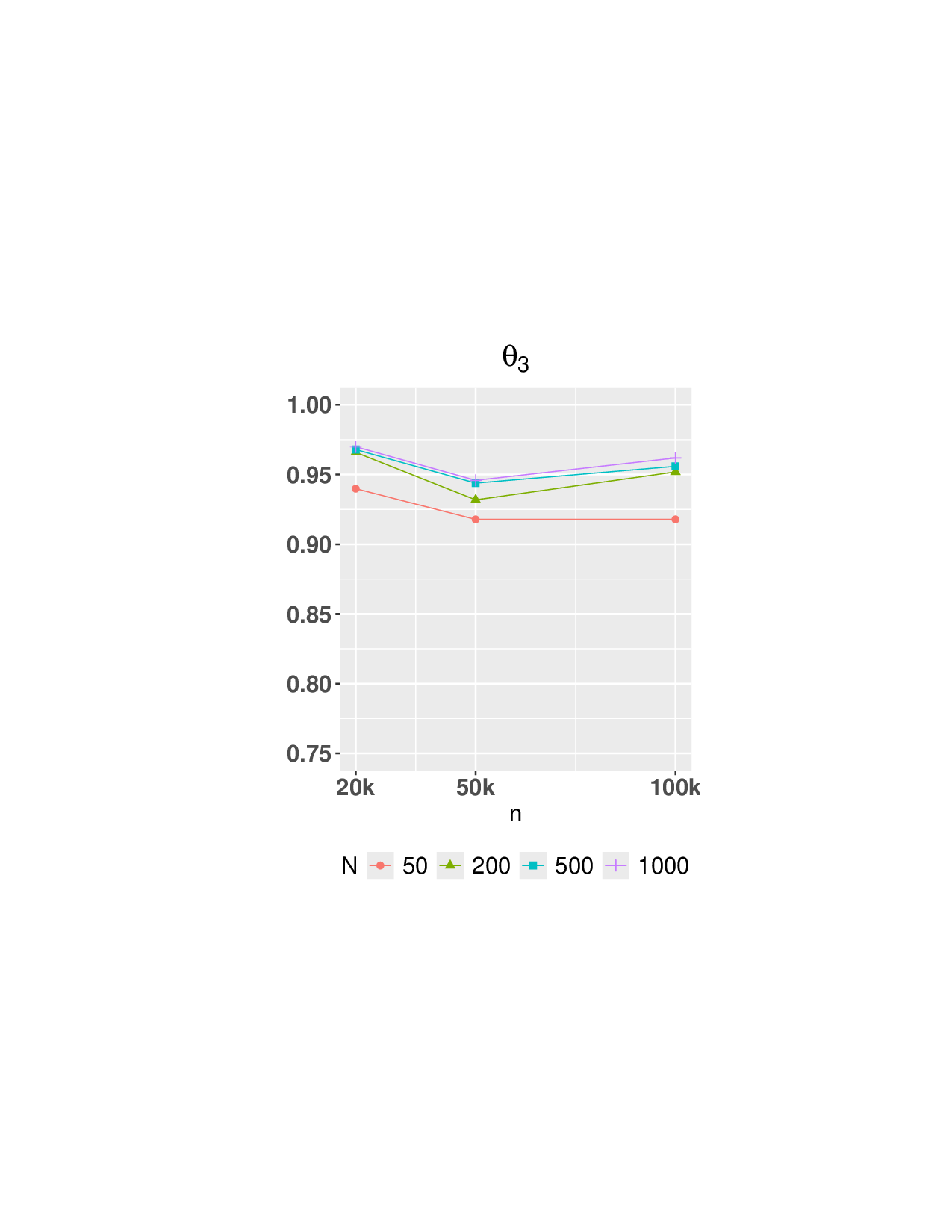}
\caption{CP for Model 5}
\label{figure:addition:cp:model5}
\end{figure}

\begin{figure}[htpb]
\centering
\includegraphics[width=1.8 in]{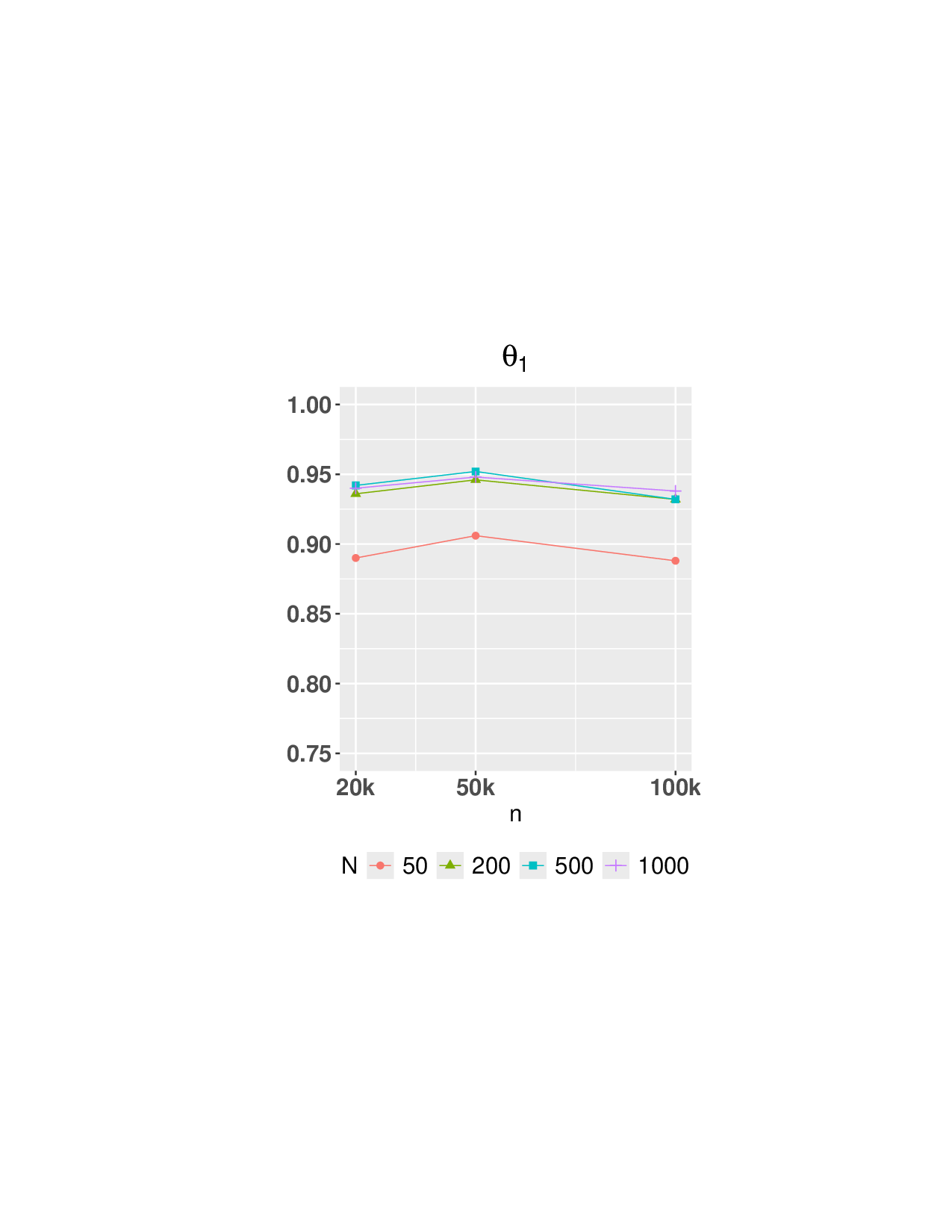}
\includegraphics[width=1.8 in]{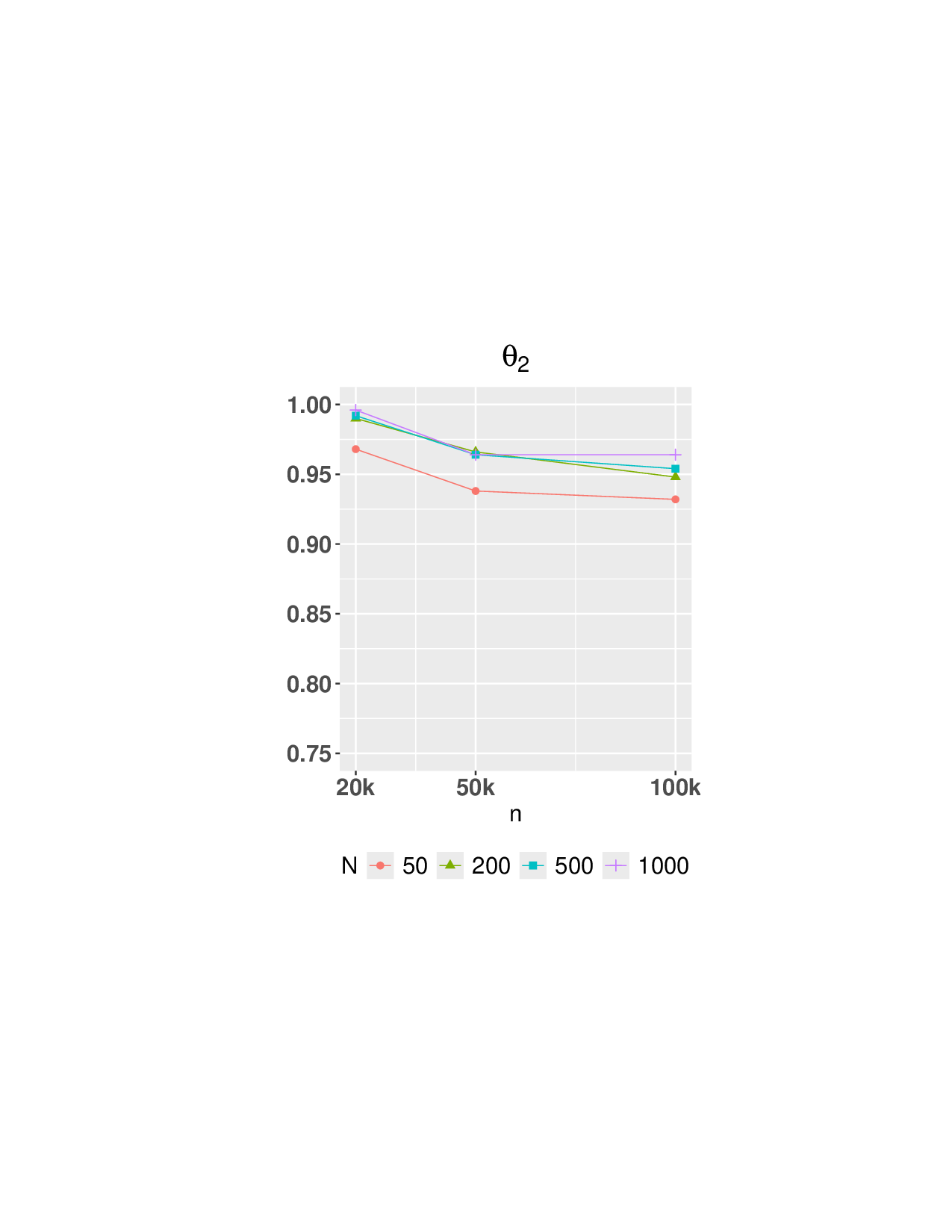}
\includegraphics[width=1.8 in]{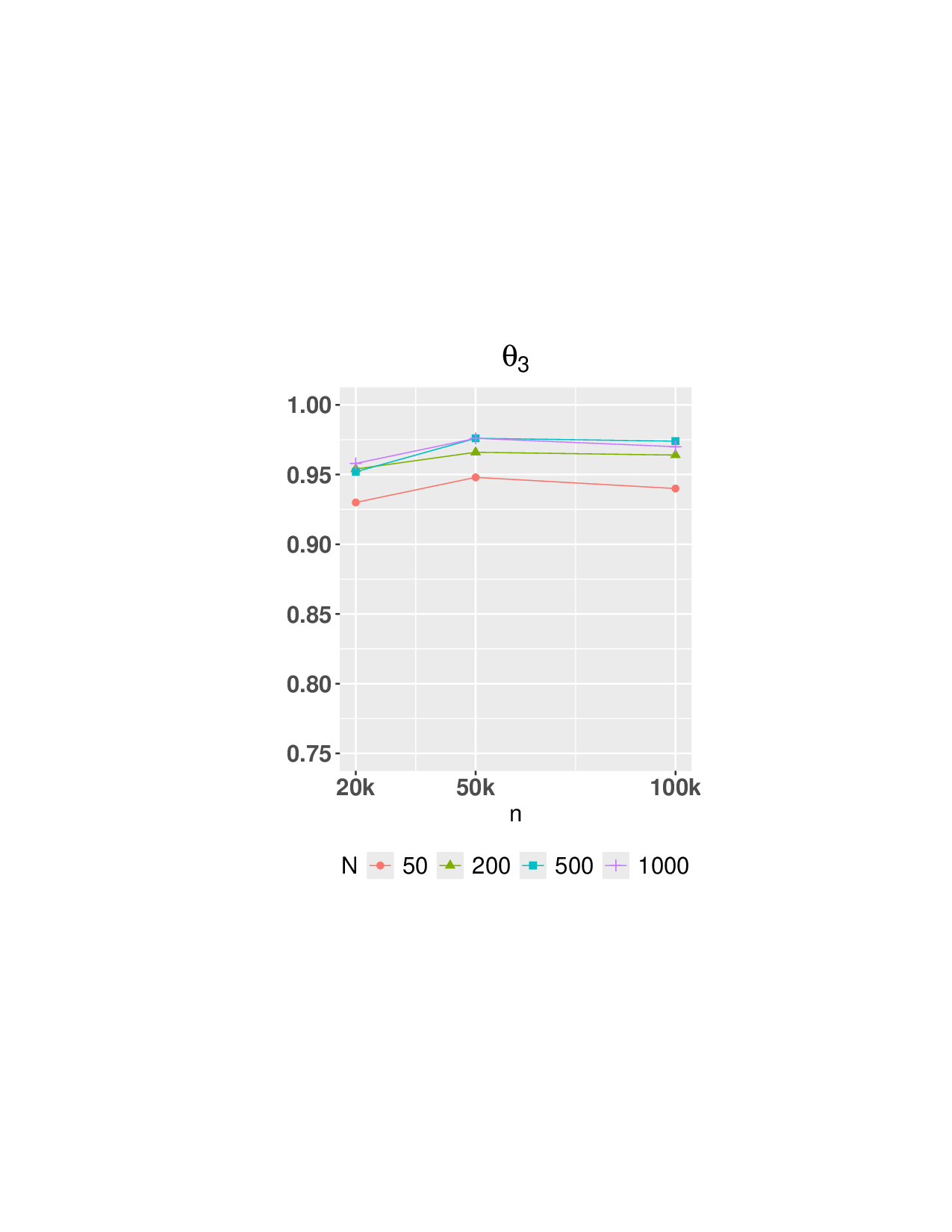}
\caption{CP for Model 6}
\label{figure:addition:cp:model6}
\end{figure}

\begin{figure}[htpb]
\centering
\includegraphics[width=1.8 in]{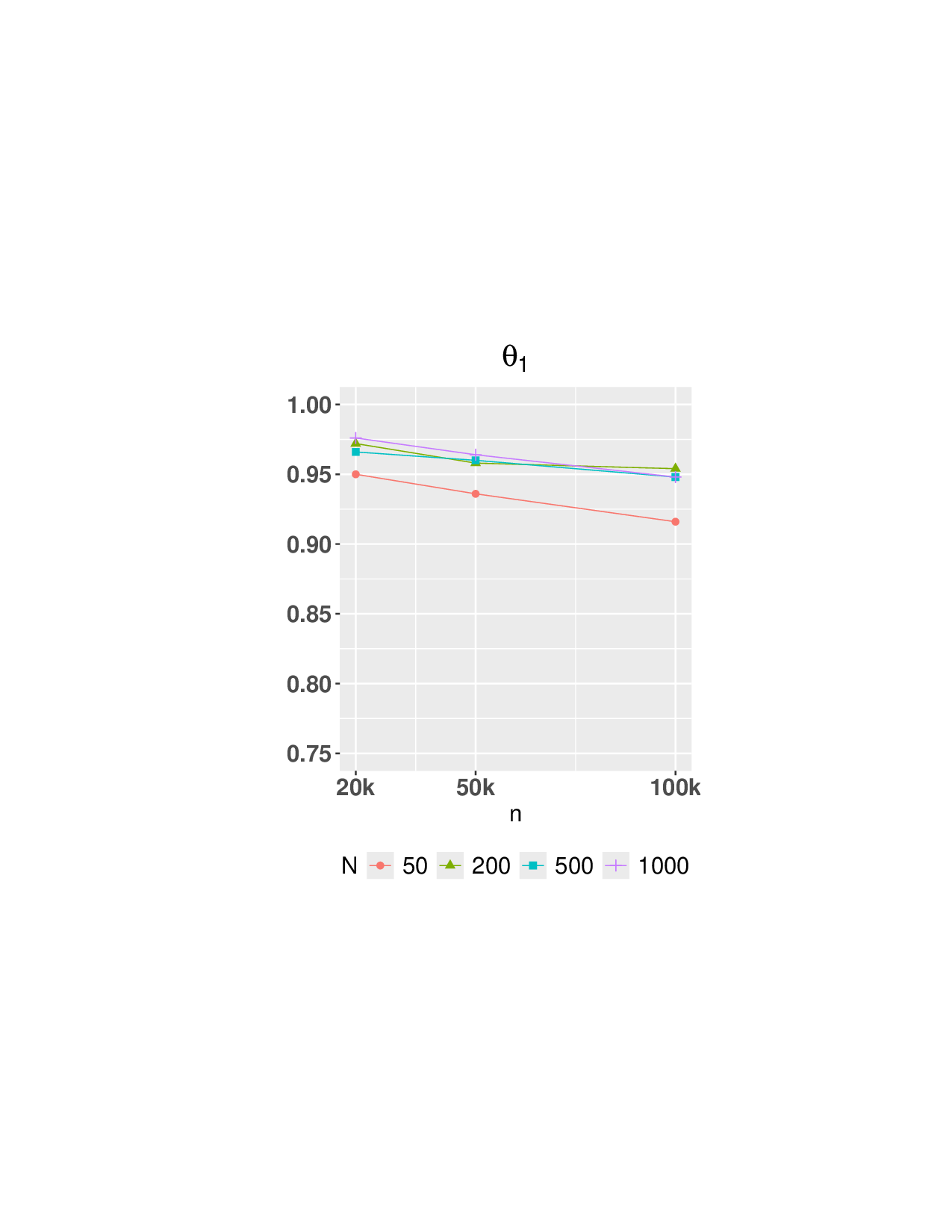}
\includegraphics[width=1.8 in]{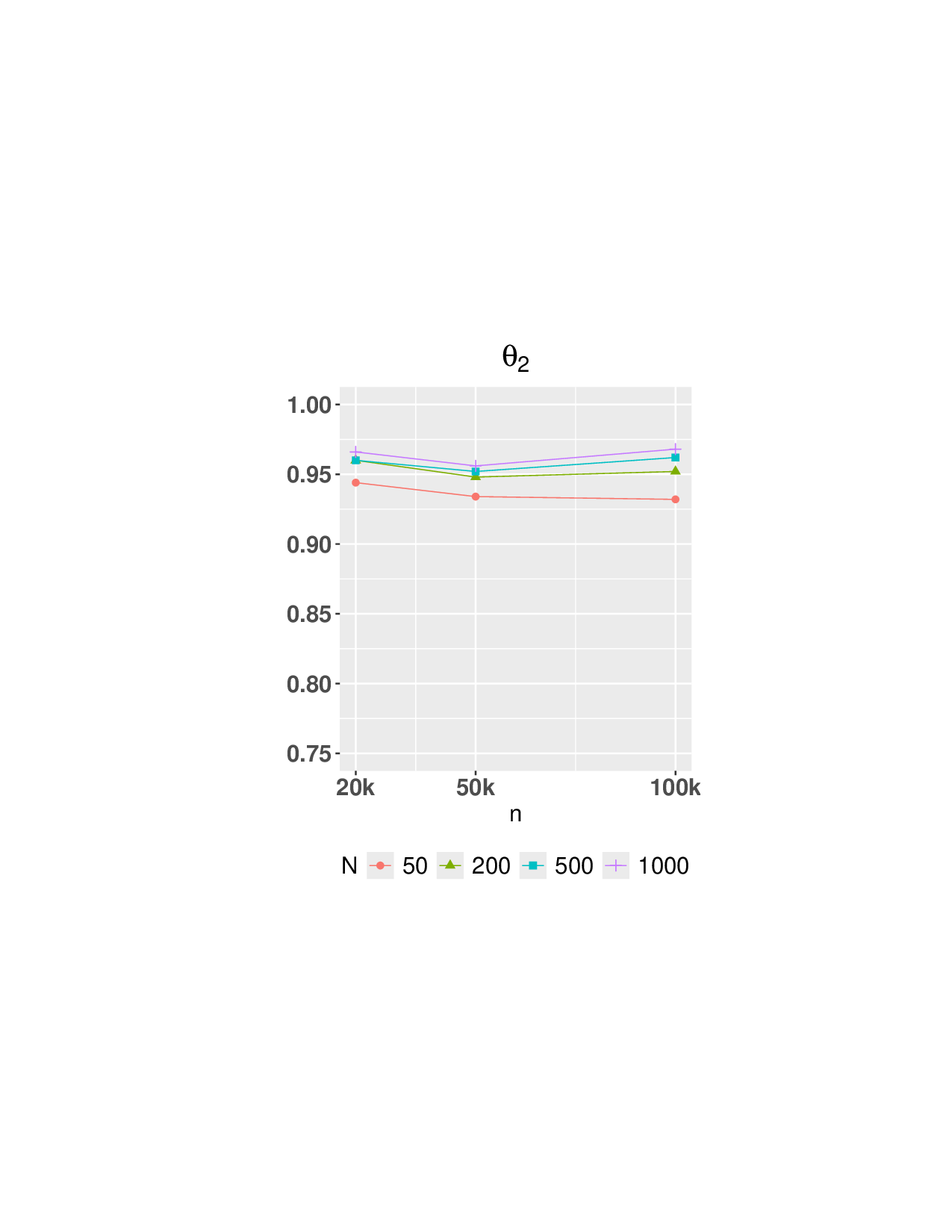}
\includegraphics[width=1.8 in]{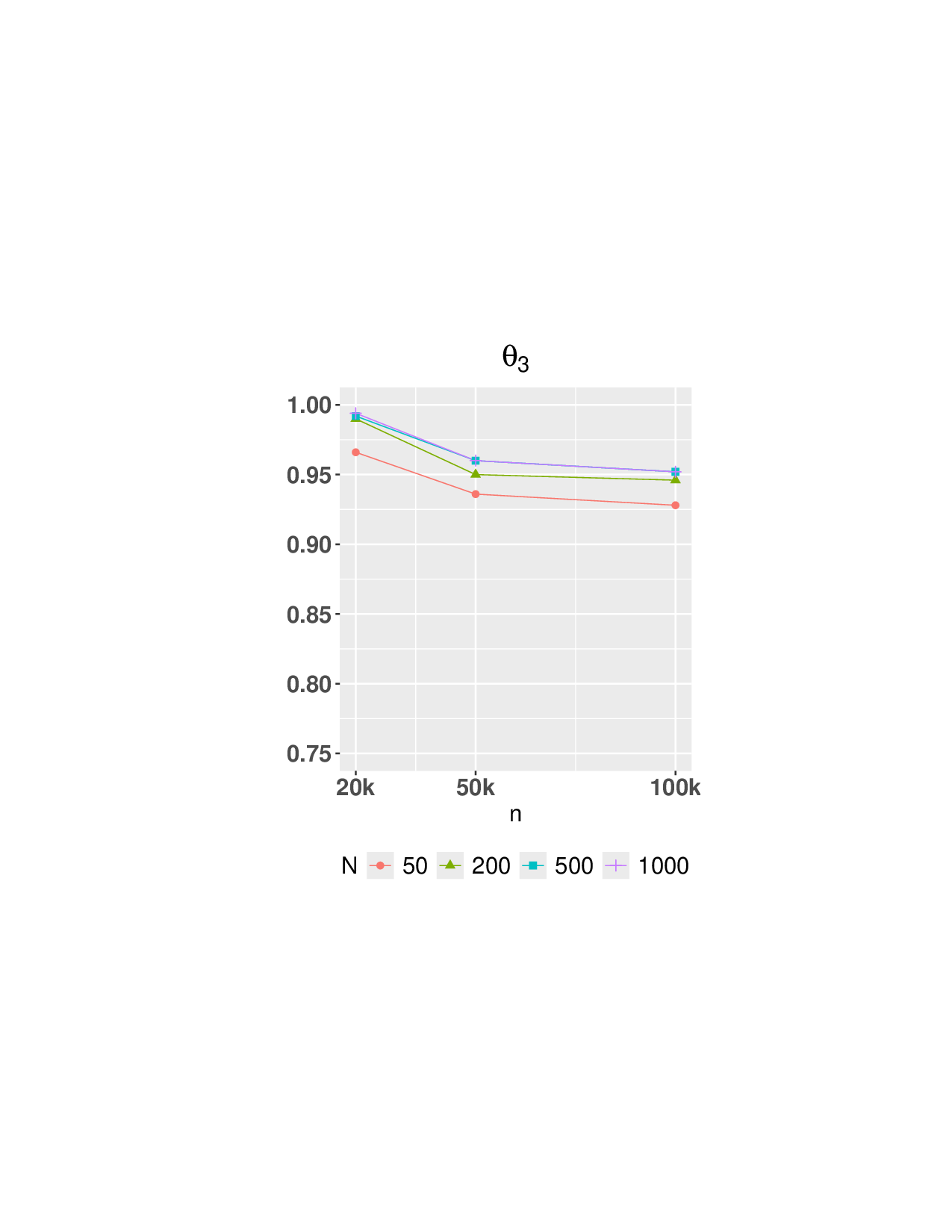}
\caption{CP for Model 7}
\label{figure:addition:cp:model7}
\end{figure}

\begin{figure}[htpb]
\centering
\includegraphics[width=1.8 in]{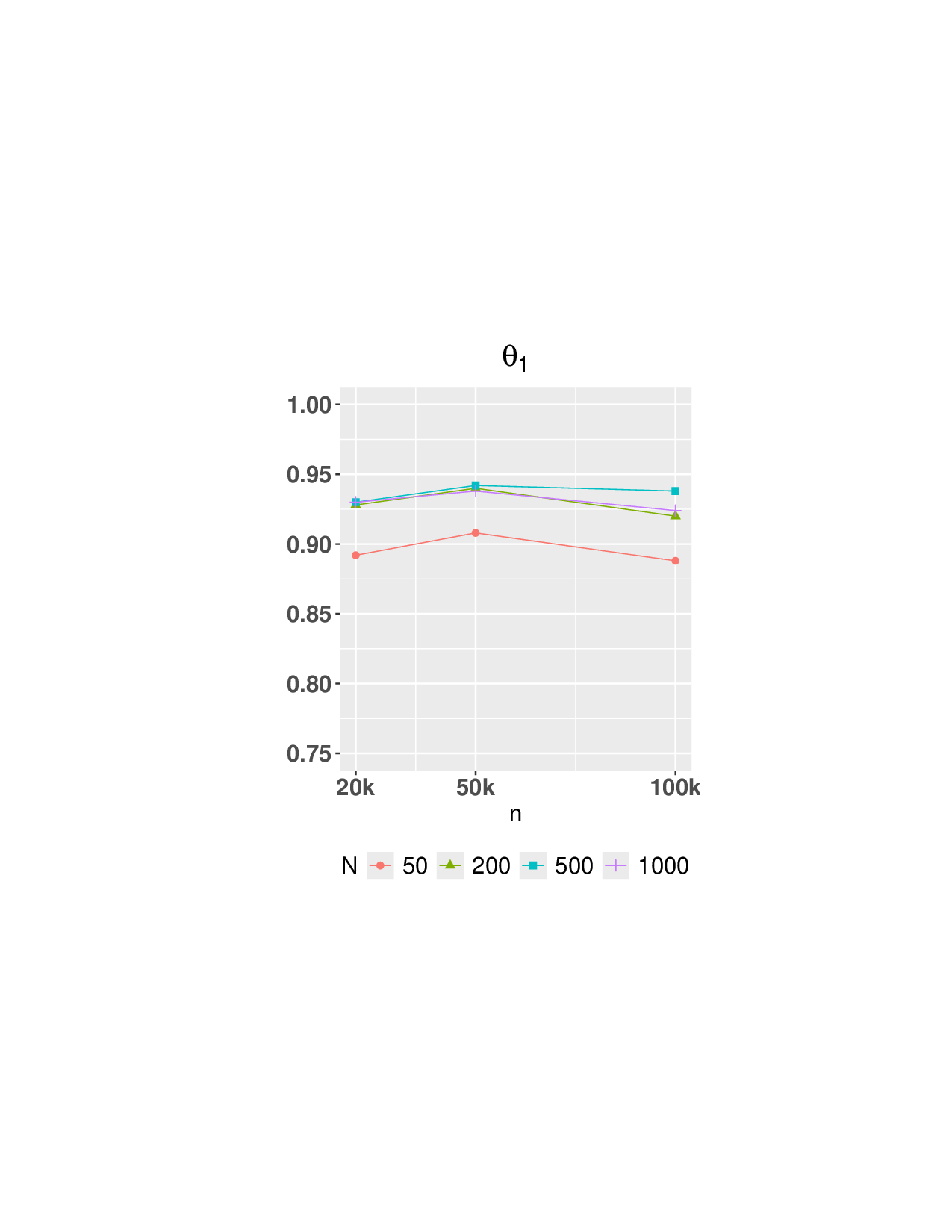}
\includegraphics[width=1.8 in]{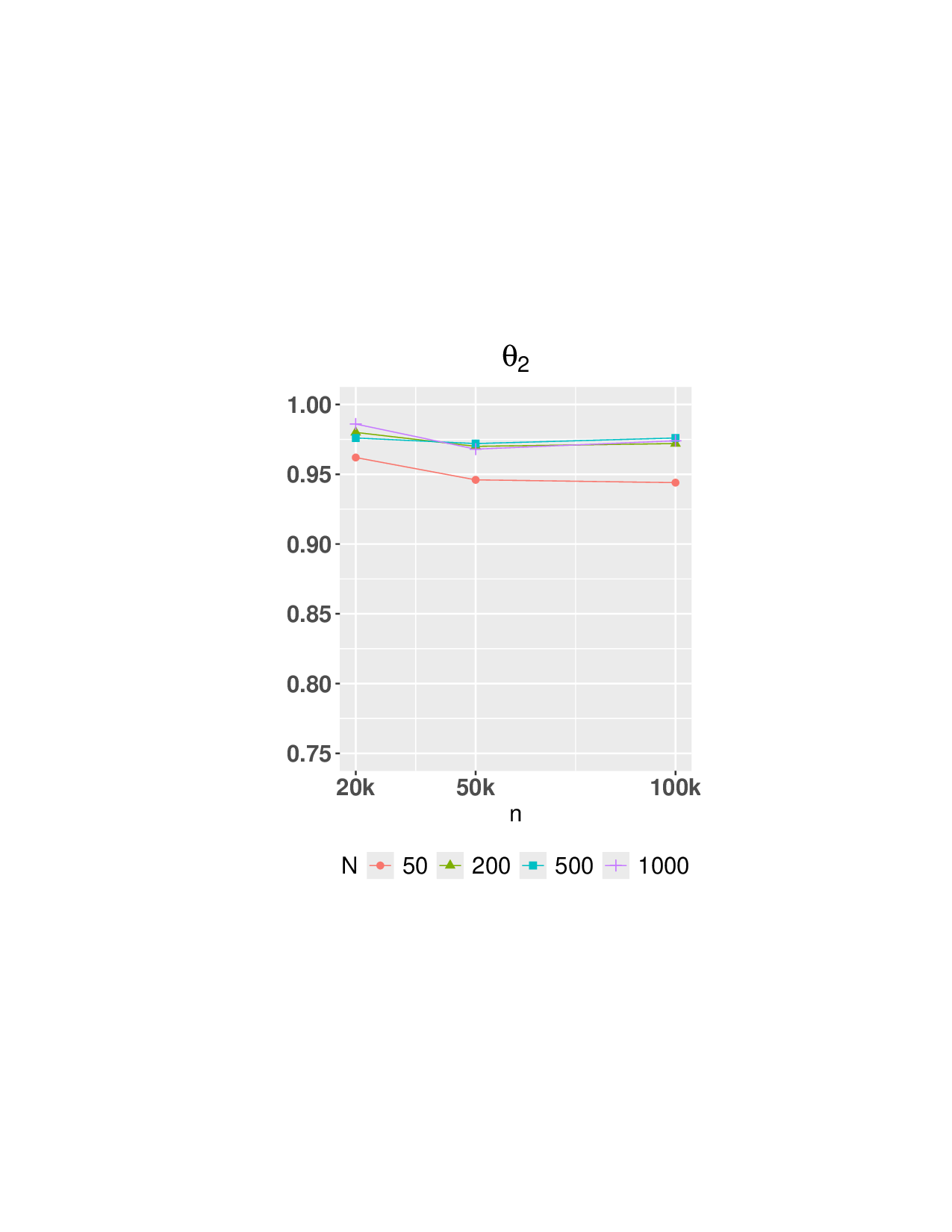}
\includegraphics[width=1.8 in]{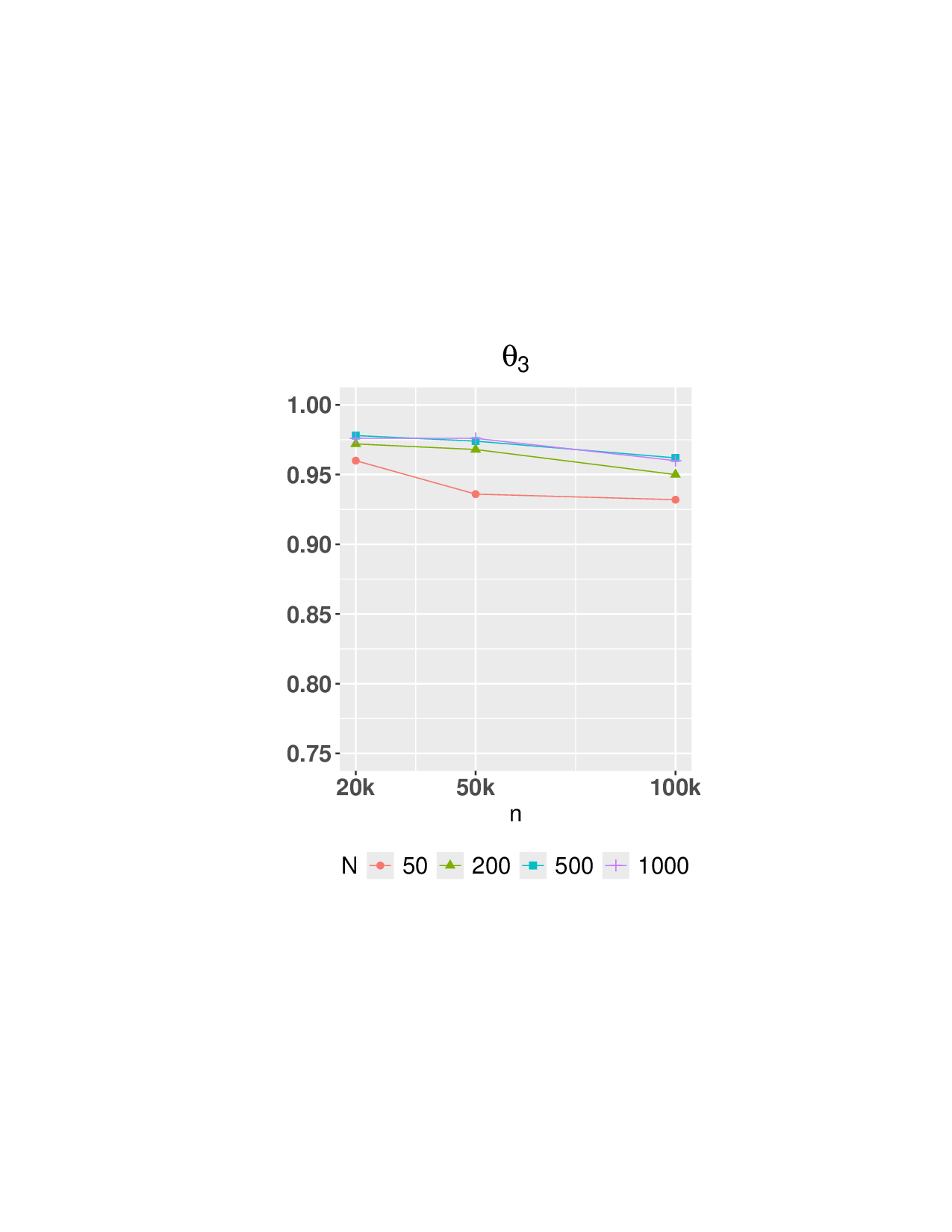}
\caption{CP for Model 8}
\label{figure:addition:cp:model8}
\end{figure}

\bibliographystyle{Chicago}
\bibliography{ref}{}
\clearpage
\end{document}